\newtheorem{thm}{Theorem}[section]
\newtheorem{prp}[thm]{Proposition}
\newtheorem{cor}[thm]{Corollary}
\newtheorem{lm}[thm]{Lemma}
\newtheorem{df}[thm]{Definition}
\newtheorem{rk}[thm]{Remark}
\newtheorem{proposition}[thm]{Proposition}
\newtheorem{theorem}[thm]{Theorem}
\newcommand{\m}[1]{\mathbb{#1}}
\newcommand{\q}[1]{\mathcal{#1}}
\newcommand{\wht}[1]{\widetilde{#1}}
\newcommand{\ba}[1]{\underline{#1}}
\newcommand{\ep}{\varepsilon}
\newcommand{\f}{\frac}
\newcommand{\rd}{\partial}
\newcommand{\nab}{\nabla}
\newcommand{\alp}{\alpha}
\newcommand{\bt}{\beta}
\newcommand{\bA}{{\bf A}}
\newcommand{\bB}{{\bf B}}
\newcommand{\gi}{(g^{-1})}
\newcommand{\ls}{\lesssim}
\newcommand{\de}{\delta}
\numberwithin{equation}{section}
\begin{document}

\title
{Einstein equations under polarized $\mathbb U(1)$ symmetry in an elliptic gauge}

\begin{abstract}
We prove local existence of solutions to the Einstein--null dust system under polarized $\mathbb U(1)$ symmetry in an elliptic gauge. Using in particular the previous work of the first author on the constraint equations, we show that one can identify freely prescribable data, solve the constraints equations, and construct a unique local in time solution in an elliptic gauge. Our main motivation for this work, in addition to merely constructing solutions in an elliptic gauge, is to provide a setup for our companion paper in which we study high frequency backreaction for the Einstein equations. In that work, the elliptic gauge we consider here plays a crucial role to handle high frequency terms in the equations. The main technical difficulty in the present paper, in view of the application in our companion paper, is that we need to build a framework consistent with the solution being high frequency, and therefore having large higher order norms. This difficulty is handled by exploiting a reductive structure in the system of equations.
\end{abstract}

\author{C\'ecile Huneau}
\address{Institut Fourier, Université Grenoble-Alpes, 100 rue des maths, 38610 Gières, France}
\email{cecile.huneau@univ-grenoble-alpes.fr}
\author{Jonathan Luk}
\address{Department of Mathematics, Stanford University, CA 94304, USA}
\email{jluk@stanford.edu}
	
	\maketitle

\section{Introduction}
In this paper, we study the Einstein equation
$$R(g)_{\mu\nu}-\f 12 g_{\mu\nu} R(g) = T_{\mu\nu}.$$
under polarized $\mathbb U(1)$ symmetry in an elliptic gauge. We will consider the case where the stress-energy-momentum tensor $T_{\mu\nu}$ is either that of vacuum or a finite number of families of null dust. Previously, it was known that 
\begin{itemize}
\item given freely prescribable initial data, the constraint equations in vacuum for small data can be solved \cite{Huneau.constraints}, and 
\item a local, geometrically unique (large data) solution to the Einstein--null dust system exists in a wave coordinate gauge, even without the polarized $\mathbb U(1)$ symmetry assumption\footnote{Although strictly speaking, \cite{CBF} deals with the case where the dust is massive, the methods apply to the null case with little modifications, cf. \cite{livrecb}} \cite{CBF}.
\end{itemize}
Our main result in this paper is that in a small data regime, the constraints can be solved and that local existence can be established in an \underline{elliptic} gauge; see the precise statement in Section~\ref{sec.gauge}. Alternatively, this means that at least for a short time, in the solution that we already know exists by \cite{CBF}, an elliptic gauge can be constructed. In particular, under {suitable} conditions on the {initial} data, our {result constructs a local-in-time} maximal foliation.

Our motivation for studying the Einstein equation in an elliptic gauge is that under such a gauge condition, one can obtain additional regularity for some metric components and is therefore useful for low-regularity problems. An elliptic gauge is especially advantageous under polarized $\mathbb U(1)$ symmetry\footnote{Such an effective decoupling in fact occurs under $\mathbb U(1)$ symmetry without the polarization assumption. For simplicity, however, we only consider the polarized case in this paper.} since in this case the ``dynamical part'' of the solution and the ``elliptic part'' of the solution (which is more regular) essentially decouple (cf. \eqref{sys} and \eqref{back}). A specific application, which we discuss in our companion paper \cite{HL.HF}, is to study high-frequency backreaction for the Einstein equations. Precisely, we show in \cite{HL.HF}, using the elliptic gauge studied in the present paper, that any generic small data smooth polarized $\mathbb U(1)$ symmetric solution to the Einstein--\underline{null dust} system arise as suitable weak limits of solutions to the Einstein \underline{vacuum} equation. Physically, as we discuss at length in \cite{HL.HF}, this can be thought of as meaning that ``high frequency gravitational waves give rise to an effective stress-energy-momentum tensor of null dust in the limit''. Notice that it is in view of the application in \cite{HL.HF} that we also include null dust in our equations in this paper.

Since one of the purposes of this paper is to provide a setup for \cite{HL.HF}, our local existence and uniqueness statement is in particular consistent with the initial data being ``high-frequency'' in a suitable sense. In particular, it has the following features:
\begin{itemize}
\item (Choice of elliptic gauge) The elliptic gauge condition that we impose is such that the spacetime is foliated by maximal spacelike hypersurfaces $\Sigma_t$ and that on each $\Sigma_t$, the intrinsic metric is conformal to the Euclidean metric.\footnote{Strictly speaking, the condition that the initial hypersurface is maximal is a condition on the geometric data and is \underline{not} a gauge condition. One may in principle also consider that setting where the mean curvature is a prescribed regular function, cf. discussions in \cite{AM}. We will however be content with the restriction that the initial hypersurface is maximal and not pursue a general result.} As a consequence, all metric components satisfy semilinear elliptic equations.
\item (Lack of decay at infinity) One main technical challenge in our setting arises from the fact that we work in two spatial dimensions in all of $\mathbb R^2$. In this case, one needs to carefully control the logarithmically divergent terms arising from the inversion of the Laplacian on $\mathbb R^2$, and place the remaining terms in appropriate weighted Sobolev spaces.
\item (Large data for high norms) Another, more serious, technical challenge concerns the smallness that we can choose in this problem. In order to solve the constraints and handle the nonlinearity in the elliptic part of the system, one needs some smallness of the solution (in addition to the smallness in time). Nevertheless, in view of the application in \cite{HL.HF}, where we study \emph{high frequency} solutions, the solutions are necessarily large in any $W^k_p$ norms (cf. Definition~\ref{def.spaces}) for $p \in [1,\infty)$, $k>1$. We therefore study in this paper a solution regime where the $W^{1}_{\infty}$ norm of the initial data are required to be small, yet higher norms can be arbitrarily large. The main technical challenge of this paper is therefore to treat the elliptic part of the system -- where one cannot exploit the small time parameter -- using only the smallness of the low order norms.
\end{itemize}
As is standard, to obtain a solution to the Einstein equations in our gauge, we first introduce and solve a reduced system, and a posteriori show that the solution to the reduced system is indeed a solution to the Einstein equations. In our case, the reduced system is a coupled system of elliptic, wave and transport equations. Let us note that in order to handle both the issue of the lack of decay at infinity and the largeness of the higher order norms, we exploit a \emph{reductive} structure of the reduced system. By this we mean that one can introduce a hierarchy of estimates (both in terms of weights and in terms of size) so that when considered in an appropriate sequence, one can bound the terms one by one in order to obtain the desired estimates.

The remainder of this paper will be organized as follows: 
\begin{itemize}
\item In \textbf{Section~\ref{sec.notations}}, we introduce the notations for this paper.
\item In \textbf{Section~\ref{sec:ENDU1}}, we introduce the class of polarized $\mathbb U(1)$ spacetimes and the system of equations to be studied. 
\item In \textbf{Section~\ref{sec.elliptic.gauge}}, we introduce our elliptic gauge condition.
\item In \textbf{Section~\ref{sec.gauge}}, we give the main result of the paper.
\item In \textbf{Section~\ref{sec.reduced}}, we introduce a reduced system.
\item In \textbf{Section~\ref{sec.data.constraints}}, we study the constraint equations, following \cite{Huneau.constraints}.
\item In \textbf{Section~\ref{sec.solve.reduced}}, we prove existence and uniqueness of solutions to the reduced system (introduced in Section~\ref{sec.reduced})
\item In \textbf{Section~\ref{sec.final}}, we show that a solution to the reduced system is a solution to the original system.
\item In \textbf{Section~\ref{sec:improved.regularity}}, we conclude the proof of the main theorem (Theorem~\ref{lwp}) by proving all the estimates stated in Theorem~\ref{lwp}.
\item Finally, we have three appendices:
\begin{itemize}
	\item In \textbf{Appendix~\ref{weightedsobolev}}, we collect some results about Sobolev embedding, product estimates and elliptic estimates in weighted Sobolev spaces in $\mathbb R^2$.
	\item In \textbf{Appendix~\ref{app.B}}, we collect some computations in the elliptic gauge.
	\item In \textbf{Appendix~\ref{app.C}}, we collect some computations for the eikonal functions.
	\end{itemize}
\end{itemize}

\subsection*{Acknowledgements} Most of this work was carried out when both authors were at Cambridge University. {C. Huneau is supported by the ANR-16-CE40-0012-01.} J. Luk is supported in part by a Terman fellowship. 

\section{Notations and function spaces}\label{sec.notations}

{\bf Ambient space and coordinates}
In this paper, we will be working on the ambient manifold $\q M:=I\times \mathbb R^2$, where $I\subset \mathbb R$ is an interval. The space will be given a system of coordinates $(t,x^1,x^2)$. We will use $x^i$ with the lower case Latin index $i,j=1,2$ and will also sometime denote $x^0=t$.

{\bf Conventions with indices} We will use the following conventions:
\begin{itemize}
\item Lower case Latin indices run through the spatial indices $1,2$, while lower case Greek indices run through all the spacetime indices.
\item Repeat indices are always summed over: where lower case Latin indices sum over the spatial indices $1,2$ and lower case Greek indices sum over all indices $0,1,2$.
\item Unless otherwise stated, lower case Latin indices are always raised and lowered with respect to the standard Euclidean metric $\delta_{ij}$.
\item In contrast, lower case Greek indices are raised and lowered with respect to the spacetime metric $g$. In cases where there are more than one spacetime metric in the immediate context, we will not use this convention but will instead spell out explicitly how indices are raised and lowered.
\end{itemize}

{\bf Differential operators} We will use the following conventions for differential operators:
\begin{itemize}
\item $\rd$ denotes partial derivatives in the coordinate system $(t,x^1,x^2)$. We will frequently write $\rd_i$ for $\rd_{x^i}$. In particular, we denote
$$|\rd \xi|^2=(\rd_t\xi)^2+\sum_{i=1}^2(\rd_{x^i}\xi)^2.$$
\item The above $\rd$ notation also applied to rank-$r$ covariant tensors $\xi_{\mu_1\dots\mu_r}$ tangential to $I\times \mathbb R^2$ to mean
$$|\rd\xi|^2=\sum_{\mu_1,\dots,\mu_r=t,x^1,x^2}|\rd \xi_{\mu_1\dots\mu_r}|^2$$
and to rank-$r$ contravariant tensors $\xi_{i_1\dotsi_r}$ tangential to $\mathbb R^2$ to mean
$$|\rd\xi|^2=\sum_{i_1\dotsi_r=x^1,x^2}|\rd \xi_{i_1\dotsi_r}|^2.$$
\item $\Delta$ and $\nabla$ denotes the spatial Laplacian and the spatial gradient on $\m R^2$ with the standard \underline{Euclidean metric}. In particular, we use the convention
$$|\nabla\xi|^2=\sum_{i=1}^2|\rd_{x^i}\xi|^2.$$
\item $D$ denotes the Levi--Civita connection associated to the \underline{spacetime metric $g$}.
\item $\Box_g$ denotes the Laplace--Beltrami operator on functions, i.e., 
$$\Box_g\xi:=\f{1}{\sqrt{|\det g|}}\rd_\mu(\gi^{\mu\nu}\sqrt{|\det g|}\rd_\nu\xi).$$
\item $\q L$ denotes the Lie derivatives.
\item $e_0$ defines the vector field $e_0=\rd_t-\beta^i\rd_{x^i}$ (where $\beta$ will be introduced in \eqref{g.form}). We will often use the differential operator $\q L_{e_0}$.
\item $L$ denotes the Euclidean conformal Killing operator acting on vectors on $\m R^2$ to give a symmetric traceless (with respect to $\delta$) covariant $2$-tensor, i.e., 
$$(L\xi)_{ij}:=\delta_{j\ell}\rd_i\xi^\ell+\delta_{i\ell}\rd_j\xi^\ell-\delta_{ij}\rd_k\xi^k.$$
\end{itemize}

{\bf Functions spaces} We will work with standard function spaces $L^p$, $H^k$, $C^m$, $C^\infty_c$, etc. and assume the standard definitions. The following conventions will be important:
\begin{itemize}
\item \underline{Unless otherwise stated, all function spaces will be taken on $\m R^2$} and the measures will be taken to be the 2D Lebesgue measure $dx$. 
\item When applied to quantities defined on a spacetime $I \times \m R^2$, the norms $L^p$, $H^k$, $C^m$ denote \underline{fixed-time} norms (unless otherwise stated). In particular, if in an estimate the time $t\in I$ in question is not explicitly stated, then it means that the estimate holds for \underline{all} $t\in I$ for the time interval $I$ that is appropriate for the context.
\end{itemize}
We will also work in weighted Sobolev spaces, which are well-suited to elliptic equations. We recall here the definition, together with the definition of weighted H\"older space. The properties of these spaces that we need are listed in Appendix \ref{weightedsobolev}.
\begin{df} \label{def.spaces}
Let  $m\in \m N$, $1<p<\infty$, $\delta \in \mathbb{R}$. The weighted Sobolev space $W^m_{\delta,p}$ is the completion of $C^\infty_0$ under the norm 
	$$\|u\|_{W^m_{\delta,p}}=\sum_{|\beta|\leq m}\|(1+|x|^2)^{\frac{\delta +|\beta|}{2}}\nab^\beta u\|_{L^p}.$$
We will use the notation $H^m_\delta = W^m_{\delta,2}$, $L^p_{\de}=W^0_{\de,p}$ and $W^s_p=W^s_{p,0}$.

	The weighted H\"older space $C^m_{\delta}$ is the complete space of $m$-times continuously differentiable functions under the norm 
	$$\|u\|_{C^m_{\delta}}=\sum_{|\beta|\leq m}\|(1+|x|^2)^{\frac{\delta +|\beta|}{2}}\nab^\beta u\|_{L^\infty}.$$
\end{df}
Finally, let us introduce the convention that we will use the above function spaces for both tensors and scalars on $\m R^2$, where the norms in the case of tensors are understood componentwise.

\section{Einstein--null dust system and reduction under polarized $\mathbb U(1)$ symmetry}\label{sec:ENDU1}
From now on, we consider a Lorentzian manifold $(I\times \m R^{3}, ^{(4)}g)$, where $I\subset \mathbb R$ is an interval, and $^{(4)}g$ is a Lorentzian metric that takes the following form,
$$^{(4)}g= e^{-2\phi}g+ e^{2\phi}(dx^3)^2,$$
where $\phi:I\times \m R^{2}\to \m R$ is a scalar function and $g$ is a Lorentzian metric on $I\times \m R^{2}$. Abusing notation, we also extend $\phi$ to a function $\phi:I\times \m R^3$ in such a way that $\phi$ is independent of $x^3$. Given this ansatz of the metric, the vector field $\partial_{x_3}$ is Killing and hypersurface orthogonal.

On the manifold $I\times \m R^{3}$, we introduce the null dust variables $(F_{\bA}, u_{\bA})$, where $\bA\in \mathcal A$ for some finite set $\mathcal A$ with $|\mathcal A|=N$, $F_{\bA}: I \times \m R^2 \to \m R$, $u_{\bA}: I\times \m R^2 \to \m R$, (again also extended to $I\times \m R^3$ in a manner independent of $x^3$) so that 
$$(^{(4)}g^{-1})^{\alp\bt} \rd_\alp u_{\bA} \rd_\bt u_{\bA} = 0.$$
Define the stress-energy-momentum tensor 
$$^{(4)}T_{\mu\nu} = \sum_{\bA} (F_{\bA})^2\rd_\mu u_{\bA} \rd_\nu u_{\bA}.$$
The Einstein--null dust system is given by
\begin{equation}\label{END4D}
\left\{\begin{array}{l}
R_{\mu \nu}(^{(4)} g)= \sum_{{\bA}} (F_{{\bA}})^2\partial_\mu u_{{\bA}} \partial_\nu u_{{\bA}},\\
2(^{(4)}g^{-1})^{\alpha \beta}\partial_{\alpha} u_{{\bA}} \partial_{\beta} F_{{\bA}} + (\Box_{^{(4)}g} u_{{\bA}}) F_{{\bA}} = 0,\\
(^{(4)}g^{-1})^{\alpha \beta}\partial_\alpha u_{{\bA}} \partial_\beta u_{{\bA}}=0.
\end{array}
\right.
\end{equation}

Notice that the Einstein vacuum equations $R(^{(4)}g)_{\mu\nu}=0$ for the $(3+1)$-dimensional metric is included as a particular case. 

The above symmetry assumptions (for $^{(4)}g$, $u_\bA$ and $F_\bA$) are known as polarized $\mathbb U(1)$ symmetry. Under polarized $\mathbb U(1)$ symmetry, the system \eqref{END4D} reduces to the following equivalent system in $(2+1)$ dimensions:
\begin{equation}\label{back}
\left\{\begin{array}{l}
R_{\mu \nu}(g)= 2\partial_\mu \phi \partial_\nu \phi + \sum_{{\bA}} (F_{{\bA}})^2\partial_\mu u_{{\bA}} \partial_\nu u_{{\bA}},\\
\Box_{g}\phi= 0,\\
2(g^{-1})^{\alpha \beta}\partial_{\alpha} u_{{\bA}} \partial_{\beta} F_{{\bA}} + (\Box_{g} u_{{\bA}}) F_{{\bA}} = 0,\\
(g^{-1})^{\alpha \beta}\partial_\alpha u_{{\bA}} \partial_\beta u_{{\bA}}=0.
\end{array}
\right.
\end{equation}

In particular, the Einstein vacuum equations $R(^{(4)}g)_{\mu\nu}=0$ are equivalent to the following system for $(g,\phi)$:
\begin{equation}
\label{sys}\left\{
\begin{array}{l}
\Box_g \phi = 0,\\
R_{\mu \nu}(g)= 2\partial_\mu \phi \partial_\nu \phi.
\end{array}
\right.
\end{equation}

\section{Elliptic gauge}\label{sec.elliptic.gauge}
We write the $(2+1)$-dimensional metric $g$ on $\q M:=I\times \mathbb R^2$ in the form
\begin{equation}\label{g.form.0}
g=-N^2dt^2 + \bar{g}_{ij}(dx^i + \beta^i dt)(dx^j + \beta^jdt).
\end{equation}
Let $\Sigma_t:=\{(s,x^1,x^2): s=t\}$ and $e_0= \partial_t -\beta^i\rd_i$, which is a future directed normal to $\Sigma_t$.
We introduce the second fundamental form of the embedding $\Sigma_t \subset \q M$
\begin{equation}\label{K}K_{ij}=-\frac{1}{2N}\q L_{e_0} \bar{g}_{ij}.
\end{equation}
 We decompose $K$ into its trace and traceless parts.
\begin{equation}\label{K.tr.trfree} 
K_{ij}=:H_{ij}+\frac{1}{2}\bar{g}_{ij}\tau.
\end{equation}
Here, $\tau:=\mbox{tr}_{\bar{g}} K$ and $H_{ij}$ is therefore traceless with respect to $\bar{g}$.

Introduce the following \emph{gauge conditions}:
\begin{itemize}
	\item $\bar{g}$ is conformally flat, i.e., for some function $\gamma$,
	\begin{equation}\label{uniformized.g}
	\bar{g}_{ij}=e^{2\gamma}\delta_{ij};
	\end{equation}
	\item The constant $t$-hypersurfaces $\Sigma_t$ are maximal
	$$\tau=0.$$
\end{itemize}
By \eqref{g.form.0}, it follows that 
\begin{equation}\label{g.form}
g=-N^2dt^2 + e^{2\gamma}\delta_{ij}(dx^i + \beta^i dt)(dx^j + \beta^jdt).
\end{equation}
Hence the determinant of $g$ is given by
\begin{equation}\label{g.det}
\det(g)=e^{2\gamma}\beta^2(-e^{4\gamma}\beta^2)+e^{2\gamma}(e^{2\gamma}(-N^2+e^{2\gamma}|\beta|^2)-e^{4\gamma}\beta^1\beta^1)=-e^{4\gamma}N^2.
\end{equation}
Moreover, the inverse $g^{-1}$ is given by
\begin{equation}\label{g.inverse}
g^{-1}=\frac{1}{N^2}\left(\begin{array}{ccc}-1 & \beta^1 & \beta^2\\
\beta^1 & N^2e^{-2\gamma}-\beta^1\beta^1 & -\beta^1 \beta^2\\
\beta^2 & -\beta^1 \beta^2 & N^2e^{-2\gamma}-\beta^2\beta^2
\end{array}
\right).
\end{equation}

\section{Main results}\label{sec.gauge}

\subsection{Initial data}\label{sec.id}

In this section, we describe the initial data for \eqref{sys} and \eqref{back}. We will focus our discussions on \eqref{back} as the local well-posedness of \eqref{sys} clearly follows from that of \eqref{back}.

The initial data for \eqref{back} consist of the prescription of the geometry (first and second fundamental forms of $\Sigma_0$) as well as the matter fields. For convenience, we will require $\nab\phi$, its normal derivative and $F_{\bA}$ to be initial compactly supported. By the finite speed of propagation, they will remain compactly supported. 

To completely specify the initial data, we also need to prescribe the initial values for solutions to the eikonal equation $\gi^{\mu\nu}\rd_\mu u_\bA \rd_\nu u_\bA=0$. To this end, we will prescribe the initial values for $u_{\bA}\restriction_{\Sigma_0}$ and will require also that 
\begin{enumerate}
\item $\min_{\bA} \inf_{x\in \mathbb R^2}|\nabla u_{\bA}\restriction_{\Sigma_0}|(x) > C_{eik}^{-1}$ for some $C_{eik}>0$,
\item $(du_{\bA})^\sharp\restriction_{\Sigma_0}$ to be past-directed, $\forall \bA$.
\end{enumerate}
 The former condition in particular implies that $u_{\bA}$ has no critical points. The latter condition is equivalent to requiring that $(e_0 u_{\bA})\restriction_{\Sigma_0}>0$ (or, equivalently, by \eqref{g.form}, $(e_0 u_{\bA})\restriction_{\Sigma_0}=Ne^{-\gamma}|\nabla u_{\bA}|\restriction_{\Sigma_0}$). Moreover, while $u_{\bA}$ only becomes relevant in a compact subset\footnote{Indeed, $u_{\bA}$ only influences the metric according to \eqref{back} on the support of $F_{\bA}$.}, we will for technical convenience define $u_{\bA}$ globally and also require the level sets of $u_{\bA}$ to be asymptotic to planes in $\m R^2$, or more precisely, for each $\bA\in \mathcal A$, there exists a constant vector field\footnote{i.e., $\overrightarrow{c_{\bA}}=((c_1)_{\bA},(c_2)_{\bA})$, where $(c_1)_{\bA},(c_2)_{\bA}\in \m R$ are constants.} $\overrightarrow{c_{\bA}}$ such that $\nab u_{\bA}-\overrightarrow{c_{\bA}}$ is in an appropriate weighted Sobolev space.

Before we proceed to define the notion of admissible initial data, we need to fix a cutoff function for the rest of the paper:
\begin{df}[Cutoff function $\chi$]\label{def.cutoff}
From now on, let $\chi(|x|)$ be a fixed smooth cutoff function with $\chi=0$ for $|x|\leq 1$ and $\chi=1$ for $|x|\geq 2$.
\end{df}

We now make precise the discussions on the initial data set in the following definition:
\begin{df}[Admissible initial data]\label{def.data}
For $-\frac{1}{2}<\delta<0$, $k \geq 3$, $R>0$ and $\mathcal A$ a finite set, an {\bf admissible initial data set} with respect to the elliptic gauge for \eqref{back} consists of 
\begin{enumerate}
\item {a} conformally flat intrinsic metric $e^{2\gamma}\delta_{ij}\restriction_{\Sigma_0}$ which admits a decomposition
$$\gamma=-\alp \chi(|x|) \log (|x|) +\widetilde{\gamma},$$
where $\alp\geq 0$ is a constant, $\chi(|x|)$ is as in Definition~\ref{def.cutoff}, and $\tilde{\gamma}\in H^{k+2}_{\delta}${;}
\item {a} second fundamental form $(H_{ij})\restriction_{\Sigma_0} \in H^{k+1}_{\delta+1}$ which is traceless;
 \item $(\f{1}{N}(e_0\phi),\nabla \phi) \restriction_{\Sigma_0}\in H^k$, compactly supported in $B(0,R)$;
		\item $F_{\bA} \restriction_{\Sigma_0} \in H^k$, compactly supported in $B(0,R)$ for every $\bA\in \mathcal A$;
		\item $u_{\bA}\restriction_{\Sigma_0}$ such that $\inf_{x\in \mathbb R^2}|\nabla u_{\bA}\restriction_{\Sigma_0}|(x)> C_{eik}^{-1}$ for some $C_{eik}>0$ and $\left(\nabla u_{\bA}\restriction_{\Sigma_0}-\overrightarrow{c_{\bA}} \right)\in H^{k+1}_\delta$, where $\overrightarrow{c_{\bA}}$ is a constant vector field for every $\bA\in \mathcal A$.
\end{enumerate}
$\gamma$ and $H$
are required to satisfy the following {\bf constraint equations}:
\begin{align}
&\de^{ik}\partial_k H_{ij}=-\f{2e^{2\gamma}}{N}(e_0\phi)\partial_j \phi -\sum_{\bA} e^\gamma F_\bA^2|\nabla u_\bA|\partial_j u_\bA,\label{mom}\\
&\Delta \gamma + e^{-2\gamma}\left(\f{e^{4\gamma}}{N^2}(e_0\phi)^2+\frac{1}{2}|H|^2\right)+|\nabla \phi|^2+\sum_\bA F_\bA^2|\nabla u_\bA|^2=0.\label{ham}
\end{align}
\end{df}

It turns out that we can find freely prescribable initial data, from which (under suitable smallness assumptions) one can construct admissible initial data satisfying the constraint equations. To this end, it will be convenient not to prescribe the unit normal derivative $\f{1}{N}e_0$ of the scalar field $\phi$ and the density of the null dusts $F_{\bA}$, but instead prescribe appropriately rescaled versions as defined in \eqref{data.rescaled}. We define the notion of admissible free initial data as follows:
\begin{df}[Admissible free initial data]\label{def.free.data}
Define $\dot{\phi}$, $\breve{F}_\bA$ as follows:
\begin{equation}\label{data.rescaled}
\dot{\phi}=\frac{e^{2\gamma}}{N} (e_0 \phi),\quad\breve{F}_\bA= F_{\bA} e^{\frac{\gamma}{2}},
\end{equation}
where $\gamma$ is as in \eqref{g.form}.

For $-\frac{1}{2}<\delta<0$, $k \geq 3$, $R>0$ and $\mathcal A$ a finite set, an {\bf admissible free initial data set} with respect to the elliptic gauge is given by the following:
\begin{enumerate}
		\item $(\dot{\phi},\nabla \phi) \restriction_{\Sigma_0}\in H^k$, compactly supported in $B(0,R)$;
		\item $\breve{F}_{\bA} \restriction_{\Sigma_0} \in H^k$, compactly supported in $B(0,R)$ for every $\bA\in \mathcal A$;
		\item $u_{\bA}\restriction_{\Sigma_0}$ such that $\inf_{x\in \mathbb R^2}|\nabla u_{\bA}\restriction_{\Sigma_0}|(x)> C_{eik}^{-1}$ for some $C_{eik}>0$ and $\left(\nabla u_{\bA}\restriction_{\Sigma_0}-\overrightarrow{c_{\bA}} \right)\in H^{k+1}_\delta$, where $\overrightarrow{c_{\bA}}$ is a constant vector field for every $\bA\in \mathcal A$.
	\end{enumerate}
Moreover, $(\dot{\phi},\nabla \phi,\breve{F}_\bA, u_\bA)\restriction_{\Sigma_0}$ is required to satisfy
\begin{equation}\label{main.data.cond}
\int_{\mathbb R^2} \left(-2\dot{\phi}\rd_j \phi- {\sum_{\bA}}\breve{F}_{\bA}^2|\nab u_{\bA}|\rd_j u_{\bA}\right) \, dx{=0}.
\end{equation}
\end{df}

The fact that we claimed above, i.e., that an admissible free initial data set gives rise to an actual admissible initial data satisfying the constraint equations, will be the content of Lemma~\ref{lm.constraint}.

\subsection{Local well-posedness}\label{seclwpintro}

The following is our main result on local well-posedness for \eqref{back} (and therefore also \eqref{sys}). As we already mentioned in the introduction, we need a smallness assumption \eqref{smallness.fd}, but importantly for the applications in \cite{HL.HF}, it is required only for the lower norms but not the high norms.
\begin{thm}\label{lwp}
Let $-\frac{1}{2}<\delta<0$, $k \geq 3$, $R>0$ and $\mathcal A$ be a finite set. Given a free initial data set as in Definition \ref{def.free.data} such that
	\begin{equation}\label{smallness.fd}
\|\dot{\phi}\|_{L^\infty}+\|\nabla \phi\|_{L^\infty} + {\max_{\bA}}\|\breve{F}_{\bA}\|_{L^\infty}\leq \ep{,}
	\end{equation}
	{and}
	\begin{equation}\label{init.u.bd}
	{C_{eik}:= {\left(\min_{\bA} \inf_{x\in \mathbb R^2} |\nab u_{\bA}|(x)\right)^{-1} + \max_{\bA}}\|\nab u_{\bA}-\overrightarrow{c_{\bA}}\|_{H^{k+1}_{\delta}}<\infty,}
	\end{equation}
	and 
	\begin{equation}\label{C.high.def}
	C_{high}:=\|\dot{\phi}\|_{H^k}+\|\nabla \phi\|_{H^k}+ \|\breve{F}_{\bA}\|_{H^k} <\infty.
	\end{equation}
	Then{, for any $C_{eik}$ and $C_{high}$,} there exists a constant $\ep_{low}=\ep_{low}({C_{eik}, k,\delta,R})>0$ \underline{independent of $C_{high}$} and a $T=T(C_{high},{C_{eik},} k, \delta, R)>0$ such that if $\ep<\ep_{low}$, there exists a unique solution to \eqref{back} in elliptic gauge on $[0,T]\times \m R^2 $.
	Moreover, defining $\delta'=\delta-\ep,$ $\delta''=\delta-2\ep$, $\delta'''=\delta-3\ep$, the following holds for some constant $C_h = C_h(C_{eik}, C_{high}, k,\de,R)>0$:
	\begin{itemize}
	\item {The following estimates hold for $\phi$, $F_{\bA}$ and $u_{\bA}$ for all $\bA\in \mathcal A$ for $t\in [0,T]$:}
	\begin{align*}
	\|\nabla \phi\|_{H^k} +\|\partial_t \phi \|_{H^k} +\|\partial^2_t \phi\|_{H^{k-1}}\leq &C_h, \\
	\max_\bA\left(\|F_\bA \|_{H^k} + \|\partial_t F_{\bA}\|_{H^{k-1}}+\|{\partial^2_t} F_{\bA}\|_{H^{k-2}}\right)\leq &C_h, \\
	\left(\min_{\bA} \inf_{x\in \mathbb R^2} |\nab u_{\bA}|(x)\right)^{-1} + \max_\bA\left(\|\nabla u_{\bA}-\overrightarrow{c_{\bA}} \|_{ H^{k}_{\delta''}} +\|e^{\gamma}N^{-1} (e_0 u_{\bA}) -|\overrightarrow{c_{\bA}}|\|_{H^{k}_{\delta''}}\right)\leq & C_h,\\
	\max_\bA\left(\|\partial_t\nabla u_{\bA} \|_{ H^{k-1}_{\delta'''}}+\|\partial_t^2\nabla u_{\bA} \|_{ H^{k-2}_{\delta''''}}+\| \partial_t \left(\f{e^{\gamma}}{N} e_0 u_\bA \right)\|_{ H^{k-1}_{\delta'''}}
			+\|\partial_t^2\left(\f{e^{\gamma}}{N} e_0 u_\bA \right)\|_{H^{k-2}_{\delta''''}}\right)\leq &C_h.
				\end{align*}				
	\item The metric components $\gamma$ and $N$ can be decomposed as
				$$\gamma = {\alp}\chi(|x|){\log}(|x|) + \wht \gamma,\quad N = 1+ N_{asymp}(t)\chi(|x|){\log}(|x|) + \wht N,$$
				with ${\alp}\leq 0$ a constant, $N_{asymp}(t)\geq 0$ a function of $t$ alone and $\chi(|x|)$ is as in Definition~\ref{def.cutoff}.
	\item {$\gamma$, $N$ and $\beta$ obey the following estimates for $t\in [0,T]$:}
	\begin{align*}
	{|{\alp}|} +\|{\wht \gamma}\|_{ H^{k+2}_{\delta}} + \|\partial_t {\wht \gamma}\|_{H^{k+1}_{\delta}} + \|\partial_t^2 {\wht \gamma} \|_{H^k_{\delta}}\leq &C_h,\\
	|N_{asymp}| +|\partial_t N_{asymp}|+|\partial^2_t N_{asymp}|\leq &C_h,\\
	\|\wht N\|_{H^{k+2}_\delta}+ \|\partial_t \wht N\|_{H^{k+1}_\delta}+\|\partial^2_t \wht N\|_{H^{k}_\delta}\leq &C_h,\\
	\|\beta \|_{H^{k+2}_{\delta'}} + \|\partial_t \beta\|_{H^{k+1}_{\delta'}}+\|\partial^2_t \beta\|_{H^{k}_{\delta'}} \leq &C_h.
	\end{align*} 
	\item The support of $\phi$ and $F_{\bA}$ satisfies $$supp(\phi,F_{\bA}) \subset J^+(\{t=0\}\cap B(0, R)),$$
	where $J^+$ denotes the causal future.
	\end{itemize}	\end{thm}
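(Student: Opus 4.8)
The plan is to follow the standard reduce--solve--return strategy, but with every estimate packaged to be insensitive to the largeness of the higher-order norms. First I would set up the reduced system: impose the elliptic gauge \eqref{g.form}, maximal slicing $\tau=0$, and conformal flatness, and derive the resulting equations for $\gamma$, $N$, $\beta$ (which by the computations promised in Appendix~\ref{app.B} become semilinear elliptic equations with right-hand sides quadratic in $\rd\phi$, $H$, $F_\bA$, $\nab u_\bA$), together with the wave equation $\Box_g\phi=0$, the transport equation for $F_\bA$, and the eikonal equation for $u_\bA$. The initial data are produced by Lemma~\ref{lm.constraint}: starting from an admissible free data set one solves the momentum constraint \eqref{mom} for $H$ (a divergence equation on $\m R^2$, solvable precisely because of the integral condition \eqref{main.data.cond}, which kills the obstruction coming from the kernel of the relevant operator on weighted spaces) and the Hamiltonian constraint \eqref{ham} for $\gamma$ (a scalar semilinear elliptic equation, where the logarithmic growth $-\alp\chi\log|x|$ with $\alp\geq0$ is forced by the sign of the source and the asymptotics of $\Delta^{-1}$ on $\m R^2$). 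Then one must track the extra $x^3$-independent fields: $\dot\phi$, $\breve F_\bA$ and $u_\bA\restriction_{\Sigma_0}$ with its lower bound $C_{eik}^{-1}$ and weighted decay of $\nab u_\bA-\overrightarrow{c_\bA}$.

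Next I would run a fixed-point/continuity argument on a time interval $[0,T]$. The key structural observation, and the thing the paper calls the \emph{reductive structure}, is that the unknowns can be arranged in a hierarchy so that no estimate ever feeds a high norm of one quantity into the smallness-requiring part of another. Concretely: the $W^1_\infty$-type norms of $(\nab\phi, \f1N e_0\phi, F_\bA)$ stay of size $\lesssim\ep$ on $[0,T]$ by energy/transport estimates that only cost a factor $e^{C T}$ (this is where the smallness of $\ep$ relative to $C_{eik}$, but independence from $C_{high}$, enters), while their higher $H^k$ norms are allowed to grow to $C_h(C_{eik},C_{high},\dots)$; the elliptic quantities $\gamma,N,\beta$ are then estimated from \eqref{mom}--\eqref{ham} and the lapse/shift equations using weighted elliptic estimates from Appendix~\ref{weightedsobolev}, where the nonlinear (e.g.\ $e^{2\gamma}$) factors are controlled through $\|\gamma\|_{L^\infty}$ and $\|\nab\gamma\|_{L^\infty}$ — again \emph{low} norms, bounded by $\ep$ plus a contribution from $C_{eik}$ — so that the elliptic estimates close \emph{linearly} in the top-order norms. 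The eikonal functions are handled by integrating along the null generators of the level sets of $u_\bA$ (Appendix~\ref{app.C}), using the lower bound on $|\nab u_\bA|$ to guarantee no critical points form on $[0,T]$, and propagating the weighted decay of $\nab u_\bA-\overrightarrow{c_\bA}$ with at most an $\ep$-sized loss of weight at each order (hence the chain $\delta>\delta'>\delta''>\delta'''$). Contraction then gives a unique solution of the reduced system with exactly the norms listed in the theorem, and finite speed of propagation for the wave/transport equations yields the support statement for $(\phi,F_\bA)$.

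Then I would show in the manner of Section~\ref{sec.final} that a solution of the reduced system actually solves \eqref{back}: one checks that the gauge conditions $\tau=0$ and conformal flatness are propagated, i.e.\ that the quantities measuring the failure of the full Einstein equations satisfy a homogeneous linear system (with coefficients controlled by the already-obtained bounds) with vanishing initial data by virtue of the constraints \eqref{mom}--\eqref{ham}, and hence vanish on $[0,T]$; the second Bianchi identity / the structure of the reduced equations give the needed propagation equations. Finally I would collect, in the manner of Section~\ref{sec:improved.regularity}, all the quantitative bounds — the $H^k$ and weighted $H^k_{\delta^{(j)}}$ estimates for $\phi$, $F_\bA$, $u_\bA$, and the weighted estimates plus the $\alp,N_{asymp}$ decompositions for $\gamma,N,\beta$ — which come out of the fixed-point construction with $C_h$ depending on $C_{eik},C_{high},k,\delta,R$ but $\ep_{low}$ depending only on $C_{eik},k,\delta,R$.

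I expect the main obstacle to be closing the elliptic estimates for $\gamma$, $N$, $\beta$ \emph{without} any smallness of the high norms: one must arrange the argument so that every product appearing in the elliptic right-hand sides has at most one factor carrying a top-order derivative while the companion factor is estimated in an $L^\infty$- or $W^1_\infty$-type norm that is genuinely $O(\ep)$ (or $O(C_{eik})$), which forces a careful ordering of the bootstrap and a careful choice of the weighted spaces so that the borderline logarithmic terms on $\m R^2$ are absorbed into the explicit $\chi\log|x|$ pieces rather than into the Sobolev remainders.
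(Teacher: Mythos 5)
Your high-level outline (reduce, solve constraints, iterate with a hierarchy of estimates exploiting smallness only in low norms, show the reduced solution solves \eqref{back}, then collect bounds) is the same strategy the paper follows. But there is a genuine gap in the way you set up the reduced system, and it is precisely the point the paper flags as the reason for its particular choice.

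You propose to estimate ``the elliptic quantities $\gamma,N,\beta$... from \eqref{mom}--\eqref{ham} and the lapse/shift equations using weighted elliptic estimates,'' i.e.\ you would solve for $\gamma$ and (implicitly, since it appears in \eqref{mom}) $H$ \emph{elliptically} at each stage of the iteration. This does not close. The momentum constraint \eqref{mom} is a divergence equation whose solvability on $\m R^2$ in the relevant weighted spaces requires the compatibility condition \eqref{main.data.cond} (equivalently the conservation law \eqref{cons1}); that condition is available on $\Sigma_0$ by the choice of free data, but there is no reason it should hold for a given iterate at time $t>0$, and one cannot prove it holds until one already knows the full Einstein equations are satisfied. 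This is exactly why the paper's reduced system \eqref{tau}--\eqref{chi} solves only $N$ and $\beta$ elliptically, while $\gamma$ is propagated by a \emph{wave} equation \eqref{gamman} and $H$ by a \emph{transport} equation \eqref{hijn}; the Hamiltonian and momentum constraints (and $\tau=0$) are then recovered a posteriori in Proposition~\ref{prop:final} via a Gr\"onwall argument on the quantities $\tau$, $G_{0i}-T_{0i}$, $\Box_g u_\bA-\chi_\bA$, and only \emph{after} that does Section~\ref{sec:improved.regularity} reinstate the elliptic equations \eqref{elliptic.1}--\eqref{elliptic.3} to upgrade the regularity of $\gamma,\beta$ to the $H^{k+2}$ levels in the theorem. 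In your scheme the iterates would in general fail the compatibility condition, and the elliptic estimate you want to apply simply has no solution.

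A second, smaller omission: your treatment of the dust transport equation for $F_\bA$ as written would lose a derivative, since $\Box_g u_\bA$ involves two derivatives of $u_\bA$ but only one derivative of $u_\bA$ is controlled at the relevant level. The paper avoids this by introducing the auxiliary unknown $\chi_\bA$ as a stand-in for $\Box_g u_\bA$, propagating it by the Raychaudhuri-type equation \eqref{chi}, and only proving $\chi_\bA=\Box_g u_\bA$ a posteriori (Remark~\ref{rmk:chi}, Proposition~\ref{prop:final}). Without such a device your iteration would not close at the stated regularity. Relatedly, the paper solves the geodesic equation \eqref{geodesic.reduced} for the null vector field $L_\bA$ and defines $u_\bA$ through \eqref{u}, rather than integrating the eikonal equation directly; this is what makes the transport estimates for the dust quantities tractable.
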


\begin{rk}[The $|\mathcal A|\to \infty$ limit]
We observe from the proof that in Theorem~\ref{lwp}, we do not need $\mathcal A$ to be a finite set\footnote{We also remark that this is in contrast to the problem in our companion paper \cite{HL.HF}, where the assumption that $|\mathcal A|$ is finite is necessary}. Instead, in the case $|\mathcal A|=\infty$, as long as we replace the estimates for $\breve{F}_\bA$ in \eqref{smallness.fd} and \eqref{C.high.def} with appropriate $\ell^2$ norms, i.e.,
$$\left(\sum_\bA \|\breve{F}_{\bA}\|_{L^\infty}^2\right)^{\f 12}\leq  \ep,\quad \left(\sum_{\bA} \|\breve{F}_\bA\|_{H^k}^2 \right)^{\f 12}<\infty,$$
with all other assumptions unchanged, then the conclusion in Theorem~\ref{lwp} still holds.
\end{rk}

\begin{rk}
	We {remark on} the following fact{s regarding} the maximal foliation:
	\begin{itemize}
		\item The lapse function {$N$} has a logarithmic growth {as $|x|\to \infty$}.
		\item The following conservation laws hold{:}
\begin{align}
&\int_{\m R^2}\left(\f{2e^{2\gamma}}{N}(e_0\phi)\partial_j \phi +\sum_{\bA} e^\gamma F_\bA^2|\nabla u_\bA|\partial_j u_\bA\right)dx=0, \label{cons1}\\
&\int_{\m R^2}\left( e^{-2\gamma}\left(\f{e^{4\gamma}}{N^2}(e_0\phi)^2+\frac{1}{2}|H|^2\right)+|\nabla \phi|^2+\sum_\bA F_\bA^2|\nabla u_\bA|^2\right)dx=\alpha \label{cons2}.
\end{align}
	\end{itemize}
\end{rk}

It will be useful to note the following easy consequence of the proof of Theorem \ref{lwp}, which states that in the genuinely small data regime, the time of existence can be taken to be $T=1$. Since its proof is simplier than the general case in Theorem~\ref{lwp}, we omit its proof.
\begin{cor}\label{lwp.small}
Suppose the assumptions of Theorem~\ref{lwp} hold and let $\{c_{\bA}\}_{\bA\in \mathcal A}$ be a collection of constant vector fields on the plane. There exists $\ep_{small}=\ep_{small}(\delta,k,R,c_{\bA})$ such that if $C_{high}$ and $\ep$ in Theorem \ref{lwp} both satisfy
$$C_{high},\ep\leq \ep_{{small}}$$
and moreover
$$\sum_{\bA}\|\nabla u_{\bA}- \overrightarrow{c_{\bA}}\|_{H^{k+1}_\delta}\leq \ep_{{small}},$$
then the unique solution exists in $[0,1]\times \m R^2$. Moreover, there exists $C_0{= C_0(\de, k, R,c_{\bA})}$ such that {all the estimates in Theorem~\ref{lwp} hold with $C_h$ replaced by $C_0\ep$.}

\end{cor}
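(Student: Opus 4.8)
The plan is to revisit the iteration/estimate scheme used to prove Theorem~\ref{lwp} and observe that in the genuinely small data regime — where $C_{high}$, $\ep$, and the deviation $\sum_{\bA}\|\nabla u_{\bA}-\overrightarrow{c_{\bA}}\|_{H^{k+1}_\delta}$ are all bounded by a single small parameter $\ep_{small}$ — the smallness can be propagated in time for a full unit interval, so that $T=1$ is admissible and every quantity controlled in Theorem~\ref{lwp} obeys a bound of the form $C_0\ep$. First I would set up the bootstrap: assume on a time interval $[0,T^*]\subseteq[0,1]$ that all the matter norms ($\|\nabla\phi\|_{H^k}$, $\|\partial_t\phi\|_{H^k}$, $\max_\bA\|F_\bA\|_{H^k}$, etc.) and the eikonal deviations are bounded by $A\ep$ for a large constant $A$ to be fixed, with $\ep$ small. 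The elliptic part of the system (the equations \eqref{mom}, \eqref{ham} for $\gamma$, $H$, together with the equations for $N$ and $\beta$) is then driven by quadratic-in-small source terms, so the weighted elliptic estimates from Appendix~\ref{weightedsobolev} — exactly as used in the proof of Lemma~\ref{lm.constraint} and in Section~\ref{sec.solve.reduced} — give $|\alp|+\|\wht\gamma\|_{H^{k+2}_\delta}+|N_{asymp}|+\|\wht N\|_{H^{k+2}_\delta}+\|\beta\|_{H^{k+2}_{\delta'}}\lesssim (A\ep)^2$, and similarly for the time derivatives, with constants depending only on $\delta,k,R,c_\bA$. In particular $N$ is uniformly close to $1$ and $\gamma$ uniformly small, so the metric $g$ is a small perturbation of $-dt^2+\delta_{ij}dx^idx^j$ on all of $[0,1]\times\mathbb R^2$.

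Next I would close the estimates for the dynamical variables. For $\phi$ one uses the wave equation $\Box_g\phi=0$ and the standard energy estimate for the operator $\Box_g$ on the nearly-flat background: the energy flux at time $t$ is controlled by the flux at $t=0$ times $\exp(C\int_0^t\|\partial g\|_{L^\infty}\,ds)$ plus error terms that are cubic in small quantities, hence $\|\nabla\phi\|_{H^k}+\|\partial_t\phi\|_{H^k}\le C_0\ep$ on $[0,1]$ provided $A$ is chosen large relative to $C_0$ and $\ep$ small; the bound on $\partial_t^2\phi$ then follows by reading off $\partial_t^2\phi$ from the wave equation and using the elliptic bounds on $N,\beta,\gamma$. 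For the null dust $F_\bA$ one uses the transport equation (third line of \eqref{back}), i.e. $2\mathcal{L}_{L_\bA}F_\bA + (\Box_g u_\bA)F_\bA=0$ along the null geodesic vector field $L_\bA=(du_\bA)^\sharp$, together with commutation with spatial derivatives; since $\Box_g u_\bA$ is controlled in weighted norms (using the eikonal estimates from Appendix~\ref{app.C} and the fact that $u_\bA$ stays non-degenerate, $C_{eik}$ being now of size $1$), Grönwall gives $\max_\bA\|F_\bA\|_{H^k}\le C_0\ep$ on $[0,1]$, and the support statement $supp(F_\bA)\subset J^+(\{t=0\}\cap B(0,R))$ follows from finite speed of propagation as in Theorem~\ref{lwp}. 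For the eikonal functions themselves, one integrates the eikonal equation and its derivatives along $L_\bA$; here the point is that the deviation $\nabla u_\bA-\overrightarrow{c_\bA}$ starts of size $\ep_{small}$ and the source terms in its evolution are products of this deviation with small metric quantities, so the deviation remains $O(\ep)$ on $[0,1]$ — this is the only place one genuinely uses the extra hypothesis $\sum_\bA\|\nabla u_\bA-\overrightarrow{c_\bA}\|_{H^{k+1}_\delta}\le\ep_{small}$ rather than merely $C_{eik}<\infty$.

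The main obstacle — really the only substantive point beyond bookkeeping — is the same one that drives Theorem~\ref{lwp}: the elliptic part of the system does not see the small time parameter, so one cannot absorb its contributions by shrinking $T$. The resolution is that in the genuinely small regime one does not need to: the elliptic quantities are quadratically small in the uniformly small dynamical data, so the reductive hierarchy of estimates (the ordering of weights $\delta>\delta'>\delta''>\delta'''$ and of sizes that is set up in Section~\ref{sec.solve.reduced}) closes with constants independent of $T$, and choosing $T=1$ is then harmless. Concretely, one checks that the bootstrap constant $A$ can be fixed first, depending only on $\delta,k,R,c_\bA$, and then $\ep_{small}$ chosen small enough (depending on $A$, hence on $\delta,k,R,c_\bA$) that $C_0\ep\le A\ep/2$ in every estimate, which strictly improves the bootstrap assumptions and, by continuity in $T^*$, extends them to all of $[0,1]$. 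Setting $C_0$ to be the resulting aggregate constant and tracing through the decompositions of $\gamma$ and $N$ exactly as in Theorem~\ref{lwp} yields all the stated estimates with $C_h$ replaced by $C_0\ep$. Since each of these steps is a simplification of an argument already carried out in full for Theorem~\ref{lwp}, the details are omitted.
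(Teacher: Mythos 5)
Your proposal is sound and matches the paper's intended simplification: in the genuinely small-data regime all the constants $C_i$, $C_{eik}$ that feed the iteration scheme of Section~\ref{sec.solve.reduced} are controlled by $\ep_{small}$ and by $\de,k,R,c_\bA$ alone, so the time parameter $T$ is no longer needed for absorption, and $T=1$ is admissible. The paper omits the proof, stating only that it is a simpler version of Theorem~\ref{lwp}; your bootstrap-plus-continuity argument (or equivalently, re-running the iteration of Section~\ref{sec.solve.reduced} with $C_i\lesssim\ep$ and $T=1$) is the natural way to realize this, and the structural observation you single out --- that the elliptic part is quadratic in the small dynamical data and the reductive hierarchy of weights and sizes therefore closes with $T$-independent constants --- is exactly the point.

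Two small imprecisions are worth flagging, neither of which affects the conclusion. First, in the reduced system $\gamma$ is propagated by the wave equation \eqref{gamman} and $H$ by the transport equation \eqref{hijn}; the elliptic constraints \eqref{mom}, \eqref{ham} are imposed only at $t=0$ and recovered a posteriori via Proposition~\ref{prop:final}. Your phrasing solves $(\gamma,H)$ elliptically throughout, which is available only once the constraint propagation is known; otherwise one should invoke the energy estimates of Propositions~\ref{prpgamma} and \ref{prpH}, which in the small-data regime give the same $O(\ep^2)$ bounds because the sources are quadratic in small quantities and, with $T=1$, the time-integral of the forcing is still $O(\ep^2)$ and the initial data (by Lemma~\ref{lm.constraint}) is $O(\ep^2)$. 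Second, the Gr\"onwall factors in Lemmas~\ref{gamma.wave.EE} and~\ref{trans.est} are $e^{CT}$ with $C$ a universal constant --- the weight-derivative term $\int w'\,\rd_ih\,e_0 h$ does not come with an $\ep$-small coefficient --- rather than $e^{CT\|\rd g\|_{L^\infty}}$; for $T=1$ this is $O(1)$ but not $1+O(\ep)$. A bounded (rather than near-unit) Gr\"onwall factor is all one needs to get a final constant $C_0=C_0(\de,k,R,c_\bA)$, so this is harmless, but the mechanism is slightly different from what you wrote. Finally, you might note explicitly that the assumption $\sum_\bA\|\nab u_\bA-\overrightarrow{c_\bA}\|_{H^{k+1}_\de}\leq\ep_{small}$ enters not only in the eikonal estimates but also in bounding $C_{eik}$ by a quantity depending only on $c_\bA$, which is what lets the time of existence in Theorem~\ref{lwp} be made uniform.
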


As we mentioned above, we will omit the details of the proof of Corollary~\ref{lwp.small}. We will focus on the proof of Theorem~\ref{lwp}, which will occupy most of the remainder of the paper. In order to simplify the exposition, for most of the paper, \textbf{we will assume $k=3$.} Higher derivatives estimates, i.e., the case $k>3$, follows straightforwardly from the ideas presented here.

To prove Theorem \ref{lwp}, we first introduce in Section \ref{sec.reduced} a \emph{reduced system} of equations \eqref{tau}--\eqref{chi}, which is an elliptic-hyperbolic-transport system. We then discuss the initial data appropriate for this system in Section~\ref{sec.data.constraints}. In Section~\ref{sec.solve.reduced}, we solve the reduced system using an iteration scheme. Then in Section~\ref{sec.final}, we prove that the solution to the reduced system indeed is a solution to \eqref{back}. Finally, in Section~\ref{sec:improved.regularity}, we conclude by proving all the estimates as stated in Theorem~\ref{lwp}

\section{The reduced system}\label{sec.reduced}
We consider the following system of equations, which we will call the {\bf reduced system}. Let us recall that lower case Latin indices are raised and lowered with respect to $\delta_{ij}$. We will also denote by $\Gamma^{\alp}_{\mu\nu}$ the Christoffel symbols associated to $g$.
\begin{align}
 \label{tau} & Ne^{2\gamma}\tau = -2e_0 \gamma +\rd_i\bt^i,\\
\label{gamma} &
2\Delta \gamma=2\left(\frac{\tau^2}{2}e^{2\gamma}-\frac{1}{2N}e^{2\gamma}e_0 \tau-\frac{1}{2N}\Delta N\right)- 2\delta^{ij}\rd_i\phi\rd_j\phi-\sum_{\bA} \f{e^{2\gamma}}{N^2}F_{\bA}^2(e_0 u_{\bA})^2, \\
\label{lapse}& \Delta N-e^{-2\gamma}N(|H|^2+\f 12 e^{4\gamma}\tau^2)-\frac{2e^{2\gamma}}{N}(e_0 \phi)^2-\f{e^{2\gamma}}{N}\sum_{\bf A} F_{\bf A}^2(e_0 u_{\bA})^2=0,\\
\label{shift}&(L \beta)_{ij}=2Ne^{-2\gamma}H_{ij},\\
\label{hij}&(\partial_t-\beta^k\partial_k) H_{ij}=
-2e^{-2\gamma} N H_i{ }^{\ell} H_{j\ell} +\left(\partial_j \beta^k H_{ki}+\partial_i \beta^k H_{kj}\right)\\
\nonumber&\qquad- \partial_i \partial_j N +\frac 12\delta_{ij}\Delta N  +\left(\delta_i^k\partial_j \gamma +\delta_j^k\partial_i \gamma-\delta_{ij}\de^{\ell k} \partial_{\ell} \gamma\right)\partial_k N \\
\nonumber &\qquad-2N\partial_i \phi \partial_j \phi -N\sum_{\bA} F_{\bA}^2 \rd_i u_{\bA} \rd_j u_{\bA} + N\delta_{ij}|\nabla \phi|^2 +\frac{e^{2\gamma}}{2N}\delta_{ij} \sum_{\bA}F_{\bA}^2(e_0 u_{\bA})^2,\\
\label{phi}&\Box_g \phi = 0,\\
\label{geodesic.reduced} & L_{\bA}^\rho \rd_\rho L_{\bA}^\alpha + \Gamma^\alpha_{\mu \nu} L_{\bA}^\mu L_{\bA}^\nu = 0,\quad\forall \bA,\\
\label{u} & L_{\bA}^\rho  \rd_\rho u_{\bA}=0,\quad\forall \bA,\\
\label{F} &2L_{\bA}^\rho \partial_\rho F_{\bA}+\chi_{\bA} F_{\bA}= 0,\quad\forall \bA,\\
\label{chi} &L_{\bA}^\rho \partial_\rho \chi_{\bA}+\chi_{\bA}^2=-2(L_{\bA}^{\rho} \partial_\rho \phi)^2
-\sum_{\bf B} F_{\bf B}^2(g_{\mu\nu} L_{\bA}^{\mu} L_{\bB}^{\nu})^2,\quad\forall \bA.
\end{align}
In deriving the above equations, we have used the computations in Sections~\ref{sec.Ricci} and \ref{sec.T}. We note that \eqref{tau} is the definition of $\tau$ to be the mean curvature. \eqref{gamma} is derived by setting\footnote{Recall that $R_{\mu\nu}-\f 12 g_{\mu\nu} R=T_{\mu\nu}$. Therefore,  $-\f 12 R=(g^{-1})^{\mu\nu}T_{\mu\nu}$ and hence $R_{\mu\nu}=T_{\mu\nu}+\f 12 g_{\mu\nu}R=T_{\mu\nu}-g_{\mu\nu} \mbox{tr}_g T$.} $\delta^{ij}R_{ij}=\delta^{ij}(T_{ij}-g_{ij}\mbox{tr}_g T)$, where we have used \eqref{spatial.R.tr}; \eqref{lapse} is derived by setting $R_{00}=T_{00}- g_{00} \mbox{tr}_g T$ in the case $e_0\tau=0$; \eqref{shift} follows from \eqref{beta}; \eqref{hij} is derived by setting $R_{ij}-\f 12 \delta_{ij}\delta^{k\ell}R_{k\ell}=T_{ij}-g_{ij}\mbox{tr}_g T-\f 12\delta^{ij}(T_{ij}-g_{ij}\mbox{tr}_g T)$. The equations \eqref{phi}--\eqref{chi} are chosen according to the propagation equations for the matter fields in \eqref{back}, except for issues to be discussed in Remarks~\ref{rmk:chi} and \ref{rmk:u}.

\begin{rk}[Only $N$ and $\beta$ are solved by elliptic equations]
While in the full system, $N$, $\beta$, $\gamma$ and $H$ all satisfy elliptic equations, in the reduced system, only $N$ and $\bt$ are solved through elliptic equations. $\gamma$ and $H$ are defined to be solutions to wave and transport equations respectively. 
We have to adopt this procedure, because if we wanted to {solve the} elliptic equations for $\gamma$ and $H$, given by the constraints \eqref{mom} and {\eqref{ham}}, we would need the conservation law \eqref{cons1} to hold a priori and for each iterate of our {iteration} scheme.
\end{rk}


\begin{rk}[Introduction of $\chi_{\bA}$]\label{rmk:chi}
Notice that in \eqref{F}, we are not directly solving the transport equation for $F_{\bA}$ in \eqref{back}, but we have replaced $\Box_g u_{\bA}$ by $\chi_{\bA}$, which is an auxiliary function that we introduce and is required to satisfy \eqref{chi} according to \eqref{Raychaudhuri} and \eqref{boxu=chi}. The reason is that otherwise we would need to be very careful in the iteration procedure to make use of the special structure in order to not lose derivatives. Instead, by introducing $\chi_{\bA}$ and treating it as a separate variable, we are exploiting that fact that by the Raychaudhuri equation, $\Box_g u_{\bA}$ is more regular than generic second derivatives of $u_{\bA}$ in the full nonlinear system. This allows us to more easily close the iteration scheme, and it is only a posteriori that we show $\chi_{\bA}=\Box_g u_{\bA}$ (see Proposition~\ref{prop:final}).
\end{rk}

\begin{rk}[Solving for $u_{\bA}$]\label{rmk:u}
In order to solve the eikonal equation $\gi^{\mu\nu} \rd_\mu u_{\bA} \rd_\nu u_{\bA}=0$, it is convenient to solve the geodesic equation \eqref{geodesic.reduced} for the geodesic null vector field $L_{\bA}$ and then define $u_{\bA}$ by \eqref{u}. It is a standard fact in Lorentzian geometry that (given appropriate initial conditions) in fact $L_{\bA}^{\alp} = -\gi^{\alp\bt} \rd_\bt u_{\bA}$ and that $\gi^{\mu\nu} \rd_\mu u_{\bA} \rd_\nu u_{\bA}=0$.
\end{rk}

\section{Initial data and the constraint equations}\label{sec.data.constraints}

In this section, we discuss the initial data for the reduced system. The most important task is to solve the constraint equations. In particular, we will show (as claimed in Section~\ref{sec.id}) that an admissible free initial data set gives rise to a unique admissible initial data set satisfying the constraint equations. Unlike in the later section of the paper, in this section, we will consider general $k\geq 3$ as it does not complicate the notations. In this step, we largely follow the ideas in \cite{Huneau.constraints}.

After we solve the constraint equations (which can be viewed as PDEs for $\gamma$ and $H$, in the remainder of this section, we will derive the initial data for $N$, $\bt$ (Lemma~\ref{lem:data.N.beta}), $e_0\gamma$ (Lemma~\ref{lem:e0gamma}), $L_{\bA}$ (Lemma~\ref{lem:data.L}) and $\chi_{\bA}$ (Lemma~\ref{lem:data.chi}) and prove their regularity properties. Note that since $\nab\phi$, $\dot{\phi}$ and $\breve{F}_\bA$ are prescribed (cf. \eqref{data.rescaled}), after we derive the initial data for $N$, $\gamma$ and $\bt$, we obtain the initial data for $\nab\phi$, $e_0\phi$ and $F_{\bA}$.

Let us first set up the notation of this section: We will use $C$ to denote a constant depending only on $C_{eik}$, $k$, $\de$ and $R$ (and independent of $C_i$ and $\ep$). We will also use the notation $\ls$ where the implicit constant has the same dependence as $C$.

Before we proceed to solving the constraints, it will be convenient to rewrite \eqref{mom} and \eqref{ham} in term of $\dot{\phi}$ and $\breve{F}$ as follows:
\begin{align}
\label{mom2} &\de^{ik} \partial_k H_{ij}=-2\dot{\phi}\partial_j \phi - \sum_{\bA} \breve{F}_{\bA}^2|\nabla u_{\bA}|\partial_j u_{\bA},\\
\label{ham2}&\Delta \gamma + e^{-2\gamma}\left(\dot{\phi}^2+\frac{1}{2}|H|^2\right)+|\nabla \phi|^2+ \sum_{\bA} e^{-\gamma} \breve{F}_{\bA}^2|\nabla u_{\bA}|^2=0.
\end{align}
The following is the main result on solving the constraint equations:
\begin{lm}\label{lm.constraint}
	Let $-\frac{1}{2}<\delta<0$, $k\geq 3$, $R>0$ and $\mathcal A$ be a finite set.
	Given an admissible free initial data set as in Definition~\ref{def.free.data} such that the smallness assumption \eqref{smallness.fd} holds. Then, for $\ep$ sufficiently small (depending on $C_{eik}$ (cf. \eqref{init.u.bd}), $k$, $\delta$ and $R$), there exists a unique admissible initial data set as in Definition~\ref{def.data} corresponding to the given the admissible free initial data set. In particular, there exist a unique solution $(H,\gamma)$ solving the constraint equations \eqref{mom2}, \eqref{ham2} with
$H \in H^{k+1}_{\delta+1}$ being a symmetric traceless covariant $2$-tensor and 
$$\gamma = -\alpha \chi(|x|){\log}(|x|)+ \wht \gamma$$
with $\alpha\geq 0$ being a constant, $\chi$ being as in {Definition~\ref{def.cutoff}} and $\wht \gamma \in H^{k+2}_\delta$ being a function.
Moreover,
\begin{equation}\label{est.constraints.0}
|\alpha|+ \|H\|_{H^{1}_{\delta+1}} + \|\wht \gamma \|_{H^{2}_\delta} \lesssim \ep^2 ,
\end{equation}
\begin{equation}\label{est.constraints}
\|H\|_{W^1_{\de+\f 32,4}}+ \|\wht\gamma\|_{W^2_{\de+\f 12,4}} + \|H\|_{C^0_{\delta+2}} + \|\wht \gamma \|_{C^1_{\delta+1}} \lesssim \ep^2.
\end{equation}
\end{lm}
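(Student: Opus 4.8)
The plan is to solve the momentum constraint \eqref{mom2} and the Hamiltonian constraint \eqref{ham2} by a fixed-point argument, following \cite{Huneau.constraints}, exploiting the fact that the right-hand sides are quadratic in the (small) matter data. First I would treat the momentum constraint. Since $H$ is symmetric, traceless and divergence-free with respect to $\delta$, in two dimensions such a tensor can be written in terms of a single potential; more directly, the operator $\delta^{ik}\partial_k$ acting on symmetric traceless $2$-tensors is, up to the conformal Killing operator $L$, an isomorphism between appropriate weighted spaces. Concretely, one writes $H = \frac{1}{2} L Y$ for a vector field $Y$ on $\mathbb R^2$ and reduces \eqref{mom2} to the elliptic system $\Delta Y^i = (\text{RHS})$ for the divergence-type right-hand side $-2\dot\phi\,\partial_j\phi - \sum_\bA \breve F_\bA^2|\nabla u_\bA|\,\partial_j u_\bA$. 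The crucial point is that the compatibility condition \eqref{main.data.cond} — which is precisely the vanishing of the integral of this right-hand side — is exactly what is needed to invert the Laplacian in the weighted space $H^{k+2}_\delta$ with $-\frac12<\delta<0$ (so $0>\delta>-1$, the range in which $\Delta: H^{m+2}_\delta \to H^m_{\delta+2}$ is an isomorphism onto the subspace orthogonal to constants, cf. Appendix~\ref{weightedsobolev}). Since $\dot\phi,\nabla\phi,\breve F_\bA$ are compactly supported and in $H^k$, and $\nabla u_\bA - \overrightarrow{c_\bA} \in H^{k+1}_\delta$ with $|\nabla u_\bA|$ bounded, the right-hand side lies in $H^k_{\delta+2}$ (indeed, it is compactly supported in $B(0,R)$ because $\phi$ and $F_\bA$ are) with norm $\lesssim \ep^2$; hence $Y \in H^{k+2}_{\delta}$ and $H \in H^{k+1}_{\delta+1}$ with $\|H\|_{H^{k+1}_{\delta+1}} \lesssim \ep^2$. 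This step is essentially linear and poses no real difficulty beyond bookkeeping with the weighted product estimates of Appendix~\ref{weightedsobolev}.

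Next I would solve the Hamiltonian constraint \eqref{ham2} for $\gamma$. Here the logarithmic divergence at infinity must be extracted by hand: one writes $\gamma = -\alpha\chi(|x|)\log|x| + \wht\gamma$ and determines $\alpha$ by the requirement that the equation be solvable in the weighted space, namely $2\pi\alpha = \int_{\mathbb R^2}\big(e^{-2\gamma}(\dot\phi^2 + \frac12|H|^2) + |\nabla\phi|^2 + \sum_\bA e^{-\gamma}\breve F_\bA^2|\nabla u_\bA|^2\big)\,dx$, which is \eqref{cons2}. The key structural observation is that $\gamma$ appears on the right-hand side only through $e^{-2\gamma}$ and $e^{-\gamma}$, i.e. in lower-order (non-differentiated) form, so this is a genuine fixed-point problem rather than a quasilinear one. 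I would set up the iteration map $\Phi: \wht\gamma \mapsto \wht\gamma'$ where, given $\wht\gamma$, one first computes $\alpha[\wht\gamma]$ from the integral above (using the previously-solved $H$, which I would either freeze or include in the same fixed point — it is cleaner to solve momentum first since it does not see $\gamma$... wait, \eqref{ham2} has $e^{-2\gamma}|H|^2$, but $H$ itself solves \eqref{mom2} which has no $\gamma$, so $H$ is genuinely determined first), then solves $\Delta\wht\gamma' = -\Delta(-\alpha\chi\log|x|) - (\text{RHS})$, noting $\Delta(\chi(|x|)\log|x|)$ is smooth and compactly supported in the annulus $1\le|x|\le 2$. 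One checks $\Phi$ maps a small ball of $H^{k+2}_\delta$ (of radius $\sim \ep^2$) to itself and is a contraction, using that $|e^{-2\gamma}-1|\lesssim \|\gamma\|_{L^\infty}$ which is small by Sobolev embedding once $\|\wht\gamma\|_{H^2_\delta}$ and $|\alpha|$ are small — this is where the smallness \eqref{smallness.fd} is used, to guarantee the right-hand side is $O(\ep^2)$ and hence that the ball of radius $C\ep^2$ is preserved. Uniqueness in this class follows from the contraction property. The bound $|\alpha| + \|\wht\gamma\|_{H^2_\delta} \lesssim \ep^2$ in \eqref{est.constraints.0} then drops out, and the $W^1_{\delta+3/2,4}$, $W^2_{\delta+1/2,4}$, $C^0_{\delta+2}$, $C^1_{\delta+1}$ bounds in \eqref{est.constraints} follow from the higher weighted elliptic estimates and weighted Sobolev embeddings of Appendix~\ref{weightedsobolev} applied to the same equations (these are just elliptic regularity with the weights tracked carefully, since the $k\ge 3$ regularity gives room to embed into the relevant Hölder and $L^4$ spaces).

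The main obstacle, as usual in these $\mathbb R^2$ problems, is the careful handling of the weighted spaces at the borderline $-\frac12<\delta<0$: one must verify that every product appearing in the right-hand sides lands in exactly the space $H^{k}_{\delta+2}$ (or the corresponding $W$, $C$ spaces) on which the Laplacian is invertible modulo the single logarithmic obstruction, and that the obstruction is correctly absorbed into the constant $\alpha$. In particular the term $\sum_\bA e^{-\gamma}\breve F_\bA^2 |\nabla u_\bA|^2$ requires combining the compact support of $\breve F_\bA$ with the fact that $|\nabla u_\bA| = |\overrightarrow{c_\bA}| + O(H^{k+1}_\delta)$ is merely bounded, not decaying — but compact support of $\breve F_\bA$ rescues decay. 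A secondary subtlety is that the decomposition $\gamma = -\alpha\chi\log|x| + \wht\gamma$ must be shown to be consistent: $\alpha$ depends on $\gamma$ through the integral, $\gamma$ depends on $\alpha$, so one genuinely needs the joint fixed point, but because $\alpha$ enters only through the (smooth, compactly supported) source $\Delta(\alpha\chi\log|x|)$ and $\gamma$ enters the integrand only in exponential, non-differentiated form, the smallness of $\ep$ closes everything. Finally $\alpha \ge 0$ is manifest from the sign of the integrand in its defining formula (every term is a square times a positive exponential), completing the proof.
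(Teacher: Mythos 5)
Your proposal is correct and follows essentially the same strategy as the paper's proof: solve the momentum constraint first by writing $H$ in terms of a potential $Y$ via the conformal Killing operator and inverting the Laplacian using the mean-zero condition \eqref{main.data.cond}, then solve the Hamiltonian constraint by a contraction mapping on the pair $(\alpha,\wht\gamma)$ (with $\alpha\ge 0$ coming from positivity of the source, and the $W^{\cdot}_{4}$ and $C^{\cdot}_{\cdot}$ bounds in \eqref{est.constraints} coming from the $p=4$ version of Theorem~\ref{laplacien} plus weighted Sobolev embedding). The differences from the paper's proof are cosmetic (the normalization in $H=LY$ versus $\tfrac12 LY$, and whether $\alpha$ is updated inside the iteration or treated as a second fixed-point coordinate).
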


\begin{proof}
	{\bf Solving \eqref{mom2}.} We solve for $H_{ij}$ which takes the form 
	$H_{ij}=(L Y)_{ij}= \de_{j\ell} \partial_i Y^\ell + \de_{i\ell} \partial_j Y^\ell-\delta_{ij}\partial_k Y^k$
	for some $1$-form $Y_i$.
	Then \eqref{mom2} is equivalent to 
	\begin{equation}\label{mom2.Y}
	\Delta Y_j = -2\dot{\phi}\partial_j \phi - \sum_{\bA}\breve{F}_{\bA}^2|\nabla u_{\bA}|\partial_j u_{\bA}.
	\end{equation}
	Since $\dot{\phi}, \phi, F_{\bA}$ are compactly supported, by \eqref{smallness.fd}, we have the following bound on the RHS of \eqref{mom2.Y}
	$$\| 2\dot{\phi}\partial_j \phi + \sum_{\bA} \breve{F}_{\bA}^2|\nabla u_{\bA}|\partial_j u_{\bA}\|_{H^0_{\delta+2}} \lesssim \ep^2.$$
	Moreover, by the regularity assumptions and support properties of $\dot\phi$, $\nab\phi$, $\breve{F}_\bA$ and $u_\bA$, the RHS of \eqref{mom2.Y} is also in $H^k_{\de+2}$. Therefore, by the condition \eqref{main.data.cond} and Theorem~\ref{laplacien}, there exists a unique $Y_j \in H^{k+2}_{\delta}$ with
	$$\|Y_j\|_{H^{2}_{\delta}}\lesssim \ep^2.$$
	Consequently, there exists a symmetric traceless (with respect to $\de$) covariant $2$-tensor $H \in H^{k+1}_{\delta+1}$ solving \eqref{mom2} with
	\begin{equation}\label{H.constraint.1st.est}
	\|H\|_{H^{1}_{\delta+1}}\lesssim \ep^2.
	\end{equation}
	Moreover, since every symmetric  traceless divergence-free (with respect to $\de_{ij}$) covariant $2$-tensor on $\mathbb R^2$ vanishes\footnote{This is standard and can be seen by noting that for a symmetric traceless covariant $2$-tensor $\eta_{ij} \in H^2_{\de'+1}${,} we have {the componentwise identity} $\Delta \eta_{ij}=0$, so {that} Theorem~\ref{laplacien} implies the conclusion.\label{footnote.divfree.tracefree}}
	
	{\bf Solving \eqref{ham2}.}
	We now turn to \eqref{ham2}. First of all, we note that thanks to Proposition~\ref{produit} we have $|H|^2 \in H^{k+1}_{2\delta+3}\subset H^{k+1}_{\delta+2},$
	and
	$$\||H|^2\|_{H^0_{2\delta+3}} \lesssim \ep^4.$$
	We solve \eqref{ham2} with the contraction mapping theorem. We consider the map\footnote{Here, and below, we use the notation that $B_{H^{2}_\delta} (0,\ep)$ denotes the open ball centered at $0$ with radius $\ep$ in Banach space $B_{H^{2}_\delta}$.}
	$$\Phi : [0,\ep]\times B_{H^{2}_\delta} (0,\ep)\rightarrow [0,\ep]\times B_{H^{2}_\delta} (0,\ep),$$
	which maps $(\alpha^{(1)},\wht\gamma^{(1)})\mapsto (\alpha^{(2)},\wht\gamma^{(2)})$ such that $\gamma^{(1)}= -\alpha^{(1)} \chi(|x|){\log}(|x|)+\wht \gamma^{(1)}$, $ \gamma^{(2)} = -\alpha^{(2)}\chi(|x|){\log}(|x|)+\wht \gamma^{(2)}$ and the latter is defined as the solution to
	$$\Delta \gamma^{(2)}=- e^{-2\gamma^{(1)}}\left(\dot{\phi}^2+\frac{1}{2}|H|^2\right)-|\nabla \phi|^2-\sum_{\bA} e^{-\gamma^{(1)}} \breve{F}_{\bA}^2|\nabla u_{\bA}|^2.$$
	All the terms involving $\phi$ or $\breve{F}_{\bA}$ are compactly supported and of size $O(\ep^2)$ in $H^{0}_{\delta+2}$. For the $e^{-2\gamma^{(1)}}\frac{1}{2}|H|^2$ term, we check that by \eqref{H.constraint.1st.est}, $e^{2 \alpha^{(1)} \chi(|x|){\log}(|x|)}|H|^2 \in H^{0}_{\delta+2}$ (provided that $\ep$ is small enough) and $e^{-2\wht{\gamma}^{(1)}}|H|^2 \in H^{0}_{\delta+2}$, both with norms $O(\ep^2)$. Notice also that
$$ -C \ep^2 \leq \f{1}{2\pi} \int_{\mathbb R^2} (\mbox{RHS of \eqref{ham2}}) \leq 0. $$	
Therefore, by Corollary~\ref{coro}, the range of $\Phi$ indeed lies in the set $[0,\ep]\times B_{H^{2}_\delta} (0,\ep)$. Moreover, a similar argument for the differences shows that $\Phi$ is a contraction (for $\ep$ sufficiently small).
	
	Therefore, by the contraction mapping theorem, we obtain a unique fixed point $(\alpha,\wht\gamma)$ of $\Phi$ with $\wht \gamma \in H^2_\delta$ and
	$$|\alpha|+ \|\wht \gamma \|_{H^2_\delta}\lesssim \ep^2.$$
Moreover, $\gamma= -\alpha \chi(|x|){\log}(|x|)+\wht \gamma$ solves \eqref{ham2}.

  Since every term on the RHS of \eqref{ham2} is nonnegative, by Theorem \ref{laplacien}, $\alpha\geq 0$.
	Using the estimates for $H$ and $\gamma$ that we just proved, and also the assumptions on $\dot{\phi}$, $\nab\phi$, $F_{\bA}$ and $u_{\bA}$, we see that the right-hand side of \eqref{ham2} is in $H^2_{\delta+2}$. Hence, by Theorem~\ref{laplacien}, $\wht \gamma \in H^4_{\delta+2}$.
	Continuing to iterate this, we conclude that $\wht \gamma \in H^{k+2}_{\delta}$.

{\bf Proof of \eqref{est.constraints}.}
We first prove the bounds for $H$. Since the right-hand side of \eqref{mom2.Y} is compactly supported and bounded in $L^\infty$ by $\ep^2$, we can use Theorem \ref{laplacien} with $p=4$ to infer that $Y\in W^2_{\nu,4}$ and hence 
$$\|H\|_{W^1_{\nu+1,4}}\ls \ep^2 \mbox{ for all }-\f 12 <\nu <\f 12.$$  In particular, thanks to the Sobolev embedding in Proposition \ref{holder} we have $H \in C^0_{\nu+\f 32 }$ with the bound
	$$\|H \|_{ C^0_{\nu+\f 32}} \lesssim \ep^2.$$
In the same manner, we have, for $-\f 12<\nu<\f 12$,
	$$	\|\wht \gamma\|_{W^2_{\nu,4}}+ \|\wht \gamma \|_{ C^1_{\nu+ \frac 12}} \lesssim \ep^2.$$
	Choosing $\nu= \delta+\f 12$ (recall that $\delta\in (-\f 12, 0)$), we obtain \eqref{est.constraints}.
	\end{proof} 
	We now turn to the initial data for the lapse $N$ and the shift $\bt^i$.
	\begin{lm}\label{lem:data.N.beta}
		Let $\delta'=\delta-\ep$.
		For $\ep$ sufficiently small, there exists unique $(N,\beta)$ such that $N=1+N_{asymp}\chi(|x|){\log}(|x|)+ \wht N$, with $N_{asymp}\in \mathbb R$, $\wht N \in H^{k+2}_\delta$, and $\beta \in H^{k+2}_{\delta'}$ such that
		\begin{equation}
		\label{N.data}
			\Delta N-e^{-2\gamma}N\left(|H|^2+\dot{\phi}^2+e^{\gamma}\sum_{\bf A} \breve{F}_{\bf A}^2|\nabla u_{\bA}|^2\right)=0,
			\end{equation}
		\begin{equation}\label{beta.data}
		(L \beta)_{ij}=2Ne^{-2\gamma}H_{ij}.
		\end{equation}
		Moreover, $N_{asymp}\geq 0$ and
		\begin{equation}\label{estN.data}
		|N_{asymp}|+\|\wht N\|_{H^2_\delta} + \|\wht N \|_{W^2_{\de+\f 12,4}}  + \|\wht N\|_{C^1_{\delta+1}}\lesssim \ep^2.
		\end{equation}
		\begin{equation}\label{estbeta.data}
		\|\beta \|_{H^2_{\delta'}} + \|\beta \|_{W^2_{\de'+\f 12,4}} + \|\beta  \|_{C^1_{\delta'+1}} \lesssim \ep^2.
		\end{equation}
		
	\end{lm}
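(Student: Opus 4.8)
The plan is to solve \eqref{N.data} first as a semilinear elliptic equation for $N$, then feed the resulting $N$ into \eqref{beta.data} to solve for $\beta$ by inverting the conformal Killing operator $L$. Both steps rely on the elliptic theory in weighted Sobolev spaces collected in Appendix~\ref{weightedsobolev} (Theorem~\ref{laplacien} and its Corollary~\ref{coro}, Proposition~\ref{produit} for products, Proposition~\ref{holder} for embeddings), together with the bounds already established in Lemma~\ref{lm.constraint} for $\gamma$ and $H$ and the smallness \eqref{smallness.fd} for $\dot\phi$, $\nab\phi$, $\breve F_\bA$.

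First I would rewrite \eqref{N.data} as $\Delta N = e^{-2\gamma} N\big(|H|^2 + \dot\phi^2 + e^\gamma \sum_\bA \breve F_\bA^2 |\nab u_\bA|^2\big)$ and, writing $N = 1 + \nu$, as $\Delta \nu = e^{-2\gamma}(1+\nu)\mathcal S$ where $\mathcal S := |H|^2 + \dot\phi^2 + e^\gamma \sum_\bA \breve F_\bA^2|\nab u_\bA|^2 \geq 0$. Using Lemma~\ref{lm.constraint} together with the support properties and \eqref{smallness.fd}, the source $\mathcal S$ is nonnegative, lies in $H^0_{\de+2}$ (the $|H|^2$ term by Proposition~\ref{produit} as in the proof of Lemma~\ref{lm.constraint}, the matter terms by compact support) with norm $O(\ep^2)$, and moreover lies in $W^0_{\de+\f12,4}$ and $C^0_{\de+2}$ with the same smallness. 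One then sets up a contraction mapping $\Phi:[0,\ep]\times B_{H^2_\de}(0,\ep)\to[0,\ep]\times B_{H^2_\de}(0,\ep)$ sending $(N_{asymp}^{(1)},\wht N^{(1)})$, defining $N^{(1)}=1+N_{asymp}^{(1)}\chi(|x|)\log(|x|)+\wht N^{(1)}$, to the decomposition of the solution of $\Delta N^{(2)} = e^{-2\gamma}N^{(1)}\mathcal S$; by Corollary~\ref{coro} the logarithmic asymptotic is captured correctly since $0 \leq \f{1}{2\pi}\int_{\m R^2} e^{-2\gamma}N^{(1)}\mathcal S \leq C\ep^2$, so $N_{asymp}^{(2)}\geq 0$ and small, and the $H^2_\de$ part is $O(\ep^2)$. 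Estimating differences shows $\Phi$ is a contraction for $\ep$ small, giving a unique fixed point with $|N_{asymp}|+\|\wht N\|_{H^2_\de}\ls\ep^2$ and $N_{asymp}\geq 0$. Bootstrapping regularity exactly as in Lemma~\ref{lm.constraint} (the right side is then in $H^2_{\de+2}$, hence $\wht N\in H^4_{\de+2}$, and so on) yields $\wht N\in H^{k+2}_\de$. The $W^2_{\de+\f12,4}$ and $C^1_{\de+1}$ bounds in \eqref{estN.data} follow by applying Theorem~\ref{laplacien} with $p=4$ and $-\f12<\nu<\f12$ and Proposition~\ref{holder}, choosing $\nu=\de+\f12$, just as in the proof of \eqref{est.constraints}.

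For \eqref{beta.data} I would note that $L$ is overdetermined-elliptic and that $(L\beta)_{ij} = 2Ne^{-2\gamma}H_{ij}$ can be solved by the standard substitution $\beta = $ (solution of a vector Laplace-type equation); concretely $\rd^i(L\beta)_{ij} = \Delta\beta_j$ on $\m R^2$ (since the extra terms in $\rd^i(L\xi)_{ij}$ vanish for the conformal Killing operator in two dimensions), so it suffices to solve $\Delta\beta_j = \rd^i(2Ne^{-2\gamma}H_{ij})$ — but since we only have $H\in H^{k+1}_{\de+1}$ we instead invert $L$ directly using the elliptic estimate for $L$ in weighted spaces from Appendix~\ref{weightedsobolev}. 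The source $2Ne^{-2\gamma}H$ is estimated via Proposition~\ref{produit}: $N-1$, $e^{-2\gamma}-1$ and their logarithmic corrections are controlled by Lemma~\ref{lm.constraint} and the just-proved bounds on $N$, so $2Ne^{-2\gamma}H \in H^{k+1}_{\de+1}$ with $H^1_{\de+1}$-norm $O(\ep^2)$; crucially the logarithmic growth of $\gamma$ and $N$ forces a slight loss in the weight, which is why one lands in $H^{k+2}_{\de'}$ with $\de'=\de-\ep$ rather than $\de$. Applying the weighted elliptic estimate for $L$ with exponent $\de'$ then gives $\beta\in H^{k+2}_{\de'}$ with $\|\beta\|_{H^2_{\de'}}\ls\ep^2$, and the $p=4$ and Hölder versions follow from Theorem~\ref{laplacien} and Proposition~\ref{holder} as before, yielding \eqref{estbeta.data}. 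Uniqueness of $\beta$ holds because the kernel of $L$ in the relevant weighted space consists of conformal Killing fields of $\delta$ decaying at infinity, which is trivial.

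The main obstacle I anticipate is bookkeeping the logarithmic asymptotics consistently: the factor $e^{-2\gamma}=|x|^{2\alpha\chi}e^{-2\wht\gamma}$ grows like a power of $|x|$ at infinity (with small exponent), and $N$ grows logarithmically, so when multiplying $N e^{-2\gamma} H$ one must verify the product lands in a weighted space whose weight is only infinitesimally degraded — hence the appearance of $\de'=\de-\ep$ — and one must make sure that the contraction for $N$ still closes despite the source containing the slowly growing factor $e^{-2\gamma}$. This is handled precisely as in the $\gamma$-equation of Lemma~\ref{lm.constraint}: for $\ep$ small, $e^{2\alpha\chi\log|x|}=|x|^{2\alpha\chi}$ with $\alpha=O(\ep^2)$ changes the weight by $O(\ep^2)$, which is absorbable, and Corollary~\ref{coro} still applies because the integral of the (nonnegative) source remains $O(\ep^2)$. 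The only genuinely new point compared to Lemma~\ref{lm.constraint} is solving the $L$-equation for $\beta$, which is routine given the weighted elliptic theory for the conformal Killing operator in Appendix~\ref{weightedsobolev}.
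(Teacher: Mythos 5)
The $N$ part of your proposal matches the paper's proof closely: the same contraction mapping $\Phi$ on $[0,\ep]\times B_{H^2_\de}(0,\ep)$, Corollary~\ref{coro} to extract $N_{asymp}\geq 0$, bootstrap of regularity, and weighted $L^4$/Hölder bounds via Theorem~\ref{laplacien} with $p=4$ and Proposition~\ref{holder}. That half is fine.

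The $\beta$ part has a genuine gap. You correctly observe $\de^{ik}\rd_k(L\xi)_{ij}=\Delta\xi_j$, but then dismiss the resulting Poisson route with ``since we only have $H\in H^{k+1}_{\de+1}$'' and claim to ``invert $L$ directly using the elliptic estimate for $L$ in weighted spaces from Appendix~\ref{weightedsobolev}.'' Appendix~\ref{weightedsobolev} contains no elliptic estimate for the conformal Killing operator $L$ — only Theorem~\ref{laplacien} for the scalar Laplacian. So you are invoking a nonexistent tool. Moreover, the objection to the divergence route is unfounded: the paper shows $Ne^{-2\gamma}H_{ij}\in H^{k+1}_{\de'+1}$ (via Proposition~\ref{produit} and Lemma~\ref{produit2}), hence $\rd_k(2Ne^{-2\gamma}H_{i\ell})\in H^k_{\de'+2}$ by Lemma~\ref{der}, so Theorem~\ref{laplacien} gives exactly $\beta\in H^{k+2}_{\de'}$ — there is no regularity loss. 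What the divergence route does require, and what your proposal omits, is the a posteriori verification that the solution of $\Delta\beta^j=\de^{ik}\de^{j\ell}\rd_k(2Ne^{-2\gamma}H_{i\ell})$ also satisfies $(L\beta)_{ij}=2Ne^{-2\gamma}H_{ij}$. The paper handles this by noting that $(L\beta)_{ij}-2Ne^{-2\gamma}H_{ij}$ is a symmetric, traceless (w.r.t.\ $\de$), divergence-free $2$-tensor in a decaying weighted Sobolev space, and such a tensor on $\m R^2$ must vanish (Footnote~\ref{footnote.divfree.tracefree}: each component is harmonic, then apply Theorem~\ref{laplacien}). Without either a proved elliptic estimate for $L$ or this vanishing argument, the existence of $\beta$ is not established.
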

	\begin{proof}
		{\bf Solving \eqref{N.data}.} We solve \eqref{N.data} with a fixed point argument. Consider the map
		$$\Phi : [0,\ep]\times B_{H^{2}_\delta} (0,\ep)\rightarrow [0,\ep]\times B_{H^{2}_\delta} (0,\ep),$$
		which maps $(N_{asymp}^{(1)}, \wht N^{(1)})\mapsto (N_{asymp}^{(2)}, \wht N^{(2)})$,  where given $N^{(1)}=1+N_{asymp}^{(1)}\chi(|x|){\log}(|x|) + \wht N^{(1)}$ with $N_{asymp}^{(1)}\in [0,\ep]$ and $\wht N^{(1)}\in B_{H^{2}_\delta} (0,\ep)$, we define $N^{(2)}=1+N_{asymp}^{(2)}\chi(|x|){\log}(|x|)+\wht N^{(2)}$ to be the solution to
		\begin{equation}\label{Np.eqn}
		\Delta N^{(2)}= e^{-2\gamma} N^{(1)}\left(|H|^2+\dot{\phi}^2+e^{\gamma}\sum_{\bf A} \breve{F}_{\bf A}^2|\nabla u_{\bA}|^2\right).
		\end{equation}
		We now show that this map has the range as claimed. Since $N_{asymp}^{(1)}\geq 0$, it follows from Proposition~\ref{holder} that $N^{(1)}\geq 1+\wht N^{(1)} \geq 1-C\ep \geq \frac{1}{2}$ for $\ep$ small enough, where $C$ is a universal constant (depending only on the constants in Proposition~\ref{holder}). As a consequence\footnote{Note that if we solve \eqref{Np.eqn} using Corollary~\ref{coro}, then $N_{asymp}^{(2)}$ is given by this expression.},
		$$\f 1{2\pi}\int_{\mathbb R^2} e^{-2\gamma} N^{(1)}\left(|H|^2+\dot{\phi}^2+e^{\gamma}\sum_{\bf A} \breve{F}_{\bf A}^2|\nabla u_{\bA}|^2\right) \geq 0.$$
		Since $(N_{asymp}^{(1)}, \wht N^{(1)}) \in [0,\ep] \times B_{H^2_\de}(0,\ep)$, by \eqref{est.constraints.0}, Proposition~\ref{produit} and Lemma~\ref{produit2}, we have 
		$$\| e^{-2\gamma} N^{(1)} |H|^2\|_{H^0_{\de+2}}\ls \ep^4$$ 
		for $\ep$ sufficiently small (necessary to handle the log weights in $\gamma$ and $N^{(1)}$). Also, using the compact support of $\dot{\phi}$ and $\breve{F}_{\bA}$, we have 
		$$\|e^{-2\gamma}\dot{\phi}^2+e^{-\gamma}\sum_{\bf A} \breve{F}_{\bf A}^2|\nabla u_{\bA}|^2\|_{H^2_{\de+2}} \ls \ep^2.$$
	  Therefore, by Theorem~\ref{laplacien}, for $\ep$ sufficiently small, there indeed exists $N^{(2)}$ with $(N_{asymp}^{(2)},\wht N^{(2)})\in [0,\ep] \times B_{H^2_\de}(0,\ep)$ solving \eqref{Np.eqn}.
		
		Moreover, since RHS of \eqref{Np.eqn} is linear in $N^{(1)}$, it is easy to apply to above argument to show that $\Phi$ is in fact a contraction (for $\ep$ sufficiently small). Hence, by the contraction mapping theorem, there exists a unique fixed point $N = 1+ N_{asymp} \chi(|x|){\log} (|x|) + \wht N$ that solves \eqref{N.data} with $(N_{asymp}, \wht N) \in [0,\ep] \times B_{H^2_\de}(0,\ep)$.
				
			Finally, using the bounds in Proposition~\ref{lm.constraint}, we can iteratively improve the estimate of $\wht N$ by applying Theorem~\ref{laplacien} and show that $\wht N \in H^{k+2}_\de$.
		
		{\bf Solving \eqref{beta.data}.} We now turn to the equation for $\beta$. Taking the divergence of \eqref{beta.data}, we obtain
		\begin{equation}\label{beta.data.2}
		\Delta \beta^j = \de^{ik} \de^{j\ell} \partial_k(2Ne^{-2\gamma}H_{i\ell}),
		\end{equation}
		which is a linear equation in $\bt^j$. We first note that by the estimates in Lemma~\ref{lm.constraint} for $\gamma$ and $H$, the estimates for $N$ that we just proved, Proposition~\ref{produit} and Lemma~\ref{produit2}, for $\delta' = \delta-\ep$, $Ne^{-2\gamma}H_{ij} \in H^{k+1}_{\delta'+1}$. Note that we have in particular used $e^{2\alp\chi(|x|){\log}(|x|)}\ls (1+|x|)^{C\ep^2}$ and $N_{asymp}\chi(|x|){\log}(|x|)\ls \ep^2{\log}(|x|)\ls \ep (1+|x|)^{\f{\ep}{10}}$.
		Hence, by Lemma~\ref{der}, $\partial_{k} (2Ne^{-2\gamma}H_{i\ell})\in H^{k}_{\delta'+2}$. Moreover,
		$$\int_{\mathbb R^2} \de^{ik} \de^{j\ell} \partial_k (2Ne^{-2\gamma}H_{ij})=0,$$
		Hence, by Theorem~\ref{laplacien} there exists a unique solution $\beta \in H^{k+2}_{\delta'}$ to \eqref{beta.data.2}.
		Since every $H^4_{\de'+1}$ symmetric traceless divergence-free (with respect to $\de_{ij}$) covariant $2$-tensor on $\mathbb R^2$ must vanish (cf. Footnote~\ref{footnote.divfree.tracefree} on p.\pageref{footnote.divfree.tracefree}), it then follows that $\beta$ is a solution to \eqref{beta.data}. 
		Moreover, using \eqref{est.constraints.0}, one sees that the above argument gives
		$$\|\beta\|_{H^2_{\delta'}}\lesssim \|\de^{ik}\partial_k(2Ne^{-2\gamma}H_{ij})\|_{H^0_{\delta'+2}}\lesssim \ep^2.$$
		
		\textbf{Smallness in weighted $L^4$-based Sobolev spaces.} It remains to show that $\|\wht N \|_{W^2_{\de+\f 12,4}}$ and $\|\beta \|_{W^2_{\de'+\f 12,4}}$ are $O(\ep^2)$ small, since the weighted $C^1$ estimates will then follow from Proposition~\ref{holder}.
		
		Now notice that these bounds can be proven by essentially the same arguments as above, except to use the estimates for $H$ in \eqref{est.constraints} instead of \eqref{est.constraints.0}. We omit the details.\qedhere
			\end{proof}
	We choose the initial data for $e_0 \gamma$ 
according to \eqref{tau} and the initial condition $\tau =0$. (Note that $\gamma$ satisfies a wave equation (cf. \eqref{tau}, \eqref{gamma}) and therefore we need the initial condition for $e_0\gamma$.)

\begin{lm}\label{lem:e0gamma}
In order that $\tau=0$, we set $e_0 \gamma = \frac{1}{2}div(\beta)$. Then, we have $e_0\gamma \in H^{k+1}_{\delta'+1}$ and 
$$\|e_0 \gamma \|_{H^1_{\delta'+1}} + \|e_0 \gamma\|_{W^1_{\de'+\f 32, 4}} + \|e_0 \gamma \|_{C^0_{\delta'+2}}\lesssim \ep^2.$$ 
\end{lm}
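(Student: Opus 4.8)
The plan is straightforward: we simply read off the regularity and the smallness of $e_0\gamma = \frac12\operatorname{div}(\beta) = \frac12\delta_{ij}\partial_i\beta^j$ from the corresponding properties of $\beta$ that have already been established in Lemma~\ref{lem:data.N.beta}. The definition $e_0\gamma=\frac12\operatorname{div}\beta$ is dictated by \eqref{tau} together with the initial condition $\tau=0$: indeed \eqref{tau} reads $Ne^{2\gamma}\tau=-2e_0\gamma+\partial_i\beta^i$, so imposing $\tau=0$ on $\Sigma_0$ forces $e_0\gamma=\frac12\partial_i\beta^i$ exactly.

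The main (and only) step is then a weighted-Sobolev differentiation estimate. By Lemma~\ref{lem:data.N.beta} we have $\beta\in H^{k+2}_{\delta'}$ with $\|\beta\|_{H^2_{\delta'}}\lesssim\ep^2$. Applying the derivative lemma for weighted Sobolev spaces (Lemma~\ref{der}), which shifts the weight index up by one upon differentiation, we get $\partial_i\beta^j\in H^{k+1}_{\delta'+1}$, hence $e_0\gamma\in H^{k+1}_{\delta'+1}$, with
$$\|e_0\gamma\|_{H^1_{\delta'+1}}\lesssim\|\beta\|_{H^2_{\delta'}}\lesssim\ep^2.$$
For the $W^1_{\delta'+\frac32,4}$ and $C^0_{\delta'+2}$ bounds, we invoke the corresponding $L^4$-based estimate $\|\beta\|_{W^2_{\delta'+\frac12,4}}\lesssim\ep^2$ from Lemma~\ref{lem:data.N.beta}, again differentiate via the weighted derivative lemma to obtain $\|e_0\gamma\|_{W^1_{\delta'+\frac32,4}}\lesssim\ep^2$, and finally apply the weighted Sobolev embedding $W^1_{\delta'+\frac32,4}\hookrightarrow C^0_{\delta'+2}$ from Proposition~\ref{holder} (in $\mathbb R^2$, $W^1_{\nu,4}\hookrightarrow C^0_{\nu+\frac12}$, here with $\nu=\delta'+\frac32$) to conclude $\|e_0\gamma\|_{C^0_{\delta'+2}}\lesssim\ep^2$.

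There is no real obstacle here: the statement is a direct corollary of Lemma~\ref{lem:data.N.beta} via two applications of the weighted differentiation lemma and one Sobolev embedding, and the smallness $O(\ep^2)$ is simply inherited from the bounds already proven for $\beta$. The only point worth a moment's care is bookkeeping the weight indices: differentiation trades one factor of $(1+|x|^2)^{1/2}$ for a derivative, which is precisely why $\delta'$ becomes $\delta'+1$ in the $H^1$ and $C^0$ statements and $\delta'+\frac12$ becomes $\delta'+\frac32$ in the $W^1_4$ statement. This is consistent with the convention in Definition~\ref{def.spaces} and matches the pattern used for $H$, $\gamma$, $N$ and $\beta$ in the preceding lemmas.
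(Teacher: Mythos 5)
Your proof is correct and follows essentially the same route as the paper's, which simply cites Lemma~\ref{lem:data.N.beta} and Lemma~\ref{der}; you have merely spelled out the weight bookkeeping and the Sobolev embedding step, both of which are consistent with the paper's conventions.
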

\begin{proof}
The desired estimates follow directly from Lemma~\ref{lem:data.N.beta} and Lemma~\ref{der}.\qedhere
\end{proof}

\begin{lm}\label{lem:data.L}
If $L_{\bA}^\mu = -\gi^{\mu\nu}\rd_\nu u_{\bA}$, $\gi^{\mu\nu}\rd_\mu u_{\bA} \rd_\nu u_{\bA} = 0$, $e_0 u_{\bA} >0$ with $g$ as in \eqref{g.form}, then
\begin{equation}\label{L.initial.computation}
L_{\bA}^t = \f{e^{-\gamma}}{N}|\nab u_{\bA}|, \quad  L_{\bA}^i = - e^{-2\gamma} \de^{ij} \rd_j u_{\bA} - \f{e^{-\gamma}\bt^i}{N}|\nab u_{\bA}|.
\end{equation}
Therefore, setting the initial data $L_{\bA}\restriction_{\Sigma_0}$ as in \eqref{L.initial.computation}, and for $\ep$ sufficiently small, $e^{2\gamma} L_\bA^i + \overrightarrow{c_{\bA}}^i,\, N e^{\gamma} L_\bA^t + |\overrightarrow{c_{\bA}}| \in H^k_{\de''}$, where $\de''= \de' - \ep = \de - 2\ep$.

Moreover, for $C_{eik}$ as in \eqref{init.u.bd}, {the following bounds hold for the initial data:}
\begin{align}
\label{L.init.bd}
\sup_{\bA} \left(\|e^{2\gamma} L_\bA^i + \overrightarrow{c_{\bA}}^i\|_{H^2_{\de''}} + \|N e^{\gamma} L_\bA^t + |\overrightarrow{c_{\bA}}|\|_{H^2_{\de''}}\right)\leq & 4 C_{eik},\\
\label{L.init.bd.lower} \min_{\bA} \inf_{x\in \mathbb R^2} \left|N e^{\gamma} (L_{\bA})^t \right|(x) \geq & {C_{eik}^{-1}}.
\end{align}

\end{lm}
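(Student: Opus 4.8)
The plan is to establish the formula \eqref{L.initial.computation} by a direct computation using the explicit form of $g^{-1}$ in \eqref{g.inverse}, and then deduce the stated regularity and bounds from the regularity of the metric components already established in Lemma~\ref{lm.constraint} and Lemma~\ref{lem:data.N.beta}, together with the hypothesis \eqref{init.u.bd} on $u_\bA$. First I would verify \eqref{L.initial.computation}: by definition $L_\bA^\mu = -\gi^{\mu\nu}\rd_\nu u_\bA$, so one simply reads off $L_\bA^t = -\gi^{t\nu}\rd_\nu u_\bA$ and $L_\bA^i = -\gi^{i\nu}\rd_\nu u_\bA$ from \eqref{g.inverse}. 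On $\Sigma_0$ the constraint $e_0 u_\bA > 0$ combined with the eikonal relation $\gi^{\mu\nu}\rd_\mu u_\bA\rd_\nu u_\bA = 0$ forces, via the already-noted identity $(e_0 u_\bA)\restriction_{\Sigma_0} = Ne^{-\gamma}|\nabla u_\bA|\restriction_{\Sigma_0}$, that $\rd_t u_\bA = \beta^i \rd_i u_\bA + Ne^{-\gamma}|\nabla u_\bA|$. Substituting this into the expressions obtained from \eqref{g.inverse} and simplifying (the $\beta^i$ contributions cancel in $L_\bA^t$) gives exactly \eqref{L.initial.computation}. I would also note that $e_0 u_\bA > 0$ is precisely the condition ensuring $L_\bA$ is future-directed, consistent with Remark~\ref{rmk:u}.

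Next I would prove the regularity statement $e^{2\gamma}L_\bA^i + \overrightarrow{c_\bA}^i,\ Ne^{\gamma}L_\bA^t + |\overrightarrow{c_\bA}| \in H^k_{\delta''}$. From \eqref{L.initial.computation} we have $e^{2\gamma}L_\bA^i = -\delta^{ij}\rd_j u_\bA - e^{\gamma}\beta^i|\nabla u_\bA|$ and $Ne^{\gamma}L_\bA^t = |\nabla u_\bA|$. For the first expression, write $-\delta^{ij}\rd_j u_\bA = -\overrightarrow{c_\bA}^i - \delta^{ij}(\rd_j u_\bA - \overrightarrow{c_\bA}^i)$; the difference term lies in $H^{k+1}_\delta \subset H^k_{\delta''}$ by \eqref{init.u.bd} (using $\delta'' < \delta$ so the weighted space is larger). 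The term $e^{\gamma}\beta^i|\nabla u_\bA|$ is handled by the product estimates in Appendix~\ref{weightedsobolev} (Proposition~\ref{produit}, Lemma~\ref{produit2}): $\beta^i \in H^{k+2}_{\delta'}$ from Lemma~\ref{lem:data.N.beta}, $|\nabla u_\bA|$ is bounded below and differs from the constant $|\overrightarrow{c_\bA}|$ by an $H^{k+1}_\delta$ function (here one uses that $|\nabla u_\bA|$ is smooth in $\nabla u_\bA$ away from zero since $|\nabla u_\bA| > C_{eik}^{-1} > 0$), and $e^{\gamma} = e^{\wht\gamma}(1+|x|)$-type weight is controlled via $e^{2\alpha\chi\log|x|} \lesssim (1+|x|)^{C\ep^2}$ exactly as in the proof of Lemma~\ref{lem:data.N.beta}; the extra $\ep$ loss in the weight (passing from $\delta'$ to $\delta'' = \delta' - \ep$) absorbs these logarithmic/power factors. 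The same reasoning applied to $|\nabla u_\bA| - |\overrightarrow{c_\bA}|$ gives $Ne^{\gamma}L_\bA^t + |\overrightarrow{c_\bA}| \in H^k_{\delta''}$. Iterating the elliptic regularity as in the previous lemmas upgrades to the full $H^k$ (with $k \ge 3$, or general $k$).

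For the quantitative bounds \eqref{L.init.bd}, I would track constants through the above: the $H^2_{\delta''}$ norm of $\rd_j u_\bA - \overrightarrow{c_\bA}^i$ is at most $C_{eik}$ by \eqref{init.u.bd}, and all remaining contributions ($e^{\gamma}\beta^i|\nabla u_\bA|$ and the like) are $O(\ep^2)$ times quantities controlled by $C_{eik}$, hence are $\leq C_{eik}$ once $\ep$ is small enough depending on $C_{eik}, k, \delta, R$; summing these gives the factor $4C_{eik}$ (with room to spare). For \eqref{L.init.bd.lower}, since $Ne^{\gamma}L_\bA^t = |\nabla u_\bA|$ exactly, the lower bound is literally the hypothesis $\inf_x |\nabla u_\bA|(x) > C_{eik}^{-1}$ from Definition~\ref{def.free.data}. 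The main obstacle I anticipate is purely bookkeeping: carefully choosing the weight exponents so that every product lands in the right weighted Sobolev space — in particular ensuring that the $\ep$-dependent weight losses from the $\log$ terms in $\gamma$ and $N$ (which make $e^\gamma$, $N$ only polynomially bounded with a small power) are consistently absorbed by the successive shifts $\delta \to \delta' \to \delta''$, and verifying that $|\nabla u_\bA|$, as a nonlinear function of $\nabla u_\bA$, inherits the weighted Sobolev regularity via the composition/product estimates of Appendix~\ref{weightedsobolev}. None of these steps is deep, but the index and weight tracking must be done with care.
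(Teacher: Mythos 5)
Your proof is correct and follows essentially the same route as the paper: a direct computation of $L_\bA$ in coordinates using \eqref{g.inverse} together with the eikonal identity $\tfrac{1}{N^2}(e_0 u_\bA)^2 = e^{-2\gamma}|\nabla u_\bA|^2$ gives \eqref{L.initial.computation}, and then the regularity and the quantitative bound \eqref{L.init.bd} follow by decomposing $e^{2\gamma}L_\bA^i + \overrightarrow{c_\bA}^i = -(\nabla u_\bA - \overrightarrow{c_\bA})^i - \tfrac{e^{\gamma}\beta^i}{N}|\nabla u_\bA|$ (note the missing factor $\tfrac{1}{N}$ in your write-up of this term) and $Ne^{\gamma}L_\bA^t = |\nabla u_\bA|$ and invoking the weighted product estimates of Appendix~\ref{weightedsobolev}, with the shift $\delta'\to\delta''$ absorbing the logarithmic growth of $\gamma$ and $N$, exactly as in the paper. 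One minor inaccuracy: the $H^k_{\delta''}$ regularity for general $k$ is not obtained by ``iterating elliptic regularity'' here (there is no elliptic equation for $L_\bA$); it is inherited directly from the $H^{k+1}_\delta$, $H^{k+2}_\delta$ regularity of $u_\bA$, $\gamma$, $N$, $\beta$ already established, together with the product estimates. The lower bound \eqref{L.init.bd.lower} is, as you say, immediate from $Ne^{\gamma}L_\bA^t = |\nabla u_\bA|$ and the hypothesis on $\inf|\nabla u_\bA|$.
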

\begin{proof}
$\gi^{\mu\nu}\rd_\mu u_{\bA} \rd_\nu u_{\bA} = 0$ implies 
\begin{equation}\label{eikonal.frame}
\f 1{N^2}(e_0 u_{\bA})^2 = e^{-2\gamma} |\nab u_\bA|^2.
\end{equation}
Hence, a direct computation gives
$$L_{\bA}^t = \f 1{N^2} e_0 u_{\bA} = \f{e^{-\gamma}}{N}|\nab u_{\bA}|,\quad L_{\bA}^i = - e^{-2\gamma} \de^{ij} \rd_j u_{\bA} - \f{\bt^i}{N^2}(e_0 u_{\bA}) = - e^{-2\gamma} \de^{ij} \rd_j u_{\bA} - \f{e^{-\gamma}\bt^i}{N}|\nab u_{\bA}|.$$
Now, the desired estimates $e^{2\gamma} L_\bA^i + \overrightarrow{c_{\bA}}^i,\, N e^{\gamma} L_\bA^t + |\overrightarrow{c_{\bA}}| \in H^k_{\de''}$ and \eqref{L.init.bd} follow from the bounds in \eqref{init.u.bd}, Lemmas~\ref{lm.constraint} and \ref{lem:data.N.beta} using Propositions~\ref{holder}, \ref{produit} and \ref{produit2}. Notice that here we need to use the fact that $\bt \in H^{k+2}_{\de'}$ to handle the term $\f{e^{-\gamma}\bt^i}{N}|\nab u_{\bA}|$ without taking difference with a constant vector $\overrightarrow{c_\bA}$. We also need to change the weight $\de' \mapsto \de''$ to handle the growing factors when $|x|$ is large; we omit the details. 

Finally, \eqref{L.init.bd.lower} follows from \eqref{init.u.bd} and \eqref{L.initial.computation}. \qedhere

\end{proof}

	Finally, we choose the initial data for $\chi_{\bA}$. Since we eventually will need $\chi_{\bA}=\Box_g u_{\bA}$ (cf. \eqref{boxu=chi}), we prescribe the initial data accordingly. Note that while $\Box_g u_{\bA}$ depends on the $e_0$ derivative of $u_{\bA}$, by virtue of the eikonal equation, it can in fact be computed from the initial data of $\nabla u_{\bA}\restriction_{\Sigma_0}$ alone. More precisely, we have the following estimates:
	\begin{lm}\label{lem:data.chi}
	Suppose $u_{\bA}$ satisfies $\gi^{\mu\nu}\rd_\mu u_{\bA} \rd_\nu u_{\bA}=0$, then $\Box_g u_{\bA}\restriction_{\Sigma_0}$ is given by the following expression:
	\begin{equation}\label{Box.u.data}
	\begin{split}
	\Box_g u_{\bA} \restriction_{\Sigma_0}=& \frac{1}{N}e^{-\gamma} (e_0\gamma) |\nabla u_{\bA}|\restriction_{\Sigma_0}+\frac{1}{Ne^{2\gamma}} \delta^{ij}\rd_i(N\rd_j u_{\bA})\restriction_{\Sigma_0}\\
	&-\frac{1}{N}e^{-\gamma}\left(\frac{1}{|\nabla u_{\bA}|}\delta^{ij}\rd_i u_{\bA} \rd_j (e^{-\gamma} N |\nabla u_{\bA}|) + \frac{1}{|\nabla u_{\bA}|}\delta^{ij}(\rd_i u_\bA)(\rd_j \beta^k)\partial_k u_{\bA}\right)\restriction_{\Sigma_0}.
	\end{split}
	\end{equation}
	 Therefore, by setting $\chi_{\bA} \restriction_{\Sigma_0}=(\mbox{RHS of \eqref{Box.u.data}})$, we have on $\Sigma_0$ that $\chi_{\bA} \in H^k_{\de}$. Moreover, there exists $C_\chi$ depending only on $C_{eik}$, $\de$, $k$ and $R$ such that
	\begin{equation}\label{chi.init.bd}
	\sup_{\bA}\|\chi_{\bA}\|_{C^0_{\de+1}} \leq C_{\chi}.
	\end{equation}
	\end{lm}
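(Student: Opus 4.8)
The plan is to prove \eqref{Box.u.data} by a direct computation of $\Box_g u_{\bA}$ on $\Sigma_0$ using the formula $\Box_g\xi = \frac{1}{\sqrt{|\det g|}}\rd_\mu(\gi^{\mu\nu}\sqrt{|\det g|}\rd_\nu\xi)$, and then to read off the regularity and the bound \eqref{chi.init.bd} from the estimates already established in Lemmas~\ref{lm.constraint}, \ref{lem:data.N.beta} and \ref{lem:e0gamma}, together with the assumption \eqref{init.u.bd} on $\nab u_{\bA}$. First I would use \eqref{g.det} so that $\sqrt{|\det g|} = e^{2\gamma}N$, and \eqref{g.inverse} for the components of $\gi$, to write
$$\Box_g u_{\bA} = \frac{1}{e^{2\gamma}N}\rd_\mu\left(e^{2\gamma}N\,\gi^{\mu\nu}\rd_\nu u_{\bA}\right) = -\frac{1}{e^{2\gamma}N}\rd_\mu\left(e^{2\gamma}N\, L_{\bA}^\mu\right),$$
where in the last step I recall from Lemma~\ref{lem:data.L} that $L_{\bA}^\mu = -\gi^{\mu\nu}\rd_\nu u_{\bA}$, with $L_{\bA}^t$ and $L_{\bA}^i$ given explicitly by \eqref{L.initial.computation}. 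Expanding $\rd_\mu = \rd_t + \rd_i$ and organizing by the $t$- and $i$-derivatives, the $\rd_t$ term produces $\rd_t(e^{2\gamma}N L_{\bA}^t) = \rd_t(e^\gamma|\nab u_{\bA}|)$, which involves $e_0\gamma$ (via the definition $e_0 = \rd_t - \beta^i\rd_i$) and $\rd_t|\nab u_{\bA}|$; the latter must be eliminated. Here is the key point: on $\Sigma_0$ we do not prescribe $\rd_t u_{\bA}$ freely — it is determined by the eikonal equation $\gi^{\mu\nu}\rd_\mu u_{\bA}\rd_\nu u_{\bA} = 0$, equivalently \eqref{eikonal.frame}, i.e. $(e_0 u_{\bA}) = Ne^{-\gamma}|\nab u_{\bA}|$ (using $e_0 u_{\bA}>0$). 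Differentiating this relation tangentially along $\Sigma_0$, i.e. applying $\rd_i$, lets me trade all $t$-derivatives of $u_{\bA}$ for spatial derivatives of $\gamma$, $N$, $\beta$ and $\nab u_{\bA}$. Substituting everything back and regrouping the terms — the $\delta^{ij}\rd_i(N\rd_j u_{\bA})$ piece coming from the spatial divergence, the $e_0\gamma$ piece, and the two correction terms from differentiating $e^{-\gamma}N|\nab u_{\bA}|$ and from the $\beta^i\rd_i$ part of $e_0$ — yields exactly \eqref{Box.u.data}. This is mostly bookkeeping; care is needed with the $\frac{1}{|\nab u_{\bA}|}$ factors, which are harmless since $|\nab u_{\bA}|\geq C_{eik}^{-1}>0$ by \eqref{init.u.bd}.

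For the regularity and the bound \eqref{chi.init.bd}, I would inspect each term on the RHS of \eqref{Box.u.data}. By Lemma~\ref{lm.constraint}, $\gamma = -\alp\chi\log|x| + \wht\gamma$ with $\wht\gamma\in H^{k+2}_\delta$ and small weighted H\"older and $L^4$-Sobolev norms; by Lemma~\ref{lem:data.N.beta}, $N = 1 + N_{asymp}\chi\log|x| + \wht N$ with $\wht N\in H^{k+2}_\delta$ and $\beta\in H^{k+2}_{\delta'}$; by Lemma~\ref{lem:e0gamma}, $e_0\gamma\in H^{k+1}_{\delta'+1}$; and by \eqref{init.u.bd}, $\nab u_{\bA} - \overrightarrow{c_{\bA}}\in H^{k+1}_\delta$ with $|\nab u_{\bA}|$ bounded below. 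The functions $e^{\pm\gamma}$, $1/N$ and $1/|\nab u_{\bA}|$ are handled by the composition/algebra properties of weighted spaces (Propositions~\ref{holder}, \ref{produit}, Lemma~\ref{produit2}), noting the growing factors $e^{\pm\alp\chi\log|x|}\ls(1+|x|)^{C\ep^2}$ and $N_{asymp}\chi\log|x|\ls\ep(1+|x|)^{\ep/10}$ can be absorbed by a slight loss in the weight, which is why the conclusion is stated in $H^k_\delta$ (one less derivative than $\nab u_{\bA}$, due to the second derivatives of $u_{\bA}$ appearing through $\delta^{ij}\rd_i(N\rd_j u_{\bA})$ and $\rd_j(\cdots|\nab u_{\bA}|)$). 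The $C^0_{\de+1}$ bound \eqref{chi.init.bd} then follows either from the Sobolev embedding $H^k_\delta\hookrightarrow C^0_{\delta+1}$ for $k\geq 3$ (Proposition~\ref{holder}) applied to $\chi_{\bA}$, or more directly by reading off the weighted $C^0$ norms of each factor from \eqref{est.constraints}, \eqref{estN.data}, \eqref{estbeta.data} and the $C^0$ part of Lemma~\ref{lem:e0gamma}; the constant $C_\chi$ depends only on $C_{eik}$, $\delta$, $k$, $R$ because the only large quantity entering linearly is $|\overrightarrow{c_{\bA}}|\leq C_{eik}$ through $\nab u_{\bA}$, while all genuinely nonlinear contributions are $O(\ep)$.

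The main obstacle — and really the only non-mechanical step — is the derivation itself: correctly eliminating $\rd_t u_{\bA}$ (equivalently $\rd_t|\nab u_{\bA}|$) in favor of tangential data by differentiating the eikonal relation \eqref{eikonal.frame}, and then matching the resulting collection of terms precisely with the four groups on the RHS of \eqref{Box.u.data}. Everything after that is an application of the weighted-space calculus from Appendix~\ref{weightedsobolev} exactly as in the proofs of the preceding lemmas, so I would state it briefly and refer back rather than repeating the estimates.
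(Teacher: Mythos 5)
Your approach is correct and ultimately equivalent, but it goes through a different expansion of $\Box_g$ than the paper. The paper computes $\Box_g f$ in the frame $\{e_0,\rd_i\}$ via $\Box_g f=\gi^{\mu\nu}\bigl(e_\mu e_\nu f-(\nabla_{e_\mu}e_\nu)f\bigr)$, which makes visible a term $\frac{e^{2\gamma}\tau}{N}(e_0 f)$ (obtained by rewriting $4(e_0\gamma)-2(\rd_i\bt^i)$ via \eqref{tau}) that manifestly vanishes on $\Sigma_0$; you instead start from the coordinate divergence form $\frac{1}{\sqrt{|\det g|}}\rd_\mu\bigl(\gi^{\mu\nu}\sqrt{|\det g|}\rd_\nu u_\bA\bigr)=-\frac{1}{e^{2\gamma}N}\rd_\mu\bigl(e^{2\gamma}N L_\bA^\mu\bigr)$. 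Both then use the eikonal relation $e_0 u_\bA=N e^{-\gamma}|\nab u_\bA|$, together with the commutator $[e_0,\rd_j]=(\rd_j\bt^k)\rd_k$, to eliminate $t$-derivatives of $u_\bA$; that is indeed the key non-mechanical step, and you correctly identify it.

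There is, however, a genuine input missing from your sketch. With your coordinate-form expansion one obtains
\begin{equation*}
\Box_g u_\bA=-\frac{1}{e^{2\gamma}N}\Bigl(e_0\bigl(e^\gamma|\nab u_\bA|\bigr)-(\rd_i\bt^i)\,e^\gamma|\nab u_\bA|-\de^{ij}\rd_i(N\rd_j u_\bA)\Bigr),
\end{equation*}
and expanding $e_0(e^\gamma|\nab u_\bA|)=e^\gamma(e_0\gamma)|\nab u_\bA|+e^\gamma e_0|\nab u_\bA|$ gives a net coefficient $\bigl(-(e_0\gamma)+\rd_i\bt^i\bigr)\frac{|\nab u_\bA|}{Ne^\gamma}$ for the first group, whereas \eqref{Box.u.data} has $+\frac{(e_0\gamma)|\nab u_\bA|}{Ne^\gamma}$. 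These match precisely because $\rd_i\bt^i\restriction_{\Sigma_0}=2\,e_0\gamma\restriction_{\Sigma_0}$, i.e.\ $\tau\restriction_{\Sigma_0}=0$ by \eqref{tau}. You describe the substitution as "mostly bookkeeping" but never invoke the maximality of $\Sigma_0$; without it your expansion does not reduce to \eqref{Box.u.data}. The paper's frame formula makes this input structurally explicit by isolating a $\tau$-proportional term; in your route it appears only after recombining the $\rd_t$-part with the $\rd_i(\bt^i\cdots)$ piece, and must be noted.

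The regularity and boundedness portion of your proposal is essentially the paper's: citing \eqref{init.u.bd}, Lemmas~\ref{lm.constraint}, \ref{lem:data.N.beta}, \ref{lem:e0gamma} together with Propositions~\ref{holder}, \ref{produit} and Lemma~\ref{produit2}. Your observation that \eqref{chi.init.bd} cannot be obtained by Sobolev embedding from the $H^k_\de$ norm (which, through the constraint solutions, depends on $C_{high}$) but must instead be read off the uniform $C^0$/$C^1$ and weighted $L^4$-Sobolev smallness estimates in \eqref{est.constraints}, \eqref{estN.data}, \eqref{estbeta.data} and Lemma~\ref{lem:e0gamma} is exactly right, and is the one place where the constant-tracking actually matters.
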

		\begin{proof}
	Using \eqref{tau}, we have
	\begin{equation*}
	\begin{split}
	\Box_g f=&\gi^{\mu\nu}\left(e_\mu e_\nu f-(\nabla_{e_{\mu}} e_\nu)f\right)\\
	=& -\frac{1}{N^2}e_0^2 f+\f {e_0 N}{N^3} e_0 f+\f 1{N} e^{-2\gamma}\delta^{ij} \rd_i N\rd_j f+e^{-2\gamma}\rd_i^2 f-\f{1}{2N^2}\left(4(e_0 \gamma)-2(\rd_i\bt^i)\right)(e_0 f)\\
	=& -\frac{1}{N^2}e_0^2 f+\f {e_0 N}{N^3} e_0 f+\f 1{N} e^{-2\gamma}\delta^{ij} \rd_i N\rd_j f+e^{-2\gamma}\rd_i^2 f+\f{e^{2\gamma}\tau}{N}(e_0 f).
	\end{split}
	\end{equation*}
	Since $\tau\restriction_{\Sigma_0}$ vanishes, we use the fact $e_0 u_{\bA}=e^{-\gamma}N |\nabla u_{\bA}|$ (cf. \eqref{eikonal.frame}) to obtain
	$$\Box_g u_{\bA} \restriction_{\Sigma_0}= -\frac{1}{N}e_0 (e^{-\gamma}|\nabla u_{\bA}|)\restriction_{\Sigma_0}+\frac{1}{Ne^{2\gamma}} \delta^{ij}\rd_i(N\rd_j u_{\bA})\restriction_{\Sigma_0}.$$
	The only term that does not manifestly depend only on $\nabla u_{\bA} \restriction_{\Sigma_0}$ is $e_0 |\nabla u_{\bA}|$. It can be re-expressed using the eikonal equation as follows
	$$e_0 |\nabla u_{\bA}|= \frac{1}{|\nabla u_{\bA}|}\delta^{ij}(\rd_i u_{\bA})( e_0 \rd_j u_{\bA})
	=\frac{1}{|\nabla u_{\bA}|}\delta^{ij}\rd_i u_{\bA} \rd_j (e^{-\gamma} N |\nabla u_{\bA}|) + \frac{1}{|\nabla u_{\bA}|}\delta^{ij}(\rd_i u_{\bA})(\rd_j \beta^k)\partial_k u_{\bA} .$$ 
	Combining the above expressions gives \eqref{Box.u.data}.
	
The fact that $\chi_{\bA} \in H^k_{\de}$ then follow from the bounds in \eqref{init.u.bd}, Lemmas~\ref{lm.constraint} and \ref{lem:data.N.beta} using Propositions~\ref{holder} and \ref{produit}. Moreover, by the estimates in \eqref{init.u.bd}, Lemmas~\ref{lm.constraint} and \ref{lem:data.N.beta}, we have \eqref{chi.init.bd}.
	
	\end{proof}
		
	We conclude this section with the following corollary, which summarizes the estimates in this section:
		\begin{cor}\label{cor.data}
		Given a free initial data set satisfying the assumptions of Theorem \ref{lwp}. Suppose that $\ep$ is sufficiently small, then 
		\begin{itemize}
		\item there exists an initial data set to the reduced system \eqref{tau}--\eqref{chi} such that the constraint equations \eqref{mom2} and \eqref{ham2} are satisfied and $\tau\restriction_{\Sigma_0} = 0$. 
		\item Also, there exists a constant $C$ (depending on $C_{eik}$, $k$, $\de$, $R$) such that all the smallness estimates \eqref{est.constraints.0}, \eqref{est.constraints}, \eqref{estN.data}, \eqref{estbeta.data} hold with implicit constant $C$.
		\item	For the quantities associated to $u_{\bA}$, $L_{\bA}$ and $\chi_{\bA}$, the estimates \eqref{L.init.bd}, \eqref{L.init.bd.lower} and \eqref{chi.init.bd} hold.
		\item Moreover, there exists a constant $C_i$ (depending on $C_{high}$, in addition to $C_{eik}$, $k$, $\de$, $R$) such that the following estimates hold for the initial data to the reduced system \eqref{tau}--\eqref{chi}:
		\begin{align*}
		\|H\|_{H^{k+1}_{\de}} + \|\wht N \|_{H^{k+2}_{\de}} + \|\beta\|_{H^{k+2}_{\de'}} + \|\wht \gamma\|_{H^{k+2}_\de} + \|e_0 \wht \gamma\|_{H^{k+1}_{\de'}} + \|\rd\phi\|_{H^k} & \\
		+ \sup_{\bA} \left(\|F_{\bA}\|_{H^k} + \| e^{2\gamma} L_\bA^i + \overrightarrow{c_{\bA}}^i\|_{H^k_{\de''}} + \| N e^{\gamma} L_\bA^t + |\overrightarrow{c_{\bA}}|\|_{H^k_{\de''}} + \|\chi_{\bA}\|_{H^k_\de}\right) &\leq C_i.
		\end{align*}
		\end{itemize}
				\end{cor}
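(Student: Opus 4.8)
The plan is to read Corollary~\ref{cor.data} as a collation of the lemmas already established in this section, the only genuinely new content being the high-order \emph{quantitative} bounds with the constant $C_i$, which come from the same elliptic iterations as in those lemmas but now tracking the dependence on $C_{high}$. First I would assemble the data along the chain of lemmas: solving \eqref{mom2.Y} and \eqref{ham2} via Lemma~\ref{lm.constraint} produces $(H,\gamma)$ and hence the intrinsic geometry of $\Sigma_0$; feeding these into Lemma~\ref{lem:data.N.beta} produces $(N,\bt)$; Lemma~\ref{lem:e0gamma} then fixes $e_0\gamma=\tfrac12\mathrm{div}(\bt)$, which by \eqref{tau} forces $Ne^{2\gamma}\tau=-2e_0\gamma+\rd_i\bt^i=0$, i.e.\ $\tau\restriction_{\Sigma_0}=0$; with $N$ and $\gamma$ now known, \eqref{data.rescaled} recovers $e_0\phi=\tfrac{N}{e^{2\gamma}}\dot\phi$ and $F_{\bA}=e^{-\gamma/2}\breve{F}_{\bA}$ from the free data, hence also $\rd_t\phi=e_0\phi+\bt^i\rd_i\phi$; and finally Lemma~\ref{lem:data.L} and Lemma~\ref{lem:data.chi} fix $L_{\bA}\restriction_{\Sigma_0}$ and $\chi_{\bA}\restriction_{\Sigma_0}$ by \eqref{L.initial.computation} and \eqref{Box.u.data}. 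This is a complete initial data set for \eqref{tau}--\eqref{chi} satisfying the constraints, which establishes the first bullet.

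The smallness bullet is then immediate: \eqref{est.constraints.0} and \eqref{est.constraints} are the conclusions of Lemma~\ref{lm.constraint}, and \eqref{estN.data}, \eqref{estbeta.data} are the conclusions of Lemma~\ref{lem:data.N.beta}, the implicit constants in all of them depending only on $C_{eik},k,\de,R$; one takes $C$ to be their maximum. Likewise the $u_{\bA}$-bullet is exactly \eqref{L.init.bd} and \eqref{L.init.bd.lower} from Lemma~\ref{lem:data.L} together with \eqref{chi.init.bd} from Lemma~\ref{lem:data.chi}.

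It remains to prove the last, quantitative high-order bound with a constant $C_i$ depending on $C_{high}$. For this I would bootstrap. Starting from the already-established low-norm smallness of $\gamma,H,N,\bt$ — which keeps the nonlinear factors $e^{\pm 2\gamma}$, $N$ under control, and lets the logarithmic corrections be absorbed into the shifted weights $\de'=\de-\ep$, $\de''=\de-2\ep$ via $e^{2\alp\chi(|x|){\log}(|x|)}\ls(1+|x|)^{C\ep^2}$ and $N_{asymp}\chi(|x|){\log}(|x|)\ls\ep(1+|x|)^{\ep/10}$, exactly as in the proofs of Lemmas~\ref{lm.constraint} and \ref{lem:data.N.beta} — I insert the $H^k$ bounds on $\dot\phi,\nab\phi,\breve{F}_{\bA},\nab u_{\bA}$ (i.e.\ $C_{high}$ and $C_{eik}$) into the right-hand sides of \eqref{mom2.Y}, \eqref{ham2}, \eqref{N.data}, \eqref{beta.data.2} and iterate the weighted elliptic estimate Theorem~\ref{laplacien} together with the product estimates Proposition~\ref{produit}, Lemma~\ref{produit2} and the derivative estimate Lemma~\ref{der}, climbing up to $H^{k+1}_{\de+1}$ for $H$, $H^{k+2}_\de$ for $\wht\gamma$ and $\wht N$, $H^{k+2}_{\de'}$ for $\bt$, and $H^{k+1}_{\de'}$ for $e_0\wht\gamma$; the resulting norms are bounded by a constant depending on $C_{high},C_{eik},k,\de,R$, which we call $C_i$. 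The recovered fields $\rd\phi$, $F_{\bA}$ and the quantities $e^{2\gamma}L_{\bA}^i+\overrightarrow{c_{\bA}}^i$, $Ne^{\gamma}L_{\bA}^t+|\overrightarrow{c_{\bA}}|$, $\chi_{\bA}$ are then controlled in the stated $H^k$-type spaces by combining these bounds with \eqref{L.initial.computation}, \eqref{Box.u.data} and the same product and elliptic estimates. I expect the only real bookkeeping obstacle to be, as throughout the section, the careful tracking of the logarithmic weights — making sure each product of a log-weighted factor with a weighted-Sobolev function lands in the next weight class down — but since $C_i$ is allowed to depend on $C_{high}$ there is no difficulty closing the estimates, and none of this needs the reductive hierarchy that becomes essential only when solving the reduced system in Section~\ref{sec.solve.reduced}.
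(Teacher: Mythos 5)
Your proposal is correct and follows the same route the paper implicitly takes: Corollary~\ref{cor.data} is presented as a summary of Lemmas~\ref{lm.constraint}--\ref{lem:data.chi} without a separate proof, and you correctly chain the constraint-solving, the lapse/shift estimates, the prescription $e_0\gamma=\tfrac12\mathrm{div}(\bt)$ forcing $\tau\restriction_{\Sigma_0}=0$, and the $L_{\bA},\chi_{\bA}$ lemmas, then track the $C_{high}$-dependence through the elliptic bootstrap (Theorem~\ref{laplacien} iterated with Propositions~\ref{produit}, \ref{holder} and Lemma~\ref{der}) to obtain $C_i$. You also correctly note that the low-norm smallness is what tames the $\log$-weights (hence the shifted weights $\de'$, $\de''$), and that none of the reductive hierarchy of Section~\ref{sec.solve.reduced} is needed at the data level.
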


	\section{Solving the reduced system of equations}\label{sec.solve.reduced}	
	
	In this section, we solve the reduced system of equations that we introduced in Section~\ref{sec.reduced}. This will be done by an iteration method. The iteration scheme will be introduced in Section~\ref{sec.iterative}. In Section~\ref{sec:iteration.bdd}, we show that in appropriate norms, the iterates we define are uniformly bounded. Finally, in Section~\ref{sec:iteration.convergence}, we show the convergence of the iterates in appropriate norms, which imply the existence and uniqueness of solutions to the reduced system of equations.
	
	\subsection{Iteration scheme}\label{sec.iterative}
	
	From now on we only consider the case $k=3$. As we mentioned previously, larger $k$ can be treated in a similar manner, but would unnecessarily complicate the exposition.
	
We construct the sequence\footnote{Note that $\alp$ is a non-negative constant independent of $n$.}
$$(N^{(n)},\beta^{(n)},\tau^{(n)},H^{(n)},\gamma^{(n)}=-\alpha \chi(|x|){\log}(|x|)+\wht \gamma^{(n)}, \phi^{(n)}, L^{(n)}_{\bA}, F^{(n)}_{\bA}, \chi^{(n)}_{\bA})$$
 iteratively as follows: 
For $n=1, 2$, let $N^{(n)},\beta^{(n)},\tau^{(n)},H^{(n)},\gamma^{(n)}=-\alpha \chi(|x|){\log}(|x|)+\wht \gamma^{(n)}, \phi^{(n)}, L^{(n)}_{\bA}, F^{(n)}_{\bA}, \chi^{(n)}_{\bA}$ be time-independent, with initial data as in Section~\ref{sec.data.constraints}.
For $n \geq 2$, given the $n$-th iterate, the $(n+1)$-st iterate is then defined by solving the following system (Latin indices are raised and lowered with respect to $\de$ as before):
	
\begin{align}
\nonumber -\Delta N^{(n+1)}=&-e^{-2\gamma^{(n)}}N^{(n)}(|H^{(n)}|^2+\f 12 e^{4\gamma^{(n)}}(\tau^{(n)})^2)\\
\label{lapsen} &-\frac{2 e^{2\gamma^{(n)}}}{N^{(n)}}(e_0^{(n-1)} \phi^{(n)})^2-\sum_{\bA} e^{4\gamma^{(n)}} N^{(n)} (F^{(n)}_{\bA})^2 \de_{ij} (L^{(n)}_{\bA})^i (L^{(n)}_{\bA})^j,\\
\label{shiftn}(L \beta^{(n+1)})_{ij}= & 2N^{(n)}e^{-2\gamma^{(n)}}(H^{(n)})_{ij},\\
\nonumber -\frac{e_0^{(n)}\left(\frac{e_0^{(n)} \gamma^{(n+1)}}{N^{(n)}} \right)}{N^{(n)}}+\Delta \gamma^{(n+1)} = &-\frac{2 (e_0^{(n-1)} \gamma^{(n)})^2}{N^{(n)}N^{(n-1)}}+\f 12 (\tau^{(n)})^2e^{2\gamma^{(n)}}
-\frac{e^{2\gamma^{(n)}}}{2 N^{(n)}}e_0^{(n-1)} \left( \frac{e^{-2\gamma^{(n)}}}{N^{(n-1)}}div(\beta^{(n)})\right)\\
\label{gamman} & -\frac{\Delta N^{(n)}}{2 N^{(n)}}- \delta^{ij}\rd_i\phi^{(n)}\rd_j\phi^{(n)}-\f 12\sum_{\bA} e^{4\gamma^{(n)}} (F^{(n)}_{\bA})^2 \de_{ij} (L^{(n)}_{\bA})^i (L^{(n)}_{\bA})^j, \\
\label{taun} 
\tau^{(n+1)}= & \frac{e^{-2\gamma^{(n)}}}{N^{(n)}}\left( -2e_0^{(n-1)} \gamma^{(n)} +div(\beta^{(n)})\right),\\
\nonumber (e_0^{(n+1)}) (H^{(n+1)})_{ij}= &
-2e^{2\gamma^{(n)}}N^{(n)}(H^{(n)})_i{ }^\ell (H^{(n)})_{j\ell} +2 \partial_{(j} (\beta^{(n)})^k (H^{(n)})_{i)k}\\
\nonumber & - (\partial_i \breve{\otimes } \partial_j)  N^{(n)} 
+\partial_i \gamma^{(n)} \breve{\otimes } \partial_j N^{(n)}
-2 N^{(n)}\partial_i \phi^{(n)} \breve{\otimes}\partial_j \phi^{(n)} \\
\label{hijn} & -\sum_{\bA} N^{(n)} (F^{(n)}_{\bA})^2  (L^{(n)}_{{\bA}})_i\breve{\otimes}(L^{(n)}_{\bA})_j,\\
\label{phin}\Box_{g^{(n)}} \phi^{(n+1)} = & 0,\\
\label{un} (L^{(n)}_{\bA})^\rho \rd_\rho (L^{(n+1)}_\bA)^{\alp} = & -(\Gamma^{(n)})^{\alp}_{\mu\nu} (L^{(n)}_{\bA})^\mu (L^{(n)}_{\bA})^\nu, \\
\label{Fn} 2(L_{\bA}^{(n)})^\rho\partial_\rho F^{(n+1)}_{\bA}
=& -(\chi^{(n)})_{\bA} F^{(n)}_{\bA},\quad\forall \bA,\\
\label{chin} (L_{\bA}^{(n)})^\rho\partial_\rho \chi^{(n+1)}_{\bA} + (\chi^{(n)}_{\bA})^2 =& -2((L^{(n)}_{\bA})^\rho \partial_\rho \phi^{(n)})^2
-\sum_{\bf B} F_{\bf B}^2(g^{(n)}_{\mu\nu} (L^{(n)}_{\bA})^\mu (L^{(n)}_{\bf B})^\nu)^2,\quad\forall \bA,
\end{align}	
where $g^{(n)}=-(N^{(n)})^2 dt^2+e^{2\gamma^{(n)}} \de_{ij} (dx^i +(\bt^{(n)})^i dt)(dx^j +(\bt^{(n)})^j dt)$; $D^{(n)}$, $(\Gamma^{(n)})^{\alp}_{\mu\nu}$ and $\Box_{g^{(n)}}$ are the Levi--Civita connection, Christoffel symbols and the Laplace--Beltrami operator, respectively, associated to $g^{(n)}$; and $e_0^{(n)}= \rd_t -(\bt^{(n)})^i\rd_i$. We have also used the notation
 $u_i \breve{\otimes} v_j = u_i v_j+u_j v_i-\delta_{ij}(u^k v_k).$

\begin{rk}[Well-posedness of \eqref{lapsen}--\eqref{chin}]
Notice that \eqref{lapsen}--\eqref{chin} is \underline{not} a linear system due to the term $e_0^{(n+1)} H^{(n+1)}$ on the LHS of \eqref{hijn}, which has a nonlinear term $(\bt^{(n+1)})^k\rd_k H_{ij}^{(n+1)}$ in the $(n+1)$-st iterate. (This will be useful in exploiting the nonlinear structure to prove estimates.) Nevertheless, the (local) well-posedness of the system \eqref{lapsen}--\eqref{chin} follows from the estimates we are about to prove.
\end{rk}

\subsection{Uniform boundedness of the sequence}\label{sec:iteration.bdd}

The first order of business is to show inductively that the sequence we just defined is uniformly bounded in appropriate function spaces. To carry out the induction, we assume as induction hypothesis that the following estimates for some $n \geq 2$ and for all $t\in [0,T]$. Here, $A_0 \ll A_1 \ll A_2 $ are all sufficiently large constants (independent of $\ep$) to be chosen later, $\de''' = \de''-\ep$, $\delta''=\de'-\ep$ and $\de'=\de-\ep$. Choosing $\ep$ smaller if necessary, we assume throughout that $-1<\delta'''$.
\begin{itemize}
\item (Estimates for $N^{(n)}$) $N^{(n)}$ admits a decomposition $N^{(n)} = 1+N^{(n)}_{asymp} (t)\chi(|x|){\log}(|x|)+ \wht N^{(n)}$ with $N^{(n)}_{asymp}\geq 0$ and satisfy the estimates
\begin{align}
\label{estN.small} |N_{asymp}^{(n)}| + \|\wht N^{(n)}\|_{H^2_{\de}} + \|\wht N^{(n)}\|_{W^2_{\de+\f 12, 4}} +\|\wht N^{(n)}\|_{C^1_{\de+1}} \leq & \ep,\\
\label{estN.large} |\partial_t N^{(n)}_{asymp}|+ \| \wht N^{(n)}\|_{H^{5}_\delta}+\|\partial_t \wht N^{(n)}\|_{H^{2}_\delta }\leq & 2C_i,\\
\label{estrdtN.large} \|\partial_t \wht N^{(n)}\|_{H^{3}_\delta }\leq & 2 C_i^2.
\end{align}
Assume that the same holds with $(n)$ replaced by $(n-1)$.
\item (Estimates for $\bt^{(n)}$) $\bt^{(n)}$ satisfies the following estimates:
\begin{align}
\label{estbeta.small} & \|\beta^{(n)}\|_{H^2_{\delta'}} + \|\beta^{(n)}\|_{W^2_{\delta'+\f 12, 4}} + \|\beta^{(n)}\|_{C^1_{\delta'+1}}\leq \ep, \\
\label{estbeta.large} &\|\beta^{(n)}\|_{H^{4}_{\delta'}}\leq A_0 C_i,\quad \|e_0^{(n-1)} \beta^{(n)}\|_{H^{2}_{\delta'}}\leq A_0 C_i,\quad \|e_0^{(n-1)} \bt^{(n)}\|_{H^4_{\de'}}\leq A_0 C_i^2.
\end{align}
Assume also that the above estimates for $\beta^{(n)}$, but not necessarily that for $e_0^{(n-1)} \bt^{(n)}$, also hold with $(n)$ replaced by $(n-1)$.
\item (Estimates for $\wht \gamma^{(n)}$) $\gamma^{(n)}$ admits a decomposition $\gamma^{(n)} = -\alp \chi(|x|) {\log}(|x|) + \wht \gamma^{(n)}$, where $\alp$ is as in Lemma~\ref{lm.constraint} (with $0\leq \alp \leq C\ep^2$) and $\wht \gamma^{(n)}$ satisfies
\begin{align}
\label{estgamma} 
\sum_{|\alpha|\leq 3}\left\|\frac{e_0^{(n-1)} \nabla^{\alpha} \wht \gamma^{(n)}}{N^{(n-1)}}\right\|_{L^2_{\delta'+1+|\alpha|}}
+\left\|\nabla \wht \gamma^{(n)}\right\|_{H^3_{\delta'+1}}
\leq & 4 C_i ,\\
\label{estrdtgamma}\left\|\partial_t \frac{(e_0^{(n-1)})\wht \gamma^{(n)}}{N^{(n-1)}}\right\|_{H^1_{\delta'+1}} \leq A_1 C_i, \quad
\left\|\partial_t \frac{(e_0^{(n-1)})\wht \gamma^{(n)}}{N^{(n-1)}}\right\|_{H^2_{\delta'+1}} \leq & A_1 C_i^2.
\end{align}
Assume that the same holds with $(n)$ replaced by $(n-1)$.
\item (Estimates for $\tau^{(n)}$) $\tau^{(n)}$ satisfies the following estimates:
\begin{align}
\label{esttau} & 
\|\tau^{(n)}\|_{H^3_{\delta''+1}}\leq A_1 C_i,\quad \|\partial_t\tau^{(n)}\|_{H^1_{\delta''+1}}\leq A_2 C_i,\quad \|\partial_t\tau^{(n)}\|_{H^2_{\delta''+1}}\leq A_2 C_i^2.
\end{align}
\item (Estimates for $H^{(n)}$) $H^{(n)}$ satisfies the following estimates:
\begin{align}
\label{estH} \|H^{(n)}\|_{H^{3}_{\delta+1}}\leq 2 C_i,\quad\|e_0^{(n)} H^{(n)}\|_{H^3_{\delta+1}} \leq 20 C_i.
\end{align}
\item (Estimates for the vector fields $L_{\bA}^{(n)}$ and auxiliary functions $\chi_\bA^{(n)}$)
Let $L_{\bA}^{(n)}$ be decomposed with respect to $\{\rd_t, \rd_i\}$, i.e., $L_{\bA}^{(n)} = (L_{\bA}^{(n)})^t \rd_t + (L_{\bA}^{(n)})^i \rd_i$. Then $(L_{\bA}^{(n)})^t$ obeys the lower bound
\begin{align}
\label{estu.lower} \min_{\bA}\inf_{x\in \mathbb R^2} \left|N^{(n-1)} e^{\gamma^{(n-1)}} (L_{\bA}^{(n)})^t \right|(x) \geq \f 12 C_{eik}^{-1}.
\end{align}
for $C_{eik}$ as in \eqref{init.u.bd}, and $L_{\bA}^{(n)}$ satisfies the following estimates:
\begin{align}
\label{estu.small} \sup_{\bA}\left(\left\|e^{2\gamma^{(n-1)}} (L^{(n)}_{\bA})^i+\overrightarrow{c_{\bA}}^i \right\|_{H^2_{\delta''}} + \left\|N^{(n-1)}e^{\gamma^{(n-1)}} (L^{(n)}_{\bA})^{t} - |\overrightarrow{c_{\bA}}| \right\|_{H^2_{\delta''}}\right) & \leq A_0 C_{eik},\\
\nonumber \sup_{\bA}\left(\left\|e^{2\gamma^{(n-1)}} (L^{(n)}_{\bA})^i+\overrightarrow{c_{\bA}}^i \right\|_{H^3_{\delta''}} + \left\|\partial_t \left(e^{2\gamma^{(n-1)}} (L^{(n)}_{\bA})^i\right) \right\|_{H^2_{\delta'''}} \right)&\\ 
\label{estu}  + \sup_{\bA}\left(\left\|N^{(n-1)}e^{\gamma^{(n-1)}} (L^{(n)}_{\bA})^{{t}} - |\overrightarrow{c_{\bA}}| \right\|_{H^3_{\delta''}} + \left\|\partial_t \left(N^{(n-1)}e^{\gamma^{(n-1)}} (L^{(n)}_{\bA})^t\right) \right\|_{H^2_{\delta'''}} \right) & \leq A_1 C_i.
\end{align}
Also, for $C_\chi$ as in Lemma~\ref{lem:data.chi}, $\chi^{(n)}_\bA$ satisfies the following estimates:
\begin{equation}
\label{estchi} \sup_{\bA}\left\|\chi^{(n)}_{\bA}\right\|_{C^0_{\de'+1}} \leq 2 C_{\chi}, \quad \sup_{\bA}\left\|\chi^{(n)}_{\bA}\right\|_{H^{3}_{\de}} \leq A_0 C_i.
\end{equation}
\item (Estimates for the matter fields) $\phi^{(n)}$ and $F_{\bA}^{(n)}$ are compactly supported in 
$$\{(t,x)\in [0,T]\times \mathbb R^2: C_s(1+R^\ep) t-|x|\geq -R\}, $$
where $C_s>0$ is to be chosen in Lemmas~\ref{lm:cpt.supp} and \ref{lm:cpt.supp.2}. Choosing $T$ smaller if necessary, we assume the above set $\subset \{ (t,x)\in [0,T]\times \mathbb R^2: |x|\leq 2 R\}$.

Moreover, the following estimates hold:
\begin{align}
\label{estphi}
\|\partial \phi^{(n)}\|_{H^3}+\left\|\partial_t \frac{(e_0^{(n-1)})\phi^{(n)}}{N^{(n-1)}}\right\|_{H^{2}}\leq & A_0 C_i ,\\
\label{estF} \sup_{\bA}\left\|F^{(n)}_{\bA}\right\|_{H^3}\leq A_0 C_i,\quad \sup_{\bA}\left\|\partial_t F^{(n)}_{\bA}\right\|_{H^{2}} \leq & A_1 C_i.
\end{align}
\end{itemize}

\begin{rk}[Choice of constants]
Recalling the statement of Theorem \ref{lwp}, $C_{high}$ is a potentially large constant such that $T$ can depend on $C_{high}$ but $\ep_{low}$ has to be \underline{in}dependent of $C_{high}$. In the previous section, we have proven that there is a $C_i$ \emph{depending on $C_{high}$} so that the bounds in Corollary \ref{cor.data} hold. Therefore, in the following $\ep_{low}$ and $T$ are chosen according to the following rules:
\begin{itemize}
\item $\ep_{low}$ (and therefore $\ep$) can be chosen to be small depending on $\de$, $R$, $A_0$, $A_1$, $A_2$ and $C_{eik}$, but not $C_i$.
\item The time parameter $T$ can be chosen to be small depending on all of $\de$, $R$, $C_i$, $C_{eik}$, $A_0$, $A_1$, $A_2$ and $\ep^{-1}$.
\end{itemize}
\end{rk}

In the remainder of this subsection, \textbf{$C$ will denote numerical constant, independent of $A$ and $C_i$, but can depend on\footnote{Recall that for this subsection, we have fixed $k=3$. Hence, none of the constants depend on $k$.} $C_{eik}$, $\delta$ and $R$. Similarly, we use the convention $\lesssim$ when the implicit constant is independent of $A_0$, $A_1$, $A_2$ and $C_i$. Constants that depend on $A_0$, $A_1$, $A_2$ or $C_i$ (in addition to $C_{eik}$, $\delta$ and $R$) will be written explicitly as $C(A_0)$, $C(A_1)$, $C(A_2)$ or $C(C_i)$.}

With the above conventions in mind, we note that in the rest of this subsection, we will have the following \emph{hierarchy of constants}:
$$C(A_0)\ll A_1,\quad C(A_1)\ll A_2;$$
while $\ep$ is much smaller than $C_{eik}$, $A_0$, $A_1$, $A_2$ so that for any $\eta>\f 1{100}$,
$$\ep^\eta C(C_{eik})\ll 1\quad  \ep^\eta C(A_2)\ll 1.$$
Notice however that since $\ep$ has to be chosen independent of $C_i$, $\ep^\eta C_i$ can\underline{not} be considered as a small constant.

It is easy to check using estimates in Section~\ref{sec.data.constraints} that the estimates \eqref{estN.small}--\eqref{estF} hold for the base case $n=2$. Our goal now is to prove the analogue of the estimates \eqref{estN.small}--\eqref{estF} with $(n)$ replaced by $(n+1)$ (and $(n-1)$ replaced by $(n)$). For most of these, we will in fact show that they hold with better constants on the RHS.

We begin with a propagation of smallness result, which states that for $T$ sufficiently small, the smallness of the data in the low norms can be propagated. Since we need to propagate smallness in $L^4$- and $L^\infty$-type spaces as well as $L^2$-type spaces, it is convenient to achieve this directly using the smallness of initial data, the boundedness of the time derivatives and the the smallness of the time interval.

\begin{prp}[Propagation of smallness]
	\label{prpsmallness}
	{The following estimates hold} for $T$ sufficiently small (depending on $C_i$)
	\begin{align*}
	\|\partial_t \phi^{(n)}\|_{L^\infty}+\|\nabla \phi^{(n)}\|_{L^\infty}+\|F^{(n)}_{\bA}\|_{L^\infty} \leq & C\ep,\\
	\|H^{(n)}\|_{H^1_{\delta+1}}+ \|H^{(n)}\|_{W^1_{\de+\f 32,4}}	+\|H^{(n)}\|_{C^0_{\delta+2}} 
	\leq & C\ep^2, \\
	\|\wht \gamma^{(n)} \|_{H^2_{\delta'}} + \|\wht\gamma^{(n)}\|_{W^2_{\de'+\f 12,4}} + \|\wht \gamma^{(n)} \|_{C^1_{\delta'+1}} \leq & C\ep^2, \\
	\left\|\f{e_0^{(n-1)}\wht \gamma^{(n)}}{N^{(n-1)}}\right\|_{H^1_{\delta'+1}}  + \left\|\f{e_0^{(n-1)} \wht\gamma^{(n)}}{N^{(n-1)}}\right\|_{W^1_{\de'+\f 32,4}} + \left\|\f{e_0^{(n-1)}\wht \gamma^{(n)}}{N^{(n-1)}}\right\|_{C^0_{\delta'+2}} \leq & C\ep^2{,}\\
	\|\tau^{(n)} \|_{H^{1}_{\delta''+1}} + \left\|\tau^{(n)}\right\|_{W^1_{\de''+\f 32,4}} + \left\|\tau^{(n)} \right\|_{C^0_{\delta''+2}} \leq & C\ep^2.
	\end{align*}
	
\end{prp}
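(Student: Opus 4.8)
The plan is to argue by induction on $n$, reducing everything to the fundamental theorem of calculus in $t$ together with the bounds already at our disposal. The base case $n=2$ is immediate: there the iterates are time-independent and coincide with the initial data constructed in Section~\ref{sec.data.constraints}, so the asserted inequalities are special cases of \eqref{est.constraints.0}, \eqref{est.constraints}, \eqref{estN.data}, \eqref{estbeta.data} from Lemmas~\ref{lm.constraint}--\ref{lem:e0gamma}, together with $\|\partial\phi\|_{L^\infty},\|F_\bA\|_{L^\infty}\lesssim\ep$ at $t=0$ (from \eqref{smallness.fd}, the rescalings \eqref{data.rescaled}, and the $C^1$-smallness of $\wht\gamma$ on the support of $\phi,F_\bA$). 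For the inductive step ($n\geq 3$), for each quantity $q$ and each norm $\|\cdot\|_X$ occurring in the statement I would use
$$\|q(t)\|_X\;\leq\;\|q(0)\|_X+\int_0^t\|\partial_t q(s)\|_X\,ds\;\leq\;\|q(0)\|_X+T\sup_{s\in[0,T]}\|\partial_t q(s)\|_X .$$

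For the $t=0$ term: the $\Sigma_0$-data of $H^{(n)},\wht\gamma^{(n)},e_0\gamma^{(n)},\partial\phi^{(n)},F^{(n)}_\bA$ is the fixed initial data of Section~\ref{sec.data.constraints}, one has $\tau^{(n)}\restriction_{\Sigma_0}=0$ by the choice $e_0\gamma\restriction_{\Sigma_0}=\tfrac12\mathrm{div}(\beta)$ in Lemma~\ref{lem:e0gamma} together with \eqref{taun}, and $N^{(n)}\restriction_{\Sigma_0},\beta^{(n)}\restriction_{\Sigma_0}$ are determined by the elliptic equations \eqref{lapsen},\eqref{shiftn} evaluated on that data; in every case Corollary~\ref{cor.data} (and the estimates it quotes) yields $\|q(0)\|_X\lesssim\ep^2$ for the metric quantities and $\lesssim\ep$ for $\partial\phi^{(n)},F^{(n)}_\bA$, with implicit constant depending only on $C_{eik},\delta,R$.

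For the time-derivative term I would invoke the induction hypotheses \eqref{estN.small}--\eqref{estF}. When $\|\cdot\|_X$ is an $L^2$-based weighted Sobolev norm the bound is essentially immediate (e.g.\ $\|\partial_t\tau^{(n)}\|_{H^1_{\delta''+1}}\leq A_2C_i$ from \eqref{esttau}, or $\|\partial_t\tfrac{e_0^{(n-1)}\wht\gamma^{(n)}}{N^{(n-1)}}\|_{H^1_{\delta'+1}}\leq A_1C_i$ from \eqref{estrdtgamma}). When $\|\cdot\|_X$ is an $L^4$-based or a H\"older norm I would first control $\partial_t q$ in a higher $L^2$-based weighted Sobolev space, again using \eqref{estN.small}--\eqref{estF} (now at the price of the $C_i^2$-type constants on their right-hand sides), and then apply the weighted Sobolev embeddings of Appendix~\ref{weightedsobolev} (Proposition~\ref{holder}); the regularity budgeted in the induction hypotheses is precisely what these embeddings need in two dimensions. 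When $q$ is a composite object --- e.g.\ $\partial_t\phi^{(n)}=e_0^{(n-1)}\phi^{(n)}+(\beta^{(n-1)})^i\partial_i\phi^{(n)}$, or $\tau^{(n)}$ and $\tfrac{e_0^{(n-1)}\wht\gamma^{(n)}}{N^{(n-1)}}$ with their explicit $N^{(n-1)}$- and $\beta^{(n-1)}$-factors --- I would first rewrite it in terms of the $e_0$-normalised derivatives that the hypotheses control directly, and then estimate with the product estimates of Appendix~\ref{weightedsobolev} (Proposition~\ref{produit}, Lemma~\ref{produit2}) using \eqref{estN.small},\eqref{estbeta.small} for $N^{(n-1)},\beta^{(n-1)}$; for $\partial\phi^{(n)}$ and $F^{(n)}_\bA$ the compact support in $|x|\lesssim R$ lets me trade weighted norms for unweighted ones and bound $N^{(n-1)}$ on that region. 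The outcome is $\sup_{[0,T]}\|\partial_t q\|_X\leq C(C_i)$ for a finite constant depending on $C_i,A_0,A_1,A_2,C_{eik},\delta,R$.

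It then suffices to take $T\leq\ep^2/C(C_i)$ (resp.\ $T\leq\ep/C(C_i)$ for the $L^\infty$ bounds on $\partial\phi^{(n)},F^{(n)}_\bA$): the time-derivative contribution $T\sup_{[0,T]}\|\partial_t q\|_X$ is then $\leq\ep^2$ (resp.\ $\leq\ep$), and adding the $t=0$ contribution gives the stated $C\ep^2$ (resp.\ $C\ep$) with $C$ depending only on $C_{eik},\delta,R$ --- in particular independent of $C_i$ and of $A_0,A_1,A_2$ --- all of that dependence having been absorbed into the smallness of $T$, consistently with the hierarchy conventions fixed above. The main obstacle I expect is the norm/weight bookkeeping of the previous paragraph: one must check, quantity by quantity, that the appropriate time derivative is controlled in a weighted Sobolev space that embeds into the target norm with a weight no worse than the target one, and one must handle every composite quantity without ever differentiating $\beta^{(n-1)}$ in $t$ (the induction hypotheses do not furnish $\|e_0^{(n-2)}\beta^{(n-1)}\|$), which is exactly why the scheme is phrased throughout in terms of the $e_0$-normalised derivatives.
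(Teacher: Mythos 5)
Your argument is correct and is essentially the paper's own: control the initial data by the smallness estimates of Section~\ref{sec.data.constraints} (noting $\tau^{(n)}\restriction_{\Sigma_0}=0$), bound the time derivative of each quantity in the relevant norm using the induction hypotheses \eqref{estN.small}--\eqref{estF} together with Proposition~\ref{holder} and the decomposition $\partial_t=e_0^{(n-1)}+(\beta^{(n-1)})^i\partial_i$, and close by the calculus inequality $\sup_{[0,T]}\|f\|_X\leq C(\|f\|_X(0)+\int_0^T\|\partial_t f\|_X)$ with $T$ small depending on $C_i$. The bookkeeping details you flag as the main obstacle are precisely what the paper also leaves implicit, so the proposal matches both in strategy and in level of detail.
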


\begin{proof}
	By \eqref{est.constraints.0}, \eqref{est.constraints}, Lemma~\ref{lem:e0gamma} and the fact $\tau^{(n)}\restriction_{\Sigma_0} = 0$ (Lemma~\ref{lem:e0gamma}), all these quantities initially satisfy the desired smallness estimates. The conclusion thus follows from the fact that the $\rd_t$ derivatives of all these terms in the relevant norms are bounded by a constant depending on $A_0$, $A_1$, $A_2$ and $C_i$, which is a consequence of the weighted-$L^2$ estimates in \eqref{estN.large}, \eqref{estbeta.large}, \eqref{estgamma}, \eqref{estrdtgamma}, \eqref{esttau}, \eqref{estH}, \eqref{estphi} and \eqref{estF}, together with the Sobolev embedding results in Proposition~\ref{holder}. (Notice that in applying the above estimates to obtain bounds for the $\rd_t$ derivatives, we often need to write $\rd_t = e_0^{(n)} + (\bt^{(n)})^i\rd_i$ or $\rd_t = e_0^{(n-1)} + (\bt^{(n-1)})^i\rd_i$ and estimate $\bt^{(n)}$ and $\bt^{(n-1)}$ using \eqref{estbeta.small}.) 
	
Therefore, the result follows from using calculus inequality of the type
$$\sup_{t\in [0,T]} \|f\|_{W^s_{\eta, p}}(t) \leq C\left(\|f\|_{W^s_{\eta, p}}(0) + \int_0^T \|\rd_t f\|_{W^s_{\eta, p}}(t')\, dt'\right)$$	
and choosing $T$ to be sufficiently small.\qedhere
\end{proof}

\begin{prp}[Estimate for $e_0^{(n-1)}\wht \gamma^{(n)}$]\label{prpeogamma}
	{The following estimate holds:}
\begin{equation} \label{estgammabis}\left\| \frac{e_0^{(n-1)} \wht \gamma^{(n)}}{N^{(n-1)}}\right\|_{H^3_{\delta'+1}} \leq 5C_i.
	\end{equation}
\end{prp}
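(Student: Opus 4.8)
The plan is to derive a weighted $H^3_{\delta'+1}$ estimate for $\frac{e_0^{(n-1)}\wht\gamma^{(n)}}{N^{(n-1)}}$ directly from the wave equation \eqref{gamman}, using the reductive structure: the right-hand side of \eqref{gamman} is controlled by data-level constants times quantities that the induction hypothesis already bounds. Rewriting \eqref{gamman} schematically, $e_0^{(n)}\left(\frac{e_0^{(n)}\wht\gamma^{(n+1)}}{N^{(n)}}\right) = N^{(n)}\Delta\wht\gamma^{(n+1)} - N^{(n)}\cdot(\text{RHS of \eqref{gamman}})$, but for the present statement (index $n$, not $n+1$) I would instead integrate the transport-type equation for $\frac{e_0^{(n-1)}\wht\gamma^{(n)}}{N^{(n-1)}}$ along the integral curves of $e_0^{(n-1)}$. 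First I would isolate the quantity $w := \frac{e_0^{(n-1)}\wht\gamma^{(n)}}{N^{(n-1)}}$ and note from \eqref{gamman} (with indices shifted down by one) that $e_0^{(n-1)} w$ equals $N^{(n-1)}$ times a sum of: $\Delta\wht\gamma^{(n)}$; the semilinear terms $\frac{(e_0^{(n-2)}\gamma^{(n-1)})^2}{N^{(n-1)}N^{(n-2)}}$, $(\tau^{(n-1)})^2 e^{2\gamma^{(n-1)}}$, $\frac{\Delta N^{(n-1)}}{N^{(n-1)}}$, $|\nabla\phi^{(n-1)}|^2$; the dust term $\sum_{\bA}e^{4\gamma^{(n-1)}}(F^{(n-1)}_{\bA})^2\delta_{ij}(L^{(n-1)}_{\bA})^i(L^{(n-1)}_{\bA})^j$; and the term $\frac{e^{2\gamma^{(n-1)}}}{2N^{(n-1)}}e_0^{(n-2)}\left(\frac{e^{-2\gamma^{(n-1)}}}{N^{(n-2)}}\mathrm{div}(\beta^{(n-1)})\right)$.

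Second, I would estimate each piece in $H^3_{\delta'+1}$. The key point is the weight bookkeeping: $\nabla\wht\gamma^{(n)}\in H^3_{\delta'+1}$ and $\Delta\wht\gamma^{(n)}\in H^2_{\delta'+2}\subset H^2_{\delta'+1}$ with norm $\le 4C_i$ by \eqref{estgamma}; $\frac{\Delta N^{(n-1)}}{N^{(n-1)}}$ is handled by \eqref{estN.large} (which gives $\wht N^{(n-1)}\in H^5_\delta$, hence $\Delta\wht N^{(n-1)}\in H^3_{\delta+2}$) together with the lower bound $N^{(n-1)}\ge\frac12$ from \eqref{estN.small} and the log-weight arithmetic; $|\nabla\phi^{(n-1)}|^2$ is compactly supported and bounded by \eqref{estphi}; the dust term is bounded using \eqref{estF} (compact support of $F^{(n-1)}_{\bA}$) and \eqref{estu} together with the product estimates in Appendix~\ref{weightedsobolev} to absorb the $L^{(n-1)}_{\bA}$ factors (writing $(L^{(n-1)}_{\bA})^i = e^{-2\gamma^{(n-1)}}(e^{2\gamma^{(n-1)}}(L^{(n-1)}_{\bA})^i + \overrightarrow{c_{\bA}}^i) - e^{-2\gamma^{(n-1)}}\overrightarrow{c_{\bA}}^i$ and noting $F^{(n-1)}_{\bA}$ kills the constant-vector part on its support); and the quadratic terms $\frac{(e_0^{(n-2)}\gamma^{(n-1)})^2}{N^{(n-1)}N^{(n-2)}}$ and $(\tau^{(n-1)})^2 e^{2\gamma^{(n-1)}}$ are small, being $O(\ep^2)$ or $O(\ep^4)$ in the relevant low norms by Proposition~\ref{prpsmallness} while bounded in the top norm by \eqref{estgamma}, \eqref{esttau}; multiplying a small low-norm factor by a large high-norm factor gives something of size $\ep^\eta C_i\ll C_i$. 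The divergence term involving $e_0^{(n-2)}(\cdots\mathrm{div}\,\beta^{(n-1)})$ is controlled by $\|e_0^{(n-2)}\beta^{(n-1)}\|_{H^2_{\delta'}}$-type bounds; here I would need \eqref{estbeta.large}, and since that bound for $e_0$ of the shift is stated at level $(n)$ I would use \eqref{shiftn} to re-derive it at level $(n-1)$, or simply note this is part of the induction input. Adding up: $\|e_0^{(n-1)}w\|_{H^3_{\delta'+1}}\le C C_i + (\text{small})\cdot C(A_i)$.

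Third, I would integrate along the flow of $e_0^{(n-1)}$. The data bound from Corollary~\ref{cor.data} / Lemma~\ref{lem:e0gamma} gives $\|w\|_{H^3_{\delta'+1}}(0)\le C_i$ (more precisely, $e_0\wht\gamma\restriction_{\Sigma_0}\in H^{k+1}_{\delta'}$ with norm $\lesssim$ data size, so $w(0)$ is bounded by $C_i$ with a numerical constant strictly below $5$). Using the Grönwall-type / fundamental-theorem inequality $\sup_{[0,T]}\|w\|_{H^3_{\delta'+1}} \le C\big(\|w\|_{H^3_{\delta'+1}}(0) + \int_0^T \|\partial_t w\|_{H^3_{\delta'+1}}\big)$ — where $\partial_t w = e_0^{(n-1)}w + (\beta^{(n-1)})^i\partial_i w$ and the commutator with $\beta^{(n-1)}$ is absorbed using the smallness \eqref{estbeta.small} plus the bound on $w$ in the slightly weaker weight from \eqref{estgamma} — and then choosing $T$ small (depending on $C_i$, $A_0$, $A_1$, $A_2$) so that the time integral contributes at most, say, $C_i$, yields $\|w\|_{H^3_{\delta'+1}}\le 5C_i$ on $[0,T]$.

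The main obstacle I anticipate is the weight/log-weight bookkeeping in the nonlinear products — in particular ensuring that each term genuinely lands in $H^3_{\delta'+1}$ rather than a weaker weighted space, and handling the logarithmically growing factors $e^{2\alpha\chi\log|x|}$ and $N^{(n)}_{asymp}\chi\log|x|$ without losing the weight, which is exactly why the shifted weights $\delta'=\delta-\ep$ etc. are introduced and why one must pay $(1+|x|)^{C\ep}$-type losses. The second delicate point is that the bound must be $5C_i$ — a fixed numerical multiple independent of all the $A_i$ — so every term of the form (small)$\times$(large) must be arranged so that the smallness (coming from $\ep$ or from $T$) beats the $A_i$ or $C_i$ growth, which forces the specific order of estimation dictated by the reductive hierarchy $C(A_0)\ll A_1$, $C(A_1)\ll A_2$.
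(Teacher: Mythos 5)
Your approach is genuinely different from the paper's and it does not work as stated, because it has a derivative loss. The induction hypothesis \eqref{estgamma} already bounds $\sum_{|\alpha|\le 3}\left\|\frac{e_0^{(n-1)}\nabla^\alpha\wht\gamma^{(n)}}{N^{(n-1)}}\right\|_{L^2_{\delta'+1+|\alpha|}}$ by $4C_i$; that is precisely the $H^3_{\delta'+1}$ norm of $w=\frac{e_0^{(n-1)}\wht\gamma^{(n)}}{N^{(n-1)}}$ \emph{except} that $\nabla^\alpha$ sits between $e_0^{(n-1)}$ and $\wht\gamma^{(n)}$ rather than acting on the whole quotient. The paper's proof is therefore a purely spatial, fixed-time commutation argument: push $\nabla^\alpha$ through $\frac{1}{N^{(n-1)}}$ and through $e_0^{(n-1)}$ (using $[e_0^{(n-1)},\partial_i]=(\partial_i\beta^{(n-1),j})\partial_j$), and observe that every commutator term carries a factor of $\nabla\log N^{(n-1)}$ or $\nabla\beta^{(n-1)}$ and hence, by \eqref{estN.small}, \eqref{estbeta.small} and Proposition~\ref{prpsmallness}, an $\ep$-smallness. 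Absorbing the resulting $C\ep\left(\|w\|_{H^3_{\delta'+1}}+\|\nabla\wht\gamma^{(n)}\|_{H^3_{\delta'+1}}\right)+C\ep A_0 C_i$ and using \eqref{estgamma} immediately gives $5C_i$ for $\ep$ small. No wave equation, no time integration.

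Your plan to integrate $\partial_t w$ in time cannot close in $H^3_{\delta'+1}$: the induction hypothesis \eqref{estrdtgamma} only controls $\partial_t w$ in $H^1_{\delta'+1}$ and $H^2_{\delta'+1}$, not $H^3$, and trying to re-derive a $H^3$ bound for $e_0^{(n-1)}w$ from the wave equation \eqref{gamman} requires $N^{(n-1)}\Delta\wht\gamma^{(n)}\in H^3_{\delta'+1}$, i.e.\ $\wht\gamma^{(n)}\in H^5$, whereas \eqref{estgamma} only gives $\nabla\wht\gamma^{(n)}\in H^3_{\delta'+1}$ (so $\Delta\wht\gamma^{(n)}\in H^2$, exactly as you yourself note). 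This is an unrecoverable loss at the level you need; the only way to land at $H^3_{\delta'+1}$ without extra derivatives on $\gamma^{(n)}$ is to use \eqref{estgamma} directly via commutation, as the paper does. Incidentally, the re-derivation you propose is also redundant in spirit: the energy estimate of Lemma~\ref{gamma.wave.EE}/Proposition~\ref{prpgamma} is exactly how \eqref{estgamma} is propagated from step $n$ to $n+1$, so reproving it inside this proposition would be circular bookkeeping. The proposition is only meant to be a cheap ``reorder-the-derivatives'' corollary of the induction hypothesis.
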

\begin{proof}
{In view of \eqref{estgamma}, the proof of this proposition amounts to commuting $e_0^{(n-1)}$ and $\nab$. Using $[e_0^{(n-1)},\rd_i] = (\rd_i\bt^j)\rd_j$, w}e have
\begin{equation}\label{h.bdry}
\begin{split}
&\left\|\f{e_0^{(n-1)} \wht \gamma^{(n)}}{N^{(n-1)}}\right\|_{H^3_{\de'+1}} \\
\leq & \sum_{|\alp|\leq 3} \left\|\f{e_0^{(n-1)} \nab^{\alp} \wht \gamma^{(n)}}{N^{(n-1)}}\right\|_{L^2_{\de'+1+|\alp|}} + C\sum_{\substack{|\alp_1|+|\alp_2|+i \leq 2}} \left\|\nab^{\alp_1} (\nab \log N^{(n-1)})^{i+1} \nab^{\alp_2} \f{e_0^{(n-1)} \wht \gamma^{(n)}}{N^{(n-1)}}\right\|_{L^2_{\de'+1+|\alp_1|+|\alp_2|}} \\
& + C\sum_{|\alp_1|+|\alp_2|+|\alp_3|+i \leq 2} \left\| \f{\nab^{\alp_1}\nab \beta^{(n-1)} \nab^{\alp_2}(\nab\log N^{(n-1)})^i \nab^{\alp_3}\nab \wht \gamma^{(n)}}{N^{(n-1)}}\right\|_{L^2_{\de'+1+|\alp_1|+|\alp_2|}}.
\end{split}
\end{equation}
{Here, we have used the convention that $\nab^{\alp} (\nab \log N^{(n)})^i$ denotes a product of $i$ factors, each of which is some spatial derivatives of $\nab\log N^{(n)}$ and the total number of derivatives is $|\alp|$.}
Now using H\"older's inequality, Lemma~\ref{der}, Proposition~\ref{holder}, \eqref{estN.small}, \eqref{estN.large} and Proposition \ref{prpsmallness},
\begin{equation}\label{h.bdry.lo.1}
\begin{split}
& \sum_{|\alp_1|+|\alp_2| +i \leq 2} \left\|\nab^{\alp_1} (\nab \log N^{(n-1)})^{i+1} \nab^{\alp_2} \f{e_0^{(n-1)} {\wht \gamma^{(n)}}}{N^{(n)}}\right\|_{L^2_{\de'+1+|\alp_1|+|\alp_2|}} \\
\ls & \left\|\nab \log N^{(n-1)}\right\|_{C^0_{\de+1}} \left\|\f{e_0^{(n-1)} \wht \gamma^{(n)}}{N^{(n-1)}}\right\|_{H^{2}_{\de'+1}} + \left\|\nab \log N^{(n-1)}\right\|_{W^1_{\de+\f 32,4}} \left\|\f{e_0^{(n-1)} \wht \gamma^{(n)}}{N^{(n-1)}}\right\|_{W^{2}_{\de'+\f 32,4}}\\ 
	&+\left\|\nab \log N^{(n-1)}\right\|_{H^2_{\de+1}}\left\|\f{e_0^{(n-1)} \wht \gamma^{(n)}}{N^{(n-1)}}\right\|_{C^0_{\de'+2}}\\
	\ls & \ep \left\|\f{e_0^{(n)} \wht \gamma^{(n)}}{N^{(n-1)}}\right\|_{H^{2}_{\de'+1}}+ \ep C_i.
\end{split}
\end{equation}
Similarly, but using \eqref{estbeta.small} and \eqref{estbeta.large} in addition to {\eqref{estN.small}}, and also the fact that $\f{1}{N^{(n)}}\ls 1$, we have
\begin{equation}\label{h.bdry.lo.2}
\sum_{|\alp_1|+|\alp_2|+|\alp_3|+i \leq 2} \left\| \f{\nab^{\alp_1}\nab \beta^{(n-1)} \nab^{\alp_2}(\nab\log N^{(n-1)})^i \nab^{\alp_3}\nab \wht \gamma^{(n)}}{N^{(n-1)}}\right\|_{L^2_{\de'+1+|\alp_1|+|\alp_2|}} \ls \ep \left\|\nab \wht \gamma^{(n)} \right\|_{H^{3}_{\de'+1}}+\ep A_0C_i.
\end{equation}
Plugging \eqref{h.bdry.lo.1} and \eqref{h.bdry.lo.2} into \eqref{h.bdry}, we obtain 
\begin{equation}\label{h.bdry.comp.1}
\begin{split}
&\left\|\f{e_0^{(n-1)} \wht \gamma^{(n)}}{N^{(n-1)}}\right\|_{H^3_{\de'+1}} \\
\leq &\sum_{|\alp|\leq 3} \left\|\f{e_0^{(n-1)} \nab^{\alp} \wht \gamma^{(n)}}{N^{(n-1)}}\right\|_{L^2_{\de'+1+|\alp|}} + C\ep \left(\left\|\f{e_0^{(n-1)} \wht \gamma^{(n)}}{N^{(n-1)}}\right\|_{H^{k}_{\de'+1}} + \left\|\nab \wht \gamma^{(n)} \right\|_{H^{3}_{\de'+1}}\right)(t) + \ep C A_0C_i.
\end{split}
\end{equation}
Consequently, choosing $\ep$ sufficiently small, {and using \eqref{estgamma},} we conclude the proof of the Proposition.
\end{proof}

\begin{prp}[Estimates for $N^{(n+1)}$]\label{prpN}
	For $n \geq 2$, $N^{(n+1)}$ admits a decomposition 
	$$N^{(n+1)}=1+N_{asymp}^{(n+1)}\chi(|x|){\log}(|x|)+ \wht N^{(n+1)},$$ 
	with $N_{asymp}^{(n+1)}\geq 0$, and such that the following bounds are satisfied:
	\begin{align}
	|N_{asymp}^{(n+1)}| + \| \wht N^{(n+1)}\|_{H^{2}_\delta} + \| \wht N^{(n+1)}\|_{W^{2}_{\delta+\f 12, 4}} +\|\wht N^{(n+1)}\|_{C^1_{\delta+1}} \ls & \ep^2,\label{N.lower}\\
	|\rd_t N^{(n+1)}_{asymp}| + \|\wht N^{(n+1)}\|_{H^5_\delta} + \|\partial_t \wht N^{(n+1)}\|_{H^{2}_\delta }\ls & \ep C(A_2) C_i, \label{N.higher}\\
	\|\partial_t \wht N^{(n+1)}\|_{H^{3}_\delta} \ls & \ep C(A_1) C_i^2. \label{N.top}
	\end{align}
	\end{prp}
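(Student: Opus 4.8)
The plan is to solve the semilinear elliptic equation \eqref{lapsen} for $N^{(n+1)}$ via Theorem~\ref{laplacien} (the weighted elliptic theory on $\mathbb R^2$), treating the entire right-hand side as a prescribed source built from the $n$-th and $(n-1)$-st iterates. First I would estimate the source in the relevant weighted Sobolev spaces. There are essentially two regimes of terms. The ``matter'' terms $\frac{2e^{2\gamma^{(n)}}}{N^{(n)}}(e_0^{(n-1)}\phi^{(n)})^2$ and $\sum_\bA e^{4\gamma^{(n)}} N^{(n)} (F^{(n)}_\bA)^2 \delta_{ij}(L^{(n)}_\bA)^i(L^{(n)}_\bA)^j$ are compactly supported (by the support assumptions on $\phi^{(n)}$, $F^{(n)}_\bA$, and the fact that $L^{(n)}_\bA$ appears only multiplied by $F^{(n)}_\bA$); using Proposition~\ref{prpsmallness} for the $L^\infty$ smallness of $e_0^{(n-1)}\phi^{(n)}$ and $F^{(n)}_\bA$ together with the high-norm bounds \eqref{estphi}, \eqref{estF}, \eqref{estu}, and Propositions~\ref{holder}, \ref{produit}, Lemma~\ref{produit2}, one sees these terms are $O(\ep^2)$ in $H^0_{\delta+2}$ (for the low estimate) and $O(\ep C(A_2) C_i)$ in $H^3_\delta$ (for the high estimate), with the extra factor of $\ep$ coming from having at least one small factor. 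The $|H^{(n)}|^2$-type term is handled via Proposition~\ref{produit}/Lemma~\ref{produit2}: $e^{-2\gamma^{(n)}}N^{(n)}|H^{(n)}|^2\in H^0_{2\delta+3}\subset H^0_{\delta+2}$ with size $O(\ep^4)$ from \eqref{estH} and Proposition~\ref{prpsmallness}, and in $H^3_\delta$ it is controlled using \eqref{estH}, \eqref{estN.large}, \eqref{estgamma}; one must be careful about the $\log$-weight in $\gamma^{(n)}$ (the $-\alp\chi\log|x|$ piece) and in $N^{(n)}$, using $e^{2\alp\chi\log|x|}\lesssim (1+|x|)^{C\ep^2}$ and $N^{(n)}_{asymp}\chi\log|x|\lesssim \ep(1+|x|)^{\ep/10}$ as in the proof of Lemma~\ref{lem:data.N.beta}, which is why the weight stays $\delta$ and no further weight loss is needed here. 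Similarly the term $e^{-2\gamma^{(n)}}N^{(n)}\cdot\frac12 e^{4\gamma^{(n)}}(\tau^{(n)})^2$ is handled by \eqref{esttau} and Proposition~\ref{prpsmallness}.

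Second, to extract the decomposition $N^{(n+1)}=1+N^{(n+1)}_{asymp}\chi\log|x| + \wht N^{(n+1)}$ with $N^{(n+1)}_{asymp}\geq 0$, I would apply Corollary~\ref{coro}: since every term on the right-hand side of \eqref{lapsen} (after moving the sign, i.e.\ the right-hand side of the equation $\Delta N^{(n+1)} = \cdots$) is nonnegative provided $N^{(n)}>0$, and $N^{(n)}>\frac12$ by \eqref{estN.small} and Proposition~\ref{holder} for $\ep$ small, the constant $N^{(n+1)}_{asymp}$ produced by Corollary~\ref{coro} equals $\frac1{2\pi}\int_{\mathbb R^2}(\text{RHS})\geq 0$, and this integral is $O(\ep^2)$ by the source estimates above, giving the $|N^{(n+1)}_{asymp}|\lesssim \ep^2$ part of \eqref{N.lower} and the $|N^{(n+1)}_{asymp}|\lesssim \ep C(A_2)C_i$ part of \eqref{N.higher} once we also note $\rd_t N^{(n+1)}_{asymp}=\frac1{2\pi}\int \rd_t(\text{RHS})$. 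The remainder $\wht N^{(n+1)}\in H^5_\delta$ (recall $k=3$, so $k+2=5$) then obeys the weighted elliptic estimate from Theorem~\ref{laplacien}, bounded by the $H^3_{\delta+2}$ norm of the source, which by the above discussion is $O(\ep^2)$ (low norm, using the compactly supported matter terms have only their low norm needed there, but actually for $H^2_\delta$ of $\wht N$ we need $H^0_{\delta+2}$ of the source, all $O(\ep^2)$) and $O(\ep C(A_2)C_i)$ for the $H^5_\delta$ bound; the $W^2_{\delta+\frac12,4}$ and $C^1_{\delta+1}$ bounds in \eqref{N.lower} follow by taking $p=4$ in Theorem~\ref{laplacien} plus the Sobolev embedding of Proposition~\ref{holder}, exactly as in Lemma~\ref{lem:data.N.beta}.

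Third, for the time-derivative estimates \eqref{N.higher} and \eqref{N.top}, I would differentiate \eqref{lapsen} in $t$, which gives $-\Delta\rd_t N^{(n+1)} = \rd_t(\text{RHS})$, and again invoke Theorem~\ref{laplacien}/Corollary~\ref{coro}. The point is that $\rd_t$ falls on the $n$-th and $(n-1)$-st iterates only, so $\rd_t(\text{RHS})$ is controlled using the $\rd_t$-bounds already in the induction hypothesis: \eqref{estrdtN.large}, \eqref{estbeta.large} (for $\rd_t\beta^{(n)} = e_0^{(n-1)}\beta^{(n)}+(\beta^{(n-1)})^i\rd_i\beta^{(n)}$), \eqref{estrdtgamma}, \eqref{esttau}, \eqref{estphi}, \eqref{estF}, \eqref{estu}, together with Proposition~\ref{prpsmallness} to always retain one small factor. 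For the $H^2_\delta$ bound on $\rd_t\wht N^{(n+1)}$ one needs $H^0_{\delta+2}$ of $\rd_t(\text{RHS})$, which costs one factor of $C_i$ and uses the $A_2 C_i$-level bounds like $\|\rd_t\tau^{(n)}\|_{H^1_{\delta''+1}}\leq A_2 C_i$, giving $\ep C(A_2) C_i$; for the $H^3_\delta$ bound on $\rd_t\wht N^{(n+1)}$ in \eqref{N.top} one needs $H^1_{\delta+2}$ of $\rd_t(\text{RHS})$, which by the product estimates pairs a high-regularity factor at level $C_i$ with a $\rd_t$-factor at level $A_2 C_i$ or $C_i^2$, again multiplied by a small $\ep$, yielding $\ep C(A_1) C_i^2$. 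The main obstacle in this whole argument is the bookkeeping around the logarithmic weights: one must verify that multiplying by $e^{\pm 2\gamma^{(n)}}$, $N^{(n)}$, and their derivatives never destroys membership in the weighted spaces with weight $\delta$ (rather than forcing a weight loss), which works precisely because $\alp$ and $N^{(n)}_{asymp}$ are $O(\ep^2)$ so the polynomial growth they introduce has exponent $\ll \ep$, and because the genuinely dangerous quadratic terms in $H^{(n)}$ land in $H^0_{2\delta+3}$ which embeds into $H^0_{\delta+2}$ for $\delta\in(-\frac12,0)$. A secondary technical point is keeping track of exactly which factor carries the smallness $\ep$ in each product so that all of \eqref{N.lower}--\eqref{N.top} gain the advertised power of $\ep$; since every term in \eqref{lapsen} is at least quadratic in small quantities (matter fields) or in $H^{(n)}$ (which is $O(\ep^2)$ in low norms), this is automatic but must be stated.
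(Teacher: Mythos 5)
Your proposal follows essentially the same route as the paper's proof: bound the source of \eqref{lapsen} in $L^2_{\de+2}$ and $L^4_{\de+\f 52}$ for the low estimate \eqref{N.lower}, in $H^3_{\de+2}$ for the $H^5_\de$ part of \eqref{N.higher}, then differentiate the equation in $t$ and bound $\rd_t(\text{RHS})$ in $H^0_{\de+2}$ for the second part of \eqref{N.higher} and in $H^1_{\de+2}$ for \eqref{N.top}, in each case invoking Theorem~\ref{laplacien} and Corollary~\ref{coro} and using the sign of the source to extract $N^{(n+1)}_{asymp}\geq 0$, with the logarithmic weights controlled via $\alp,N^{(n)}_{asymp}=O(\ep^2)$. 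The paper's write-up is more granular in the $H^3_{\de+2}$ estimate of $e^{-2\gamma^{(n)}}N^{(n)}|H^{(n)}|^2$ (a case-by-case derivative count) and uses a Proposition~\ref{product}-with-compact-support trick for the matter terms, but these are implementation details of the same argument.
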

\begin{proof}
\textbf{Existence of decomposition and proof of \eqref{N.lower}.}	We claim that
	\begin{equation}\label{prpN.1}
	\left\|\mbox{RHS of \eqref{lapsen}}\right\|_{L^2_{\delta+2}} + \left\|\mbox{RHS of \eqref{lapsen}}\right\|_{L^4_{\delta+\f 52}} \leq C \ep^2.
	\end{equation}
	Except for\footnote{Note that in \eqref{N.data}, $\tau$ does not appear on the RHS.} the term $e^{2\gamma^{(n)}}N^{(n)}(\tau^{(n)})^2$, all the other terms can be estimated in an identical manner as in Lemma~\ref{lem:data.N.beta}, except that we estimate the terms using Proposition~\ref{prpsmallness} instead of using the assumptions on the reduced data and the estimates in Lemma~\ref{lm.constraint}.
	
	It therefore remains to control $e^{2\gamma^{(n)}}N^{(n)}(\tau^{(n)})^2$. For this we note that, for $\ep$ sufficiently small, by Proposition~\ref{prpsmallness}, 
	$$\|(\tau^{(n)})^2\|_{L^2_{\de+2+\ep}} \ls \|\tau^{(n)}\|_{L^2_{\de''+1}}\|\tau^{(n)}\|_{C^0_{\de''+2}}\ls \ep^4.$$
	Now by Lemma~\ref{lm.constraint} (for $\alp$), Proposition~\ref{prpsmallness} (for $\wht \gamma^{(n)}$) and \eqref{estN.small} (for $N^{(n)}_{asymp}$ and $\wht N^{(n)}$), for $\ep$ sufficiently small, $e^{2\gamma^{(n)}}N^{(n)}$ grows at worst as $|x|^{C\ep^2}$ for large $|x|$ and $\|e^{2\gamma^{(n)}}N^{(n)}\|_{C^0_\ep} \ls 1$. This proves that
	$$\|e^{2\gamma^{(n)}}N^{(n)}(\tau^{(n)})^2\|_{L^2_{\de+2}} \ls \ep^4.$$
 An essentially identical argument also shows
  $$\|e^{2\gamma^{(n)}}N^{(n)}(\tau^{(n)})^2\|_{L^4_{\de+\f 52}} \ls \ep^4.$$
	This proves the claim. Applying Theorem~\ref{laplacien} and Corollary~\ref{coro} (to $\Delta (N^{(n+1)}-1)$) yields the existence of the decomposition of $N^{(n+1)}$, as well as the estimate \eqref{N.lower}.

\textbf{Proof of first part of \eqref{N.higher}.} To obtain the $H^5_\delta$ bound for $\wht {N}^{(n)}$ (first part of \eqref{N.higher}), we need to control the RHS of \eqref{lapsen} in $H^3_{\delta+2}$. We note that it is easy to obtain \emph{some} bound in $H^3_{\delta+2}$. The key point here, however, is that the bound must be at worst linear in $C_i$, with an $\ep$ smallness constant.

We first bound the term $e^{-2\gamma^{(n)}}N^{(n)}|H^{(n)}|^2$ in $H^3_{\de+2}$. There are various cases: in order to shorten the exposition, let us use the notation $(a,b,c,d)$ (with $c\leq d$) to denote the case with at most $a$ derivatives on $\gamma^{(n)}$, at most $b$ derivatives on $N^{(n-1)}$, at most $c$ and $d$ derivatives on the two factors of $H^{(n)}$. The following cases, though not mutually exclusive, exhaust all possibilities:
\begin{itemize}
\item $(1,1,0,3)$. By H\"older's inequality, \eqref{est.constraints.0}, \eqref{estN.small}, \eqref{estH} and Proposition~\ref{prpsmallness},
$$ \ls \| H^{(n)}\|_{C^0_{\de+2}} \|H^{(n)}\|_{H^3_{\de+1}} \leq C_i \ep^2.$$
\item $(1,3,0,0)$. By H\"older's inequality, \eqref{est.constraints.0}, \eqref{estN.small}, \eqref{estN.large} and Proposition~\ref{prpsmallness},
$$ \ls  \left(1+|N_{asymp}^{(n)}|+\|\wht N^{(n)} \|_{H^3_{\de}}\right) \|H^{(n)}\|_{C^0_{\de+2}}^2\leq C_i \ep^4.$$
\item $(3,1,0,0)$. By H\"older's inequality, \eqref{est.constraints.0}, \eqref{estN.small}, \eqref{estgamma}, Proposition~\ref{prpsmallness} and Lemma~\ref{der},
$$ \ls \left(1 + |\alp| + \|\nab \wht \gamma^{(n)}\|_{H^2_{\de'+1}}\right) \|H^{(n)}\|_{C^0_{\de+2}}^2 \ls C_i \ep^4.$$
\item $(1,1,1,2)$. By H\"older's inequality, \eqref{est.constraints.0}, \eqref{estN.small}, \eqref{estH}, Propositions~\ref{prpsmallness} and \ref{holder},
$$ 
\ls \|H^{(n)}\|_{W^1_{\de+\f 32,4}}\|H^{(n)}\|_{W^2_{\de+\f 32,4}}\ls \|H^{(n)}\|_{W^1_{\de+\f 32,4}}\|H^{(n)}\|_{H^3_{\de+2}} \ls C_i \ep^2.$$
\item $(2,0,0,1)$. By H\"older's inequality, \eqref{est.constraints.0}, \eqref{estN.small} and Proposition~\ref{prpsmallness},
$$\ls \left(1 + |\alp| + \|\wht\gamma^{(n)} \|_{W^2_{\de'+\f 12, 4}}\right) \|H^{(n)}\|_{C^0_{\de+2}} \|H^{(n)}\|_{W^1_{\de+\f 32, 4}}\ls \ep^4.$$
\item $(0,2,0,1)$. By H\"older's inequality, \eqref{est.constraints.0}, \eqref{estN.small} and Proposition~\ref{prpsmallness},
$$\ls \left(1 + |N^{(n)}_{asymp}| + \|\wht N^{(n)}\|_{W^2_{\de+\f 12, 4}}\right) \|H^{(n)}\|_{C^0_{\de+2}} \|H^{(n)}\|_{W^1_{\de+\f 32, 4}}\ls \ep^4.$$
\end{itemize}

The term $e^{2\gamma^{(n)}}N^{(n)}(\tau^{(n)})^2$ can be treated in a similar fashion, since $\tau^{(n)}$ and $H^{(n)}$ (according to \eqref{esttau}, \eqref{estH} and Proposition~\ref{prpsmallness}) obey similar estimates\footnote{Notice that this comparison is only true for $\tau^{(n)}$ and $H^{(n)}$ without $\rd_t$ derivatives, which is what we are concerned about for this estimate.} except for a slight difference of weights ($\de'$ compared to $\de$) and constants ($A_1$ compared to $2$). Since this term is at least quadratic in $\tau^{(n)}$ (and its derivatives), there is plenty of room to handle the weights. We give the estimate here and omit the straightforward proof:
$$\|e^{2\gamma^{(n)}}N^{(n)}(\tau^{(n)})^2\|_{H^3_{\de+2}} \ls C(A_1) C_i \ep^2 \ls C_i \ep.$$

We next discuss the scalar field term, $\frac{e^{2\gamma^{(n)}}}{N^{(n)}}(e^{(n-1)}_0 \phi^{(n)})^2$. Note that this term poses a different challenge in the sense that the smallness is at a much lower level (i.e., taking any derivative of $e_0^{(n-1)}\phi^{(n)}$ destroys the $\ep$-smallness). Nevertheless, it has the advantage that the term is \emph{compactly supported}, and we can use the product estimate in unweighted Sobolev spaces in Proposition~\ref{product} to obtain\footnote{To see that using Proposition~\ref{product} and the fact that $supp(e^{(n-1)}_0\phi^{(n)})\subset B(0,2R)$ indeed imply such an estimate where we only require the bounds for $\frac{e^{2\gamma^{(n)}}}{N^{(n-1)}}$ in $B(0,3R)$, we argue as follows: Let $\eta$ be a smooth cutoff function compactly supported in $B(0,3R)$ which is $\equiv 1$ in $B(0, 2R)$. Then
\begin{equation*}
\begin{split}
\left\|\frac{e^{2\gamma^{(n)}}}{N^{(n)}}(e^{(n-1)}_0 \phi^{(n)})^2\right\|_{H^{3}_{\delta+2}} \ls & \left\|(\eta\frac{e^{2\gamma^{(n)}}}{N^{(n-1)}})(e^{(n-1)}_0 \phi^{(n)})^2\right\|_{H^{3}} \\
\ls & \|(\eta\frac{e^{2\gamma^{(n)}}}{N^{(n)}})\|_{L^\infty} \|(e^{(n-1)}_0 \phi^{(n)})^2\|_{H^3} + \|(\eta\frac{e^{2\gamma^{(n)}}}{N^{(n-1)}})\|_{H^3} \|(e^{(n-1)}_0 \phi^{(n)})^2\|_{L^\infty}.
\end{split}
\end{equation*}
The support properties of $\eta$ thus imply the desired estimate.\label{cpt.supp.product}}
\begin{equation}\label{Nest.scalar}
\begin{split}
&\left\|\frac{e^{2\gamma^{(n)}}}{N^{(n)}}(e^{(n-1)}_0 \phi^{(n)})^2\right\|_{H^{3}_{\delta+2}} \\
\ls & \left\|\frac{e^{2\gamma^{(n)}}}{N^{(n)}}\right\|_{L^\infty(B(0,3R))} \|e^{(n-1)}_0 \phi^{(n)} \|_{H^{3}} \|e^{(n-1)}_0 \phi^{(n)} \|_{L^\infty} + \left\|\frac{e^{2\gamma^{(n)}}}{N^{(n)}}\right\|_{H^3(B(0,3R))} \|e^{(n-1)}_0 \phi^{(n)}\|_{L^\infty}^2\\
\ls & \ep C(A_0) C_i.
\end{split}
\end{equation}
Here we have used \eqref{estN.small}, \eqref{estN.large}, \eqref{estgamma}, \eqref{estphi}, Proposition~\ref{prpsmallness} and also \eqref{estbeta.small} and \eqref{estbeta.large} (to control the difference between $e_0^{(n-1)}$ and $\rd_t$).

A similar argument as \eqref{Nest.scalar} can be used to bound the term involving $(F_{\bA}^{(n)})^2$, using \eqref{estF}, \eqref{estu.small} and \eqref{estu} instead of \eqref{estphi}, since $F_{\bA}^{(n)}$ is also compactly supported, to get\footnote{We note again that the implicit constant in $\ls$ may depend on $C_{eik}$.}
\begin{equation*}
\begin{split}
\left\|\sum_{\bA} e^{4\gamma^{(n)}} N^{(n)} (F^{(n)}_{\bA})^2 \de_{ij} (L^{(n)}_{\bA})^i (L^{(n)}_{\bA})^j\right\|_{H^{3}_{\delta+2}} 
\ls & \ep C(A_0) C_i + \ep^2 C(A_1) C_i \ls \ep C(A_0) C_i.
\end{split}
\end{equation*}

Combining all the estimates above, we have $\|\mbox{(RHS of \eqref{lapsen})}\|_{H^3_{\de+2}}\ls \ep C(A_0) C_i$. By Theorem~\ref{laplacien}, we obtain $\|\wht N^{(n+1)}\|_{H^5_\delta}\leq \ep C_i$, which is the first part of \eqref{N.higher}.

\textbf{Proof of second part of \eqref{N.higher}.}
	We now turn to the estimate for $\partial_t N^{(n+1)}$, including both for $\rd_t N^{(n+1)}_{asymp}$ and $\rd_t \wht N^{(n+1)}$, in \eqref{N.higher}. Since RHS of \eqref{lapsen} is differentiable in $t$, it is easy to see that $\rd_t N^{(n+1)} =(\rd_t N^{(n+1)}_{asymp}) \chi(|x|){\log} (|x|) + \rd_t\wht N^{(n+1)}$ is the solution given by Corollary~\ref{coro} to the equation
	$$\Delta (\rd_t N^{(n+1)}) = \rd_t(\mbox{RHS of \eqref{lapsen}}).$$
	Therefore, to prove the second part of \eqref{N.higher}, it suffices (1) to bound the integral of $\rd_t(\mbox{RHS of \eqref{lapsen}})$ with respect to $dx$, and (2) to bound $\rd_t(\mbox{RHS of \eqref{lapsen}})$ in $L^2_{\de+2} = H^0_{\de+2}$. Noticing moreover that (by H\"older's inequality) $L^2_{\de+2}\subset L^1$ continuously, it therefore suffices to bound $\rd_t(\mbox{RHS of \eqref{lapsen}})$ in $H^0_{\de+2}$

Since the estimates for $\rd_t\tau^{(n)}$ are worse than those for $\rd_t H^{(n)}$, and those for $\tau^{(n)}$ and $H^{(n)}$ are similar (compare \eqref{esttau} and \eqref{estH}), we will treat the term $\rd_t\left(e^{2\gamma^{(n)}}N^{(n)}(\tau^{(n)})^2\right)$ and leave the (easier) term $\rd_t\left(e^{-2\gamma^{(n)}}N^{(n)}|H^{(n)}|^2\right)$ to the reader. For the term $\rd_t\left(e^{2\gamma^{(n)}}N^{(n)}(\tau^{(n)})^2\right)$, we can in fact bound it in the norm $H^1_{\de+2}$ (which is stronger than $H^0_{\de+2}$) as follows, using \eqref{estN.small}, \eqref{estN.large}, \eqref{esttau} and Proposition~\ref{prpsmallness}:
\begin{equation}\label{Nest.tau.top}
\begin{split}
&\|\rd_t\left(e^{2\gamma^{(n)}}N^{(n)}(\tau^{(n)})^2\right)\|_{H^1_{\de+2}} \\
\ls &\|e^{2\gamma^{(n)}}N^{(n)}\|_{C^1_{-\f{\ep}{10}}}\left(\|\tau^{(n)}\|_{C^0_{\de''+2}}\|\rd_t\tau^{(n)}\|_{H^1_{\de''+1}} + \|\tau^{(n)}\|_{W^1_{\de''+\f 32, 4}}\|\rd_t \tau^{(n)}\|_{W^0_{\de''+\f 32, 4}}\right)\\
& + \left\|\rd_t \left(e^{2\gamma^{(n)}}N^{(n)}\right)\right\|_{W^1_{-\f{\ep}{10}-\f 12, 4}} \|\tau^{(n)}\|_{W^1_{\de''+\f 32, 4}} \|\tau^{(n)}\|_{C^0_{\de''+2}} \\
\ls & \ep^2 A_2 C_i+ C_i A_1\ep^4\ls \ep^2 C(A_2) C_i\ls \ep C_i.
\end{split}
\end{equation}

We now turn to the compactly supported terms involving $(e_0^{(n-1)}\phi^{(n)})^2$ and $(F_{\bA}^{(n)})^2$. First, for $(e_0^{(n-1)}\phi^{(n)})^2$, by H\"older's inequality, the support properties of $e_0^{(n-1)}\phi^{(n)}$, \eqref{estN.small}, \eqref{estgamma}, \eqref{estphi} and Proposition{s}~\ref{prpsmallness}{, \ref{prpeogamma} and \ref{holder}}, we have
\begin{equation}
\begin{split}
&\left\|\rd_t\left(\frac{e^{2\gamma^{(n)}}}{N^{(n)}}(e_0^{(n-1)} \phi^{(n)})^2\right)\right\|_{H^0_{\de+2}} \\
\ls & \left\|\rd_t \frac{e^{2\gamma^{(n)}} (N^{(n-1)})^2}{N^{(n)}}\right\|_{L^\infty(B(0,2R))} \left\| \f{e_0^{(n-1)} \phi^{(n)}}{N^{(n-1)}} \right\|_{L^\infty}^2 \\
& + \left\|\frac{e^{2\gamma^{(n)}}(N^{(n-1)})^2}{N^{(n)}} \right\|_{L^\infty(B(0,2R))} \left\| \rd_t \f{e_0^{(n-1)} \phi^{(n)}}{N^{(n-1)}} \right\|_{L^\infty} \left\| e_0^{(n-1)} \phi^{(n)} \right\|_{L^\infty} \ls \ep C(A_0) C_i.
\end{split}
\end{equation}
The $(F_{\bA}^{(n)})^2$ term can be treated similarly using \eqref{estu.small}, \eqref{estu} and \eqref{estF} instead of \eqref{estphi}:
\begin{equation*}
\begin{split}
&\left\|\rd_t\left(\sum_{\bA} e^{4\gamma^{(n)}} N^{(n)} (F^{(n)}_{\bA})^2 \de_{ij} (L^{(n)}_{\bA})^i (L^{(n)}_{\bA})^j\right)\right\|_{H^0_{\de+2}} 
\ls \ep C(A_1) C_i.
\end{split}
\end{equation*}
Combining all these gives the estimates for $\rd_t N^{(n+1)}_{asymp}$ and $\rd_t \wht N^{(n+1)}$ in \eqref{N.higher}.

\textbf{Proof of \eqref{N.top}.} Finally, in order to prove \eqref{N.top}, we estimate $\rd_t(\mbox{RHS of \eqref{lapsen}})$ in $H^1_{\de+2}$. Now, in contrast to the second part of \eqref{N.higher}, we allow the estimates to be quadratic in $C_i$. First we note that the $\rd_t\left(e^{2\gamma^{(n)}}N^{(n)}(\tau^{(n)})^2\right)$ term has been estimated above in \eqref{Nest.tau.top}. The $\rd_t\left(e^{-2\gamma^{(n)}}N^{(n)}|H^{(n)}|^2\right)$ term, as we argued above, is similar.

It therefore remains to estimate the $(e_0^{(n-1)}\phi^{(n)})^2$ term and the $(F_\bA^{(n)})^2$ term. For the scalar field term, we have, using the support properties\footnote{We refer the reader to Footnote~\ref{cpt.supp.product} on p.\pageref{cpt.supp.product} regrading the use of Proposition~\ref{product} when one of the factors is compactly supported.} of $e_0^{(n-1)}\phi^{(n)}$, Proposition~\ref{product}, \eqref{estN.small}, \eqref{estN.large}, \eqref{estbeta.small}, \eqref{estgamma}, \eqref{estphi}{,} Proposition{s}~\ref{prpsmallness} {and \ref{prpeogamma}},
\begin{equation*}
\begin{split}
&\left\|\rd_t\left(\frac{e^{2\gamma^{(n)}}}{N^{(n)}}(e_0^{(n-1)} \phi^{(n)})^2\right)\right\|_{H^1_{\de+2}} \\
\ls & \left\|\rd_t \frac{e^{2\gamma^{(n)}} (N^{(n-1)})^2}{N^{(n)}}\right\|_{L^\infty(B(0,3R))} \left\| \f{e_0^{(n-1)} \phi^{(n)}}{N^{(n-1)}} \right\|_{L^\infty} \left\| \f{e_0^{(n-1)} \phi^{(n)}}{N^{(n-1)}} \right\|_{H^1} \\
& + \left\|\rd_t \frac{e^{2\gamma^{(n)}} (N^{(n-1)})^2}{N^{(n)}}\right\|_{H^1(B(0,3R))} \left\| \f{e_0^{(n-1)} \phi^{(n)}}{N^{(n-1)}} \right\|_{L^\infty} \left\| \f{e_0^{(n-1)} \phi^{(n)}}{N^{(n-1)}} \right\|_{L^\infty} \\
& + \left\|\frac{e^{2\gamma^{(n)}}(N^{(n-1)})^2}{N^{(n)}} \right\|_{L^\infty(B(0,3R))} \left\| \rd_t \f{e_0^{(n-1)} \phi^{(n)}}{N^{(n-1)}} \right\|_{H^1} \left\| \f{e_0^{(n-1)} \phi^{(n)}}{N^{(n-1)}} \right\|_{L^\infty} \\
& + \left\|\frac{e^{2\gamma^{(n)}}(N^{(n-1)})^2}{N^{(n)}} \right\|_{H^1(B(0,3R))}\left\| \rd_t \f{e_0^{(n-1)} \phi^{(n)}}{N^{(n-1)}} \right\|_{L^\infty} \left\| \f{e_0^{(n-1)} \phi^{(n)}}{N^{(n-1)}} \right\|_{L^\infty}\ls  \ep C(A_0) C_i^2.
\end{split}
\end{equation*}
Using \eqref{estu.small}, \eqref{estu} and \eqref{estF} instead of \eqref{estphi}, the $F_{\bA}^2$ term is similar, for which we have
\begin{equation*}
\begin{split}
&\left\|\rd_t\left(\sum_{\bA} e^{4\gamma^{(n)}} N^{(n)} (F^{(n)}_{\bA})^2 \de_{ij} (L^{(n)}_{\bA})^i (L^{(n)}_{\bA})^j\right)\right\|_{H^1_{\de+2}} 
\ls \ep C(A_1) C_i^2.
\end{split}
\end{equation*}
\end{proof}

\begin{prp}[Estimates for $\bt^{(n+1)}$]\label{prpbeta}
	For $n\geq 2$, the following estimates hold:
	\begin{align}
	\|\beta^{(n+1)}\|_{H^2_{\delta'}}+\|\beta^{(n+1)}\|_{W^2_{\delta'+\f 12,4}}+\|\beta^{(n+1)}\|_{C^1_{\delta'+1}} \ls & \ep^2,\label{betaest.low}\\
	\|\beta^{(n+1)}\|_{H^4_{\delta'}}\ls & C_i, \label{betaest.high}\\
	\|e_0^{(n)} \beta^{(n+1)}\|_{H^2_{\delta'}}\ls & C_i, \label{betaest.higher}\\
	\|e_0^{(n)} \beta^{(n+1)}\|_{H^4_{\delta'}}\ls & C_i^2.\label{betaest.top}
	\end{align}
\end{prp}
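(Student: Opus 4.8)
The equation for $\beta^{(n+1)}$ is the linear elliptic system \eqref{shiftn}, $(L\beta^{(n+1)})_{ij} = 2N^{(n)}e^{-2\gamma^{(n)}}(H^{(n)})_{ij}$. As in the proof of Lemma~\ref{lem:data.N.beta}, taking a divergence converts this into the scalar Poisson equations $\Delta (\beta^{(n+1)})^j = \delta^{ik}\delta^{j\ell}\partial_k(2N^{(n)}e^{-2\gamma^{(n)}}(H^{(n)})_{i\ell})$, whose right-hand side has vanishing integral (it is a total derivative of an $H^{k+1}_{\delta'+2}$ tensor, so Theorem~\ref{laplacien} applies), and the usual argument (cf. Footnote~\ref{footnote.divfree.tracefree}) shows the solution of the divergence equation solves \eqref{shiftn}. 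Thus all four estimates reduce to: (i) bounding $N^{(n)}e^{-2\gamma^{(n)}}H^{(n)}$ in an appropriate weighted space; (ii) applying the weighted elliptic estimate of Theorem~\ref{laplacien} (with $p=2$ and $p=4$ as needed); (iii) for the $e_0^{(n)}$-estimates, differentiating \eqref{shiftn} in $t$ and repeating, after commuting $e_0^{(n)}$ through $L$.

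\textbf{Proof of \eqref{betaest.low} and \eqref{betaest.high}.} For \eqref{betaest.low}, I would bound the source term $2N^{(n)}e^{-2\gamma^{(n)}}H^{(n)}$ in $W^1_{\delta'+1, 2}$, $W^1_{\delta'+1,4}$ (the product estimates of Proposition~\ref{produit} and Lemma~\ref{produit2} handle the log-weight factors $e^{-2\gamma^{(n)}}$ and $N^{(n)}_{asymp}\chi\log|x| \ls \ep(1+|x|)^{\ep/10}$, exactly as in Lemma~\ref{lem:data.N.beta}), using the smallness bounds \eqref{estN.small}, Proposition~\ref{prpsmallness} for $\wht\gamma^{(n)}$, and the smallness part of \eqref{estH} (via Proposition~\ref{prpsmallness}), which give the source term a size $O(\ep^2)$. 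Theorem~\ref{laplacien} then yields $\|\beta^{(n+1)}\|_{W^2_{\delta'+1/2,4}}\ls\ep^2$ for the relevant exponent, and Proposition~\ref{holder} gives the $C^1_{\delta'+1}$ bound; the $H^2_{\delta'}$ bound is the $p=2$ version. For \eqref{betaest.high}, one instead controls the source in $H^3_{\delta'+2}$: here one uses the large bounds \eqref{estH} ($\|H^{(n)}\|_{H^3_{\delta+1}}\ls C_i$) together with the smallness of $\wht N^{(n)}$, $\wht\gamma^{(n)}$ in the low norms (Proposition~\ref{prpsmallness}) and the high bounds \eqref{estN.large}, \eqref{estgamma} for the top-order factors, splitting by which factor carries the derivatives as in the case analysis in the proof of Proposition~\ref{prpN}; since $e^{-2\gamma^{(n)}}N^{(n)}$ grows only as $|x|^{C\ep^2}$ and $\|e^{-2\gamma^{(n)}}N^{(n)}\|_{C^0_\ep}\ls1$, the weight $\delta\mapsto\delta'=\delta-\ep$ absorbs the growth. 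Theorem~\ref{laplacien} then gives $\|\beta^{(n+1)}\|_{H^4_{\delta'}}\ls C_i$.

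\textbf{Proof of \eqref{betaest.higher} and \eqref{betaest.top}.} Apply $e_0^{(n)}=\partial_t-(\beta^{(n)})^k\partial_k$ to \eqref{shiftn}. Since $L$ has constant coefficients and $e_0^{(n)}$ is a first-order operator, $[e_0^{(n)},(L\cdot)_{ij}]$ produces only terms of the form $(\partial\beta^{(n)})\cdot\partial\beta^{(n+1)}$, which are controlled by \eqref{betaest.low}, \eqref{betaest.high} just proved (for $\beta^{(n+1)}$) and \eqref{estbeta.small}, \eqref{estbeta.large} (for $\beta^{(n)}$). Hence $e_0^{(n)}\beta^{(n+1)}$ solves an elliptic system whose source is $e_0^{(n)}(2N^{(n)}e^{-2\gamma^{(n)}}H^{(n)})$ plus these commutator terms. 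The source is estimated by the Leibniz rule: $e_0^{(n)}N^{(n)}$, $e_0^{(n)}\gamma^{(n)}$, $e_0^{(n)}H^{(n)}$ are controlled by \eqref{estN.large}/\eqref{estrdtN.large}, \eqref{estgamma}/\eqref{estrdtgamma}, and \eqref{estH} respectively (rewriting $\partial_t = e_0^{(n)}+(\beta^{(n)})^k\partial_k$ and using \eqref{estbeta.small} where a bare $\partial_t$ appears). For \eqref{betaest.higher} one works in $H^2_{\delta'}$, pairing a top-order factor with $C^0$/$W^1_4$ factors via Proposition~\ref{prpsmallness} to keep the bound linear in $C_i$; for \eqref{betaest.top} one works in $H^4_{\delta'}$, allowing the bound to be quadratic in $C_i$ (using \eqref{estrdtN.large}, the $C_i^2$ part of \eqref{estgamma}/\eqref{estrdtgamma}, and $\|e_0^{(n)}H^{(n)}\|_{H^3_{\delta+1}}\ls C_i$ from \eqref{estH}, and the $\|e_0^{(n-1)}\beta^{(n)}\|_{H^4_{\delta'}}\ls A_0 C_i^2$ bound from \eqref{estbeta.large}). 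The weight shift to $\delta'$ again absorbs the polynomial growth from the exponential prefactors. Theorem~\ref{laplacien} then closes both estimates.

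\textbf{Main obstacle.} The only real subtlety is bookkeeping rather than genuine difficulty: one must ensure at every stage that top-order derivatives land on at most one factor while the others are estimated in the small low-order norms supplied by Proposition~\ref{prpsmallness}, so that \eqref{betaest.higher} stays \emph{linear} in $C_i$ (this is what the reductive hierarchy $C(A_0)\ll A_1 \ll A_2$ is for), and to check that the log-weighted pieces of $N^{(n)}$ and $\gamma^{(n)}$ never spoil the weighted product estimates—exactly the point already flagged in Lemma~\ref{lem:data.N.beta} with the bounds $e^{2\alpha\chi\log|x|}\ls(1+|x|)^{C\ep^2}$ and $N_{asymp}\chi\log|x|\ls\ep(1+|x|)^{\ep/10}$. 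Both are handled by the slack in the weight exponent ($\delta\mapsto\delta'$) and the extra power of $\ep^2$ in these prefactors.
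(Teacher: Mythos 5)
Your proposal is essentially the paper's own argument, carried out in the same order: take the divergence of \eqref{shiftn} to reduce to a scalar Poisson equation (noting the zero-mean structure so Theorem~\ref{laplacien} applies, and invoking the rigidity of symmetric traceless divergence-free $2$-tensors to recover \eqref{shiftn} itself), apply Theorem~\ref{laplacien} at $p=2$ and $p=4$ for \eqref{betaest.low}, bound the source in $H^2_{\de'+2}$ for \eqref{betaest.high}, and then commute with $e_0^{(n)}$ for \eqref{betaest.higher}--\eqref{betaest.top}. The one point worth making explicit, which the paper stresses and you touch on only implicitly: the commutation must be with $e_0^{(n)}$ rather than $\rd_t$, precisely because the induction hypothesis \eqref{estH} controls $e_0^{(n)}H^{(n)}$ in $H^3_{\de+1}$ but provides no bound on four spatial derivatives of $\rd_t H^{(n)}$; this is what makes \eqref{betaest.top} close at all. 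The only small difference in presentation is that you commute $e_0^{(n)}$ with $L$ and then take divergence, whereas the paper takes divergence first and then commutes with $\Delta$; these are algebraically identical, producing in either case commutator terms of schematic form $\nab^2\bt^{(n)}\cdot\nab\bt^{(n+1)} + \nab\bt^{(n)}\cdot\nab^2\bt^{(n+1)}$, which are bounded using \eqref{estbeta.small}, \eqref{estbeta.large} and the bounds for $\bt^{(n+1)}$ established in the first two steps. (Your mention of $\|e_0^{(n-1)}\beta^{(n)}\|_{H^4_{\de'}}$ as an input to \eqref{betaest.top} is superfluous --- that quantity does not appear either in the commutator $[\Delta, e_0^{(n)}]\bt^{(n+1)}$ or in $e_0^{(n)}(N^{(n)}e^{-2\gamma^{(n)}}H^{(n)})$ after rewriting $e_0^{(n)}N^{(n)}=\rd_t N^{(n)}-\bt^{(n)}\nab N^{(n)}$, $e_0^{(n)}\gamma^{(n)}=e_0^{(n-1)}\gamma^{(n)}-(\bt^{(n)}-\bt^{(n-1)})\nab\gamma^{(n)}$, and using \eqref{estH} directly for $e_0^{(n)}H^{(n)}$ --- but this is a harmless over-listing, not a gap.)
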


\begin{proof}
In view of Proposition~\ref{prpsmallness}, the existence of $\beta^{(n+1)}$ and the estimates \eqref{betaest.low} can be proven in exactly the same manner as Lemma~\ref{lem:data.N.beta}; we omit the details. We only focus on the proofs of \eqref{betaest.high}, \eqref{betaest.higher} and \eqref{betaest.top}.

\textbf{Proof of \eqref{betaest.high}.} To prove \eqref{betaest.high}, we take the divergence of \eqref{shiftn} (in a similar manner as in the proof of Lemma~\ref{lem:data.N.beta}) to get 
\begin{equation}\label{betan.poisson}
\Delta (\beta^{(n+1)})^i = 2\de^{i\ell} \de^{jk} \rd_k \left(N^{(n)}e^{-2\gamma^{(n)}}(H^{(n)})_{j\ell}\right).
\end{equation}
Note that the RHS obviously has $0$ mean and therefore by Theorem~\ref{laplacien}, in order to prove \eqref{betaest.high}, it suffices to bound the RHS of \eqref{betan.poisson} in $H^2_{\de'+2}$ by $C C_i$.

Let us note explicitly that in this estimate, the need to have a small loss in the weight (with $\de'=\de-\ep$ instead of $\de$) is due to\footnote{This is for instance in contrast to the proof of Proposition~\ref{prpN}, where because the corresponding RHS is more nonlinear, one can put $\wht N^{(n+1)}$ in a better weighted space.} the factor $N^{(n)}e^{-2\gamma^{(n)}}$, which grows at infinity. On the other hand, since $\alp$ and $|N_{asymp}^{(n+1)}|$ are small (by \eqref{est.constraints.0} and \eqref{estN.small}), $N^{(n)}e^{-2\gamma^{(n)}}$ grows at worst as $|x|^{\f{\ep}{10}}$ for large $|x|$, and this can indeed be handled by putting in $\de'$ in place of $\de$ in the estimates. More precisely, by Lemma~\ref{der}, H\"older's inequality, \eqref{est.constraints.0}, \eqref{estN.small}, \eqref{estN.large}, \eqref{estgamma}, \eqref{estH} and Proposition~\ref{prpsmallness},
\begin{equation}\label{betaest.RHS}
\begin{split}
& \left\| \rd_k \left(N^{(n)}e^{-2\gamma^{(n)}}(H^{(n)})_{j\ell}\right) \right\|_{H^2_{\de'+2}} \\
\ls & \| N^{(n)}e^{-2\gamma^{(n)}}(H^{(n)})_{j\ell}\|_{H^3_{\de'+1}}\\
\ls & (1 + \|\wht N^{(n)}\|_{W^2_{\de+\f 12,4}} + \|\wht \gamma^{(n)}\|_{W^2_{\de'+\f 12,4}})\|H^{(n)}\|_{H^3_{\delta+1}} + (\|\wht N^{(n)}\|_{H^3_{\de}} + \|\nab \wht \gamma^{(n)}\|_{H^2_{\de'+1}})\|H^{(n)}\|_{C^0_{\de+2}}\\
\ls & C_i + \ep^2 C_i \ls C_i.
\end{split}
\end{equation}

\textbf{Proof of \eqref{betaest.higher}.} For the estimate of $e_0^{(n)}\beta^{(n+1)}$, we take the divergence of \eqref{shiftn} and commute the resulting equation\footnote{To obtain this one needs to justify that $e_0^{(n)}(\beta^{(n+1)})^i$ is well-defined, but this follows from the fact that the RHS is differentiable by $e_0^{(n)}$; we omit the details.} with $e_0^{(n)}$ to obtain
\begin{equation}\label{e0.beta.eqn}
\Delta \left(e_0^{(n)}(\beta^{(n+1)})^i\right) = 2\de^{k\ell}\de^{ij}e_0^{(n)}\partial_k (e^{-2\gamma^{(n)}}N^{(n)} H^{(n)}_{\ell j}) + [\Delta, e_0^{(n)}] (\beta^{(n+1)})^i =: I + II.
\end{equation}
It is easy to check that the RHS of \eqref{e0.beta.eqn} in fact has mean zero (as is expected). As a consequence, we can apply Theorem~\ref{laplacien} so that in order to prove the estimate for $e_0^{(n)} \bt^{(n+1)}$ in \eqref{betaest.higher}, it suffices to bound the RHS of \eqref{e0.beta.eqn} in $L^2_{\de'+2} = H^0_{\de'+2}$ by $C_i$.

For $I$ in \eqref{e0.beta.eqn}, after commuting $[e_0^{(n)}, \rd_k]$ and using Lemma~\ref{der}, it easy to see that we only need to estimate the following terms:
\begin{equation}\label{rdtbetaest.RHS}
\begin{split}
\| I \|_{H^0_{\de'+2}} \ls & \|e^{-2\gamma^{(n)}} N^{(n)} (e_0^{(n)} H^{(n)})\|_{H^1_{\de'+1}} + \|e^{-2\gamma} (e_0^{(n)} \gamma^{(n)}) N^{(n)} H^{(n)}\|_{H^1_{\de'+1}} \\
& + \|e^{-2\gamma} (e_0^{(n)} N^{(n)}) H^{(n)}\|_{H^1_{\de'+1}} + \| \bt^{(n)}\|_{C^0_{\de'+1}} \|\rd_k (e^{-2\gamma^{(n)}} N^{(n)} H^{(n)})\|_{H^0_{\de'+1}}.
\end{split}
\end{equation}
Using \eqref{estbeta.small} and \eqref{betaest.RHS}, the last term is clearly $\ls \ep C_i$.

To proceed, notice that according to Proposition~\ref{prpsmallness}, $\wht N^{(n)}$, $\wht \gamma^{(n)}$ and $H^{(n)}$ are all $O(\ep^2)$ small in $L^4$-based norms up to $1$ derivatives, while according to \eqref{estN.large}, \eqref{estgamma}{,} \eqref{estH} {and Proposition~\ref{prpeogamma}} (and \eqref{estbeta.small}), $e_0^{(n)} \wht N^{(n)}$, $e_0^{(n)} \wht \gamma^{(n)}$ and $e_0^{(n)} H^{(n)}$ are $O(C_i)$ in appropriate weighted $H^2$ spaces. Hence, the first three terms in \eqref{rdtbetaest.RHS} can be treated in a similarly manner. We consider the first term as an example. By \eqref{est.constraints.0}, \eqref{estN.small}, \eqref{estN.large}, \eqref{estgamma}, \eqref{estH}, Proposition~\ref{prpsmallness} and Proposition~\ref{holder},
\begin{equation*}
\begin{split}
& \|e^{-2\gamma^{(n)}} N^{(n)} (e_0^{(n)} H^{(n)})\|_{H^1_{\de'+1}}\\
\ls & \|e^{-2\gamma^{(n)}} N^{(n)} \|_{C^0_{-\ep}} \|e_0^{(n)} H^{(n)}\|_{H^1_{\de+1}} + \|e^{-2\gamma^{(n)}} N^{(n)} \|_{W^{1}_{-\ep-\f 12,4}} \|e_0^{(n)} H^{(n)}\|_{L^4_{\de+\f 32}}\\
\ls & \|e^{-2\gamma^{(n)}} N^{(n)} \|_{W^{1}_{-\ep-\f 12,4}}\|e_0^{(n)} H^{(n)}\|_{H^1_{\de+1}}\ls C_i.
\end{split}
\end{equation*}
The other terms in \eqref{rdtbetaest.RHS} can be similarly shown to be $\ls C_i$.

For term $II$ in \eqref{e0.beta.eqn}, we first compute
$$\left|[\Delta, e_0^{(n-1)}] (\beta^{(n)})^i \right|\ls |\nab\nab \bt^{(n-1)}||\nab \bt^{(n)}| + |\nab \bt^{(n-1)}||\nab\nab \bt^{(n)}|. $$
We then estimate this term. In fact, it will be convenient later to bound it in a stronger norm, namely the $H^2_{\de'+2}$ norm (instead of the $H^0_{\de'+2}$ norm), using \eqref{estbeta.small}, \eqref{estbeta.large} and Proposition~\ref{holder}:
\begin{equation}\label{rdtbetaest.II}
\begin{split}
\|II\|_{H^2_{\de'+2}} \ls & \| \bt^{(n-1)}\|_{H^4_{\de'}} \|\bt^{(n)}\|_{C^1_{\de'+1}} + \| \bt^{(n-1)}\|_{W^3_{\de'+\f 12,4}} \|\bt^{(n)}\|_{W^2_{\de'+\f 12,4}} \\
& + \| \bt^{(n)}\|_{H^4_{\de'}} \|\bt^{(n-1)}\|_{C^1_{\de'+1}} + \| \bt^{(n)}\|_{W^3_{\de'+\f 12,4}} \|\bt^{(n-1)}\|_{W^2_{\de'+\f 12,4}} \ls \ep C(A_0) C_i. 
\end{split}
\end{equation}

\textbf{Proof of \eqref{betaest.top}.} Finally, to prove \eqref{betaest.top}, we apply Theorem~\ref{laplacien} and estimate the RHS of \eqref{e0.beta.eqn} in $H^2_{\de'+2}$. Let us note at this point that it is for the purpose of \eqref{betaest.top}, we need to commute the equation with $e_0^{(n)}$ instead of, say, $\rd_t$. This is so that we can make use of the top order estimate in \eqref{estH}. Indeed, using the estimates for $H^{(n)}$ in \eqref{estH}, we cannot bound general fifth derivatives of $H^{(n)}$, but can only bound the combination of one $e_0^{(n)}$ and four spatial derivatives.

Arguing as \eqref{rdtbetaest.RHS} and using Proposition~\ref{produit}, for $I$ in \eqref{e0.beta.eqn}, it suffices to estimate the following terms:
\begin{equation}\label{rdtbetaest.RHS.top}
\begin{split}
\| I \|_{H^2_{\de'+2}} \ls & \|e^{-2\gamma^{(n)}} N^{(n)} (e_0^{(n)} H^{(n)})\|_{H^3_{\de'+1}} + \|e^{-2\gamma} (e_0^{(n)} \gamma^{(n)}) N^{(n)} H^{(n)}\|_{H^3_{\de'+1}} \\
& + \|e^{-2\gamma} (e_0^{(n)} N^{(n)}) H^{(n)}\|_{H^3_{\de'+1}} + \| \bt^{(n)}\|_{H^2_{\de'}} \|\rd_k (e^{-2\gamma^{(n)}} N^{(n)} H^{(n)})\|_{H^2_{\de'+1}}.
\end{split}
\end{equation}
As in the proof of \eqref{betaest.higher}, the last term is somewhat easier, and can be bounded by $\ls \ep C_i$ by \eqref{estbeta.small} and \eqref{betaest.RHS}. It remains to estimate the first three terms. Again, as in the proof of \eqref{betaest.higher}, they are rather similar and we will only carry out the estimate for the first term in detail. More precisely, by Lemma~\ref{lm.constraint}, \eqref{estN.small}, \eqref{estN.large}, \eqref{estgamma}, \eqref{estH} and Proposition~\ref{prpsmallness},
\begin{equation*}
\begin{split}
& \|e^{-2\gamma^{(n)}} N^{(n)} (e_0^{(n)} H^{(n)})\|_{H^3_{\de'+1}} \\
\ls & \left(1 + |N^{(n)}_{asymp}| + \| \wht N^{(n)}\|_{W^2_{\de{+}\f 12, 4}} \right)\left(1 + \|\wht \gamma^{(n)} \|_{W^2_{\de'+\f 12,4}} \right)\|e_0^{(n)} H^{(n)}\|_{H^3_{\de'+1}} \\
& + \left( \| \wht N^{(n)}\|_{H^3_{\de}} +  \|\wht \gamma^{(n)} \|_{H^3_{\de'}} \right) \|e_0^{(n)} H^{(n)} \|_{C^0_{\de'+2}}\\
\ls & C_i + C_i^2 \ls C_i^2.
\end{split}
\end{equation*}
The other terms in \eqref{rdtbetaest.RHS.top} can be treated in a similar manner. Finally, the term $II$ in \eqref{e0.beta.eqn} has already been estimated in $H^2_{\de'+2}$ in \eqref{rdtbetaest.II}. We therefore conclude the proof of \eqref{betaest.top}.
\end{proof}

We have now completed all the elliptic estimates. In the remaining estimates, we can exploit the smallness time parameter $T$. However, one still needs to take caution when estimating the time derivatives, as these are typically controlled by estimating the RHS of the evolution equations, and one still needs to track precisely the dependence of the constants.

The next quantity we bound is $\gamma^{(n+1)}$. In the following lemma, we prove an energy estimate for general solutions to {inhomogenous} wave equations of the type satisfied by $\gamma^{(n+1)}$ in \eqref{gamman}.
\begin{lm}[Energy estimate for the wave equation satisfied by $\gamma^{(n+1)}$]\label{gamma.wave.EE}
Suppose $h$ satisfies the following inhomogeneous wave equation
\begin{equation}\label{gamma.wave.general}
\frac{1}{N^{(n)}}e_0^{(n)} \left(\frac{1}{N^{(n)}} e_0^{(n)} h\right) -\Delta h =f.
\end{equation}
Then, for a weight function $w(|x|) = (1+|x|^2)^\sigma$, it holds that
\begin{equation}\label{wave.gamma.EE}
\begin{split}
& \int_{\mathbb R^2}  w\left(\frac{1}{(N^{(n)})^2}(e_0^{(n)} h)^2+|\nabla h|^2\right)(t,x) \, dx \\
\leq & 2 \int_{\mathbb R^2}  w\left(\frac{1}{(N^{(n)})^2}(e_0^{(n)} h)^2+|\nabla h|^2\right)(0,x) \, dx + C T \sup_{t'\in [0,T]}\int_{\mathbb R^2} w (N^{(n)})^2 f^2 (t',x)\, dx.
\end{split}
\end{equation}
In particular, this implies that
\begin{equation}\label{h.wave.higher}\begin{split}
\sum_{|\alpha|\leq 3}\left\|\f{e_0^{(n)} \nabla^{\alpha}h}{N^{(n)}} \right\|_{L^2_{\de'+1+|\alpha|}}(t) + \|\nab h\|_{H^3_{\de'+1}}(t) \leq &2 \left(\sum_{|\alpha|\leq 3}\left\|\f{e_0^{(n)} \nabla^{\alpha}h}{N^{(n)}} \right\|_{L^2_{\de'+1+|\alpha|}}(0) + \|\nab h\|_{H^3_{\de'+1}}(0)\right) \\
&+ C(C_i) T \sup_{t'\in [0,T]} \|N^{(n)} f\|_{H^3_{\de'+1}}(t').
\end{split}
\end{equation}
\end{lm}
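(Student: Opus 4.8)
The plan is to prove \eqref{wave.gamma.EE} first by a standard weighted energy estimate, then deduce \eqref{h.wave.higher} by commuting \eqref{gamma.wave.general} with spatial derivatives $\nab^\alpha$, $|\alpha|\leq 3$, and iterating. For the basic estimate, multiply \eqref{gamma.wave.general} by $2w\,\frac{1}{(N^{(n)})^2} e_0^{(n)} h$ and integrate over $\mathbb R^2$ at fixed $t$. The key point is that, since $e_0^{(n)} = \rd_t - (\bt^{(n)})^i\rd_i$ is (up to the lower-order advection term) a derivative transversal to $\Sigma_t$, the product $w\frac{1}{(N^{(n)})^2}(e_0^{(n)}h)(\frac{1}{N^{(n)}}e_0^{(n)}(\frac{1}{N^{(n)}}e_0^{(n)}h))$ can be rewritten, modulo a spatial divergence and terms controlled by $\rd_t N^{(n)}$, $\rd N^{(n)}$, $\rd\bt^{(n)}$, as $\frac12 e_0^{(n)}\big(w\frac{1}{(N^{(n)})^2}(e_0^{(n)}h)^2\big)$; likewise $-2w\frac{1}{(N^{(n)})^2}(e_0^{(n)}h)\Delta h$ produces $e_0^{(n)}\big(w|\nab h|^2\big)$ plus commutator terms involving $\nab w$, $\nab\bt^{(n)}$, and $\rd_t$ hitting the coefficients. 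Integrating the divergence away (justified by the decay/weights), and using $e_0^{(n)} = \rd_t - \bt^{(n)}\cdot\nab$ together with an integration by parts in the $\bt^{(n)}\cdot\nab$ piece, one gets a differential inequality
$$
\frac{d}{dt}\mathcal E(t)\leq C\big(\|\rd N^{(n)}\|_{L^\infty} + \|\rd\bt^{(n)}\|_{L^\infty} + 1\big)\mathcal E(t) + C\int_{\mathbb R^2} w (N^{(n)})^2 f^2\,dx,
$$
where $\mathcal E(t):=\int_{\mathbb R^2} w\big(\frac{1}{(N^{(n)})^2}(e_0^{(n)}h)^2+|\nab h|^2\big)\,dx$. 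Since by \eqref{estN.small}, \eqref{estbeta.small} (and Proposition~\ref{prpsmallness}, Proposition~\ref{holder}) the coefficient of $\mathcal E(t)$ is $O(1)$ (in fact $1+O(\ep)$), Grönwall over $[0,T]$ with $T$ small gives $\mathcal E(t)\leq 2\mathcal E(0) + CT\sup_{[0,T]}\int w(N^{(n)})^2 f^2$, which is \eqref{wave.gamma.EE}; here $w=(1+|x|^2)^\sigma$ with $2\sigma = 2\de'+2+2|\alpha|$ for the various commuted equations, noting $-1<\de'''<\de'$ so all weights are admissible.

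To pass to \eqref{h.wave.higher}, I would commute \eqref{gamma.wave.general} with $\nab^\alpha$ for $|\alpha|\leq 3$. The commutators $[\frac{1}{N^{(n)}}e_0^{(n)}(\frac{1}{N^{(n)}}e_0^{(n)}\cdot),\nab^\alpha]$ and $[\Delta,\nab^\alpha]=0$ produce, on the right-hand side, terms of the schematic form (spatial derivatives of $N^{(n)}$ and $\bt^{(n)}$) times (lower-order $e_0^{(n)}$- and $\nab$-derivatives of $h$). Applying \eqref{wave.gamma.EE} to $\nab^\alpha h$ with weight $w=(1+|x|^2)^{\de'+1+|\alpha|}$, summing over $|\alpha|\leq 3$, and absorbing the commutator contributions — which are bounded, using H\"older, Lemma~\ref{der}, Proposition~\ref{holder}, \eqref{estN.small}, \eqref{estN.large}, \eqref{estbeta.small}, \eqref{estbeta.large} and Proposition~\ref{prpsmallness}, by $\ep\cdot(\text{the same norm on the left, at lower order})$ plus $C(C_i)\,T\,\sup_{[0,T]}\|N^{(n)}f\|_{H^3_{\de'+1}}$ — gives \eqref{h.wave.higher} for $\ep$ small enough, the small constant $\ep$ allowing the lower-order left-hand terms to be absorbed in a standard bootstrap/induction on $|\alpha|$. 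The constant $C(C_i)$ enters because some commutator coefficients (e.g.\ $\nab^3\wht N^{(n)}$, $\nab^3\bt^{(n)}$ contracted against a single derivative of $h$) are only controlled by the higher, $C_i$-size norms in \eqref{estN.large}, \eqref{estbeta.large}.

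\textbf{Main obstacle.} The delicate point is the bookkeeping of weights and the placement of commutator terms: one must ensure that every commutator term produced when differentiating \eqref{gamma.wave.general} can be put either on the "small" side (coefficient $O(\ep)$, to be absorbed into the left-hand side, which works because $\wht N^{(n)}$, $\wht\gamma^{(n)}$, $\bt^{(n)}$ are $O(\ep)$ in low norms by Proposition~\ref{prpsmallness}) or on the "$f$" side with a $C(C_i)T$ prefactor — and never as an unabsorbable $C_i$ times the top-order energy without a $T$ factor. Getting the weights exactly right (the shift $\de\mapsto\de'$, the loss per derivative $+|\alpha|$ in $L^2_{\de'+1+|\alpha|}$, and keeping $-1<\de'''$) so that Lemma~\ref{der} and the weighted product estimates Proposition~\ref{produit}, Proposition~\ref{produit2} apply, is the part requiring care; the energy identity itself is routine.
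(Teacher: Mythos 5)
The strategy (weighted $L^2$ energy identity, then commutation with $\nabla^{\alpha}$ and absorption via the $T$-factor from the resulting Gr\"onwall) is the paper's. However, your choice of multiplier is computationally wrong, and the cancellation you claim does not happen with it.

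You propose multiplying \eqref{gamma.wave.general} by $2w\,\frac{1}{(N^{(n)})^2}\,e_0^{(n)}h$. Set $Y:=\frac{1}{N^{(n)}}e_0^{(n)}h$. Then the ``time'' part of the product is
$$2w\,\frac{1}{(N^{(n)})^2}\,(e_0^{(n)}h)\cdot\frac{1}{N^{(n)}}e_0^{(n)}Y = \frac{2w}{(N^{(n)})^2}\,Y\,e_0^{(n)}Y = \frac{w}{(N^{(n)})^2}\,e_0^{(n)}(Y^2),$$
which is \emph{not} $e_0^{(n)}\bigl(\frac{w}{(N^{(n)})^2}(e_0^{(n)}h)^2\bigr) = e_0^{(n)}(wY^2)$ as you assert: the extra factor $\frac{1}{(N^{(n)})^2}$ sits \emph{outside} the $e_0^{(n)}$, and since $N^{(n)}\sim\log|x|\to\infty$ it is not a bounded perturbation. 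Likewise, after integrating the $\Delta h$ term by parts with your multiplier, the leading piece is $\frac{w}{(N^{(n)})^2}\,e_0^{(n)}|\nabla h|^2$, not $e_0^{(n)}\bigl(w|\nabla h|^2\bigr)$. Consequently the quantity your calculation actually controls is $\int\frac{w}{(N^{(n)})^2}\bigl(\frac{(e_0^{(n)}h)^2}{(N^{(n)})^2}+|\nabla h|^2\bigr)\,dx$, which is strictly weaker than the energy appearing in \eqref{wave.gamma.EE} (and hence cannot directly give \eqref{h.wave.higher}), because $\frac{1}{(N^{(n)})^2}$ decays as $|x|\to\infty$.

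The correct multiplier is simply $w\,e_0^{(n)}h$: with that choice one has $(w\,e_0^{(n)}h)\cdot\frac{1}{N^{(n)}}e_0^{(n)}Y = wY\,e_0^{(n)}Y = \frac{w}{2}e_0^{(n)}(Y^2)$, and the spatial part integrates by parts to $\frac{w}{2}e_0^{(n)}|\nabla h|^2$ plus terms controlled by $\nabla w$ and $\nabla\beta^{(n)}$, producing exactly the energy in \eqref{wave.gamma.EE}. The cross term $\int w'\,\frac{x^i}{|x|}\,\partial_i h\,e_0^{(n)}h$ is then handled via $w'\lesssim w/(1+|x|^2)^{1/2}\lesssim w/N^{(n)}$, so it is bounded by the energy itself with a constant independent of $C_i$.

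One smaller imprecision: for \eqref{h.wave.higher} you claim the commutator contributions are bounded by $\ep\cdot(\text{lower-order LHS})$ plus the $f$-term, and rely on $\ep$-smallness to absorb. The commutator coefficients are \emph{not} uniformly $\ep$-small: when three spatial derivatives fall on $\nabla\log N^{(n)}$ or $\nabla\beta^{(n)}$ (leaving only one or two on $h$), they are $C_i$-sized, as the estimates \eqref{estN.large}, \eqref{estbeta.large} show. The paper bounds the full commutator by $C(A_1,C_i)\bigl(\|\nabla h\|_{H^3_{\delta'+1}}+\|N^{(n)}f\|_{H^3_{\delta'+1}}\bigr)$ and absorbs the $h$-piece using the $T$-factor from \eqref{wave.gamma.EE} (with $T$ small depending on $C_i$) --- no $\ep$-smallness or induction on $|\alpha|$ is needed. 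Your bootstrap on $|\alpha|$ could also work, but the description of the split into an $\ep$-small $h$-part and a pure $f$-part does not reflect the actual size of those coefficients.
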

\begin{proof}
\textbf{Proof of \eqref{wave.gamma.EE}.} 
Let $w(|x|)$ be as in the statement of the lemma. We multiply \eqref{gamma.wave.general} by $w e_0^{(n)} h$ and integrate over $\m R^2$ with respect to $dx$.
	After integration by parts, we obtain
	$$\int_{\mathbb R^2} \frac{1}{2}we_0^{(n)} \left( \frac{e_0^{(n)} h}{N^{(n)}}\right)^2 \,dx +\int_{\mathbb R^2} \nabla h \cdot \nabla\left(we_0^{(n)} h\right)\,dx=\int_{\mathbb R^2} wfe_0^{(n)} h \,dx.$$
	Hence,
	\begin{align*}
	&\frac{d}{dt}\int_{\mathbb R^2} w\left(\frac{(e_0^{(n)} h)^2}{2(N^{(n)})^2}+\frac{1}{2}|\nabla h|^2\right)\,dx+\int_{\mathbb R^2} \frac{1}{2}\partial_i\left((\beta^{(n)})^i w\right) \left(\frac{1}{(N^{(n)})^2}(e_0^{(n)} h)^2+|\nabla h|^2\right)\,dx\\
	&\qquad+\int_{\mathbb R^2} w'\f{x^i}{|x|}\partial_i h e_0^{(n)} h \,dx
	-\int_{\mathbb R^2} w \de^{ik} \partial_i (\beta^{(n)})^j\partial_k h\partial_j h \,dx=\int_{\mathbb R^2} w fe_0^{(n)} h \,dx.
	\end{align*}
	Since $w(|x|)=(1+|x|^2)^\sigma$ and $N^{(n)}\lesssim (1+\ep)(1+\chi(|x|){\log}(|x|))$,
	$$w'(|x|)\lesssim \frac{w(|x|)}{(1+|x|^2)^\frac{1}{2}}\ls \frac{w(|x|)}{N^{(n)}}.$$
	Moreover, by \eqref{estbeta.small} and Proposition~\ref{holder}, $\|\beta^{(n)}\|_{L^\infty} + \|\nab \beta^{(n)}\|_{L^\infty}\lesssim \|\beta^{(n)}\|_{W^2_{\delta'+\f 12,4}}\lesssim 1$. Hence,
	$$|\nab (\beta^{(n)} w)|\lesssim  w, \quad |w \nab \beta^{(n)}| \lesssim w.$$
	Using the above estimates and Cauchy--Schwarz, we therefore have
	\begin{equation*}
	\begin{split}
	& \frac{d}{dt}\int_{\mathbb R^2}  w\left(\frac{(e_0^{(n)} h)^2}{(N^{(n)})^2}+|\nabla h|^2\right)\, dx\\
	\lesssim & \int_{\mathbb R^2}  w\left(\frac{(e_0^{(n)} h)^2}{(N^{(n)})^2}+|\nabla h|^2\right) \, dx
	+\left( \int_{\mathbb R^2}  w\frac{(e_0^{(n)} h)^2}{(N^{(n)})^2} \, dx\right)^\frac{1}{2}\left(\int_{\mathbb R^2} w(N^{(n)})^2f^2 \, dx\right)^\frac{1}{2}.
	\end{split}
	\end{equation*}
	Therefore, using Gr\"onwall's inequality and choosing $T$ sufficiently small, we obtain \eqref{wave.gamma.EE}.

\textbf{Proof of \eqref{h.wave.higher}.} We differentiate \eqref{gamma.wave.general} by up to three spatial derivatives to obtain that for multi-index $\alp$ with $|\alp|\leq 3$,
\begin{equation}\label{h.wave.commute.0}
\frac{1}{N^{(n)}}e_0^{(n)} \left(\frac{1}{N^{(n)}} e_0^{(n)} (\nab^\alp h)\right) -\Delta (\nab^\alp h) =\nab^\alp f + \left[\f{1}{N^{(n)}} e_0^{(n)} (\f{1}{N^{(n)}} e_0^{(n)} ), \nab^{\alp}\right] h .
\end{equation}
By \eqref{wave.gamma.EE} with $\sigma = \f{\de'+1+|\alp|}{2}$, in order to prove \eqref{h.wave.higher}, we need to multiply the RHS of \eqref{h.wave.commute.0} by $N^{(n)}$ and bound it in $L^2_{\de'+1+|\alp|}$. We first control the $\nab^\alp f$ term. Notice that in order to obtain the term on RHS of \eqref{h.wave.higher}, we in particular need to commute $[N^{(n)}, \nab^{\alp}]$. By H\"older's inequality, Proposition~\ref{holder} and \eqref{estN.large} (and noting that $\| \nab (\chi(|x|){\log}(|x|) N_{asymp}^{(n)}) \|_{C^k_{-1}}\ls_k |N_{asymp}|$, $\forall k\in \mathbb N\cup \{0\}$ and $\|\f{1}{N^{(n)}}\|_{C^0} \ls 1$),
\begin{equation}\label{N.nab.commute}
\begin{split}
&\sum_{|\alp|\leq 3} \|N^{(n)} \nab^\alp f \|_{L^2_{\de'+1+|\alp|}}\\
\ls & \sum_{|\alp|\leq 3} \| \nab^\alp (N^{(n)} f) \|_{L^2_{\de'+1+|\alp|}} + \sum_{|\alp_1|+|\alp_2| \leq 2} \|\nab^{\alp_1} \nab \log N^{(n)} \nab^{\alp_2} (N^{(n)}f) \|_{L^2_{\de'+2+|\alp_1|+|\alp_2|}} \\
& + \sum_{|\alp_1|+|\alp_2| \leq 1} \|(\nab \log N^{(n)} )(\nab^{\alp_1} \nab \log N^{(n)}) \nab^{\alp_2} (N^{(n)}f) \|_{L^2_{\de' + 3 +|\alp_1|+|\alp_2|}}\\
& + \|(\nab \log N^{(n)})^3 (N^{(n)}f) \|_{L^2_{\de' + 4}}\\
\ls & \|N^{(n)} f \|_{H^3_{\de'+1}} \left(1 + |N_{asymp}^{(n)}| + \| \nab \log \wht N^{(n)}\|_{H^2_{\de+1}}^3\right) \ls C(C_i) \|N^{(n)} f \|_{H^3_{\de'+1}}.
\end{split}
\end{equation}
To control the commutator term in \eqref{h.wave.commute.0}, we compute
\begin{equation*}
\begin{split}
& \left|\left[\f{1}{N^{(n)}} e_0^{(n)} \left(\f{1}{N^{(n)}} e_0^{(n)} \right), \nab \right] h \right| \\
 \ls &  \left| \f{\nab \bt^{(n)}}{N^{(n)}}\nab \left(\f{e_0^{(n)}h}{N^{(n)}}  \right)\right|+ \left|f \nab \log N^{(n)} \right| +
\left|\nabla^2 h \nab \log N^{(n)} \right|+
  \left| \f{1}{N^{(n)}} \left(e_0^{(n)} \f{\nab \bt^{(n)}}{N^{(n)}} \right) \nab h\right| \\
	&+ \left|\f{\nab\beta^{(n)}}{N^{(n)}} \nab\log N^{(n)} \f{e_0^{(n)}h}{N^{(n)}}\right|
+ \left| \f{\nab \bt^{(n)}}{N^{(n)}} \f{\nab \bt^{(n)}}{N^{(n)}} \nab h\right| + \left|\f{\left(e_0^{(n)}  \nab \log N^{(n)}\right)}{N^{(n)}} \left( \f{e_0^{(n)}h}{N^{(n)}} \right)\right|.
\end{split}
\end{equation*}
Here, we have silently used $[e_0^{(n)}, \rd_i] = \rd_i (\bt^{(n)})^j \rd_j${, and have also used the equation \eqref{gamma.wave.general} to rewrite $\f{1}{N^{(n)}} e_0^{(n)} \left(\f{1}{N^{(n)}} e_0^{(n)}h\right)$}. In a similar manner, one can compute the commutator with higher derivatives. We have, for $|\alp|\leq 3$,
\begin{equation}\label{N.nab.commute.1}
\begin{split}
& \left|\left[\f{1}{N^{(n)}} e_0^{(n)} (\f{1}{N^{(n)}} e_0^{(n)} ), \nab^{\alp} \right] h \right| \\
\ls &  \sum_{|\alp_1| + |\alp_2| + i= |\alp|}\left(\left|\nab^{\alp_1} f \nab^{\alp_2} (\nab \log N^{(n)})^i \right|
+\left|\nab^{\alp_1} \nabla ^2 h \nab^{\alp_2} (\nab \log N^{(n)})^i \right|\right)\\
& + \sum_{|\alp_1| + |\alp_2| + |\alp_3| + i = |\alp| - 1} \left| \nab^{\alp_1}\left(\f{1}{N^{(n)}} e_0^{(n)} \left(\f{\nab \bt^{(n)}}{N^{(n)}} \right)\right) \nab^{\alp_2}(\log N^{(n)})^i \nab^{\alp_3} \nab h\right|\\
& + \sum_{|\alp_1| + |\alp_2| + |\alp_3| + i = |\alp|}\left|\nab^{\alp_1} \left(\f{\nab \bt^{(n)}}{N^{(n)}}\right) \nab^{\alp_2} (\nab \log N^{(n)})^i \nab^{\alp_3}\f{e_0^{(n)} h}{N^{(n)}} \right|\\
& + \sum_{|\alp_1| + |\alp_2| + |\alp_3| + i= |\alp|} \left|\nab^{\alp_1} \left(\f{\nab \bt^{(n)}}{N^{(n)}}\right)^{2} \nab^{\alp_2} (\nab \log N^{(n)})^i \nab^{\alp_3}\nab h \right|\\
& + \sum_{|\alp_1| + |\alp_2| + |\alp_3| + i= |\alp| - 1} \left|\nab^{\alp_1} \f{\left( e_0^{(n)}\log N^{(n)}\right)}{N^{(n)}} \nab^{\alp_2} (\nab\log N^{(n)})^i \nab^{\alp_3}\f{e_0^{(n)} h}{N^{(n)}} \right|\\
& + \sum_{|\alp_1| + |\alp_2| + |\alp_3| +|\alpha_4|+ i= |\alp| - 1} \left|\nab^{\alp_1} \f{\left(  e_0^{(n)} \log N^{(n)}\right)}{N^{(n)}} \nab^{\alp_2} \f{\nab \bt^{(n)}}{N^{(n)}} \nab^{\alp_3} (\nab\log N^{(n)})^i \nab^{\alp_4} h \right|.
\end{split}
\end{equation}
Here (and below), we use the notation {as in the proof of Proposition~\ref{prpeogamma}} that, say, $\nab^{\alp_2} (\nab \log N^{(n)})^i$ denotes a product of $i$ factors, each of which is some spatial derivatives of $\nab\log N^{(n)}$ and the total number of derivatives is $|\alp_2|$.

We claim that upon multiplying by $N^{(n)}$, each of the terms in \eqref{N.nab.commute.1} can be bounded in $L^2_{\de'+1+|\alp|}$ by $C(A_1,C_i)(\|h\|_{H^3_{\de'+1}}+ \|N^{(n)} f \|_{H^3_{\de'+1}})$. Since the constant can depend on $A_1$ and $C_i$ in an arbitrary manner, all terms linear in $h$ can be treated in essentially the same way. We bound one representative term. Using H\"older's inequality, Proposition~\ref{holder}, \eqref{estN.large} (and noting that $\| \nab (\chi(|x|){\log}(|x|) N_{asymp}^{(n)}) \|_{C^k_{-1}}\ls_k |N_{asymp}|$, $\forall k\in \mathbb N\cup \{0\}$ and $\|\f{1}{N^{(n)}}\|_{C^0} \ls 1$) and \eqref{estbeta.large}, we have
\begin{equation}\label{N.nab.commute.2}
\begin{split}
& \sum_{|\alp_1| + |\alp_2| +|\alp_3| + i = |\alp|-1} \left\|N^{(n)} \nab^{\alp_1}\left(\f{1}{N^{(n)}} e_0^{(n)} \left(\f{\nab \bt^{(n)}}{N^{(n)}} \right)\right) \nab^{\alp_2}(\nab \log N^{(n)})^i \nab^{\alp_3} \nab h\right\|_{L^2_{\de'+1+|\alp|}}\\
\ls &  \sum_{|\alp_1| + |\alp_2| + |\alp_3| + i = |\alp| - 1} \left\|\nab^{\alp_1} \rd_t \f{\nab \bt^{(n)}}{N^{(n)}} \nab^{\alp_2}(\nab \log N^{(n)})^i \nab^{\alp_3} \nab h\right\|_{L^2_{\de'+1+|\alp|}} \\
& + \sum_{|\alp_1| + |\alp_2| + |\alp_3| + |\alp_4| + i = |\alp|-1} \left\| N^{(n)} \nab^{\alp_1} \f{\beta^{(n)}}{N^{(n)}} \nab^{\alp_2} \nab \f{\nab \bt^{(n)}}{N^{(n)}} \nab^{\alp_3}(\nab \log N^{(n)})^i \nab^{\alp_4} \nab h\right\|_{L^2_{\de'+1+|\alp|}}\\
\ls & \left(\left\|\rd_t \f{\nab \bt^{(n)}}{N^{(n)}}\right\|_{H^2_{\de'}}+\left\|\f{\bt^{(n)}}{N^{(n)}}\right\|_{H^2_{\de'}} \left\|\nab \f{\nab \bt^{(n)}}{N^{(n)}}\right\|_{H^2_{\de'+1}}\right) \left(1 + |N^{(n)}_{asymp}|^2 + \|\nab \wht N^{(n)}\|_{H^1_{\de+1}}^2\right) \|\nab h\|_{H^2_{\de'+1}} \\
\ls & A_1 C_i^2\|\nab h\|_{H^2_{\de'+1}}.
\end{split}
\end{equation}
The term which is linear in $f$ can be treated as in \eqref{N.nab.commute} to obtain
\begin{equation}\label{N.nab.commute.3}
\begin{split}
&\sum_{|\alp_1| + |\alp_2| + i= |\alp|} \left\| N^{(n)} \nab^{\alp_1} f \nab^{\alp_2} (\nab \log N^{(n)})^i \right\|_{L^2_{\de'+1+|\alp|}}\ls C(C_i) \|N^{(n)} f \|_{H^3_{\de'+1}}.
\end{split}
\end{equation}
Combining \eqref{h.wave.commute.0}, \eqref{N.nab.commute}, \eqref{N.nab.commute.1}, \eqref{N.nab.commute.2} (and analogous estimates for terms in \eqref{N.nab.commute.1}) and \eqref{N.nab.commute.3}, and using \eqref{wave.gamma.EE}, yield
\begin{equation}\label{wave.gamma.almost}
\begin{split}
& \sum_{|\alp|\leq 3} \left\|\f{e_0^{(n)} \nab^{\alp} h}{N^{(n)}}\right\|_{L^2_{\de'+1+|\alp|}}(t) + \|\nab h\|_{H^k_{\de'+1}}(t) \\
\leq & 2\sum_{|\alp|\leq 3} \left(\left\|\f{e_0^{(n)} \nab^{\alp} h}{N^{(n)}} \right\|_{L^2_{\de'+1+|\alp|}}(0) + \| \nab \nab^{\alp} h\|_{L^2_{\de'+1+|\alpha|}}(0)\right) + C(C_i) T \sup_{t'\in [0,T]} \|N^{(n)} f\|_{H^3_{\de'+1}}(t').
\end{split}
\end{equation}
which concludes the proof of the Lemma.
\end{proof}

\begin{prp}[Estimates for $\wht{\gamma}$]\label{prpgamma}
	For $n\geq 2$, the following estimates hold:
	\begin{align}
	 \label{gamma.est.lower} \sum_{|\alpha|\leq 3} \left\|\frac{e_0^{(n)} \nabla^\alpha \wht \gamma^{(n+1)}}{N^{(n)}}\right\|_{L^2_{\delta'+1+|\alpha|}}
	+\left\|\nabla \wht \gamma^{(n+1)}\right\|_{H^3_{\delta'+1}}
	\leq & 4C_i,\\
	\label{gamma.est.higher.2}	\left\|\partial_t \frac{e_0^{(n)} \wht \gamma^{(n+1)}}{N^{(n)}}\right\|_{H^2_{\delta'+1}}\lesssim & C(A_0)C_i^{2},\\
	\label{gamma.est.higher.3} \left\|\partial_t \frac{e_0^{(n)} \wht \gamma^{(n+1)}}{N^{(n)}}\right\|_{H^1_{\delta'+1}}\lesssim & C(A_0) C_i.	
		\end{align}
\end{prp}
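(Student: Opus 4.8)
The plan is to treat $\wht\gamma^{(n+1)}$ by applying the energy estimate of Lemma~\ref{gamma.wave.EE} with $h = \wht\gamma^{(n+1)}$. First I would observe that $\wht\gamma^{(n+1)} = \gamma^{(n+1)} + \alp\chi(|x|)\log(|x|)$ satisfies an equation of the form \eqref{gamma.wave.general}: starting from \eqref{gamman} and noting that $\Delta(\chi\log|x|)$ is a compactly supported, smooth, $x$-independent bump, and that $e_0^{(n)}(\chi\log|x|)= -(\bt^{(n)})^i\rd_i(\chi\log|x|)$ is controlled by the $\bt^{(n)}$ estimates, I would identify the inhomogeneity $f = f^{(n)}$ as the RHS of \eqref{gamman} (rewritten via $e_0^{(n)}$ in place of $e_0^{(n-1)}$, absorbing the mismatch using \eqref{estbeta.small}--\eqref{estbeta.large}) plus lower-order corrections from the $\chi\log|x|$ shift. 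The bulk of the work is then to bound $\|N^{(n)} f^{(n)}\|_{H^3_{\delta'+1}}$ by $C(C_i)$; once this is done, \eqref{h.wave.higher} gives, for $T$ small depending on $C_i$,
\[
\sum_{|\alpha|\leq 3}\left\|\f{e_0^{(n)}\nab^\alpha\wht\gamma^{(n+1)}}{N^{(n)}}\right\|_{L^2_{\delta'+1+|\alpha|}}(t) + \|\nab\wht\gamma^{(n+1)}\|_{H^3_{\delta'+1}}(t) \leq 2\big(\text{data}\big) + C(C_i)\,T\sup_{[0,T]}\|N^{(n)}f^{(n)}\|_{H^3_{\delta'+1}},
\]
and since the data is bounded by $C_i$ (Corollary~\ref{cor.data}, Proposition~\ref{prpeogamma}), choosing $T$ small enough yields \eqref{gamma.est.lower}.

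\textbf{Bounding the inhomogeneity.} For $\|N^{(n)}f^{(n)}\|_{H^3_{\delta'+1}}$ I would go through the terms of \eqref{gamman} one by one. The terms $\f 12(\tau^{(n)})^2 e^{2\gamma^{(n)}}$ and $(e_0^{(n-1)}\gamma^{(n)})^2/(N^{(n)}N^{(n-1)})$ are at least quadratic in small quantities, so by Proposition~\ref{prpsmallness}, \eqref{esttau}, \eqref{estgamma} together with the product estimates of Proposition~\ref{produit} they are $\ls \ep^2 C(A_2)C_i \ls C_i$; the weight mismatch ($\delta''$ or $\delta'$ versus $\delta'+1$) is harmless since there is room from the quadratic smallness. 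The term $\Delta N^{(n)}/(2N^{(n)})$ is controlled using the $H^5_\delta$ bound \eqref{estN.large} (so $\Delta N^{(n)}\in H^3_\delta\subset H^3_{\delta'+1}$) and $1/N^{(n)}\ls 1$, giving $\ls C_i$. The scalar-field term $\delta^{ij}\rd_i\phi^{(n)}\rd_j\phi^{(n)}$ and the dust term $\sum_\bA e^{4\gamma^{(n)}}(F^{(n)}_\bA)^2\de_{ij}(L^{(n)}_\bA)^i(L^{(n)}_\bA)^j$ are compactly supported, so as in Footnote~\ref{cpt.supp.product} on p.\pageref{cpt.supp.product} I would use the unweighted product estimate Proposition~\ref{product} with \eqref{estphi}, \eqref{estF}, \eqref{estu.small}, \eqref{estu}, obtaining $\ls C(A_0)C_i^2$; but here one factor comes with an $\ep$ from Proposition~\ref{prpsmallness} (e.g. $\|\rd\phi^{(n)}\|_{L^\infty}\ls\ep$), so in fact these are $\ls \ep C(A_0)C_i^2 \ls C_i$ after absorbing $\ep C(A_0)C_i \ls 1$ but keeping one $C_i$. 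The genuinely delicate term is the mixed second-order term $\f{e^{2\gamma^{(n)}}}{2N^{(n)}}e_0^{(n-1)}\big(\f{e^{-2\gamma^{(n)}}}{N^{(n-1)}}\mathrm{div}(\beta^{(n)})\big)$: expanding the $e_0^{(n-1)}$ derivative produces $e_0^{(n-1)}\gamma^{(n)}$, $e_0^{(n-1)}N^{(n-1)}$, and $e_0^{(n-1)}\mathrm{div}(\beta^{(n)}) = \mathrm{div}(e_0^{(n-1)}\beta^{(n)}) + [\,e_0^{(n-1)},\rd_i]\,$-type terms, which are controlled in $H^2_{\delta'}$-type spaces by $A_0 C_i$ via \eqref{estbeta.large}, \eqref{estN.large}, Proposition~\ref{prpeogamma}. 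After multiplying by $N^{(n)}e^{2\gamma^{(n)}}\cdot e^{-2\gamma^{(n)}}/N^{(n-1)}$ and using Propositions~\ref{holder}, \ref{produit}, \ref{produit2} to handle the $|x|^{C\ep}$-type growing prefactors, this term is $\ls C(A_0)C_i$. Summing all contributions gives $\|N^{(n)}f^{(n)}\|_{H^3_{\delta'+1}}\ls C(A_0)C_i$, hence $C(C_i)T\cdot C(A_0)C_i \to 0$ as $T\to 0$, which closes \eqref{gamma.est.lower}.

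\textbf{The $\rd_t$ estimates.} For \eqref{gamma.est.higher.2} and \eqref{gamma.est.higher.3}, rather than using the energy estimate again I would use the equation \eqref{gamma.wave.general} directly to solve for $\rd_t\big(\f{e_0^{(n)}\wht\gamma^{(n+1)}}{N^{(n)}}\big)$ algebraically in terms of quantities already bounded. Writing $\rd_t = e_0^{(n)} + (\bt^{(n)})^i\rd_i$, we have $\f{1}{N^{(n)}}e_0^{(n)}\big(\f{e_0^{(n)}\wht\gamma^{(n+1)}}{N^{(n)}}\big) = \Delta\wht\gamma^{(n+1)} + f^{(n)}$, so
\[
\rd_t\left(\frac{e_0^{(n)}\wht\gamma^{(n+1)}}{N^{(n)}}\right) = N^{(n)}\big(\Delta\wht\gamma^{(n+1)} + f^{(n)}\big) + (\bt^{(n)})^i\rd_i\left(\frac{e_0^{(n)}\wht\gamma^{(n+1)}}{N^{(n)}}\right).
\]
For \eqref{gamma.est.higher.3}, bounding the RHS in $H^1_{\delta'+1}$: the term $N^{(n)}\Delta\wht\gamma^{(n+1)}$ uses \eqref{gamma.est.lower} ($\nab\wht\gamma^{(n+1)}\in H^3_{\delta'+1}$, hence $\Delta\wht\gamma^{(n+1)}\in H^2_{\delta'+2}\subset H^1_{\delta'+1}$) and \eqref{estN.large} with Proposition~\ref{produit}, giving $\ls C(A_0)C_i$; the term $N^{(n)}f^{(n)}$ was bounded above (in the stronger $H^3_{\delta'+1}$) by $C(A_0)C_i$; the transport term uses \eqref{estbeta.small} and \eqref{gamma.est.lower}. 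For \eqref{gamma.est.higher.2}, the same identity is bounded in $H^2_{\delta'+1}$, where now $\Delta\wht\gamma^{(n+1)}\in H^2_{\delta'+1}$ requires $\nab\wht\gamma^{(n+1)}\in H^3_{\delta'+1}$ (available from \eqref{gamma.est.lower}) — but one extra derivative on $f^{(n)}$ would be needed, which we do not have at the level $H^3$ with controlled constants; instead I would note that the worst terms of $f^{(n)}$ (the compactly supported matter terms and the $\Delta N^{(n)}$ term) are actually in $H^3$ already with only $\rd_t$ present, and re-examine: in fact $\|N^{(n)} f^{(n)}\|_{H^3_{\delta'+1}} \ls C(A_0)C_i^2$ when the quadratic-in-$C_i$ terms (from $\rd_t$ of products, cf.\ the $\phi$ and $F_\bA$ terms via Proposition~\ref{product}) are counted, so $\rd_t$ of this combination lands in $H^2_{\delta'+1}$ with constant $C(A_0)C_i^2$. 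The main obstacle is precisely this bookkeeping of the weight shifts $\delta''\!\to\!\delta'\!\to\!\delta'+1$ together with the log-growing prefactors $e^{2\gamma^{(n)}}$, $N^{(n)}$, $1/N^{(n-1)}$ in the mixed $\bt$-term of \eqref{gamman}, ensuring that every constant that multiplies a full power of $C_i$ (and not a compensating $\ep$) either multiplies $T$ or stays at the correct power for the target estimate; I would handle this exactly as in Propositions~\ref{prpN}--\ref{prpbeta}, using Proposition~\ref{produit2} for the log weights and reserving the smallness of $T$ for \eqref{gamma.est.lower} alone.
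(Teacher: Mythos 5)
Your overall architecture matches the paper's: you reduce \eqref{gamma.est.lower} to Lemma~\ref{gamma.wave.EE} applied to \eqref{wave.whtgamma} with the inhomogeneity controlled in $H^3_{\delta'+1}$, and you obtain the $\partial_t$ estimates \eqref{gamma.est.higher.2}--\eqref{gamma.est.higher.3} algebraically by writing $\partial_t = e_0^{(n)} + (\beta^{(n)})^i\partial_i$ and substituting the wave equation. This is exactly the route taken in the paper. (One minor misconception: you mention ``rewritten via $e_0^{(n)}$ in place of $e_0^{(n-1)}$,'' but the LHS of \eqref{gamman} already carries $e_0^{(n)}$; the $e_0^{(n-1)}$'s on the RHS are simply part of the inhomogeneity and require no rewriting. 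Also, your worry that ``one extra derivative on $f^{(n)}$ would be needed'' for \eqref{gamma.est.higher.2} is backwards: the $H^2_{\delta'+1}$ bound needed there is weaker than the $H^3_{\delta'+1}$ bound already established, so no extra derivative is required.)

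There is, however, a genuine bookkeeping gap that breaks your derivation of \eqref{gamma.est.higher.3}. You claim $\|N^{(n)} f^{(n)}\|_{H^3_{\delta'+1}} \lesssim C(A_0) C_i$, but this bound is \emph{quadratic}, not linear, in $C_i$. The culprits are the pieces of the term
$\frac{e^{2\gamma^{(n)}}}{2N^{(n)}}e_0^{(n-1)}\big(\frac{e^{-2\gamma^{(n)}}}{N^{(n-1)}}\mathrm{div}\,\beta^{(n)}\big)$
that involve $\mathrm{div}(e_0^{(n-1)}\beta^{(n)})$ and $e_0^{(n-1)}\log N^{(n-1)}$. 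To bound $\mathrm{div}(e_0^{(n-1)}\beta^{(n)})$ in $H^3_{\delta'+1}$ you need $e_0^{(n-1)}\beta^{(n)}\in H^4_{\delta'}$, which by \eqref{estbeta.large} costs $A_0 C_i^2$; similarly the three-derivative bound on $\partial_t \widetilde N^{(n-1)}$ from \eqref{estrdtN.large} costs $C_i^2$. So the correct $H^3$ bound is $C(A_0) C_i^2$, which is fine for \eqref{gamma.est.lower} (there is a factor of $T$) and fine for \eqref{gamma.est.higher.2}, but does \emph{not} give the linear-in-$C_i$ estimate \eqref{gamma.est.higher.3}.

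The missing step is to re-derive the bound for these specific terms in the \emph{weaker} norm $H^1_{\delta'+1}$, where the lower-order estimates are linear: $\|e_0^{(n-1)}\beta^{(n)}\|_{H^2_{\delta'}}\lesssim A_0 C_i$ and $\|\partial_t\widetilde N^{(n-1)}\|_{H^2_\delta}\lesssim C_i$ suffice for $\mathrm{div}(e_0^{(n-1)}\beta^{(n)})\in H^1_{\delta'+1}$ and the $e_0^{(n-1)}\log N^{(n-1)}$ factor, respectively. Only with this improvement does one close \eqref{gamma.est.higher.3} with constant $C(A_0)C_i$. Without it, your argument proves \eqref{gamma.est.higher.3} only with constant $C(A_0)C_i^2$, which is too weak (the hierarchy of constants in the iteration relies on the linear bound).
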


\begin{proof}
	The strategy is to use Lemma~\ref{gamma.wave.EE} to estimate $\wht \gamma^{(n+1)}$. Notice that \eqref{gamman} is an equation for $\gamma^{(n+1)}$. Nevertheless, from \eqref{gamman} one can easily derive the following equation for $\wht \gamma^{(n+1)}$:
	
	\begin{equation}\label{wave.whtgamma}
	\begin{split}
	& -\f 2{N^{(n)}}e_0^{(n)} \left(\f{e_0^{(n)} \wht \gamma^{(n+1)}}{N^{(n)}}\right)+2\Delta \wht \gamma^{(n+1)} \\
	= & (\mbox{RHS of \eqref{gamman}})+2 \alpha \Delta(\chi(|x|){\log}(|x|)) + \frac{2 \alp}{N^{(n)}} e_0^{(n)} \left(\f{(\beta^{(n)})^i}{N^{(n)}} \rd_i (\chi(|x|){\log}(|x|) ) \right). 
	\end{split}
	\end{equation}
	
	Moreover, if $\wht \gamma^{(n+1)}$ satisfies the estimates as indicated in the statement of the proposition and $\alp$ is a fixed constant as in the initial data (and in particular time-independent), then $\gamma^{(n+1)}= -\alp \chi(|x|){\log} (|x|) + \wht \gamma^{(n+1)}$ is indeed the solution to \eqref{gamman}.

\textbf{Proof of \eqref{gamma.est.lower}: Estimates for RHS of \eqref{wave.whtgamma} in $H^3_{\de'+1}$.} 	By Lemma~\ref{gamma.wave.EE}, to prove \eqref{gamma.est.lower}, it suffices to show that $N^{(n)} \times (\mbox{RHS of \eqref{wave.whtgamma}})$ is bounded in the $H^3_{\de'+1}$ norm by $C(A_0, C_i)$. 
	We first estimate the RHS of \eqref{gamman}.
	For the term $\frac{2 (e_0^{(n-1)} \gamma^{(n)})^2}{N^{(n)} N^{(n-1)}}$, we write
	$$(e_0^{(n-1)} \gamma^{(n)})^2= \left(e_0^{(n-1)} \wht \gamma^{(n)} + \alpha\beta^{(n-1)}_i\partial_i(\chi(|x|){\log}(|x|))\right)^2.$$
	By Cauchy--Schwarz, it clearly suffices to bound $\frac{(e_0^{(n-1)} \wht \gamma^{(n)})^2}{(N^{(n-1)})}$ and $\f{\left(\alp (\beta^{(n-1)})^i\partial_i(\chi(|x|){\log}(|x|))\right)^2}{N^{(n-1)}}$ in $H^3_{\delta'+1}$.
	
	By H\"older's inequality \eqref{estN.small}, \eqref{estN.large}, \eqref{estgamma}, Proposition~\ref{prpsmallness}, \eqref{estgammabis} and Proposition~\ref{holder},
	\begin{equation*}
	\begin{split}
	&\left\|\frac{(e_0^{(n-1)} \wht \gamma^{(n)})^2}{(N^{(n-1)})}\right\|_{H^3_{\delta'+1}} = \left\|N^{(n-1)} \cdot \frac{e_0^{(n-1)} \wht \gamma^{(n)}}{(N^{(n-1)})}\cdot \frac{e_0^{(n-1)} \wht \gamma^{(n)}}{(N^{(n-1)})}\right\|_{H^3_{\delta'+1}} \\
	\lesssim & \left(1+ |N^{(n-1)}_{asymp}| + \|\wht N^{(n-1)}\|_{H^2_\de}\right) \left(\left\|\f{e_0^{(n-1)} \wht \gamma^{(n)}}{N^{(n-1)}} \right\|_{H^3_{\delta'+1}} \left\|\f{e_0^{(n-1)} \wht \gamma^{(n)}}{N^{(n-1)}} \right\|_{C^0_{\delta'+2}} \right.\\
	&\qquad \left.+ \left\|\f{e_0^{(n-1)} \wht \gamma^{(n)}}{N^{(n-1)}} \right\|_{W^2_{\delta'+\f 32, 4}} \left\|\f{e_0^{(n-1)} \wht \gamma^{(n)}}{N^{(n-1)}} \right\|_{W^1_{\de'+\f 32, 4}} \right) + \|\wht N^{(n-1)} \|_{H^3_\de} \left\|\f{e_0^{(n-1)} \wht \gamma^{(n)}}{N^{(n-1)}} \right\|_{C^0_{\delta'+2}}^2 \ls \ep C_i.
	\end{split}
	\end{equation*}
Next, we estimate using H\"older's inequality, Proposition~\ref{holder}, \eqref{est.constraints.0}, \eqref{estN.small}, \eqref{estN.large}, \eqref{estbeta.small} and \eqref{estbeta.large} (and the fact $|\f{1}{N^{(n)}}|\ls 1$) that
	\begin{equation}
	\begin{split}
	& \left\|\f{\left(\alp (\beta^{(n-1)})^i\partial_i(\chi(|x|){\log}(|x|))\right)^2}{N^{(n)}}\right\|_{H^3_{\delta'+1}} \\
	\lesssim &|\alp|^2 \left(1 + |N^{(n)}_{asymp}|^3 + \|\wht N^{(n)}\|_{H^2_{\de}}^2 \right) \left(\|\beta^{(n-1)}\|_{H^3_{\delta'}} \|\beta^{(n-1)}\|_{C^0_{\delta'+1}} + \|\beta^{(n-1)}\|_{W^2_{\delta'+\f 12, 4}} \|\beta^{(n-1)}\|_{W^1_{\delta'+\f 12, 4}} \right)\\
	& + |\alp|^2 \|\wht N^{(n)}\|_{H^3_{\de}} \|\bt^{(n-1)}\|_{C^0_{\de'+1}}^2 \lesssim \ep^5 C(A_0) C_i.
	\end{split}
	\end{equation}
For the second term on the RHS of \eqref{gamman}, by H\"older's inequality, Proposition~\ref{holder}, Lemma~\ref{produit2}, \eqref{est.constraints.0}, \eqref{estN.small}, \eqref{estN.large}, \eqref{estgamma}, \eqref{esttau} and Proposition~\ref{prpsmallness}, we have
	\begin{equation}
	\begin{split}
	&\| N^{(n)}(\tau^{(n)})^2 e^{2\gamma^{(n)}} \|_{H^3_{\de'+1}}\\
	\ls & \left(1+ |N^{(n)}_{asymp}| + \|\wht N^{(n)}\|_{W^2_{\de+\f 12,4}}\right) \left(1 + |\alp| + \|\wht \gamma^{(n)} \|_{W^2_{\de'+\f 12, 4}} \right)\left(\|\tau^{(n)}\|_{H^3_{\de''+1}} \|\tau^{(n)} \|_{C^0_{\de''+2}} \right.\\
	&\left.\qquad + \|\tau^{(n)}\|_{W^2_{\de''+\f 32, 4}} \|\tau^{(n)} \|_{W^1_{\de''+\f 32,4}} \right) + \left( \|\wht N^{(n)} \|_{H^3_{\de}} + \|\wht \gamma^{(n)} \|_{H^3_{\de'}} \right)\|\tau^{(n)} \|_{C^0_{\de''+2}}^2\\
	\ls & \ep C(A_1) C_i.
	\end{split}
	\end{equation}
	
	We next bound the most difficult term, $e^{2\gamma^{(n)}}e^{(n-1)}_0 \left( \frac{e^{-2\gamma^{(n)}}}{N^{(n-1)}}div(\beta^{(n)})\right)$, which is also the term that limits the weight allowable in the estimate. Distributing the $e_0^{(n-1)}$ derivative and commuting $e^{(n-1)}_0$ with $\bt^{(n)}$, it is clear that it suffices to control the following terms:
	\begin{equation}\label{gamma.est.lower.main}
	\begin{split}
	& \left\|e^{2\gamma^{(n)}}e^{(n-1)}_0 \left( \frac{e^{-2\gamma^{(n)}}}{N^{(n-1)}}div(\beta^{(n)})\right)\right\|_{H^3_{\delta'+1}}\\
	\leq & \left\|  \frac{2(e^{(n-1)}_0\gamma^{(n)})}{N^{(n-1)}}div\,\beta^{(n)}\right\|_{H^3_{\delta'+1}} + \left\| \frac{e^{(n-1)}_0 \log N^{(n-1)}}{N^{(n-1)}} div\,\beta^{(n)}\right\|_{H^3_{\delta'+1}} \\
	& + \left\|  \frac{\nab\beta^{(n)} \nab\beta^{(n-1)}}{N^{(n-1)}}\right\|_{H^3_{\delta'+1}} + \left\|  \frac{div (e^{(n-1)}_0  \beta^{(n)})}{N^{(n-1)}}\right\|_{H^3_{\delta'+1}}
	=:  I + II + III + IV. 
	\end{split}
	\end{equation}
	We begin with term I. By \eqref{est.constraints.0}, \eqref{estN.small}, \eqref{estN.large}, \eqref{estbeta.small}, \eqref{estbeta.large}, \eqref{estgamma}, Proposition{s}~\ref{prpsmallness}{, \ref{prpeogamma}}, Lemma~\ref{der} and Proposition~\ref{holder},
	\begin{equation}\label{gamma.est.lower.I}
	\begin{split}
	I  \ls & \left(|\alp| \left\| \beta^{(n-1)}  \f{\nab(\chi(|x|){\log} (|x|))}{N^{(n-1)}}\right\|_{W^1_{\de'+\f 32, 4}} +  \left\|\f{e^{(n-1)}_0 \wht \gamma^{(n)}}{N^{(n-1)}}\right\|_{W^1_{\de'+\f 32, 4}} \right)\|div \,\beta^{(n)}\|_{H^3_{\delta'+1}} \\
	& + \left(|\alp| \left\| \beta^{(n-1)} \f{\nab(\chi(|x|){\log} (|x|))}{N^{(n-1)}}\right\|_{H^3_{\de'+1}} +  \left\|\f{e^{(n-1)}_0 \wht \gamma^{(n)}}{N^{(n-1)}}\right\|_{H^3_{\de'+1}} \right)\|div \,\beta^{(n)}\|_{W^1_{\delta'+\f 32, 4}} \\
	\ls & \ep C(A_0) C_i.
	\end{split}
	\end{equation}
	For term II, we write $e_0^{(n-1)} \log N^{(n-1)} = \f{(\rd_t - (\bt^{(n-1)})^i \rd_i) (N_{asymp}^{(n-1)}\chi(|x|){\log}(|x|) + \wht N^{(n-1)})}{N^{(n-1)}}$. Hence, using Lemma~\ref{der}, Proposition~\ref{produit}, \eqref{estN.small}, \eqref{estN.large}, \eqref{estbeta.small}, \eqref{estbeta.large} and dropping the good $\f{1}{N^{(n-1)}}$ factor, we have
	\begin{equation}\label{gamma.est.lower.II}
	\begin{split}
	II \ls & \left(|\rd_t N_{asymp}^{(n-1)}| + |N_{asymp}^{(n-1)}| \|\bt^{(n-1)} (1+|x|)^{-1} \|_{H^3_{\de}} + \left\|\rd_t \wht N^{(n-1)}\right\|_{H^2_{\de}} + \left\|\beta^{(n-1)} \nab \wht N^{(n-1)}\right\|_{H^3_{\de}} \right) \\
	&\qquad \times\left(\left(1 + |N_{asymp}^{(n-1)}|^3 + \|\wht N^{(n-1)} \|_{H^2_{\de}}^3\right) \|div \,\beta^{(n)}\|_{H^3_{\delta'+1}}+ \|\wht N^{(n-1)} \|_{H^3_{\de}} \|div \,\beta^{(n)}\|_{C^0_{\delta'+2}}\right)\\
	& + \|\rd_t \wht N^{(n-1)} \|_{H^3_\de} \|div\, \bt^{(n)}\|_{C^0_{\de'+1}}\\
	\ls & C(A_0)C_i^2.
	\end{split}
	\end{equation}
	For term $III$, after expanding in terms of derivatives of $\bt^{(n-1)}$, $\beta^{(n)}$ and $\log N^{(n-1)}$ (and dropping the good $\f{1}{N^{(n)}}$ factor), there are the following possibilities: (1) Any factors of $\bt^{(n-1)}$, $\beta^{(n)}$ and $N^{(n-1)}$ have at most $2$ derivatives; (2) one factor of $\bt^{(n-1)}$ or $\beta^{(n)}$ has at least three derivatives; (3) there is a factor of three derivatives of $\log N^{(n-1)}$. In case (1), by \eqref{estN.small}, \eqref{estbeta.small} and Lemma~\ref{der}, we estimate
	\begin{equation}
	\begin{split}
	\ls &\left(1+ |N_{asymp}^{(n-1)}|^3+ \|\nab\log \wht N^{(n-1)}\|_{C^0_{\de+2}}^3 \right) \|\beta^{(n-1)}\|_{W^2_{\de'+\f 12, 4}}\|\beta^{(n)}\|_{W^2_{\de'+\f 12, 4}} \\
	&+ \left(1+ |N_{asymp}^{(n-1)}|+ \|\nab\log \wht N^{(n-1)}\|_{W^1_{\de+\f 32, 4}}\right) \left(1+ |N_{asymp}^{(n-1)}|+ \|\nab\log \wht N^{(n-1)}\|_{C^0_{\de+2}}\right) \\
	&\qquad\times\left(\|\beta^{(n-1)}\|_{C^1_{\de'+1}} \|\beta^{(n)}\|_{W^2_{\de'+\f 12, 4}}+ \|\beta^{(n)}\|_{C^1_{\de'+1}} \|\beta^{(n-1)}\|_{W^2_{\de'+\f 12, 4}}\right)\\
	\ls & \ep^2.
	\end{split}
	\end{equation}
	In case (2), by \eqref{estN.small}, \eqref{estbeta.small}, \eqref{estbeta.large}, Lemma~\ref{der} and Proposition~\ref{holder}, we have
	\begin{equation}
	\begin{split}
	\ls &\|\bt^{(n-1)}\|_{W^3_{\de'+\f 12, 4}}(\|\bt^{(n)}\|_{W^2_{\de'+\f 12, 4}} + \|\bt^{(n)}\|_{C^1_{\de'+1}} \|\nab\log \wht N^{(n-1)}\|_{C^0_{\de+2}}) + \|\bt^{(n-1)}\|_{H^4_{\de'}} \|\bt^{(n)}\|_{C^1_{\de'+1}}\\
	&+\|\bt^{(n)}\|_{W^3_{\de'+\f 12, 4}}(\|\bt^{(n-1)}\|_{W^2_{\de'+\f 12, 4}} + \|\bt^{(n-1)}\|_{C^1_{\de'+1}} \|\nab\log \wht N^{(n-1)}\|_{C^0_{\de+2}}) + \|\bt^{(n)}\|_{H^4_{\de'}} \|\bt^{(n-1)}\|_{C^1_{\de'+1}}\\
	\ls & \ep C(A_0) C_i.
	\end{split}
	\end{equation}
	Finally, in case (3), there must be only one derivative on $\bt^{(n-1)}$ and $\bt^{(n)}$ and hence using \eqref{estN.small}, \eqref{estN.large}, \eqref{estbeta.small} and Lemma~\ref{der}, the term can be bounded by
	\begin{equation}
	\ls \left(1+ |N_{asymp}^{(n-1)}|+ \|\nab\log \wht N^{(n-1)}\|_{H^2_{\de+1}} \right) \|\bt^{(n-1)}\|_{C^1_{\de'+1}}\|\bt^{(n)}\|_{C^1_{\de'+1}} \ls (1 + \ep + C_i)  \ep^2\ls \ep^2 C_i.
	\end{equation}
	Combining these we have
	\begin{equation}\label{gamma.est.lower.III}
	III\ls \ep C(A_0) C_i.
	\end{equation}
	It remains to bound the term $IV$, which is the hardest: it is the term that determines the weight we can put in. By \eqref{estN.small}, \eqref{estN.large}, \eqref{estbeta.large}, Lemma~\ref{der} and Proposition~\ref{holder},
	\begin{equation}\label{gamma.est.lower.IV}
	\begin{split}
	IV \ls & \left(1 + |N^{(n)}_{asymp}|^3 + \|\wht N^{(n)}\|_{W^2_{\de+\f 12, 4}}^3  \right)\|div (e_0^{(n-1)}\beta^{(n)})\|_{H^3_{\de'+1}}  + \|\wht N^{(n)}\|_{W^3_{\de+\f 12, 4}} \|div (e_0^{(n-1)}\beta^{(n)})\|_{L^4_{\de'+\f 32}}\\
	\ls & C(A_0) C_i^2.
	\end{split}
	\end{equation}
This concludes the estimates for \eqref{gamma.est.lower.main}, we summarize it as follows:
	\begin{equation}\label{gamma.est.lower.main.main}
	\begin{split}
	\left\|e^{2\gamma^{(n)}} e^{(n-1)}_0 \left( \frac{e^{-2\gamma^{(n)}}}{N^{(n-1)}}div(\beta^{(n)})\right)\right\|_{H^3_{\delta'+1}}\ls C(A_0) C_i^2.
	\end{split}
	\end{equation}
The fourth term on the RHS of \eqref{gamman}, after multiplying by $\wht N^{(n)}$, can be estimated in a trivial manner using Lemma~\ref{der}, \eqref{estN.small} and \eqref{estN.large}:
	$$\|\Delta N^{(n)}\|_{H^3_{\de'+1}} \ls |N_{asymp}^{(n)}| + \|\wht N^{(n)} \|_{H^5_{\de'-1}}\ls |N_{asymp}^{(n)}| + \|\wht N^{(n)} \|_{H^5_{\de}}\ls  C_i.$$
	Finally, the last two terms on the RHS of \eqref{gamman}, i.e., the terms involving $\rd\phi^{(n)}$ and $F_{\bA}^{(n)}$, are compactly supported, and can be controlled exactly as in the proof of Proposition~\ref{prpN} by
	\begin{equation}
	\begin{split}
	\left\|\f {\delta^{ij}}2\rd_i\phi^{(n)}\rd_j\phi^{(n)} + \f 12\sum_{\bA} e^{4\gamma^{(n)}} (F^{(n)}_{\bA})^2 \de_{ij} (L^{(n)}_{\bA})^i (L^{(n)}_{\bA})^j\right\|_{H^3_{\de'+1}} \ls \ep C(A_0) C_i.
	\end{split}
	\end{equation}

	We now bound the remaining terms on RHS of \eqref{wave.whtgamma} (i.e., those that are not on RHS of \eqref{gamman}), after multiplying by $N^{(n)}$. First, an easy explicit computation, together with \eqref{est.constraints.0}, \eqref{estN.small} and \eqref{estN.large}, show
	\begin{equation}
	\begin{split}
	\|\alpha N^{(n)} \Delta(\chi(|x|){\log}(|x|))\|_{H^3_{\de'+1}}\ls |\alp| (1+|N^{(n)}_{asymp}| + \|\wht N^{(n)}\|_{H^3_{\de}}) \ls \ep^2 C_i.
	\end{split}
	\end{equation}
	For the final term,  we have
	\begin{equation}\label{gamma.cutoff.term}
	\begin{split}
	 \left\|\alp e_0^{(n)} \left(\f{(\beta^{(n)})^i}{N^{(n)}} \rd_i (\chi(|x|){\log}(|x|) ) \right)\right\|_{H^3_{\de'+1}}
	\ls &|\alp| \left(\left\|e_0^{(n)} \f{\beta^{(n)}}{N^{(n)}} \right\|_{H^3_{\de'}} + \left\|\f{\beta^{(n)}\beta^{(n)}}{N^{(n)}}\right\|_{H^3_{\de'}} \right) \ls \ep^2 C(A_0) C_i^2,
	\end{split}
	\end{equation}
	where the estimate is obtained by writing $e_0^{(n)}(\bt^{(n)})^i = e_0^{(n-1)}(\bt^{(n)})^i - (\bt^{(n)} - \bt^{(n-1)})^j \rd_j (\beta^{(n)})^i$ and $e_0^{(n)}N^{(n)} = \rd_t N^{(n)} - (\bt^{(n)})^i \rd_i N^{(n)}$ and using \eqref{est.constraints.0}, \eqref{estN.small}, \eqref{estN.large}, \eqref{estrdtN.large}, \eqref{estbeta.small} and \eqref{estbeta.large}.
		
	We now apply the energy estimate in \eqref{h.wave.higher}. Combining all the estimates above, we have shown that the $N^{(n)}\times \mbox{(RHS of \eqref{wave.whtgamma})}$ is bounded in $H^3_{\de'+1}$ by $C(A_0, A_1, C_i)$. Since $T$ can depend on $C_i$, $A_0$ and $A_1$, by choosing $T$ sufficiently small, \eqref{h.wave.higher} implies \eqref{gamma.est.lower}.
		
	\textbf{Proof of \eqref{gamma.est.higher.2}{.}} First note that by equation \eqref{wave.whtgamma}, we need to control {(1)} $N^{(n)} \times \mbox{(RHS of \eqref{wave.whtgamma})}$ in $H^2_{\de'+1}$ by $C(A_0) C_i^2${, (2) $N^{(n)}\Delta\gamma^{(n+1)}$ in $H^2_{\de'+1}$ by $C(A_0) C_i^2$, and (3) $N^{(n)}\delta^{ik}\beta_k^{(n)} \partial_i \frac{e^{(n)}_0 \wht \gamma^{(n+1)}}{N^{(n)}}$ in $H^2_{\de'+1}$ by $C(A_0) C_i^2$}. {For (1)}, note that in the estimates we proved in the course of {obtaining} \eqref{gamma.est.lower}, indeed all terms {on RHS of \eqref{wave.whtgamma}} satisfy th{e desired} bound. {For $(2)$, we have the desired bound thanks to \eqref{estN.small} and the estimate \eqref{gamma.est.lower} that we just established above.} {Finally, for (3), we follow} the proof of \eqref{estgammabis} {and use additionally \eqref{estN.large} and \eqref{estbeta.large} to} obtain
		\begin{align*}
	&\left\|N^{(n)}\delta^{ik}\beta_k^{(n)} \partial_i\frac{e^{(n)}_0 \wht \gamma^{(n+1)}}{N^{(n)}}\right\|_{H^2_{\delta'+1}}\\
		\lesssim &(1+\|\nabla \log N^{(n)}\|_{H^2_{\delta}}+\|\beta^{(n)}\|_{H^3_{\delta'}})\left(\|\nabla \wht \gamma^{(n+1)}\|_{H^2_{\delta'+1}}+\sum_{|\alpha|\leq 2} \left\|\frac{e_0^{(n)}\nabla^\alpha\wht \gamma^{(n+1)}}{N^{(n)}}\right\|_{L^2_{\delta'+1+|\alpha|}}\right)\\ 
		\lesssim & C_i^2.
	\end{align*}
	
	\textbf{Proof of \eqref{gamma.est.higher.3} {step 1: Estimates for RHS of \eqref{wave.whtgamma} in $H^1_{\de'+1}$.}} Similar to in the proof of \eqref{gamma.est.higher.2}, {we first} bound $N^{(n)} \times \mbox{(RHS of \eqref{wave.whtgamma})}$ {- except that this time we bound it} in $H^1_{\de'+1}$ by $C(A_0) C_i$. Now the bounds in the proof of \eqref{gamma.est.lower} show that it remains to improve the estimates for the term $II$ in \eqref{gamma.est.lower.II}, the term $IV$ in \eqref{gamma.est.lower.IV} and the term \eqref{gamma.cutoff.term} when we replace the $H^3_{\de'+1}$ norm by the $H^1_{\de'+1}$ norm. We first estimate the term analogous to \eqref{gamma.est.lower.II}. By \eqref{estN.small}, \eqref{estN.large}, \eqref{estbeta.small}, Lemma~\ref{der} and Proposition~\ref{holder},
	\begin{equation*}
	\begin{split}
	&\left\| \frac{e^{(n-1)}_0 \log N^{(n-1)}}{N^{(n-1)}} div\,(\beta^{(n)})\right\|_{H^1_{\delta'+1}} \\
	\ls & \left(|\rd_t N^{(n-1)}_{asymp}| + |N_{asymp}^{(n-1)}| \left\|\bt^{(n-1)} (1+|x|)^{-1} \right\|_{C^1_{\de+1}} + \left\|\rd_t \wht N^{(n-1)}\right\|_{C^0_{\de+1}} + \left\|\beta^{(n-1)} \nab \wht N^{(n-1)}\right\|_{C^1_{\de+1}} \right)\\
	&\qquad \times \left( 1+ |N_{asymp}^{(n-1)}| + \left\|\nab\log \wht N^{(n-1)} \right\|_{C^0_{\de+2}}\right) \left\|div\,(\beta^{(n)})\right\|_{H^1_{\delta'+1}} + \left\|\rd_t \wht N^{(n-1)}\right\|_{W^1_{\de+\f 12,4}} \left\|div\,(\beta^{(n)})\right\|_{W^1_{\delta'+\f 32, 4}} \\
	\ls & \ep C_i.
	\end{split}
	\end{equation*}
	Next, we consider the term analogous to \eqref{gamma.est.lower.IV}. By Lemma~\ref{der}, Proposition~\ref{produit}, \eqref{estN.small}, \eqref{estbeta.small} and \eqref{estbeta.large},
	\begin{equation*}
	\begin{split}
	\left\|  \frac{div \,(e^{(n-1)}_0  \beta^{(n)})}{N^{(n-1)}}\right\|_{H^1_{\delta'+1}}
	\ls \left(1 + |N^{(n-1)}_{asymp}| + \|\wht N^{(n-1)}\|_{H^2_\de}\right)\| div \,(e^{(n-1)}_0  \beta^{(n)})\|_{H^1_{\de'+1}} \ls C(A_0) C_i.
	\end{split}
	\end{equation*}
	Finally, for the term analogous to \eqref{gamma.cutoff.term}, we have
	\begin{equation}
	\begin{split}
	 & \left\|\alp e_0^{(n)} \left(\f{(\beta^{(n)})^i}{N^{(n)}} \rd_i (\chi(|x|){\log}(|x|) ) \right)\right\|_{H^1_{\de'+1}} \\
	\ls &|\alp| \left( \left\|e_0^{(n)} \f{\beta^{(n)}}{N^{(n)}} \right\|_{H^1_{\de'}} + \left\|\f{\bt^{(n)}\bt^{(n)}}{N^{(n)}} \right\|_{H^1_{\de'}}\right) \ls \ep^2 C(A_0) C_i,
	\end{split}
	\end{equation}
	where the last estimate is obtained by writing $e_0^{(n)}(\bt^{(n)})^i = e_0^{(n-1)}(\bt^{(n)})^i - (\bt^{(n)} - \bt^{(n-1)})^j \rd_j (\beta^{(n)})^i$ and $e_0^{(n)}N^{(n)} = \rd_t N^{(n)} - (\bt^{(n)})^i \rd_i N^{(n)}$ and using \eqref{est.constraints.0}, \eqref{estN.small}, \eqref{estN.large} and \eqref{estbeta.small}.
	
	\textbf{Proof of \eqref{gamma.est.higher.3} {step 2: Completion of the proof.}}
	{As in the proof of \eqref{gamma.est.higher.2}, it remains to control $N^{(n)}\Delta\gamma^{(n+1)}$ and $N^{(n)}\delta^{ik}\beta_k^{(n)} \partial_i \frac{e^{(n)}_0 \wht \gamma^{(n+1)}}{N^{(n)}}$. The former term can be controlled using \eqref{estN.small} and \eqref{gamma.est.lower} (that we proved above) as follows:
	\begin{equation*}
	\begin{split}
	&\|N^{(n)}\Delta\gamma^{(n+1)}\|_{H^1_{\delta'+1}}\ls (1+|N_{asymp}| + \|\wht N^{(n)} \|_{C^1_{\de+1}})(\ep^2 + \|\nab\gamma^{(n+1)}\|_{H^2_{\de'+1}})\ls \ep^2 + C_i\ls C_i.
	\end{split}
	\end{equation*}}
	{Finally, the remaining term can be estimated using \eqref{estN.small}, \eqref{estbeta.small} and the argument leading to \eqref{estgammabis}:}
		\begin{align*}
			&\left\|N^{(n)}\delta^{ik}\beta_k^{(n)} \partial_i\frac{e^{(n)}_0 \wht \gamma^{(n+1)}}{N^{(n)}}\right\|_{H^1_{\delta'+1}}\\
			\lesssim &(1+\|\nabla \log N^{(n)}\|_{H^1_{\delta}}+\|\beta^{(n)}\|_{H^2_{\delta'}})\left(\|\nabla \wht \gamma^{(n+1)}\|_{H^1_{\delta'+1}}+\sum_{|\alpha|\leq 1} \left\|\frac{e_0^{(n)}\nabla^\alpha\wht \gamma^{(n+1)}}{N^{(n)}}\right\|_{L^2_{\delta'+1+|\alpha|}}\right)\\ 
			\lesssim &(1+C\ep)C_i.
			\end{align*}
This concludes the proof of the proposition.

\end{proof}

\begin{prp} [Estimates for $\tau^{(n+1}$]\label{prptau}
For $n\geq 2$, the following estimates hold:
	\begin{align}
	\label{tau.higher}\|\tau^{(n+1)}\|_{H^3_{\delta''+1}} \leq & C(A_0) C_i, \\
	\label{rdttau.lower}\|\partial_t\tau^{(n+1)}\|_{H^1_{\delta''+1}}\leq & C(A_1) C_i,\\
	\label{rdttau.higher}\|\partial_t\tau^{(n+1)}\|_{H^2_{\delta''+1}} \leq & C(A_1) C_i^2.
	\end{align}
\end{prp}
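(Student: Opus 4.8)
The plan is to estimate $\tau^{(n+1)}$ directly from its defining formula \eqref{taun}, namely $\tau^{(n+1)}= \frac{e^{-2\gamma^{(n)}}}{N^{(n)}}\left( -2e_0^{(n-1)} \gamma^{(n)} +\operatorname{div}(\beta^{(n)})\right)$, which expresses $\tau^{(n+1)}$ algebraically in terms of quantities already controlled in the induction hypothesis and in Propositions~\ref{prpN}, \ref{prpbeta}, \ref{prpgamma}. For \eqref{tau.higher}, I would split $e_0^{(n-1)}\gamma^{(n)} = e_0^{(n-1)}\wht\gamma^{(n)} - \alp (\beta^{(n-1)})^i\partial_i(\chi(|x|)\log(|x|))$ and estimate each piece in $H^3_{\delta''+1}$: the $\frac{e_0^{(n-1)}\wht\gamma^{(n)}}{N^{(n)}}$ term is handled by writing it as $\frac{N^{(n-1)}}{N^{(n)}}\cdot\frac{e_0^{(n-1)}\wht\gamma^{(n)}}{N^{(n-1)}}$ and using \eqref{estgammabis} (Proposition~\ref{prpeogamma}) together with the bounds on $N^{(n)}$, $N^{(n-1)}$ from \eqref{estN.small}, \eqref{estN.large}; the $\operatorname{div}(\beta^{(n)})$ term uses \eqref{betaest.high}; the log-weighted term involving $\alp$ is small of order $\ep^2 C_i$ thanks to \eqref{est.constraints.0} and \eqref{estbeta.large}. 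Multiplying by $e^{-2\gamma^{(n)}}$ costs a small growth factor $|x|^{C\ep^2}$, which is absorbed by the weight change $\delta' \mapsto \delta''$. The product estimates of Propositions~\ref{produit}, \ref{holder} and Lemma~\ref{produit2}/\ref{der} are used throughout; the outcome is a bound of the form $C(A_0) C_i$, since $\operatorname{div}\beta^{(n)}$ and $e_0^{(n-1)}\wht\gamma^{(n)}$ each contribute at the $C_i$ level.

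For the time-derivative estimates \eqref{rdttau.lower} and \eqref{rdttau.higher}, I would apply $\partial_t$ to \eqref{taun}. Writing $\partial_t = e_0^{(n-1)} + (\beta^{(n-1)})^i\partial_i$, the derivative of $e_0^{(n-1)}\gamma^{(n)}$ produces (after commuting) terms involving $e_0^{(n-1)}\big(\frac{e_0^{(n-1)}\wht\gamma^{(n)}}{N^{(n-1)}}\big)$ — which is controlled via the wave equation \eqref{gamman}/\eqref{wave.whtgamma} rewritten to trade the $e_0^{(n-1)} e_0^{(n-1)}$ term for $\Delta\wht\gamma^{(n)}$ plus lower-order terms — and $\partial_t\big(\frac{e_0^{(n-1)}\wht\gamma^{(n)}}{N^{(n-1)}}\big)$, which is bounded by \eqref{estrdtgamma}. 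The derivative of $\operatorname{div}(\beta^{(n)})$ gives $\operatorname{div}(e_0^{(n-1)}\beta^{(n)})$ plus commutator terms, controlled by \eqref{estbeta.large}. One also differentiates the coefficients $e^{-2\gamma^{(n)}}$ and $\frac{1}{N^{(n)}}$, picking up $e_0^{(n)}\gamma^{(n)}$, $\partial_t N^{(n)}$ factors bounded by \eqref{estN.large}, \eqref{estrdtN.large}, \eqref{estgamma}. In $H^1_{\delta''+1}$ all these assemble to $C(A_1)C_i$ (linear in $C_i$), while in $H^2_{\delta''+1}$ one more derivative forces products where two factors each carry $C_i$-sized norms, yielding $C(A_1)C_i^2$.

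The main obstacle I anticipate is the term $\partial_t\big(e^{-2\gamma^{(n)}} N^{(n)-1}\operatorname{div}(\beta^{(n)})\big)$ after using the wave equation to re-express the double-$e_0$ derivative of $\gamma^{(n)}$: this is precisely the same structure that appeared as the ``hardest term'' $IV$ (and the combination $I$–$IV$) in the proof of Proposition~\ref{prpgamma}, namely $\operatorname{div}(e_0^{(n-1)}\beta^{(n)})$ divided by $N^{(n-1)}$, which is the term dictating the admissible weight. Here it is the loss $\delta' \to \delta'' = \delta'-\ep$ (rather than $\delta'$) that provides the room to absorb the slow growth of $e^{-2\gamma^{(n)}}N^{(n)}$; one must verify carefully that the weight budget is not exceeded and that the $e_0^{(n-1)}\beta^{(n)}$ bounds from \eqref{estbeta.large} (in $H^2_{\delta'}$ for the $C_i$-level estimate, $H^4_{\delta'}$ for the $C_i^2$-level) suffice after the relevant product and commutator estimates. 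The remaining terms are all handled by the same menu of weighted Hölder, product (Propositions~\ref{produit}, \ref{product}, \ref{holder}) and derivative (Lemma~\ref{der}) estimates already used repeatedly above; I would simply indicate the representative computation and omit the routine details, as is done in the analogous propositions.
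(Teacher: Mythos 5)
Your direct estimation of $\tau^{(n+1)}$ from the defining formula \eqref{taun} is exactly the paper's strategy, and for \eqref{tau.higher} your decomposition of $e_0^{(n-1)}\gamma^{(n)}$ into a $\wht\gamma$-piece and a $\log$-weighted piece, writing $\frac{e_0^{(n-1)}\wht\gamma^{(n)}}{N^{(n)}} = \frac{N^{(n-1)}}{N^{(n)}}\cdot\frac{e_0^{(n-1)}\wht\gamma^{(n)}}{N^{(n-1)}}$, and absorbing the growth of $e^{-2\gamma^{(n)}}/N^{(n)}$ via the weight shift $\delta'\mapsto\delta''$, all match the paper's argument (cf.\ \eqref{tau.metric.lower}, \eqref{tau.metric.higher} and Proposition~\ref{prpeogamma}).

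For the time-derivative estimates, however, your proposed use of the wave equation to trade $e_0^{(n-1)}e_0^{(n-1)}\gamma^{(n)}$ for $\Delta\gamma^{(n)}$ is both unnecessary and, as stated, would not close. The wave equation satisfied by $\gamma^{(n)}$ is \eqref{gamman} with all indices shifted down by one, so its right-hand side contains $\frac{e^{2\gamma^{(n-1)}}}{2N^{(n-1)}}e_0^{(n-2)}\bigl(\frac{e^{-2\gamma^{(n-1)}}}{N^{(n-2)}}\mathrm{div}\,\beta^{(n-1)}\bigr)$, hence $\mathrm{div}(e_0^{(n-2)}\beta^{(n-1)})$ — \emph{not} $\mathrm{div}(e_0^{(n-1)}\beta^{(n)})$ as you write. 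But the induction hypothesis explicitly declines to assume any bound on $e_0^{(n-2)}\beta^{(n-1)}$ (see the caveat immediately following \eqref{estbeta.large}), so this route is blocked. The paper never touches the wave equation here: the induction hypothesis \eqref{estrdtgamma} already controls $\partial_t\bigl(\frac{e_0^{(n-1)}\wht\gamma^{(n)}}{N^{(n-1)}}\bigr)$ in $H^1_{\delta'+1}$ (linear in $C_i$) and $H^2_{\delta'+1}$ (quadratic in $C_i$), which is precisely why that quantity was put into the bootstrap in the first place. One simply writes $e_0^{(n-1)}\wht\gamma^{(n)}=N^{(n-1)}\cdot\frac{e_0^{(n-1)}\wht\gamma^{(n)}}{N^{(n-1)}}$ and applies the product rule and Proposition~\ref{produit}. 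Since you do cite \eqref{estrdtgamma} for this term as well, your proof is salvageable — just delete the wave-equation step. Also note that the term you flag as the main obstacle, $\partial_t\bigl(e^{-2\gamma^{(n)}}(N^{(n)})^{-1}\mathrm{div}\,\beta^{(n)}\bigr)$, does not arise from any wave-equation substitution; it comes directly from $\partial_t$ hitting the second summand in \eqref{taun}, and is controlled (with the correct $C_i$-scaling for the $H^1$ vs.\ $H^2$ cases) by \eqref{estbeta.large} together with \eqref{tau.metric.lower}--\eqref{tau.metric.higher}.
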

\begin{proof}
	In view of \eqref{taun}, the estimates for $\tau^{(n+1)}$ can be obtained by directly controlling 
	$$\frac{e^{-2\gamma^{(n)}}}{N^{(n)}}\left( -2e_0^{(n-1)} \gamma^{(n)} +div\, (\beta^{(n)})\right).$$
	Similarly, to bound $\rd_t \tau^{(n+1)}$, it suffices to estimate the $\rd_t$ derivative of the above quantity.
To this end, we use the estimates in \eqref{estN.small}, \eqref{estN.large}, \eqref{estbeta.small}, \eqref{estbeta.large}, \eqref{estgamma} and \eqref{estrdtgamma}. Let us first control the factor $\frac{e^{-2\gamma^{(n)}}}{N^{(n)}}$. Notice that it has growing factors in $|x|^{2\alpha}$ or $N_{asymp}^{(n-1)}{\log}(|x|)$, which ultimately contributes to the fact that we need to worsen the weight in our estimates -- using $\de''$ instead of $\de'$. Nevertheless, for $\ep$ sufficiently small, for $|x|$ large, these growing factors can be controlled by $|x|^{\f{\ep}{10}}$. Hence, by Propositions~\ref{holder}, \ref{produit} and \ref{prpsmallness}, we have
	\begin{equation}\label{tau.metric.lower}
	\left\|\frac{e^{-2\gamma^{(n)}}}{N^{(n)}}\right\|_{H^2_{-\f{\ep}{10}-1}} +  \left\|\frac{e^{-2\gamma^{(n)}} N^{(n-1)}}{N^{(n)}}\right\|_{H^2_{-\f{\ep}{10}-1}} \ls 1.
	\end{equation}
	Also, by \eqref{estN.large}, \eqref{estgamma}, \eqref{estrdtgamma}{,} Proposition{s}~\ref{prpsmallness} {and \ref{prpeogamma}}, as well as Propositions~\ref{holder} and \ref{produit}, we also have
	\begin{equation}\label{tau.metric.higher}
	\left\|\frac{e^{-2\gamma^{(n)}}}{N^{(n)}}\right\|_{H^3_{-\f{\ep}{10}-1}} +  \left\|\frac{e^{-2\gamma^{(n)}} N^{(n-1)}}{N^{(n)}}\right\|_{H^3_{-\f{\ep}{10}-1}} + \left\|\rd_t \frac{e^{-2\gamma^{(n)}}}{N^{(n)}}\right\|_{H^2_{-\f{\ep}{10}-1}} +  \left\|\rd_t \frac{e^{-2\gamma^{(n)}} N^{(n-1)}}{N^{(n)}}\right\|_{H^2_{-\f{\ep}{10}-1}} \ls C_i.
	\end{equation}
Using \eqref{tau.metric.lower} and \eqref{tau.metric.higher} together with Proposition~\ref{produit}, \eqref{estbeta.small}, \eqref{estbeta.large}, \eqref{estgamma}{,} Proposition{s}~\ref{prpsmallness} {and \ref{prpeogamma}}, we prove \eqref{tau.higher}:
\begin{align*}
&	\left\|\frac{e^{-2\gamma^{(n)}}}{N^{(n)}}\left( -2e_0^{(n-1)} \gamma^{(n)} +div(\beta^{(n)})\right)\right\|_{H^3_{\delta''+1}}\\
	\ls & \left\|\frac{e^{-2\gamma^{(n)}} N^{(n-1)}}{N^{(n)}}\right\|_{H^2_{-\f{\ep}{10}-1}} \left\|\f{e_0^{(n-1)} \gamma^{(n)}}{N^{(n-1)}}\right\|_{H^3_{\de'+1}} +  \left\|\frac{e^{-2\gamma^{(n)}}}{N^{(n)}}\right\|_{H^2_{-\f{\ep}{10}-1}} \|div\, \beta^{(n)}\|_{H^3_{\de'+1}}\\
	 & + \left\|\frac{e^{-2\gamma^{(n)}} N^{(n-1)}}{N^{(n)}}\right\|_{H^3_{-\f{\ep}{10}-1}} \left\|\f{e_0^{(n-1)} \gamma^{(n)}}{N^{(n-1)}}\right\|_{C^0_{\de'+2}} +  \left\|\frac{e^{-2\gamma^{(n)}}}{N^{(n)}}\right\|_{H^3_{-\f{\ep}{10}-1}} \|div\, \beta^{(n)}\|_{C^0_{\de'+2}}
	\ls C(A_0) C_i.
	\end{align*}
Next, we prove \eqref{rdttau.lower}. By Proposition~\ref{produit}, \eqref{estbeta.small}, \eqref{estbeta.large}, \eqref{estgamma}, \eqref{estrdtgamma}, \eqref{tau.metric.lower}{,} \eqref{tau.metric.higher} {and Proposition~\ref{prpeogamma}},
\begin{align*} 
	&\left\|\partial_t\left(\frac{e^{-2\gamma^{(n)}}}{N^{(n)}}\left( -2e_0^{(n-1)} \gamma^{(n)} +div(\beta^{(n)})\right)\right)\right\|_{H^1_{\delta''+1}}\\
	\lesssim & \left\|\frac{e^{-2\gamma^{(n)}} N^{(n-1)}}{N^{(n)}}\right\|_{H^2_{-\f{\ep}{10}-1}} \left(\left\|\rd_t \f{e_0^{(n-1)} \wht \gamma^{(n)}}{N^{(n-1)}}\right\|_{H^1_{\de'+1}} + |\alp| \left\|\rd_t \f{\bt^{(n-1)} \nab (\chi(|x|) {\log} (|x|))}{N^{(n-1)}}\right\|_{H^1_{\de'+1}} \right)\\
	& + \left\|\rd_t \frac{e^{-2\gamma^{(n)}} N^{(n-1)}}{N^{(n)}}\right\|_{H^1_{-\f{\ep}{10}-1}} \left(\left\|\f{e_0^{(n-1)} \wht \gamma^{(n)}}{N^{(n-1)}}\right\|_{H^2_{\de'+1}} + |\alp| \left\|\f{\bt^{(n-1)} \nab (\chi(|x|) {\log} (|x|))}{N^{(n-1)}}\right\|_{H^2_{\de'+1}} \right)\\
	& + \left\|\frac{e^{-2\gamma^{(n)}}}{N^{(n)}}\right\|_{H^2_{-\f{\ep}{10}-1}} \left\|\rd_t (div\,\beta^{(n)})\right\|_{H^1_{\de'+1}} + \left\|\rd_t \frac{e^{-2\gamma^{(n)}}}{N^{(n)}}\right\|_{H^2_{-\f{\ep}{10}-1}} \left\|div\,\beta^{(n)}\right\|_{H^1_{\de'+1}}	\ls  C(A_1) C_i.
	\end{align*}

	Finally, we prove \eqref{rdttau.higher}. Again, using Proposition~\ref{produit}, \eqref{estbeta.small}, \eqref{estbeta.large}, \eqref{estgamma}, \eqref{estrdtgamma}, \eqref{tau.metric.lower}{,} \eqref{tau.metric.higher} {and Proposition~\ref{prpeogamma}},
we obtain
\begin{align*} 
	&\left\|\partial_t\left(\frac{e^{-2\gamma^{(n)}}}{N^{(n)}}\left( -2e_0^{(n-1)} \gamma^{(n)} +div(\beta^{(n)})\right)\right)\right\|_{H^2_{\delta''+1}}\\
	\lesssim & \left\|\frac{e^{-2\gamma^{(n)}} N^{(n-1)}}{N^{(n)}}\right\|_{H^2_{-\f{\ep}{10}-1}} \left(\left\|\rd_t \f{e_0^{(n-1)} \wht \gamma^{(n)}}{N^{(n-1)}}\right\|_{H^2_{\de'+1}} + |\alp| \left\|\rd_t \f{\bt^{(n-1)} \nab (\chi(|x|) {\log} (|x|))}{N^{(n-1)}}\right\|_{H^2_{\de'+1}} \right)\\
	& + \left\|\rd_t \frac{e^{-2\gamma^{(n)}} N^{(n-1)}}{N^{(n)}}\right\|_{H^2_{-\f{\ep}{10}-1}} \left(\left\|\f{e_0^{(n-1)} \wht \gamma^{(n)}}{N^{(n-1)}}\right\|_{H^2_{\de'+1}} + |\alp| \left\|\f{\bt^{(n-1)} \nab (\chi(|x|) {\log} (|x|))}{N^{(n-1)}}\right\|_{H^2_{\de'+1}} \right)\\
	& + \left\|\frac{e^{-2\gamma^{(n)}}}{N^{(n)}}\right\|_{H^2_{-\f{\ep}{10}-1}} \left\|\rd_t (div\,\beta^{(n)})\right\|_{H^2_{\de'+1}} + \left\|\rd_t \frac{e^{-2\gamma^{(n)}}}{N^{(n)}}\right\|_{H^2_{-\f{\ep}{10}-1}} \left\|div\,\beta^{(n)}\right\|_{H^2_{\de'+1}}	\ls  C(A_1) C_i^2.
	\end{align*}	
\end{proof}

\begin{prp}[Estimates for $H^{(n+1)}$]\label{prpH}
For $n\geq 2$, the following estimates hold:
\begin{align}
\label{Hnlower} \|e_0^{(n+1)} H^{(n+1)}\|_{H^{3}_{\de+1}}\leq & 20 C_i,\\
\label{Hnfinal} \|H^{(n+1)}\|_{H^3_{\delta+1}}\leq & 2 C_i.
\end{align}
\end{prp}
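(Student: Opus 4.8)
The plan is to estimate $H^{(n+1)}$ by integrating the transport equation \eqref{hijn} along the flow of $e_0^{(n+1)} = \partial_t - (\beta^{(n+1)})^k \partial_k$, treating \eqref{hijn} as an inhomogeneous transport equation for $H^{(n+1)}$ with a source built out of the $n$-th iterate quantities together with $N^{(n)}$, $\beta^{(n+1)}$, $\gamma^{(n)}$, $\phi^{(n)}$, $L^{(n)}_{\bA}$, $F^{(n)}_{\bA}$. The scheme is: (i) first prove the higher-order bound \eqref{Hnlower} for $e_0^{(n+1)} H^{(n+1)}$ directly from \eqref{hijn} by estimating the right-hand side in $H^3_{\delta+1}$; (ii) then obtain \eqref{Hnfinal} by integrating $e_0^{(n+1)} H^{(n+1)}$ in time (exploiting the smallness of $T$), starting from the initial data bound $\|H^{(n+1)}\restriction_{\Sigma_0}\|_{H^3_{\delta+1}}\le C_i$ from Corollary~\ref{cor.data}.

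\textbf{Step 1: the source term in $H^3_{\delta+1}$.} First I would go through the terms on the right-hand side of \eqref{hijn} one by one. The quadratic term $e^{2\gamma^{(n)}} N^{(n)} H^{(n)}\breve\otimes H^{(n)}$ and the $\partial\beta^{(n)}\cdot H^{(n)}$ term are controlled using \eqref{estN.small}, \eqref{estN.large}, \eqref{estbeta.small}, \eqref{estbeta.large}, \eqref{estgamma}, \eqref{estH}, Proposition~\ref{produit}, Lemma~\ref{der} and Proposition~\ref{holder}, exactly in the style of the estimates in Propositions~\ref{prpN} and \ref{prpbeta}: at least one factor of $H^{(n)}$ is placed in the $O(\ep^2)$-small $C^0_{\delta+2}$ or $W^1_{\delta+\frac32,4}$ norm from Proposition~\ref{prpsmallness} while the other sits in $H^3_{\delta+1}$, giving a contribution $\ls \ep C_i$. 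The purely metric term $-(\partial_i\breve\otimes\partial_j)N^{(n)} + \partial_i\gamma^{(n)}\breve\otimes\partial_j N^{(n)}$ is controlled via Lemma~\ref{der} and \eqref{estN.large}, \eqref{estgamma}: e.g. $\|\nabla^2 N^{(n)}\|_{H^3_{\delta+1}}\ls \|\wht N^{(n)}\|_{H^5_\delta} + |N^{(n)}_{asymp}| \ls C_i$, and the term $\partial\gamma^{(n)}\cdot\partial N^{(n)}$ is a product of $H^3_{\delta'+1}$ (from \eqref{estgamma}) and $C^1$-type factors, giving $\ls \ep^2 C_i$ after using $\alpha\ls\ep^2$ to handle the $\chi\log$ piece of $\gamma^{(n)}$, and where the slight weight loss $\delta\leftarrow\delta'$ is absorbed by $\ep$-smallness (this is consistent with the choice $\delta''=\delta'-\ep$ etc.). The compactly supported matter terms $N^{(n)}\partial\phi^{(n)}\breve\otimes\partial\phi^{(n)}$ and $\sum_{\bA} N^{(n)}(F^{(n)}_{\bA})^2 L^{(n)}_{\bA}\breve\otimes L^{(n)}_{\bA}$ and the $e^{2\gamma^{(n)}}(N^{(n)})^{-1}\delta_{ij}\sum_{\bA}(F^{(n)}_{\bA})^2(e_0 u_{\bA})^2$-type term are handled precisely as in Proposition~\ref{prpN} (using the unweighted product estimate of Proposition~\ref{product} together with the compact support in $B(0,2R)$, see Footnote~\ref{cpt.supp.product}), using \eqref{estphi}, \eqref{estF}, \eqref{estu.small}, \eqref{estu}: each contributes $\ls \ep C(A_0) C_i$. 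Summing, $\|\mbox{RHS of \eqref{hijn}}\|_{H^3_{\delta+1}}\ls C(A_0)C_i$, with constant independent of $\ep$ in the dangerous pieces and $\ep$-small in the terms coming with higher $A$-constants.

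\textbf{Step 2: transport estimate.} I would then prove a weighted $H^3_{\delta+1}$ transport estimate along $e_0^{(n+1)}$: differentiating \eqref{hijn} up to three spatial derivatives, commuting $[\nabla^\alpha, e_0^{(n+1)}]$ which produces extra terms $\nabla^{\alpha_1}\nabla\beta^{(n+1)}\cdot\nabla^{\alpha_2}\nabla H^{(n+1)}$ (controlled since $\|\beta^{(n+1)}\|_{H^4_{\delta'}}\ls C_i$ by \eqref{betaest.high} and the low norms of $\beta^{(n+1)}$ are $\ep^2$-small), and estimating $\frac{d}{dt}\|H^{(n+1)}\|^2_{H^3_{\delta+1}}$ by a Grönwall-type argument: $\frac{d}{dt}\|H^{(n+1)}\|_{H^3_{\delta+1}}\ls \|\beta^{(n+1)}\|_{C^1}\|H^{(n+1)}\|_{H^3_{\delta+1}} + \|\mbox{RHS of \eqref{hijn}}\|_{H^3_{\delta+1}}$; here the weight $(1+|x|^2)^{(\delta+1+|\alpha|)/2}$ interacts with $\beta^{(n+1)}$ just as in the proof of Lemma~\ref{gamma.wave.EE} (the derivative of the weight is controlled since $\beta^{(n+1)}$ is bounded and decaying). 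Then \eqref{Hnlower} follows by reading $e_0^{(n+1)} H^{(n+1)}$ directly off \eqref{hijn}: $\|e_0^{(n+1)} H^{(n+1)}\|_{H^3_{\delta+1}}\le \|\mbox{RHS of \eqref{hijn}}\|_{H^3_{\delta+1}}\le 20 C_i$, where the constant $20$ absorbs the $C(A_0)$ after choosing $\ep$ small in the terms carrying $A$-constants and using that the genuinely $O(C_i)$-sized contributions ($\nabla^2 N^{(n)}$ and the initial-data-comparable pieces) sum to something below $20 C_i$ — importantly this is an honest estimate, not requiring smallness of $T$. Finally, for \eqref{Hnfinal}, I integrate in time: $\|H^{(n+1)}\|_{H^3_{\delta+1}}(t) \le C(1 + CT)\left(\|H^{(n+1)}\|_{H^3_{\delta+1}}(0) + T\sup_{[0,T]}\|e_0^{(n+1)} H^{(n+1)}\|_{H^3_{\delta+1}}\right)\le C(1+CT)(C_i + 20 C_i T)$, and choosing $T$ small (depending on $C_i$, $A_0$, etc.) forces the right-hand side below $2C_i$.

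\textbf{Main obstacle.} The delicate point is maintaining the correct dependence on the constants: \eqref{Hnlower} must hold with the explicit numerical constant $20$ (not merely $C(A_0)C_i$), so I must make sure that every term in the source of \eqref{hijn} that is \emph{not} $\ep$-small comes with an $O(1)$ (not $O(A_0)$) coefficient. The genuinely non-small pieces are $\nabla^2 N^{(n)}$ (bounded by $\|\wht N^{(n)}\|_{H^5_\delta} + |N^{(n)}_{asymp}|\ls C_i$ from \eqref{estN.large}, with implicit constant independent of all $A$'s) and the pieces whose bound matches the initial data; everything else — the matter terms, the $\partial\beta\cdot H$ term, the quadratic $H\cdot H$ term — must be shown to carry an extra $\ep$ (or extra $\ep^2$) factor so that $\ep\cdot C(A_0)C_i$ and $\ep^2 C(A_1)C_i$ are negligible. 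This is exactly the "reductive structure" the introduction advertises, and the bookkeeping, though routine given Propositions~\ref{prpsmallness}, \ref{prpeogamma}, \ref{prpN}, \ref{prpbeta}, \ref{prptau}, is where care is required; I would present it by analogy with the case analysis in the proof of Proposition~\ref{prpN} and omit the repetitive details.
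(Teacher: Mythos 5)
Your proposal is correct and matches the paper's proof: you bound the right-hand side of \eqref{hijn} in $H^3_{\de+1}$ term by term (isolating $\partial_i\breve\otimes\partial_j N^{(n)}$ as the only $O(C_i)$-sized contribution and showing every other term is $\ep$-small via Propositions~\ref{prpsmallness} and \ref{prpN}) to get \eqref{Hnlower}, and then obtain \eqref{Hnfinal} by a weighted transport estimate along $e_0^{(n+1)}$ with the commutator $[\nabla^\alp, e_0^{(n+1)}]$ controlled via \eqref{betaest.high} and the factor of $T$ absorbed. The only cosmetic discrepancy is in the final display of Step 2, where the multiplicative constant should be of the form $(1+CT)$ rather than $C(1+CT)$ — otherwise the factor $2$ in \eqref{Hnfinal} would not survive — but your description of the integration-by-parts/Gr\"onwall step makes it clear you intend the sharp version.
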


\begin{proof}
\textbf{Proof of \eqref{Hnlower}.} In order to estimate $(e_0^{(n+1)}) (H^{(n+1)})_{ij}$ in $H^3_{\de+1}$, it obviously suffices to bound the RHS of \eqref{hijn} in the $H^3_{\de+1}$ norm by $20C_i$. We consider each term on RHS of \eqref{hijn}: in fact, all but two terms can be controlled in the stronger $H^3_{\de+2}$ space.

First notice that the terms $2e^{2\gamma^{(n)}}N^{(n)}(H^{(n)})_i{ }^\ell (H^{(n)})_{j\ell}$, $N^{(n)}\partial_i \phi^{(n)} \breve{\otimes}\partial_j \phi^{(n)}$ and $N^{(n)} (F^{(n)}_{\bA})^2  (L^{(n)}_{{\bA}})_i\breve{\otimes}(L^{(n)}_{\bA})_j$ are analogous to terms in \eqref{lapsen} and can be treated as in Proposition~\ref{prpN} so that they are bounded as $\|\cdot \|_{H^3_{\de+2}}\leq \ep C(A_{0}) C_i$; we omit the details.

The remaining terms can be treated as follows. For $\partial_{(j} (\beta^{(n)})^k (H^{(n)})_{i)k}$, we use \eqref{estbeta.small}, \eqref{estbeta.large}, \eqref{estH}, Proposition~\ref{prpsmallness} and Lemma~\ref{der} to get
\begin{equation*}
\begin{split}
&\|\partial_{(j} (\beta^{(n)})^k (H^{(n)})_{i)k} \|_{H^3_{\de+2}}\\
\ls & \|\nab \beta^{(n)}\|_{C^0_{\de'+2}} \|H^{(n)}\|_{H^3_{\de+1}} + \|\nab \beta^{(n)}\|_{H^3_{\de'+1}} \|H^{(n)}\|_{C^0_{\de+2}} \\
& + \|\nab \beta^{(n)}\|_{W^2_{\de'+\f 32, 4}} \|H^{(n)}\|_{W^1_{\de+\f 32, 4}} + \|\nab \beta^{(n)}\|_{W^1_{\de'+\f 32, 4}} \|H^{(n)}\|_{W^2_{\de+\f 32, 4}}\ls \ep C(A_0) C_i.
\end{split}
\end{equation*}
The term $\partial_i \gamma^{(n)} \breve{\otimes } \partial_j N^{(n)}$ can be treated similarly, except for extra care regarding the logarithmically growing terms. More precisely, by H\"older's inequality, \eqref{estN.small}, \eqref{estN.large}, \eqref{estgamma}, Proposition~\ref{prpsmallness} and Lemma~\ref{der}, we have
\begin{equation*}
\begin{split}
&\|\partial_i \gamma^{(n)} \breve{\otimes } \partial_j N^{(n)} \|_{H^3_{\de+1}}\\
\ls & |\alp| |N_{asymp}^{(n)}| \|(1+|x|)^{-2}\|_{H^3_{\de+1}} + |\alp| \|\nab \wht N^{(n)}\|_{H^3_{\de+1}} + |N_{asymp}^{(n)}|\|\nab \wht \gamma^{(n)}\|_{H^3_{\de'+1}} +\|\nab \wht \gamma^{(n)}\|_{C^0_{\de'+2}} \|\nab \wht N^{(n)}\|_{H^3_{\de+1}}\\
& + \|\nab \wht \gamma^{(n)}\|_{H^3_{\de'+1}} \|\nab \wht N^{(n)}\|_{C^0_{\de+2}} + \|\nab \wht\gamma^{(n)}\|_{W^1_{\de'+\f 32, 4}} \|\nab \wht N^{(n)}\|_{W^2_{\de+\f 32, 4}} + \|\nab \wht\gamma^{(n)}\|_{W^2_{\de'+\f 32, 4}} \|\nab \wht N^{(n)}\|_{W^1_{\de+\f 32, 4}}\\
\ls &\ep C_i.
\end{split}
\end{equation*}
Finally, by Lemma~\ref{der}, \eqref{estN.small} and \eqref{estN.large},
$$\| \partial_i \breve{\otimes } \partial_j N^{(n)}\|_{H^3_{\de+1}} \leq 6\| \wht N^{(n)} \|_{H^5_{\de}} + C|N_{asymp}| \leq (12+ \ep C)C_i, $$
where the norm on the LHS is to be understood as the $H^3_{\de+2}$ norm for a $2$-tensor.

Combining all the above estimates and choosing $\ep$ sufficiently small give \eqref{Hnlower}.

\textbf{Proof of \eqref{Hnfinal}.} First note that  $|[e_0^{(n+1)},\nab^\alp] H^{(n+1)}| \ls \sum_{|\alp_1| + |\alp_2| = |\alp|} |\nab^{\alp_1}\nab \bt^{(n+1)}| |\nab^{\alp_2} H^{(n+1)}|$. Hence, for $|\alp|\leq 3$, using Proposition~\ref{holder} and Proposition~\ref{prpbeta}, we have
\begin{equation}\label{Hn.commute}
\|e_0^{(n+1)} \nab^{\alp} H_{ij}^{(n+1)} \|_{L^2_{\de+1+|\alp|}} \ls \|e_0^{(n+1)} H_{ij}^{(n+1)} \|_{H^3_{\de+1}} + C_i \|H^{(n+1)}\|_{H^3_{\de+1}}.
\end{equation}

Next, writing $e_0^{(n+1)} \nab^\alp H_{ij}^{(n+1)} = (\rd_t -(\bt^{(n+1)})^k\rd_k) \nab^{\alp} H_{ij}^{(n+1)}$, squaring the expression, multiplying by $(1+|x|^2)^{\de+2+|\alp|}$, integrating with respect to $dx\,dt$, integrating by parts and using the estimates for $\beta^{(n+1)}$ in Proposition~\ref{prpbeta}, we have
\begin{equation*}
\begin{split}
\sup_{t\in [0, T]} \|\nab^{\alp} H^{(n+1)}\|_{L^2_{\de+2+|\alp|}}(t)\leq  &\|\nab^{\alp} H^{(n+1)}\|_{L^2_{\de+2+|\alp|}}(0) + C T \sup_{t\in [0,T]} \| e_0^{(n+1)} \nab^\alp H_{ij}^{(n+1)}\|_{L^2_{\de+2+|\alp|}}(t)\\
\leq & \|\nab^{\alp} H^{(n+1)}\|_{L^2_{\de+2+|\alp|}}(0) + C C_i T ( 1 + \|H^{(n+1)}\|_{H^3_{\de+1}}),
\end{split}
\end{equation*}
where in the last inequality we have used \eqref{Hn.commute} and \eqref{Hnlower}. 

Now, summing over all $|\alp|\leq 3$, choosing $T$ sufficiently small, and absorbing the term $C C_i T \|H^{(n+1)}\|_{H^3_{\de+1}}$ to the LHS, we obtain \eqref{Hnfinal}.
\end{proof}

\begin{lm}[Suppport of $\phi^{(n+1)}$]\label{lm:cpt.supp}
There exists a constant $C_s > 0$ such that for $\ep$, $T$ sufficiently small (depending on $R$) and $n\geq 2$, $\phi^{(n+1)}$ is supported in the set $\{(t,x)\in [0,T]\times \mathbb R^2: C_s(1+R^\ep) t - |x|\geq -R\} $. In particular, choosing $T$ smaller if necessary, $supp(\phi^{(n+1)}) \subset \{(t,x)\in [0,T]\times \mathbb R^2: |x| \leq 2R\}$.
\end{lm}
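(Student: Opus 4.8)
The plan is to establish the support statement for $\phi^{(n+1)}$ via a finite-speed-of-propagation argument for the wave equation \eqref{phin}, $\Box_{g^{(n)}}\phi^{(n+1)}=0$. Since $\phi^{(n+1)}$ solves a linear wave equation with respect to the metric $g^{(n)}$, its support at time $t$ is contained in the causal future (with respect to $g^{(n)}$) of the support of the initial data, which lies in $\{t=0\}\cap B(0,R)$. The whole content of the lemma is therefore a quantitative estimate of how fast the $g^{(n)}$-causal cone can spread, given the a priori control on the metric components $N^{(n)}$, $\gamma^{(n)}$ and $\beta^{(n)}$ supplied by \eqref{estN.small}, \eqref{estgamma} (together with Proposition~\ref{prpsmallness}) and \eqref{estbeta.small}.

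First I would recall from \eqref{g.form} and \eqref{g.inverse} that in the elliptic gauge the null cone of $g^{(n)}$ is governed by the principal symbol $-\frac{1}{(N^{(n)})^2}(\xi_0 + (\beta^{(n)})^i\xi_i)^2 + e^{-2\gamma^{(n)}}|\xi|^2$, so that the "speed" of propagation in the $x$-directions is bounded by $N^{(n)} e^{-\gamma^{(n)}} + |\beta^{(n)}|$. Using the decompositions $N^{(n)} = 1 + N^{(n)}_{asymp}\chi(|x|)\log|x| + \wht N^{(n)}$, $\gamma^{(n)} = -\alpha\chi(|x|)\log|x| + \wht\gamma^{(n)}$ together with the smallness estimates (in particular $|\alpha|\lesssim \ep^2$, $|N^{(n)}_{asymp}|\lesssim\ep$ from \eqref{estN.small}, and $\|\wht N^{(n)}\|_{C^1_{\delta+1}}$, $\|\wht\gamma^{(n)}\|_{C^1_{\delta'+1}}$, $\|\beta^{(n)}\|_{C^1_{\delta'+1}}$ all $\lesssim\ep^2$ via Proposition~\ref{prpsmallness}), one sees that for $|x|$ in the range relevant to the argument (i.e. $|x|\lesssim R + t$), the factor $e^{-\gamma^{(n)}}N^{(n)}$ is bounded by $(1+C\ep)(1+|x|)^{C\ep}\le C_s(1+R^\ep)$ for $t$, $\ep$ small and a suitable numerical constant $C_s$ depending only on the fixed constants. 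Concretely, I would introduce the function $r(t) := R + C_s(1+R^\ep) t$ and show, by a standard domain-of-dependence / energy argument applied to $\phi^{(n+1)}$ on the exterior region $\{|x| \ge r(t)\}$, that $\phi^{(n+1)}$ vanishes there; the outgoing null cones of $g^{(n)}$ emanating from $\{t=0,\ |x|=R\}$ stay inside $\{|x|\le r(t)\}$ precisely because the coordinate speed estimate above holds. This gives $\mathrm{supp}(\phi^{(n+1)})\subset\{C_s(1+R^\ep)t - |x|\ge -R\}$. The final sentence then follows trivially: on $[0,T]$ this set is contained in $\{|x|\le R + C_s(1+R^\ep)T\}$, which is $\subset\{|x|\le 2R\}$ once $T$ is chosen small enough (depending on $R$ and $\ep$, but that is permitted).

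A clean way to implement the domain-of-dependence step without re-deriving finite speed of propagation from scratch is to note that $g^{(n)}$ is a smooth Lorentzian metric on $[0,T]\times\mathbb R^2$ (smoothness and the uniform bounds coming from the iteration hypotheses, with $N^{(n)}$ bounded below away from $0$ by Proposition~\ref{prpsmallness} and \eqref{estN.small}), so the classical finite-speed-of-propagation theorem for $\Box_{g^{(n)}}$ applies and reduces everything to the causal geometry of $g^{(n)}$. Then I would verify, using \eqref{g.inverse}, that the vector field $\partial_t$ together with the region $\{|x|\ge r(t)\}$ forms an appropriate "lens-shaped" region whose past-directed null generators of the lateral boundary have coordinate slope at most $C_s(1+R^\ep)$ in absolute value; since the data vanish on $\{t=0\}\cap\{|x|\ge R\}$, uniqueness in the domain of dependence forces $\phi^{(n+1)}\equiv 0$ on $\{|x|\ge r(t)\}$.

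The main obstacle is purely bookkeeping: one must check that the logarithmically growing metric components ($N^{(n)}_{asymp}\log|x|$ and $-\alpha\log|x|$) really do only contribute a factor $(1+|x|)^{C\ep}$ to the propagation speed, and that in the relevant region $|x|\lesssim R + t \lesssim R$ this is bounded by $C_s(1+R^\ep)$ with $C_s$ independent of $\ep$ and of the large constants $C_i$, $A_j$ — i.e. $C_s$ must be a genuinely universal numerical constant so that the conclusion (and hence the smallness $\ep_{low}$, $T$) has the right dependences. This is exactly analogous to the weight-tracking already carried out throughout Section~\ref{sec:iteration.bdd}, so it should go through; I do not expect any genuinely new difficulty, and indeed the companion estimate for $F_{\bA}$ in Lemma~\ref{lm:cpt.supp.2} will be proved by the identical scheme applied to the transport equation \eqref{Fn} along the integral curves of $L^{(n)}_{\bA}$, whose components are controlled by \eqref{estu.small}.
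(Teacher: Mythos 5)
Your proposal is correct and takes essentially the same approach as the paper. The paper reduces to showing that the lateral boundary $\{C_s(1+R^\ep)t-|x|=-R\}$ is spacelike for $g^{(n)}$ by directly computing $(g^{(n)})^{-1}\bigl(d(C_s(1+R^\ep)t-|x|),\,d(C_s(1+R^\ep)t-|x|)\bigr)$ via \eqref{g.inverse} and using precisely the smallness bounds you invoke (for $e^{-2\gamma^{(n)}}$, $1/(N^{(n)})^2$, and $\beta^{(n)}$); your coordinate-speed/null-cone formulation is an equivalent rephrasing of that same pointwise spacelikeness check, carried out with the same ingredients.
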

\begin{proof}
Since the initial data for $\phi^{(n+1)}$ and $\rd_t\phi^{(n+1)}$ is compactly supported in $|x| \leq R$, it suffices to show that $\{(t,x)\in [0,T]\times \mathbb R^2: C_s(1+R^\ep)t - |x| = -R\}$ is a spacelike hypersurface with respect to $g^{(n)}$. We compute using \eqref{g.inverse}
\begin{equation}\label{spacelike.calculation}
\begin{split}
&(g^{(n)})^{-1}\left(d\left(C_s(1+R^\ep)t-|x|\right), d\left(C_s(1+R^\ep)t-|x|\right)\right)\\
=& -\f {C_s^2(1+R^\ep)^2}{(N^{(n)})^2}-\f{2C_s(1+R^\ep)(x\cdot \beta^{(n)})}{|x| (N^{(n)})^2}+e^{-2\gamma^{(n)}}-\f{(x\cdot \beta^{(n)})^2}{|x|^2 (N^{(n)})^2}. 
\end{split}
\end{equation}
For $|x| \geq 2$,
$e^{-2\gamma^{(n)}}=e^{2\alp \chi(|x|){\log}(|x|)}e^{-2\wht \gamma^{(n)}}{\ls |x|^{C\ep^2}}$, $\f {1}{(N^{(n)})^2}{\gtrsim \min\{1, \f{1}{\ep^2\log |x|}\}}$, and $\f{|x\cdot \beta^{(n)}|}{|x| (N^{(n)})^2},\, \f{(x\cdot \beta^{(n)})^2}{|x|^2 (N^{(n)})^2} \ls \ep$. Hence, {after} choosing the parameters appropriately, one easily sees that \eqref{spacelike.calculation} is non-positive.
\end{proof}

\begin{prp}[Estimates for $\phi^{(n+1)}$]\label{prpphi}
	For $n\geq 2$, the following estimate holds:
	$$\|\partial \phi^{(n+1)}\|_{H^3}+\left\|\partial_t \left(\frac{e_0^{(n)} \phi^{(n+1)}}{N^{(n)}}\right) \right\|_{H^{2}}\ls C_i.$$
\end{prp}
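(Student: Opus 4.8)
The plan is to control $\phi^{(n+1)}$ via a weighted energy estimate for the wave equation $\Box_{g^{(n)}}\phi^{(n+1)}=0$ in \eqref{phin}, exploiting the fact that, by Lemma~\ref{lm:cpt.supp}, $\phi^{(n+1)}$ is supported in $\{|x|\leq 2R\}$ so that no genuine weights are needed -- all $H^s_\eta$ norms on the relevant region are comparable to unweighted $H^s$ norms. First I would write the equation in the form
$$-\f 1{(N^{(n)})^2} e_0^{(n)}\left(e_0^{(n)}\phi^{(n+1)}\right) + e^{-2\gamma^{(n)}}\Delta\phi^{(n+1)} + (\text{lower order in }\phi^{(n+1)}) = 0,$$
using \eqref{g.inverse} and the formula for $\Box_g$ recorded in the proof of Lemma~\ref{lem:data.chi}, where the lower-order terms involve one derivative of $\phi^{(n+1)}$ with coefficients built from $N^{(n)}$, $\gamma^{(n)}$, $\beta^{(n)}$ and their first derivatives. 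I would then multiply by $e_0^{(n)}\phi^{(n+1)}$ (times the cutoff weight $(1+|x|^2)^{\sigma}$ if one wants to keep a uniform framework, but on the support it is harmless), integrate over $\mathbb R^2$, and integrate by parts exactly as in the proof of Lemma~\ref{gamma.wave.EE}, to obtain an energy inequality of Grönwall type controlling $\|\partial\phi^{(n+1)}\|_{H^3}$ in terms of the initial data (bounded by $C_i$ via Corollary~\ref{cor.data}) plus $CT$ times a constant depending on $C_i$, $A_0$, $A_1$.

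The main work is the commuted estimate: differentiating the equation up to three times in space, one gets
$$-\f 1{(N^{(n)})^2} e_0^{(n)}\left(e_0^{(n)}\nabla^\alpha\phi^{(n+1)}\right) + e^{-2\gamma^{(n)}}\Delta(\nabla^\alpha\phi^{(n+1)}) = [\text{commutators}],$$
and the commutator terms involve at most three derivatives of the metric coefficients $N^{(n)}$, $\gamma^{(n)}$, $\beta^{(n)}$ hitting at most three derivatives of $\phi^{(n+1)}$. These are all controlled: on $\{|x|\leq 2R\}$ we have $\|N^{(n)}\|_{H^5}$, $\|\nabla\gamma^{(n)}\|_{H^3}$, $\|\beta^{(n)}\|_{H^4}$ bounded by $C(C_i)$ using \eqref{estN.large}, \eqref{estgamma}, Proposition~\ref{prpeogamma} and \eqref{estbeta.large}, together with the already-established elliptic bounds for $N^{(n+1)}$, $\beta^{(n+1)}$ and the wave bound for $\gamma^{(n+1)}$ -- but note we only need the $(n)$-th iterate coefficients here. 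Using Hölder, the Sobolev product estimates of Proposition~\ref{product} (unweighted, valid since everything is supported in a fixed ball), one bounds the commutators in $L^2$ by $C(C_i,A_0,A_1)\|\partial\phi^{(n+1)}\|_{H^3}$ plus $C(C_i,A_0,A_1)$, and the Grönwall argument with $T$ small (allowed to depend on $C_i$, $A_0$, $A_1$) closes the estimate with constant $\ls C_i$ after absorbing.

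For the second quantity, $\rd_t\big(\f{e_0^{(n)}\phi^{(n+1)}}{N^{(n)}}\big)$ in $H^2$, I would not differentiate the wave equation in $t$ directly but rather solve algebraically: from the wave equation,
$$e_0^{(n)}\left(\f{e_0^{(n)}\phi^{(n+1)}}{N^{(n)}}\right) = \f{e_0^{(n)}\phi^{(n+1)}\, e_0^{(n)}N^{(n)}}{(N^{(n)})^2} + N^{(n)} e^{-2\gamma^{(n)}}\Delta\phi^{(n+1)} + N^{(n)}(\text{first-order terms}),$$
which is already controlled in $H^2$ by $\ls C_i$ using $\|\Delta\phi^{(n+1)}\|_{H^2}\ls\|\partial\phi^{(n+1)}\|_{H^3}\ls C_i$ from the first part, $\|e_0^{(n)}N^{(n)}\|_{H^3}$, $\|\partial\gamma^{(n)}\|_{H^3}$ bounds, and then converting $e_0^{(n)}$ to $\rd_t$ via $\rd_t=e_0^{(n)}+(\beta^{(n)})^i\rd_i$ with \eqref{estbeta.small}--\eqref{estbeta.large}. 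The main obstacle, as usual in this paper, is bookkeeping: making sure every product estimate is applied with factors in the right spaces (the lowest-regularity factor in $L^\infty$ or $W^{s}_4$, the high one in $H^s$, using Proposition~\ref{holder} and the compact support), and checking that the constant is at worst $C(C_i,A_0,A_1)$ with a factor of $T$ available to absorb -- there is no genuine analytic difficulty beyond the energy estimate of Lemma~\ref{gamma.wave.EE}, which this case closely parallels.
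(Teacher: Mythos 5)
Your proposal is correct and follows essentially the same strategy as the paper: exploit compact support to avoid weighted spaces, run a commuted energy estimate for $\Box_{g^{(n)}}\phi^{(n+1)}=0$ to control $\|\partial\phi^{(n+1)}\|_{H^3}$, and then read off $\rd_t(e_0^{(n)}\phi^{(n+1)}/N^{(n)})$ algebraically from the equation rather than differentiating the energy estimate in $t$. The one cosmetic difference is the energy-estimate mechanism: the paper contracts the energy-momentum tensor $Q^{(n)}_{\alpha\beta}[\phi^{(n+1)}]$ with $\rd_t$ and $N^{(n),-1}e_0^{(n)}$ and invokes Stokes's theorem together with the deformation tensor ${}^{(\rd_t)}\pi^{(n)}$, whereas you multiply the coordinate form of the wave equation by $e_0^{(n)}\phi^{(n+1)}$ and integrate by parts directly in the style of Lemma~\ref{gamma.wave.EE}; these are the same estimate written two ways, and your choice is arguably more uniform with the rest of the section. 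Your point that only the $(n)$-th iterate coefficient bounds (the induction hypotheses \eqref{estN.small}--\eqref{estbeta.large}, \eqref{estgamma}, plus Propositions~\ref{prpsmallness} and \ref{prpeogamma}), not the freshly proved $(n+1)$-st bounds, are needed is exactly right, since $\Box_{g^{(n)}}$ has metric $g^{(n)}$. The commutator bookkeeping you describe---lowest-regularity factor in $L^\infty$ or $W^s_4$, highest in $H^s$, via Propositions~\ref{holder} and \ref{product}, with the constant $C(C_i,A_0,A_1)$ absorbed by a factor of $T$---is precisely what the paper does; in particular you correctly identified that the crucial structural point is that whenever a metric coefficient carries three spatial derivatives, $\phi^{(n+1)}$ carries at most one, so that factor can be put in $L^\infty$ with a $C_i$-independent smallness, which is what keeps the final bound linear in $C_i$.
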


\begin{proof}
	We perform the energy estimate for the wave equation.
First, note that since $\phi^{(n+1)}$ is compactly supported in $B(0,2R)$ for all time by Lemma~\ref{lm:cpt.supp}, we do not need to worry about the spatial decay. 
	
	Given a function $f$, define a $2$-tensor $Q^{(n)}$ as follows:
$$Q_{\alpha \beta}^{(n)}[f] :=\partial_\alpha  f \partial_\beta  f -\frac{1}{2}g^{(n)}_{\alpha \beta}((g^{(n)})^{-1})^{\rho\sigma}\partial_\sigma 
 f \partial_\rho  f .$$
An easy computation shows that 
$$((g^{(n)})^{-1})^{\mu\alpha} D^{(n)}_{\mu} Q^{(n)}_{\alpha \beta}[f]= (\rd_\beta f)(\Box_{g^{(n)}} f),$$
where $D^{(n)}$ is the Levi-Civita connection associated to $g$. Defining
$$^{(\rd_t)}\pi^{(n)}_{\alpha\beta}=D^{(n)}_\alpha (\rd_t)_\beta+D^{(n)}_\beta (\rd_t)_\alpha,$$
we have by Stoke's theorem that for every $t\in (0,T]$,
\begin{equation}\label{EE.n}
\begin{split}
&\int_{\Sigma_t} Q^{(n)}[f](\rd_t, \f{1}{N^{(n)}} e_0^{(n)})(t,x)\, \sqrt{|\det \bar{g}^{(n)}|}\, dx\\
=&\int_{\Sigma_0} Q^{(n)}[f](\rd_t, \f{1}{N^{(n)}} e_0^{(n)})(0,x)\, \sqrt{|\det \bar{g}^{(n)}|}\, dx\\
&-\int_0^t \int_{\Sigma_{t'}} \left((\rd_t f) (\Box_{g^{(n)}} f) +\f 12 Q^{(n)}_{\alpha\beta}[f] { }^{(\rd_t)}(\pi^{(n)})^{\alpha\beta}\right)(t',x)\, \sqrt{|\det g^{(n)}|}\, dx\, dt',
\end{split}
\end{equation}
where $\bar{g}^{(n)}$ is as in \eqref{g.form.0}.
	We now apply \eqref{EE.n} to $\phi^{(n+1)}$ and its derivatives. The key point here is to note that by \eqref{estN.small}, \eqref{estbeta.small} and Proposition~\ref{prpsmallness}, the metric components have appropriate \emph{smallness} in the $C^0$ norm on $B(0,2R)$, and therefore on the compact set $B(0,2R)$, for $\ep$ sufficiently small,
	$$(1-C\ep) \leq \sqrt{|\det \bar{g}^{(n)}|}\leq (1+ C\ep),\quad (1-C\ep) \leq \sqrt{|\det {g}^{(n)}|}\leq (1+ C\ep).$$
	On the other hand, since $|{ }^{(\rd_t)}(\pi^{(n)})^{\alpha\beta}|$ is controlled by the $C^1$ norm of the metric, by the estimates in \eqref{estN.small}, \eqref{estN.large}, \eqref{estbeta.small}, \eqref{estbeta.large} and \eqref{estgamma}, $|{ }^{(\rd_t)}(\pi^{(n)})^{\alpha\beta}|\ls C(A_0, C_i)$ on $B(0,2R)$. Therefore, 
		$$\sup_{t\in [0,T]} \|\rd\phi^{(n+1)}\|_{L^2}(t) \leq 2\|\rd\phi^{(n+1)}\|_{L^2}(0) + C(A_0, C_i) T \sup_{t\in [0,T]} \|\rd\phi^{(n+1)}\|_{L^2}(t)\leq 3C_i,$$
		after choosing $T$ to be sufficiently small.
		
	To obtain up to the $H^3$ estimates for $\rd \phi^{(n+1)}$, however, we need to differentiate the equation with respect to spatial derivatives and this leads to higher derivatives of the metric components. Nevertheless, even though these higher derivative terms are no longer small (and are in general only bounded by constants depending on $A_0$ and $C_i$), these terms only appear as \emph{inhomogeneous} terms in the wave equation. Hence, by choosing $T$ sufficiently small, we obtain
	$$ \sup_{t\in [0,T]} \|\rd \phi^{(n+1)}\|_{H^3}(t)  \leq 2 \|\rd \phi^{(n+1)}\|_{H^3}(0) + C(A_0, C_i) T \sup_{t\in [0,T]} \|\rd\phi^{(n+1)}\|_{H^3}(t) \leq 3C_i.$$
	
	Finally, in order to control $\partial_t \left(\frac{e_0^{(n)} \phi^{(n+1)}}{N^{(n)}}\right)$, notice that 
	the equation for $\phi^{(n+1)}$ is given in coordinates as follows:
	\begin{equation}\label{waven}
	-\frac{e^{-2\gamma^{(n)}}}{N^{(n)}}e_0^{(n)}\left(\frac{e^{2\gamma^{(n)}}}{N^{(n)}} e_0^{(n)} \phi^{(n+1)}\right)+\frac{1}{N^{(n)} e^{2\gamma^{(n)}}}div(N^{(n)}\nabla \phi^{(n+1)})+\f{e^{2\gamma^{(n)}}(div\, \bt^{(n)})}{N^{(n)}}(e_0^{(n)} \phi^{(n+1)})=0.
	\end{equation}
	Therefore,
	\begin{equation}\label{waven.2}
	\begin{split}
	\rd_t \left(\frac{e_0^{(n)} \phi^{(n+1)}}{N^{(n)}}\right)= & (\bt^{(n)})^i\rd_i\left(\frac{e_0^{(n)} \phi^{(n+1)}}{N^{(n)}}\right) + \frac{div(N^{(n)}\nabla \phi^{(n+1)})}{N^{(n)} e^{2\gamma^{(n)}}} +\f{e^{2\gamma^{(n)}}(-2e_0^{(n)}\gamma^{(n)}+ div\, \bt^{(n)})}{N^{(n)}}(e_0^{(n)} \phi^{(n+1)}).
	\end{split}
	\end{equation}
We directly bound each term on the RHS of \eqref{waven.2} in $H^2$. The key point in handling these terms is to notice that upon expanding the derivatives, the only way that $\wht N^{(n)}$, $\bt^{(n)}$ or $\wht \gamma^{(n)}$ has three spatial derivatives (or $e_0^{(n)}\gamma^{(n)}$ has two spatial derivatives) is when $\phi^{(n+1)}$ has at most one derivative. In that case, we can bound the first derivative of $\phi^{(n+1)}$ in $L^\infty$ by Proposition~\ref{prpsmallness} independent of $C_i$. In the case where we do not have the highest derivative on the metric components, we can use \eqref{estN.small}, \eqref{estbeta.small} and Proposition~\ref{prpsmallness} to control the metric components independent of $C_i$ and use \eqref{estphi} to estimate the scalar field. Let us consider a typical term. By \eqref{estN.small}, \eqref{estN.large}, \eqref{estbeta.small}, \eqref{estbeta.large}, \eqref{estphi} and Proposition~\ref{prpsmallness}, 
\begin{equation*}
\begin{split}
& \left\|\bt^{(n)} \nab \left(\frac{1}{N^{(n)}}e_0^{(n)} \phi^{(n+1)}\right)\right\|_{H^2}\\
\ls & \|\bt^{(n)}\|_{W^2_{0,4}(B(0,R))}(1 + |N^{(n)}_{asymp}| + \|\wht N^{(n)}\|_{W^2_{0,4}(B(0,R))})(1+\|\bt^{(n)}\|_{W^2_{0,4}(B(0,R))}) \|\rd \phi^{(n+1)}\|_{H^3} \\
& + (\|\wht N^{(n)}\|_{H^3(B(0,R))} + \|\bt^{(n)}\|_{H^3(B(0,R))} )(1 + \|\bt^{(n)}\|_{L^\infty(B(0,R))})\|\rd\phi\|_{L^\infty}\\
\ls & \ep A_0 C_i + \ep^2 C_i \ls \ep C(A_0) C_i\ls C_i.
\end{split}
\end{equation*}
The other terms can be estimated in a similar manner. \qedhere
\end{proof}

\begin{lm}\label{trans.est}
Let $h$ satisfy the following transport equation with an inhomogeneous term $f$ for some $\bA$:
\begin{equation}\label{gen.transport}
L_{\bA}^{(n)} h= f.
\end{equation}
Then, $h$ obeys the estimate
$$\sup_{t\in [0,T]}\int_{\Sigma_t} (1+|x|^2)^{\sigma} h^2 \, dx\leq C(\sigma)\left( \int_{\Sigma_0} (1+|x|^2)^{\sigma} h^2 \, dx +\int_0^T \int_{\Sigma_t} (1+|x|^2)^{\sigma+\f{\ep}{10}} f^2 \, dx\, dt\right),$$
where $C(\sigma)$ is a constant depending on $\sigma$, in addition to $C_{eik}$, $\de$ and $R$.
\end{lm}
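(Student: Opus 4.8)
The plan is to prove Lemma~\ref{trans.est} by a weighted $L^2$ energy estimate, exploiting the positivity of the time-component of $L_{\bA}^{(n)}$ and the fact that $L_{\bA}^{(n)}$ grows only slowly in $|x|$. Write $V:=L_{\bA}^{(n)} = V^t\rd_t + V^i\rd_i$ with $V^t=(L_{\bA}^{(n)})^t$, $V^i=(L_{\bA}^{(n)})^i$. The first step is to record, as a consequence of the induction hypotheses \eqref{estu.lower}, \eqref{estu.small}, \eqref{estu} together with Proposition~\ref{holder} and the fact that $N^{(n-1)}e^{\gamma^{(n-1)}}$ grows at worst logarithmically and decays at worst like $(1+|x|^2)^{-\alp/2}$ with $\alp\le C\ep^2$, the pointwise bounds (valid on $[0,T]\times\mathbb R^2$ for $\ep$ small)
$$ c\,(1+|x|^2)^{-\ep/20}\ \le\ V^t\ \le\ C(1+|x|^2)^{\ep/20}, \qquad |V^i|\ \le\ C(1+|x|^2)^{\ep/20}, $$
$$ |\nab V^t| + |\nab V^i|\ \le\ C(C_i)\,(1+|x|^2)^{-1/2+\ep/20}, $$
where $c=c(C_{eik})>0$. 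The point is that all growing factors produced by $e^{\pm2\gamma^{(n-1)}}$ and $N^{(n-1)}$ carry exponents that are either $O(\ep^2)$ or logarithmic, hence $\ll\ep/10$.

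Assuming first that $h$ is smooth with enough decay (the general case following by density), multiply \eqref{gen.transport} by $\f{(1+|x|^2)^\sigma}{V^t}h$ and integrate over $\Sigma_t$ with respect to $dx$. Since $\f{(1+|x|^2)^\sigma}{V^t}\cdot V^t=(1+|x|^2)^\sigma$ is time-independent, integration by parts in the spatial variables gives, with $w=(1+|x|^2)^\sigma$,
$$\f{d}{dt}\int_{\Sigma_t} w\,h^2\,dx = \int_{\Sigma_t} \rd_i\!\left(\f{w V^i}{V^t}\right) h^2\,dx + 2\int_{\Sigma_t}\f{w}{V^t}\,h\,f\,dx.$$
The crucial feature of multiplying by $(V^t)^{-1}$ is that the left-hand side is exactly coercive with the weight $(1+|x|^2)^\sigma$ and that no $\rd_t$-derivative of the metric or of $V$ enters — only spatial derivatives of $V$, already controlled above.

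For the first term on the right, $\rd_i(wV^i/V^t)=(\rd_i w)\f{V^i}{V^t}+w\,\rd_i(V^i/V^t)$; using $|\rd_i w|\le 2|\sigma|(1+|x|^2)^{\sigma-1/2}$ and the bounds of the first step one checks $|(\rd_i w)\f{V^i}{V^t}|\le C|\sigma|(1+|x|^2)^{\sigma-1/2+\ep/10}\le C|\sigma|\,w$ and $|w\,\rd_i(V^i/V^t)|\le C(C_i,\sigma)(1+|x|^2)^{\sigma-1/2+\ep/5}\le C(C_i,\sigma)\,w$, so this term is $\le C(C_i,\sigma)\int_{\Sigma_t}w h^2$. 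For the source term, Cauchy--Schwarz gives $2\int\f{w}{V^t}h f\le \int w h^2 + \int\f{w}{(V^t)^2}f^2$, and the degenerate lower bound $V^t\ge c(1+|x|^2)^{-\ep/20}$ forces $\f{w}{(V^t)^2}\le c^{-2}(1+|x|^2)^{\sigma+\ep/10}$ — this is exactly where the $\ep/10$ loss of weight on $f$ comes from. Combining, $\f{d}{dt}\int_{\Sigma_t}w h^2\le C(C_i,\sigma)\int_{\Sigma_t}w h^2 + C(C_{eik})\int_{\Sigma_t}(1+|x|^2)^{\sigma+\ep/10}f^2$, and Gr\"onwall's inequality on $[0,T]$ yields the claim; the $C_i$-dependence of the Gr\"onwall factor is absorbed into a universal constant by shrinking $T$, which is permitted since $T$ is allowed to depend on $C_i$.

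The step I expect to require the most care is the weight bookkeeping: $L_{\bA}^{(n)}$ does not decay — it in fact grows slowly, both through the conformal factor $e^{-2\gamma^{(n-1)}}$ and through the logarithmic growth of the lapse — so a naive energy argument would lose weight repeatedly. The resolution is to allow the $\ep/10$ buffer on $f$ and to use that every such growth exponent is $\ll\ep/10$ for $\ep$ small, combined with the strictly positive (but degenerate) lower bound on $V^t$ from \eqref{estu.lower}; this is what makes the estimate close with a constant that depends only on $\sigma$, $C_{eik}$, $\de$ and $R$.
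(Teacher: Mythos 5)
Your proof is correct and uses the same basic device as the paper — a weighted $L^2$ energy estimate hinging on the positivity of $(L_{\bA}^{(n)})^t$ and the slow growth of the metric factors — but with a different multiplier. The paper tests the equation against $(e^{\gamma^{(n-1)}}N^{(n-1)})\,w\,h$, so its energy is $\int w\,(L_{\bA}^{(n)})^t\,e^{\gamma^{(n-1)}}N^{(n-1)}\,h^2$ and it must bound $\rd_t\bigl(e^{\gamma^{(n-1)}}N^{(n-1)}(L_{\bA}^{(n)})^t\bigr)$ as an error term, which brings in \eqref{estN.large}, \eqref{estgamma}, \eqref{estu} and Proposition~\ref{prpeogamma}. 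You instead test against $w\,h/V^t$, which makes the time-coefficient collapse to the time-independent weight $w$ exactly; consequently only \emph{spatial} derivatives of $V$ appear in the error, and no $\rd_t$-bounds on $L_{\bA}^{(n)}$, $N^{(n-1)}$ or $\gamma^{(n-1)}$ are needed. Since $(L_{\bA}^{(n)})^t\,e^{\gamma^{(n-1)}}N^{(n-1)}$ is bounded above and below (by \eqref{estu.lower} and \eqref{estu.small}), the two multipliers differ by a bounded factor and the two energies are comparable, so the estimates are essentially equivalent; your version is slightly cleaner in bookkeeping, while the paper's avoids placing $V^t$ in a denominator at the multiplier stage. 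In both arguments the $\ep/10$ loss on $f$ comes from the same place — the degenerate lower bound on $V^t$ (equivalently the $(1+|x|^2)^{\ep/10}$ growth of $e^{\gamma^{(n-1)}}N^{(n-1)}$) entering through Cauchy--Schwarz — and the $C_i$-dependence in the Gr\"onwall factor is absorbed by shrinking $T$, exactly as you note.
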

\begin{proof}
Decompose $L_{\bA}^{(n)}$ with respect to $\{\rd_t, \rd_i\}$, i.e.,
\begin{equation}\label{L.decomp}
L_{\bA}^{(n)}=(L_{\bA}^{(n)})^t \rd_t + (L_{\bA}^{(n)})^i \rd_i.
\end{equation}
\eqref{gen.transport} can be written as
\begin{equation}\label{gen.transportcoord}
(L_{\bA}^{(n)})^t\rd_t h+(L_{\bA}^{(n)})^i\rd_i h = f.
\end{equation}
Let $w(|x|)=(1+|x|^2)^{\sigma}$. Multiplying \eqref{gen.transportcoord} by $(e^{\gamma^{(n-1)}} N^{(n-1)}) wh$ and integrating in spacetime with respect to $dx\, dt$, we obtain
\begin{equation*}
\begin{split}
&\frac{1}{2}\int_0^t \int_{\Sigma_{t'}} \left(w	(e^{\gamma^{(n-1)}} N^{(n-1)}) \left((L^{(n)}_{\bA})^t \rd_t h^2+(L_{\bA}^{(n)})^i\rd_i h^2\right)\right)\, dx\, dt'\\
= & \int_0^t \int_{\Sigma_{t'}} f w (e^{\gamma^{(n-1)}} N^{(n-1)}) h \, dx\,dt' .
\end{split}
\end{equation*}
This yields
\begin{equation*}
\begin{split}
&\f 12\frac{d}{dt} \int_{\Sigma_{t}} w(L^{(n)}_{\bA})^t (e^{\gamma^{(n-1)}} N^{(n-1)}) h^2 \, dx \\
= & \f 12\int_{\Sigma_t}  \partial_t\left(e^{\gamma^{(n-1)}} N^{(n-1)}(L^{(n)}_{\bA})^t\right) w h^2 + \f 12\int_{\Sigma_t} \partial_i\left((L^{(n)}_{\bA})^i w e^{\gamma^{(n-1)}} N^{(n-1)}\right) h^2 \, dx \\
&+ \int_{\Sigma_{t}} f w e^{\gamma^{(n-1)}} N^{(n-1)} h \, dx .
\end{split}
\end{equation*}
The conclusion follows from the Cauchy--Schwarz inequality, the bounds \eqref{estN.small}, \eqref{estN.large}, \eqref{estgamma}{, Proposition~\ref{prpeogamma}} (and Proposition~\ref{holder}), and the observations that $e^{\gamma^{(n-1)}} N^{(n-1)} \ls (1+|x|^2)^{\f\ep{10}}$ (by \eqref{est.constraints.0}, \eqref{estN.small}, \eqref{estgamma}, Proposition~\ref{prpsmallness}) and $(L^{(n)}_{\bA})^t (e^{\gamma^{(n-1)}} N^{(n-1)})\gtrsim 1$ (by \eqref{estu.lower}).

\end{proof}

\begin{lm}\label{trans.est.higher}
Suppose $h$ and $f$ satisfy \eqref{gen.transport}. Then, for $\ell = 2,3$, $t\in [0,T]$, and for any $\sigma \in (-1,0)$, we have
$$\| h \|_{H^\ell_{\sigma}}(t) \ls  \| h \|_{H^\ell_{\sigma}}(0) + C(A_0, C_i) \int_0^T \|f \|_{H^\ell_{\sigma+\f{\ep}{10}}}(t') \, dt'.$$
\end{lm}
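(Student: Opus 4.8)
The strategy is to commute the transport equation \eqref{gen.transport} with spatial derivatives $\nab^\alp$ for $|\alp| \leq \ell$, reducing matters to the weighted $L^2$ transport estimate of Lemma~\ref{trans.est} applied to each $\nab^\alp h$, and then to absorb the commutator terms using the already-established bounds on $L^{(n)}_\bA$.

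\textbf{Step 1: Commuted equation.} For a multi-index $\alp$ with $1 \leq |\alp| \leq \ell$, applying $\nab^\alp$ to \eqref{gen.transport} and using the decomposition \eqref{L.decomp} gives
\begin{equation*}
L_\bA^{(n)} (\nab^\alp h) = \nab^\alp f - [\nab^\alp, (L_\bA^{(n)})^t] \rd_t h - [\nab^\alp, (L_\bA^{(n)})^i] \rd_i h.
\end{equation*}
Schematically, the commutator terms are sums of $(\nab^{\alp_1} L_\bA^{(n)})(\nab^{\alp_2}\nab h)$ with $|\alp_1| + |\alp_2| = |\alp|$ and $|\alp_1| \geq 1$ (here $\nab h$ includes both $\nab h$ and $\rd_t h$, the latter recovered from the equation $\rd_t h = (L_\bA^{(n)})^{-t}(f - (L_\bA^{(n)})^i \rd_i h)$, which is licit since $(L_\bA^{(n)})^t \gtrsim 1$ by \eqref{estu.lower}). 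It is cleanest to instead differentiate the equation written in the form $(L_\bA^{(n)})^i$-rescaled as in the proof of Lemma~\ref{trans.est}, but the upshot is the same: one gets a transport equation for $\nab^\alp h$ along $L_\bA^{(n)}$ with inhomogeneity $\nab^\alp f$ plus error terms controlled by $\ell$ derivatives of $h$ and $|\alp|$ derivatives of $L_\bA^{(n)}$.

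\textbf{Step 2: Apply Lemma~\ref{trans.est} to each $\nab^\alp h$.} With weight $\sigma + |\alp|$ (so that the cumulative weight matches the $H^\ell_\sigma$ norm), Lemma~\ref{trans.est} bounds $\|\nab^\alp h\|_{L^2_{\sigma+|\alp|}}(t)$ by its initial value plus $\int_0^T$ of the weighted $L^2$ norm of the inhomogeneity. For the $\nab^\alp f$ piece this directly contributes $\int_0^T \|f\|_{H^\ell_{\sigma+\f\ep{10}}}$. For the commutator pieces, one estimates the products using the weighted product and Sobolev estimates from Appendix~\ref{weightedsobolev} (Propositions~\ref{holder}, \ref{produit} and Lemma~\ref{der}): the derivatives of $L_\bA^{(n)}$ are controlled in $H^3_{\delta''}$ (with $\ell \leq 3$) by the induction hypotheses \eqref{estu.small}–\eqref{estu}, yielding a constant $C(A_0, C_i)$, times $\|h\|_{H^\ell_{\sigma'}}$ for a weight $\sigma'$ slightly worse than $\sigma$; the $\frac{\ep}{10}$ room in the weight (exactly as in Lemma~\ref{trans.est}, which allows for the growth $e^{\gamma^{(n-1)}}N^{(n-1)} \ls (1+|x|^2)^{\ep/10}$ and for the small growth of $e^{2\gamma^{(n-1)}}$ factors in $L_\bA^{(n)}$) absorbs these weight losses. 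Summing over $|\alp| \leq \ell$ and using Grönwall in the (Lipschitz-in-$t$) quantity $\sup_{[0,t]}\|h\|_{H^\ell_\sigma}$ closes the estimate; choosing $T$ small if needed is not even necessary here since the constant $C(A_0,C_i)$ already multiplies the time integral of $f$.

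\textbf{Main obstacle.} The only real subtlety is bookkeeping the weights: each commutator moves a weight factor around, and one must check that the bound for $L_\bA^{(n)}$ is only available after subtracting the constant vector $\overrightarrow{c_\bA}$ (so $\nab L_\bA^{(n)}$ itself, with no subtraction, is what enters the commutators — and this is fine since $|\alp_1| \geq 1$ means at least one derivative always falls on $L_\bA^{(n)}$, killing the constant). Tracking that $e^{2\gamma^{(n-1)}}(L_\bA^{(n)})^i + \overrightarrow{c_\bA}^i$ and $N^{(n-1)}e^{\gamma^{(n-1)}}(L_\bA^{(n)})^t - |\overrightarrow{c_\bA}|$ lie in $H^3_{\delta''}$, together with the lower bound $(L_\bA^{(n)})^t(e^{\gamma^{(n-1)}}N^{(n-1)}) \gtrsim 1$ needed to invert and recover $\rd_t h$, and that the small powers $(1+|x|)^{C\ep^2}$ from the $\gamma$-weights are dominated by the $\frac{\ep}{10}$ margin, is the part requiring care; everything else is a routine application of the weighted calculus inequalities. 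We omit these routine details.
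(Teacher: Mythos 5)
Your proposal is correct and follows essentially the same route as the paper's proof: commute \eqref{gen.transport} with $\nab^\alp$, recover $\rd_t h$ from the equation itself to convert $t$-derivatives in the commutator into spatial derivatives of $h$ and factors of $f$, apply Lemma~\ref{trans.est} to $\nab^\alp h$ with weight $\sigma+|\alp|$, estimate the commutator products via the weighted-$H^2$ bounds on $\nab\log(L^{(n)}_\bA)^t$ and $\nab\overline{L^{(n)}_\bA}$ (for which the constant-vector subtraction is irrelevant since at least one derivative falls on the $L$-factor), absorb the small weight losses with the $\ep/10$ margin, and close by Grönwall. One small imprecision: you first assert "$(L^{(n)}_\bA)^t \gtrsim 1$ by \eqref{estu.lower}", but \eqref{estu.lower} only gives $N^{(n-1)}e^{\gamma^{(n-1)}}(L^{(n)}_\bA)^t \gtrsim 1$, so the bare $(L^{(n)}_\bA)^t$ can decay at spatial infinity — a point you do catch correctly in the final paragraph, where the logarithmic/polynomial growth of $N^{(n-1)}e^{\gamma^{(n-1)}}$ is shown to be dominated by the $\ep/10$ weight margin.
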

\begin{proof}
The $\ell=3$ case is harder, so we only consider that case. Let $\alp$ be a spatial multi-index with $|\alp|\leq 3$. Clearly, we have
\begin{equation}\label{trans.est.higher.0}
(L_{\bA}^{(n)})^\rho\rd_\rho(\nab^{\alp} h)= \nab^{\alp} f - [\nab^{\alp}, (L_{\bA}^{(n)})^\rho\rd_\rho]h.
\end{equation}
To compute the commutator, we consider separately\footnote{This is because in the $H^3_{\sigma}$ norm in the statement of the lemma, we only allow spatial derivatives.} the cases $\rho= t$ and $\rho = i$. Denoting by $\overline{L_{\bA}^{(n)}}$ the spatial part of $L_{\bA}^{(n)}$, we have
\begin{equation}\label{trans.est.higher.1}
\begin{split}
\left| [\nab^{\alp}, (L_{\bA}^{(n)})^t\rd_t] h \right| 
\ls &\sum_{|\alp_1| + |\alp_2| + |\alp_3| \leq |\alp| - 1} \left|\nab^{\alp_1}\nab \left( \log (L_{\bA}^{(n)})^t\right)\right| |\nab^{\alp_2} \overline{L_{\bA}^{(n)}}| |\nab^{\alp_3} \nab h| \\
& + \sum_{|\alp_1| + |\alp_2| \leq |\alp| - 1} \left|\nab^{\alp_1}\nab \left( \log (L_{\bA}^{(n)})^t\right)\right| |\nab^{\alp_2} f|,
\end{split}
\end{equation}
and 
\begin{equation}\label{trans.est.higher.2}
\left|[\nab^{\alp}, (L_{\bA}^{(n)})^i\rd_i] h \right|\ls \sum_{|\alp_1|+|\alp_2| = |\alp|-1} |\nab^{\alp_1}\nab \overline{L_{\bA}^{(n)}} | |\nab^{\alp_2} \nab h|.
\end{equation}
Here, in \eqref{trans.est.higher.1}, we have used the equation \eqref{gen.transportcoord}.

Hence, applying Lemma~\ref{trans.est} to $\nab^\alp h$ (instead of $h$) with $\sigma + |\alp|$ in place of $\sigma$ in the weight function $w$, using \eqref{trans.est.higher.0}, \eqref{trans.est.higher.1} and \eqref{trans.est.higher.2}, and summing over all $|\alp|\leq 3$, we obtain
\begin{equation}\label{trans.est.higher.main}
\begin{split}
 \|h\|_{H^3_{\sigma}}(t) 
\ls & \|h\|_{H^3_{\sigma}}(0) +\int_0^t \|f\|_{H^3_{\sigma+\f{\ep}{10}}} \, dt' + \underbrace{\int_0^t \left\|\sum_{|\alp_1|+|\alp_2| = |\alp|-1} \left|\nab^{\alp_1}\nab \left( \log (L_{\bA}^{(n)})^t\right) \right| |\nab^{\alp_2} f| \right\|_{L^2_{\sigma+\f{\ep}{10}+|\alp|}}(t')\, dt'}_{=:I}\\
&+\underbrace{\int_0^t \left\| \sum_{|\alp_1|+|\alp_2| = |\alp|-1} |\nab^{\alp_1}\nab \overline{L_{\bA}^{(n)}} | |\nab^{\alp_2} \nab h| \right\|_{L^2_{\sigma+\f{\ep}{10}+|\alp|}}(t')\, dt'}_{=:II}\\
& + \underbrace{\int_0^t \left\| \sum_{|\alp_1|+|\alp_2|+|\alp_3| = |\alp|-1} |\nab^{\alp_1} \nab \left( \log (L_{\bA}^{(n)})^t\right)||\nab^{\alp_2} \overline{L_{\bA}^{(n)}} | |\nab^{\alp_3} \nab h| \right\|_{L^2_{\sigma+\f{\ep}{10}+|\alp|}}(t')\, dt'}_{=:III}.
\end{split}
\end{equation}

To proceed, we note that using the estimate \eqref{estu}, together with the bounds for $N^{(n-1)}$ and $\gamma^{(n-1)}$ in \eqref{estN.large} and \eqref{estgamma}, and Lemma~\ref{der}, Propositions~\ref{holder} and \ref{produit2}, we have
\begin{equation}\label{L.der.est.pro}
\left\|\nab \left( \log (L_{\bA}^{(n)})^t\right)\right\|_{H^2_{\de'''+1}} + \|\nab \overline{L_{\bA}^{(n)}}\|_{H^2_{\de'''+1}} \ls C(A_0, C_i).
\end{equation}
Here, note in particular there are terms growing as $|x|\to \infty $ in $N^{(n-1)}$ and $\gamma^{(n-1)}$ so that we need to use $\de'''$ instead of $\de''$ in the weights. Therefore, using Proposition~\ref{produit},
\begin{equation*}
\begin{split}
I \ls & \|\nab \left( \log (L_{\bA}^{(n)})^t\right)\|_{H^2_{\de'''+1}} \| f\|_{H^3_{\sigma+\f{\ep}{10}}} \ls C(A_0, C_i)\| f\|_{H^3_{\sigma+\f{\ep}{10}}}.
\end{split}
\end{equation*}
Similarly, for $II$ and $III$, we can use \eqref{L.der.est.pro} and Proposition~\ref{produit} to get
$$ II \ls  \|\nab \overline{L_{\bA}^{(n)}}|_{H^2_{\de'''+1}} \| h\|_{H^3_{\sigma}} \ls C(A_0, C_i)\| h\|_{H^3_{\sigma}}$$
and
\begin{equation*}
\begin{split}
III \ls & \left\|\nab \left( \log (L_{\bA}^{(n)})^t\right) \right\|_{H^2_{\de'''+1}} (1 + \|\nab \overline{L_{\bA}^{(n)}}\|_{H^2_{\de'''+1}}) \| h\|_{H^3_{\sigma}} \ls C(A_0, C_i)\| h\|_{H^3_{\sigma}}.
\end{split}
\end{equation*}
Notice here that in $III$, there is a potentially growing factor of $\overline{L_{\bA}^{(n)}}$, but the weights are strong enough to handle it, as long as $\ep$ is sufficiently small. Plugging in the estimates for $I$, $II$ and $III$ into \eqref{trans.est.higher.main}, we thus obtain
$$\sup_{t\in [0,T]}\|h\|_{H^3_{\sigma}}(t) 
\ls  \|h\|_{H^3_{\sigma}}(0) +C(A_0, C_i) \int_0^T \left( \|f\|_{H^3_{\sigma+\f{\ep}{10}}} + \|h\|_{H^3_{\sigma}}(t)\right)(t) \, dt. $$
The conclusion therefore follows from Gr\"onwall's inequality.
\end{proof}

\begin{prp}[Estimates for $L^{(n+1)}_\bA$]\label{prpL}

	For $n\geq 2$, the following estimates hold:
	\begin{align}
	\left\|e^{2\gamma^{(n)}} (L^{(n+1)}_{\bA})^i+\overrightarrow{c_{\bA}}^i \right\|_{H^2_{\delta''}}+\left\|N^{(n)}e^{\gamma^{(n)}} (L^{(n+1)}_{\bA})^t - |\overrightarrow{c_{\bA}}| \right\|_{H^2_{\delta''}} & \ls C_{eik}, \label{prpL.0} \\
	\left\|e^{2\gamma^{(n)}} (L^{(n+1)}_{\bA})^i+\overrightarrow{c_{\bA}}^i \right\|_{H^3_{\delta''}}+\left\|N^{(n)}e^{\gamma^{(n)}} (L^{(n+1)}_{\bA})^t - |\overrightarrow{c_{\bA}}| \right\|_{H^3_{\delta''}} &\ls A_0 C_{i}, \label{prpL.1} \\
	\|\partial_t \left(e^{2\gamma^{(n)}} (L^{(n+1)}_{\bA})^i\right)\|_{H^2_{\delta'''}} + \|\partial_t \left(N^{(n)}e^{\gamma^{(n)}} (L^{(n+1)}_{\bA})^t\right)\|_{H^2_{\delta'''}} &\ls A_0 C_{i}.\label{prpL.2}
	\end{align}
\end{prp}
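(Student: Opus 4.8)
The plan is to estimate the iterate $L_{\bA}^{(n+1)}$ by combining the transport estimates in Lemmas~\ref{trans.est} and \ref{trans.est.higher} with the equation \eqref{un}. The main subtlety is that \eqref{un} is a transport equation for the \emph{coordinate components} $(L_{\bA}^{(n+1)})^\alpha$, whereas the quantities that are well-behaved in weighted Sobolev spaces are the renormalized quantities $e^{2\gamma^{(n)}}(L_{\bA}^{(n+1)})^i + \overrightarrow{c_{\bA}}^i$ and $N^{(n)}e^{\gamma^{(n)}}(L_{\bA}^{(n+1)})^t - |\overrightarrow{c_{\bA}}|$. So the first step is to derive, from \eqref{un} and the defining relations for $g^{(n)}$, transport equations of the schematic form
\begin{equation*}
L_{\bA}^{(n)}\left(e^{2\gamma^{(n)}}(L_{\bA}^{(n+1)})^i + \overrightarrow{c_{\bA}}^i\right) = \mathcal F_i^{(n)},\qquad L_{\bA}^{(n)}\left(N^{(n)}e^{\gamma^{(n)}}(L_{\bA}^{(n+1)})^t - |\overrightarrow{c_{\bA}}|\right) = \mathcal G^{(n)},
\end{equation*}
where the inhomogeneities $\mathcal F_i^{(n)}$, $\mathcal G^{(n)}$ are built out of the Christoffel symbols $(\Gamma^{(n)})^\alpha_{\mu\nu}$ (hence first derivatives of $\gamma^{(n)}$, $N^{(n)}$, $\beta^{(n)}$), the components $L_{\bA}^{(n)}$, and the transport-derivatives $L_{\bA}^{(n)}$ applied to the metric coefficients $e^{2\gamma^{(n)}}$ and $N^{(n)}e^{\gamma^{(n)}}$. (These computations are collected in Appendix~\ref{app.C}; I would quote them rather than redo them.) The point of conjugating by these particular factors is precisely to cancel the worst terms in $\Gamma^{(n)}$, leaving an inhomogeneity that is an acceptable bilinear (or better) expression.

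Next I would estimate the inhomogeneities $\mathcal F_i^{(n)}$, $\mathcal G^{(n)}$ in the weighted spaces $H^2_{\delta''+\frac{\ep}{10}}$ and $H^3_{\delta''+\frac{\ep}{10}}$ (so that they can be fed into Lemmas~\ref{trans.est} and \ref{trans.est.higher} with a slightly worsened weight, which is why $\delta'''=\delta''-\ep$ appears in \eqref{prpL.2}). For \eqref{prpL.0} and \eqref{prpL.1} one uses the metric bounds from Propositions~\ref{prpN}, \ref{prpbeta}, \ref{prpgamma}, and the induction hypotheses \eqref{estu.small}--\eqref{estu}, \eqref{estgamma}--\eqref{estH}, together with the product/composition estimates of Appendix~\ref{weightedsobolev} (Propositions~\ref{produit}, \ref{holder}, Lemmas~\ref{produit2}, \ref{der}); the only slightly delicate point is tracking the logarithmically growing weights $e^{2\alp\chi\log|x|}$, $N_{asymp}\chi\log|x|$, which force the weight degradation $\delta'\mapsto\delta''$ but are otherwise harmless since $\alp, N_{asymp}=O(\ep^2)$. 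Then Lemma~\ref{trans.est.higher} applied with $\sigma = \delta''$ (and $\ell=2,3$), the data estimate \eqref{L.init.bd}, and the smallness of $T$ give \eqref{prpL.0} and \eqref{prpL.1}; the constant $C_{eik}$ in \eqref{prpL.0} comes from the data bound \eqref{L.init.bd}, while the $A_0 C_i$ in \eqref{prpL.1} comes from the higher-order metric terms in $\mathcal F$, $\mathcal G$.

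For the time-derivative estimate \eqref{prpL.2} I would commute \eqref{un} (in its renormalized form) with $\partial_t$ — or more conveniently with $e_0^{(n)}$ and then convert back using $\partial_t = e_0^{(n)} + (\beta^{(n)})^i\partial_i$ and \eqref{estbeta.small}, \eqref{estbeta.large} — and apply Lemma~\ref{trans.est.higher} again, now at the $H^2_{\delta'''}$ level. The commutator $[\partial_t, L_{\bA}^{(n)}\rd_\rho]$ produces terms involving $\partial_t L_{\bA}^{(n)}$, which are controlled by \eqref{estu}, and $\partial_t$ of the metric coefficients, controlled by the $\partial_t$-bounds in Propositions~\ref{prpN}, \ref{prpbeta}, \ref{prpgamma}; the initial data for $\partial_t L_{\bA}$ is obtained from the equation \eqref{un} evaluated at $t=0$ together with the data estimates of Section~\ref{sec.data.constraints}. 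The main obstacle, as usual in this paper, is bookkeeping: one must verify that every term in $\mathcal F_i^{(n)}$, $\mathcal G^{(n)}$ and in the $\partial_t$-commuted inhomogeneities is either (i) at least quadratic in small quantities, contributing a gain of $\ep$, or (ii) linear in a single $C_i$-sized factor multiplied by small factors, so that the final bound is $\ls A_0 C_i$ (resp.\ $\ls A_0 C_i$ with at most one extra $C_i$ in the $\partial_t^2$-type terms that do not arise here) rather than, say, $C_i^2$ — and simultaneously that all the weights close, using the room afforded by $\delta'' + \frac{\ep}{10} < \delta'+1$ and $\delta''' + \frac{\ep}{10} < \delta''+1$. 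This is entirely analogous to the elliptic estimates already carried out, so I would present the structure carefully and relegate the case-by-case term chasing to "we omit the details" where it parallels earlier arguments.
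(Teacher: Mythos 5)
Your approach for \eqref{prpL.0} and \eqref{prpL.1} matches the paper's: pass to the frame $\{e_0^{(n)},\partial_i\}$, conjugate the geodesic equation by $e^{2\gamma^{(n)}}$ (resp.\ $e^{\gamma^{(n)}}N^{(n)}$) to cancel the worst Christoffel coefficient $(\wht\Gamma^{(n)})^0_{00}=e_0^{(n)}\log N^{(n)}$, estimate the resulting inhomogeneity, and apply Lemma~\ref{trans.est.higher}.

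For \eqref{prpL.2}, however, you take a genuinely different and strictly more laborious route. The paper does \emph{not} commute the transport equation with $\partial_t$ and re-run the energy estimate. Since \eqref{un} is first-order and $(L^{(n)}_\bA)^t$ is bounded away from zero by \eqref{estu.lower}, one can simply solve the renormalized transport equation \emph{algebraically} for the $\partial_t$-derivative, writing $\partial_t$ of the renormalized quantity as $-\f{(L^{(n)}_\bA)^i}{(L^{(n)}_\bA)^t}$ times its spatial gradient plus $\f{1}{(L^{(n)}_\bA)^t}$ times the inhomogeneity, and then estimate these two pieces directly: the first with the just-established $H^3_{\delta''}$ bound \eqref{prpL.1} (contributing the $A_0C_i$), the second --- now requiring only two derivatives --- using the $\ep$-small $H^2$-level bounds \eqref{prpL.0}, \eqref{estu.small}, \eqref{estN.small}, \eqref{estbeta.small}, Proposition~\ref{prpsmallness}, which keeps that piece bounded independently of $C_i$. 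Your commutation-plus-Gr\"onwall variant is not wrong, but note that it requires the initial value $\partial_t L^{(n+1)}_\bA\restriction_{t=0}$, which you would obtain by precisely this algebraic manipulation at $t=0$ --- so the paper's argument is a subroutine of yours; it also requires handling the commutator $[\partial_t, L^{(n)}_\bA]$ and $\partial_t$ of the inhomogeneity, which re-introduce the unknown (closing a Gr\"onwall loop) and involve second $\partial_t$-derivatives of metric quantities not directly furnished by the induction hypotheses, making the bookkeeping to keep the final constant at $A_0C_i$ noticeably more delicate. One small misattribution: the weight degradation $\delta''\to\delta'''$ in \eqref{prpL.2} is not the $\ep/10$ loss built into Lemma~\ref{trans.est.higher}; it is a full $\ep$ loss coming from absorbing the logarithmically growing factors $e^{\gamma^{(n)}}$, $N^{(n)}$ and the ratio $\f{(L^{(n)}_\bA)^i}{(L^{(n)}_\bA)^t}$ in the algebraic solve.
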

\begin{proof}
For this proof, it is convenient to write $L^{(n+1)}_\bA$ in the basis $\{e_0^{(n)}, \rd_i\}$. For this we use the notation 
$$L^{(n+1)}_\bA = ( \wht {L^{(n+1)}_\bA} )^0 e_0^{(n)} + ( \wht {L^{(n+1)}_\bA} )^i \rd_i.$$
One checks that
\begin{equation}\label{L.transform}
(L^{(n+1)}_\bA)^t = (\wht {L^{(n+1)}_\bA})^0,\quad (L^{(n+1)}_\bA)^i = (\wht {L^{(n+1)}_\bA})^i - (\bt^{(n)})^i (\wht {L^{(n+1)}_\bA})^0.
\end{equation}
We similarly decompose $L^{(n)}_\bA$ with respect to $\{e_0^{(n)}, \rd_i\}$ (instead of $\{e_0^{(n-1)}, \rd_i\}$) and define $\wht {L^{(n)}_\bA}$ analogously.

\textbf{Proof of \eqref{prpL.0} and \eqref{prpL.1}.} We first estimate the $(\wht {L^{(n+1)}_\bA})^0$ component, which satisfies
\begin{equation}\label{L.coord.eqn}
(\wht {L^{(n)}_\bA})^\alp e_{\alp}^{(n)} (\wht {L^{(n+1)}_\bA})^0= -(\wht \Gamma^{(n)})^{0}_{\alp\bt} (\wht {L^{(n)}_\bA})^\alp (\wht {L^{(n+1)}_\bA})^\bt,
\end{equation}
where $e_i^{(n)} = \rd_i$ and $(\wht \Gamma^{(n)})^{\mu}_{\alp\bt}$ is defined by $D^{(n)}_{e_{\alp}} e_{\bt}= (\wht \Gamma^{(n)})_{\alp\bt}^\mu e_{\mu}$, which are given by \eqref{connections}. 

According to \eqref{connections} (applied to $g^{(n)}$), and the estimates in \eqref{estN.small}, \eqref{estN.large}, \eqref{estbeta.small}, \eqref{estbeta.large}{,} \eqref{estgamma} {and Proposition~\ref{prpeogamma}}, the worst component {(from the point of view of the weights)} of $(\wht \Gamma^{(n)})^{\mu}_{\alp\bt}$ is $(\wht \Gamma^{(n)})^{0}_{00} = e_0^{(n)} \log N^{(n)}$; and all the remaining components have $H^3_{\de'}$ norm bounded above by $\ls C(A_1) C_i$. (To see this, simply notice that if there is a spatial (as opposed to $e_0^{(n)}$) derivatives of the metric components have better spatial decay, and that $e_0\gamma$ also has better spatial decay since $\alp$ is independent of $t$.)

Now, in \eqref{L.coord.eqn}, the worst component $(\wht \Gamma^{(n)})^{0}_{00}$ indeed appears on the RHS. Nevertheless, if we consider the equation instead for 
\begin{equation}\label{L.coord.eqn.renormalized}
\begin{split}
& (\wht {L^{(n)}_\bA})^\alp e_{\alp}^{(n)} \left( e^{\gamma^{(n)}} N^{(n)} (\wht {L^{(n+1)}_\bA})^0\right)\\
= & e^{\gamma^{(n)}} N^{(n)}\times \mbox{(RHS of \eqref{L.coord.eqn})} + (\wht {L^{(n)}_\bA})^\alp \left(e_{\alp}^{(n)} \left(e^{\gamma^{(n)}} N^{(n)}\right) \right)(\wht {L^{(n+1)}_\bA})^0,
\end{split}
\end{equation}
we cancel off the term $(e_0 \log N^{(n)}) (\wht {L^{(n)}_\bA})^0 (\wht {L^{(n+1)}_\bA})^0$ (and the other terms that are introduced also take the form of $(\wht {L^{(n)}_\bA})^\alp (\wht {L^{(n+1)}_\bA})^{\bt}$ multiplied by an $H^3_{\de'}$ function.)

Next, since $\overrightarrow c_\bA$ is a constant vector, we can rewrite \eqref{L.coord.eqn.renormalized} as
\begin{equation}\label{L.coord.eqn.renormalized.1}
(\wht {L^{(n)}_\bA})^\alp e_{\alp}^{(n)} \left( e^{\gamma^{(n)}} N^{(n)} (\wht {L^{(n+1)}_\bA})^0 - |\overrightarrow {c_{\bA}}|\right) = \mbox{(RHS of \eqref{L.coord.eqn.renormalized})}.
\end{equation} 
On the other hand, on the RHS we can write the components of $\wht {L^{(n)}_\bA}$ as
$$(\wht {L^{(n)}_\bA})^i = (L^{(n)}_\bA)^i + e^{-2\gamma^{(n-1)}}\overrightarrow {c_{\bA}}^i - e^{-2\gamma^{(n-1)}}\overrightarrow {c_{\bA}}^i + (\bt^{(n)})^i (L^{(n)}_\bA)^t,$$
 $$(\wht {L^{(n)}_\bA})^0 = ({L^{(n)}_\bA})^t - (N^{(n-1)})^{-1}e^{-\gamma^{(n-1)}}|\overrightarrow {c_{\bA}}| + (N^{(n-1)})^{-1}e^{-\gamma^{(n-1)}}|\overrightarrow {c_{\bA}}|;$$ and write the components of $(\wht {L^{(n+1)}_\bA})^\alp$ as $$(\wht {L^{(n+1)}_\bA})^i = (\wht {L^{(n+1)}_\bA})^i + e^{-2\gamma^{(n)}}\overrightarrow {c_{\bA}}^i - e^{-2\gamma^{(n)}}\overrightarrow {c_{\bA}}^i,$$
 $$(\wht {L^{(n+1)}_\bA})^0 = (\wht {L^{(n+1)}_\bA})^0 - (N^{(n)})^{-1}e^{-\gamma^{(n)}}|\overrightarrow {c_{\bA}}| + (N^{(n)})^{-1}e^{-\gamma^{(n)}}|\overrightarrow {c_{\bA}}|;$$
  and use the triangle inequality. Therefore, we conclude using the estimates for $(\wht \Gamma^{(n)})^{\mu}_{\alp\bt}$ we mentioned above and \eqref{estN.large}, \eqref{estbeta.large}, \eqref{estgamma}, \eqref{estu} that for $\ell=2,3$, the RHS of \eqref{L.coord.eqn.renormalized.1} is bounded above in the $H^\ell_{\de''+\f{\ep}{10}}$ norm as follows:
\begin{equation}\label{L.coord.eqn.renormalized.bound}
\begin{split}
& \left\| (\mbox{RHS of \eqref{L.coord.eqn.renormalized.1}})\right\|_{H^\ell_{\de''+\f{\ep}{10}}} \\
\ls & C(A_1, C_i) \left(1 + \|e^{2\gamma^{(n-1)}} (L_{\bA}^{(n)})^i + \overrightarrow{c_{\bA}}^i\|_{H^3_{\de''}}+\|N^{(n-1)} e^{\gamma^{(n-1)}} (L_{\bA}^{(n)})^t - |\overrightarrow{c_{\bA}}|\|_{H^3_{\de''}} \right.\\
& \qquad\left. + \|N^{(n-1)} e^{\gamma^{(n-1)}} (\bt^{(n)})^i (L_{\bA}^{(n)})^t \|_{H^3_{\de''}}\right) \\
&\times \left(1 + \|e^{2\gamma^{(n)}} (\wht{L_{\bA}^{(n+1)}})^i + \overrightarrow{c_{\bA}}^i\|_{H^\ell_{\de''}}+\|N^{(n)} e^{\gamma^{(n)}} (\wht{L_{\bA}^{(n+1)}})^0 - |\overrightarrow{c_{\bA}}|\|_{H^\ell_{\de''}}\right)\\
\ls & C(A_1, C_i)\left(1 + \|e^{2\gamma^{(n)}} (\wht{L_{\bA}^{(n+1)}})^i + \overrightarrow{c_{\bA}}^i\|_{H^\ell_{\de''}}+\|N^{(n)} e^{\gamma^{(n)}} (\wht{L_{\bA}^{(n+1)}})^0 - |\overrightarrow{c_{\bA}}|\|_{H^\ell_{\de''}}\right).
\end{split}
\end{equation}
Here, we used the weighted space $H^\ell_{\de''+\f{\ep}{10}}$ instead of $H^\ell_{\de'}$ to handle the logarithmically diverging terms $e^{-2\gamma^{(n-1)}}$, $e^{-2\gamma^{(n)}}$, etc.

By Lemma~\ref{trans.est.higher}, \eqref{L.init.bd}, \eqref{L.coord.eqn.renormalized.1} and \eqref{L.coord.eqn.renormalized.bound}, we have, for $\ell=2,3$,
\begin{equation}\label{Lest.final.1}
\begin{split}
\sup_{t\in [0,T]} & \left\|e^{\gamma^{(n)}} N^{(n)} (\wht {L^{(n+1)}_\bA})^0 - |\overrightarrow {c_{\bA}}| \right\|_{H^\ell_{\delta''}}(t) \ls  \left\|e^{\gamma^{(n)}} N^{(n)} (\wht {L^{(n+1)}_\bA})^0 - |\overrightarrow {c_{\bA}}| \right\|_{H^\ell_{\delta''}}(0) \\
& +  C(A_1, C_i) T\sup_{t\in [0,T]}\left(1 + \|e^{2\gamma^{(n)}} (\wht {L^{(n+1)}_\bA})^i + \overrightarrow{c_{\bA}}^i\|_{H^\ell_{\de''}}+\|N^{(n)} e^{\gamma^{(n)}} (\wht {L^{(n+1)}_\bA})^0 - |\overrightarrow{c_{\bA}}|\|_{H^\ell_{\de''}}\right)(t).
\end{split}
\end{equation}
An entirely analogous argument for the equation of $e^{2\gamma^{(n)}} (\wht {L^{(n+1)}_{\bA}})^i+\overrightarrow{c_{\bA}}^i$ instead of $e^{\gamma^{(n)}} N^{(n)} (\wht {L^{(n+1)}_\bA})^0 - |\overrightarrow {c_{\bA}}|$ implies that for $\ell=2,3$,
\begin{equation}\label{Lest.final.2}
\begin{split}
\sup_{t\in [0,T]} & \left\|e^{2\gamma^{(n)}} (\wht {L^{(n+1)}_{\bA}})^i+\overrightarrow{c_{\bA}}^i \right\|_{H^\ell_{\delta''}}(t) \ls \left\|e^{2\gamma^{(n)}} (\wht {L^{(n+1)}_{\bA}})^i+\overrightarrow{c_{\bA}}^i \right\|_{H^\ell_{\delta''}}(0) \\
& +  C(A_1, C_i) T\sup_{t\in [0,T]}\left(1 + \|e^{2\gamma^{(n)}} (\wht {L^{(n+1)}_\bA})^i + \overrightarrow{c_{\bA}}^i\|_{H^\ell_{\de''}}+\|N^{(n)} e^{\gamma^{(n)}} (\wht {L^{(n+1)}_\bA})^0 - |\overrightarrow{c_{\bA}}|\|_{H^\ell_{\de''}}\right)(t).
\end{split}
\end{equation}
Combining \eqref{Lest.final.1} and \eqref{Lest.final.2} and choosing $T$ sufficiently small give that for $\ell=2,3$,
\begin{equation}\label{Lest.final.3}
\begin{split}
&\sup_{t \in [0,T]} \left(\left\|e^{\gamma^{(n)}} N^{(n)} (\wht {L^{(n+1)}_\bA})^0 - |\overrightarrow {c_{\bA}}| \right\|_{H^\ell_{\delta''}}(t) + \left\|e^{2\gamma^{(n)}} (\wht {L^{(n+1)}_{\bA}})^i+\overrightarrow{c_{\bA}}^i \right\|_{H^\ell_{\delta''}} (t)\right) \\
\ls &\left\|e^{\gamma^{(n)}} N^{(n)} (\wht {L^{(n+1)}_\bA})^0 - |\overrightarrow {c_{\bA}}| \right\|_{H^\ell_{\delta''}}(0) + \left\|e^{2\gamma^{(n)}} (\wht {L^{(n+1)}_{\bA}})^i+\overrightarrow{c_{\bA}}^i \right\|_{H^\ell_{\delta''}} (0).
\end{split}
\end{equation}
To obtain \eqref{prpL.0} from \eqref{Lest.final.3}, we use \eqref{L.init.bd} to control the data term and note that 
\begin{itemize}
\item by \eqref{L.transform}, $(\wht {L^{(n+1)}_{\bA}})^0 = (L^{(n+1)}_{\bA})^t$; 
\item and that by \eqref{L.transform}, \eqref{est.constraints.0} (for $\alp$), \eqref{estbeta.small} (for $\bt^{(n)}$), \eqref{estgamma} (for $\wht \gamma^{(n)}$) and \eqref{Lest.final.3},
\begin{equation*}
\begin{split}
 \| e^{2\gamma^{(n)}} (\wht {L^{(n+1)}_{\bA}})^i - e^{2\gamma^{(n)}} (L^{(n+1)}_\bA)^i \|_{H^2_{\de''}}
\ls & \|e^{2\gamma^{(n)}} (\bt^{(n)})^i (\wht {L^{(n+1)}_\bA})^0 \|_{H^2_{\de''}} \ls C_{eik}.
\end{split}
\end{equation*}
\end{itemize}
Finally, to obtain \eqref{prpL.1} from \eqref{Lest.final.3}, we argue similarly except that
\begin{itemize}
\item we use Corollary~\ref{cor.data} instead of \eqref{L.init.bd} to estimate the initial data term; 
\item and that we need to use \eqref{estbeta.large} instead of \eqref{estbeta.small} to control $\bt^{(n)}$ in $H^3_{\de'}$. 
\end{itemize}
Note that these result in the estimate being linear in $A_0 C_i$.

\textbf{Proof of \eqref{prpL.2}.} To obtain \eqref{prpL.2}, we directly use the equation \eqref{L.coord.eqn.renormalized.1} and the corresponding equation for $e^{2\gamma^{(n)}} (\wht {L^{(n+1)}_{\bA}})^i$. For simplicity, let us just consider the bound for 
$\rd_t\left(e^{\gamma^{(n)}} N^{(n)} (\wht {L^{(n+1)}_\bA})^0 \right)$. For this, we express $\wht{L^{(n)}_\bA}$ in terms of $L^{(n)}_\bA$ and write \eqref{L.coord.eqn.renormalized.1} as follows:
\begin{equation}\label{L.coord.eqn.renormalized.2}
\rd_t \left( e^{\gamma^{(n)}} N^{(n)} (\wht {L^{(n+1)}_\bA})^0 \right) = - \f{(L^{(n)}_\bA)^i}{(L^{(n)}_\bA)^t} \rd_i\left( e^{\gamma^{(n)}} N^{(n)} (\wht {L^{(n+1)}_\bA})^0 - |\overrightarrow {c_{\bA}}|\right) + \f{\mbox{(RHS of \eqref{L.coord.eqn.renormalized})}}{(L^{(n)}_\bA)^t}.
\end{equation} 
The first term on the RHS of \eqref{L.coord.eqn.renormalized.2} can be estimated using \eqref{prpL.1} (which we just proved), \eqref{estu.lower}, \eqref{estN.small}, Proposition~\ref{prpsmallness}, Lemma~\ref{der} and Proposition~\ref{produit} as follows: (Recall here that the constant in $\ls$ can depend on $C_{eik}$)
\begin{equation*}
\begin{split}
&\left\|\f{(L^{(n)}_\bA)^i}{(L^{(n)}_\bA)^t} \rd_i\left( e^{\gamma^{(n)}} N^{(n)} (\wht {L^{(n+1)}_\bA})^0 - |\overrightarrow {c_{\bA}}|\right)\right\|_{H^2_{\de'''}}\\
\ls & \left(1+ \left\| e^{\gamma^{(n)}} N^{(n)} (L^{(n)}_\bA)^t - |\overrightarrow {c_{\bA}}| \right\|_{H^2_{\de''}}^2 + \left\| e^{2\gamma^{(n)}} (L^{(n)}_\bA)^i + \overrightarrow {c_{\bA}}^i \right\|_{H^2_{\de''}}^2 \right)\left\| e^{\gamma^{(n)}} N^{(n)} (\wht {L^{(n+1)}_\bA})^0 - |\overrightarrow {c_{\bA}}| \right\|_{H^3_{\de''}}\\
\ls & A_0 C_i.
\end{split}
\end{equation*}
Finally, for the second term on the RHS of \eqref{L.coord.eqn.renormalized.2}, we need to get an estimate better than \eqref{L.coord.eqn.renormalized.bound} (in terms of dependence on the constants), which is possible since we now only have up to $2$ derivatives. The key point is that the appropriately-weighted-$H^2$ norms for $\bt^{(n)}$, $\wht \gamma^{(n)}$, $\wht N^{(n)}$, $e^{2\gamma^{(n)}} (L^{(n+1)}_{\bA})^i+\overrightarrow{c_{\bA}}^i$ and $N^{(n)}e^{\gamma^{(n)}} (L^{(n+1)}_{\bA})^t - |\overrightarrow{c_{\bA}}|$ are bounded independently of $C_i$, $A_0$, $A_1$ or $A_2$. More precisely, by Lemma~\ref{lm.constraint}, \eqref{estN.small}, \eqref{estbeta.small}, \eqref{estu.lower}, \eqref{estu}, Proposition~\ref{prpsmallness}, \eqref{prpL.0} (which we just proved) and Proposition~\ref{produit}, we have 
\begin{equation*}
\begin{split}
& \left\| \f{\mbox{(RHS of \eqref{L.coord.eqn.renormalized})}}{(L^{(n)}_\bA)^t} \right\|_{H^2_{\de'''}}\\
\ls & \left(1 + |N^{(n)}_{asymp}| + \|\bt^{(n)}\|_{H^2_{\de'}} + \|\wht N^{(n)}\|_{H^2_{\de}} + \|\wht \gamma^{(n)}\|_{H^2_{\de}} \right)^2\\
& \times \left(1 + \|e^{2\gamma^{(n-1)}} (L_{\bA}^{(n)})^i + \overrightarrow{c_{\bA}}^i\|_{H^2_{\de''}}+\|N^{(n-1)} e^{\gamma^{(n-1)}} (L_{\bA}^{(n)})^t - |\overrightarrow{c_{\bA}}|\|_{H^2_{\de''}}\right)\\
& \times \left(1+ \left\|e^{2\gamma^{(n)}} (L^{(n+1)}_{\bA})^i+\overrightarrow{c_{\bA}}^i \right\|_{H^2_{\delta''}}+\left\|N^{(n)}e^{\gamma^{(n)}} (L^{(n+1)}_{\bA})^t - |\overrightarrow{c_{\bA}}| \right\|_{H^2_{\delta''}}\right) \ls C.
\end{split}
\end{equation*}
Note that here on the LHS we use $H^2_{\de'''}$ instead of $H^2_{\de''}$ to compensate for the factors growing as $|x|\to \infty$.

Combining the above estimates and plugging into \eqref{L.coord.eqn.renormalized.2} give \eqref{prpL.2} for $\rd_t \left( e^{\gamma^{(n)}} N^{(n)} (\wht {L^{(n+1)}_\bA})^0 \right)$. The other term can be dealt with similarly. \qedhere
\end{proof}

\begin{prp}[Lower bound for $N^{(n)} e^{\gamma^{(n)}} (L^{(n+1)}_\bA)^t$]
For $n\geq 2$, the following lower bound holds:
$$ \min_\bA \inf_{x\in \mathbb R^2} \left| N^{(n)} e^{\gamma^{(n)}} (L^{(n+1)}_\bA)^t\right|(x) \geq \f 12 C_{eik}^{-1}.$$
\end{prp}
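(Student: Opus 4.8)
The plan is to derive a scalar transport equation for the logarithm of $N^{(n)} e^{\gamma^{(n)}} (L^{(n+1)}_\bA)^t$ along the integral curves of $L^{(n)}_\bA$, and then run a Gr\"onwall argument exploiting the smallness of the time interval. Concretely, writing $w := N^{(n)}e^{\gamma^{(n)}}(\wht{L^{(n+1)}_\bA})^0 = N^{(n)}e^{\gamma^{(n)}}(L^{(n+1)}_\bA)^t$ (using \eqref{L.transform}), equation \eqref{L.coord.eqn.renormalized.1} from the previous proof shows that $w$ satisfies $(\wht{L^{(n)}_\bA})^\alpha e_\alpha^{(n)} w = G$, where $G$ collects the renormalized connection terms on the right-hand side of \eqref{L.coord.eqn.renormalized}. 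Dividing by $w$, we get that $\log|w|$ satisfies a transport equation $(\wht{L^{(n)}_\bA})^\alpha e_\alpha^{(n)} \log|w| = G/w$, so that along any integral curve of $L^{(n)}_\bA$ parametrized suitably,
\begin{equation*}
\left| \log|w|(t,x) - \log|w|(0,x_0) \right| \leq \int_0^t \left| \frac{G}{w} \right| \, dt',
\end{equation*}
where $x_0$ is the foot of the characteristic through $(t,x)$.

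First I would record that the data term is controlled by \eqref{L.init.bd.lower}, which gives $\min_\bA \inf_x |N e^\gamma (L_\bA)^t|(x)\restriction_{\Sigma_0} \geq C_{eik}^{-1}$, i.e., $\log|w|(0,\cdot) \geq -\log C_{eik}$ pointwise. Next I would bound the integrand $|G/w|$ in $L^\infty_x$ uniformly on $[0,T]$: the denominator satisfies $|w|\geq \tfrac12 C_{eik}^{-1}$ on $[0,T]$ by the induction hypothesis \eqref{estu.lower} applied at level $n+1$ — but of course that is what we are trying to close, so more carefully one runs a continuity/bootstrap argument, assuming $|w|\geq \tfrac14 C_{eik}^{-1}$ on a maximal subinterval and improving it. The numerator $G$ is built from $(\wht\Gamma^{(n)})^\mu_{\alpha\beta}$ (estimated via \eqref{connections} together with \eqref{estN.small}, \eqref{estgamma}, \eqref{estbeta.small}, Proposition~\ref{prpsmallness} and Proposition~\ref{prpeogamma}), from $e_\alpha^{(n)}(e^{\gamma^{(n)}}N^{(n)})$, and from the components of $\wht{L^{(n)}_\bA}$ and $\wht{L^{(n+1)}_\bA}$, all of which are bounded in suitable weighted Sobolev — hence (by Proposition~\ref{holder}) $L^\infty$ — norms by $C(A_1, C_i)$ using \eqref{estu.small}, \eqref{estu}, \eqref{prpL.0} and \eqref{prpL.1}. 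Thus $\|G/w\|_{L^\infty}(t) \leq C(A_1,C_i)$ uniformly on $[0,T]$, where the implicit constant also absorbs the lower bound on $|w|$ and the factor $C_{eik}$.

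Putting these together yields $\log|w|(t,x) \geq -\log C_{eik} - C(A_1, C_i)\, T$ for all $(t,x)\in[0,T]\times\mathbb R^2$, and since $T$ may be chosen small depending on $C_i$, $A_1$, $C_{eik}$ (as permitted by the rules for choosing $T$), we can arrange $C(A_1,C_i)\, T \leq \log 2$, which gives $|w|(t,x)\geq \tfrac12 C_{eik}^{-1}$, closing the bootstrap and proving the claim. The main obstacle I anticipate is purely bookkeeping: one must verify that $G$ — which a priori contains the bad component $(\wht\Gamma^{(n)})^0_{00} = e_0^{(n)}\log N^{(n)}$ — is genuinely controlled after the renormalization carried out in passing to \eqref{L.coord.eqn.renormalized}, i.e., that the cancellation already exploited in the proof of Proposition~\ref{prpL} makes every surviving term a product of components of $\wht{L^{(n)}_\bA}$, $\wht{L^{(n+1)}_\bA}$ with an $L^\infty$-bounded metric factor, and that the weights (the $\delta''$, $\delta'''$ shifts and the $|x|^{\ep/10}$ growth of $e^{\gamma^{(n)}}N^{(n)}$) never obstruct the $L^\infty$ bound along characteristics; this is routine given the estimates already established but requires care with the logarithmically growing metric components.
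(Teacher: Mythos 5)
Your argument is sound in outline but considerably more elaborate than what the paper does, and it re-derives information that is already in hand. The paper's proof is a one-liner: the estimate \eqref{prpL.2} already gives $\|\partial_t(N^{(n)}e^{\gamma^{(n)}}(L^{(n+1)}_\bA)^t)\|_{H^2_{\de'''}} \lesssim A_0 C_i$, and since $\delta'''+1 > 0$ Sobolev embedding (Proposition~\ref{holder}) upgrades this to an $L^\infty$ bound on $\partial_t w$; then at each \emph{fixed} $x$ one simply writes $w(t,x)=w(0,x)+\int_0^t\partial_t w(t',x)\,dt' \geq C_{eik}^{-1}-C(A_0,C_i)T$, and chooses $T$ small. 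No transport equation, no characteristics, no logarithm, no bootstrap. Your route — returning to the renormalized ODE \eqref{L.coord.eqn.renormalized.1}, passing to $\log|w|$ and integrating along integral curves of $L^{(n)}_\bA$ — works once the bookkeeping you flag is done, but two of the complications it introduces are self-inflicted. First, the log substitution is what creates the need for a bootstrap lower bound on $|w|$; if instead you keep the equation linear and divide the transport operator by $(L^{(n)}_\bA)^t$ (bounded below by \eqref{estu.lower}, not by the quantity you are trying to control), you get $\partial_t w + \tfrac{(L^{(n)}_\bA)^i}{(L^{(n)}_\bA)^t}\partial_i w = G/(L^{(n)}_\bA)^t$ and can integrate directly with no circularity. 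Second, integrating along characteristics obliges you to verify that the backward characteristics stay in $\mathbb R^2$ on $[0,T]$ (the flow speed $(L^{(n)}_\bA)^i/(L^{(n)}_\bA)^t$ grows like $|x|^{C\ep^2}\log^2|x|$) and that the source $G$, after the cancellation of $(\wht\Gamma^{(n)})^0_{00}$, really is uniformly bounded — both true, but neither step is needed once you notice Proposition~\ref{prpL} already hands you a pointwise time-derivative bound. In short: correct idea, but you rebuilt a worse version of \eqref{prpL.2} instead of using it.
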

\begin{proof}
By \eqref{L.init.bd.lower}, at $t=0$, we have
$$ \min_\bA \inf_{x\in \mathbb R^2} \left| N^{(n)} e^{\gamma^{(n)}} (L^{(n+1)}_\bA)^t\right|(0,x) \geq C_{eik}^{-1}.$$
The desired estimate therefore follows from the bound for $\rd_t\left(N^{(n)} e^{\gamma^{(n)}} (L^{(n+1)}_\bA)^t\right)$ in Proposition~\ref{prpL} together with Proposition~\ref{holder}, after choosing $T$ to be sufficiently small.
\end{proof}

In our next lemma, we show that $F_{\bA}^{(n+1)}$ is supported in an appropriate compact set.
\begin{lm}[Support of $F_{\bA}^{(n+1)}$]\label{lm:cpt.supp.2}
Choosing $C_s$ (from \eqref{lm:cpt.supp}) larger if necessary, for $\ep$, $T$ sufficiently small (depending on $R$) and $n\geq 2$, $F_{\bA}^{(n+1)}$ is supported in the set $\{(t,x)\in [0,T]\times \mathbb R^2: C_s(1+R^\ep) t - |x|\geq -R\} $. In particular, choosing $T$ smaller if necessary, the support $supp(F_{\bA}^{(n+1)}) \subset \{(t,x)\in [0,T]\times \mathbb R^2: |x| \leq 2R\}$.
\end{lm}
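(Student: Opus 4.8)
The plan is to propagate the support using that $F_{\bA}^{(n+1)}$ solves the \emph{linear} transport equation \eqref{Fn}, $2(L_{\bA}^{(n)})^\rho\partial_\rho F_{\bA}^{(n+1)}=-\chi_{\bA}^{(n)}F_{\bA}^{(n)}$, for which the domain of dependence of a point is the single backward integral curve of $L_{\bA}^{(n)}$ through it. Write $w(t,x):=C_s(1+R^\ep)t-|x|$, so that the set in the statement is $\Omega:=\{0\le t\le T\}\cap\{w\ge -R\}$, and observe that for $T$ sufficiently small one has $\Omega\subset[0,T]\times\overline{B(0,2R)}$. By the induction hypothesis (the ``matter fields'' bullet of Section~\ref{sec:iteration.bdd}), $\mathrm{supp}(F_{\bA}^{(n)})\subset\Omega$, hence also $\mathrm{supp}(\chi_{\bA}^{(n)}F_{\bA}^{(n)})\subset\Omega$; moreover the data $F_{\bA}^{(n+1)}\restriction_{\Sigma_0}$ is supported in $B(0,R)=\{t=0\}\cap\Omega$. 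The standard support statement for transport equations then gives
$$\mathrm{supp}(F_{\bA}^{(n+1)})\subset\big\{\text{forward integral curves of }L_{\bA}^{(n)}\text{ emanating from }(\{t=0\}\cap B(0,R))\cup\mathrm{supp}(\chi_{\bA}^{(n)}F_{\bA}^{(n)})\big\},$$
provided each such curve extends backwards to $t=0$; this last point is ensured because $(L_{\bA}^{(n)})^t$ is bounded below by \eqref{estu.lower} and its components grow at most like $|x|^{C\ep^2}$ as $|x|\to\infty$ by \eqref{estu}, so no characteristic escapes to spatial infinity in finite $t$.

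Thus it suffices to show that $\{w\ge -R\}\cap\{0\le t\le T\}$ is forward-invariant under the flow of $L_{\bA}^{(n)}$. Since this set is contained in $[0,T]\times\overline{B(0,2R)}$, and on $\overline{B(0,2R)}$ the metric $g^{(n)}$ is close to Minkowski by \eqref{est.constraints.0}, \eqref{estN.small}, \eqref{estgamma} and Proposition~\ref{prpsmallness} (so in particular $N^{(n)}e^{-\gamma^{(n)}}\le 1+C\ep$ there), the computation \eqref{spacelike.calculation} from the proof of Lemma~\ref{lm:cpt.supp} applies: for $C_s$ sufficiently large (possibly larger than the one in Lemma~\ref{lm:cpt.supp}) and $\ep$, $T$ sufficiently small, the level sets of $w$ are spacelike with respect to $g^{(n)}$ on $\overline{B(0,2R)}$, and
$$(L_{\bA}^{(n)})^\rho\partial_\rho w=C_s(1+R^\ep)(L_{\bA}^{(n)})^t-\frac{x\cdot L_{\bA}^{(n)}}{|x|}>0$$
there, using that $(L_{\bA}^{(n)})^t\ge \tfrac12 C_{eik}^{-1}(N^{(n-1)}e^{\gamma^{(n-1)}})^{-1}>0$ by \eqref{estu.lower} while $|(L_{\bA}^{(n)})^i|$ is bounded by a multiple of $(L_{\bA}^{(n)})^t$ on $\overline{B(0,2R)}$ via \eqref{estu.small}. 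Consequently, along a forward integral curve starting in $\{w\ge -R\}\cap\{t\le T\}$, the function $w$ is strictly increasing as long as the curve stays in $[0,T]\times\overline{B(0,2R)}$, so a first exit through $\{w=-R\}$ (which would still lie in that box) is impossible; hence $w\ge -R$ along the entire curve within $\{t\le T\}$.

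Combining the two paragraphs, $\mathrm{supp}(F_{\bA}^{(n+1)})\subset\{w\ge -R\}\cap\{0\le t\le T\}=\Omega$, which is the first claim; choosing $T$ small enough that $C_s(1+R^\ep)T\le R$ then gives $\mathrm{supp}(F_{\bA}^{(n+1)})\subset\{|x|\le 2R\}$. The main obstacle here is really just the sign bookkeeping in the second paragraph — namely verifying $(L_{\bA}^{(n)})^\rho\partial_\rho w>0$, which is precisely the spacelikeness computation already carried out in Lemma~\ref{lm:cpt.supp} combined with the future-directedness of $L_{\bA}^{(n)}$ from \eqref{estu.lower} — together with the elementary check that backward characteristics through far-away points still reach $t=0$; both are routine given the weighted bounds already at our disposal, so the argument runs in parallel to (and is in fact simpler than) the proof of Lemma~\ref{lm:cpt.supp}. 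The inhomogeneity $\chi_{\bA}^{(n)}F_{\bA}^{(n)}$ is exactly where the induction hypothesis on $\mathrm{supp}(F_{\bA}^{(n)})$ is used.
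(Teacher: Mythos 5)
Your proof is correct and rests on the same underlying mechanism as the paper's: show that the set $\Omega:=\{C_s(1+R^\ep)t - |x|\geq -R,\ 0\leq t\leq T\}$ is forward-invariant under the flow of $L^{(n)}_\bA$, then conclude via support propagation for the transport equation \eqref{Fn} (with the induction hypothesis on $\mathrm{supp}(F^{(n)}_\bA)$ handling the inhomogeneity $\chi^{(n)}_\bA F^{(n)}_\bA$). The two presentations differ only in packaging: the paper parametrizes a characteristic by $t$, uses \eqref{estu}, \eqref{est.constraints.0}, \eqref{estN.small} and Proposition~\ref{prpsmallness} to bound the squared speed $\f{\de_{ij}(L^{(n)}_\bA)^i(L^{(n)}_\bA)^j}{((L^{(n)}_\bA)^t)^2}\ls (1+|x|^2)^{\ep/10}+A_0(1+|x|^2)^{-(\de''+1)/2}$, integrates this to get $r(t)\le R+C\int_0^t[\dots]\,d\tau$, and closes by a continuity argument; you verify the differential sign condition $(L^{(n)}_\bA)^\rho\partial_\rho w>0$ on $[0,T]\times\overline{B(0,2R)}$ from \eqref{estu.lower} and \eqref{estu.small} and invoke a first-exit argument. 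These are the same estimate.

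One caution, though: the appeal to spacelikeness of $\{w=\mathrm{const}\}$ via \eqref{spacelike.calculation} is redundant and, if it were actually leaned on, would leave a gap. The iterate $L^{(n)}_\bA$ defined by \eqref{un} is not a priori null — nor even causal — with respect to $g^{(n)}$; causality is recovered only in the limit via Proposition~\ref{u.eikonal}. So spacelikeness of the level sets of $w$ with respect to $g^{(n)}$ does not by itself yield $(L^{(n)}_\bA)^\rho\partial_\rho w>0$. Fortunately you also establish the positivity directly from the lower bound on $N^{(n-1)}e^{\gamma^{(n-1)}}(L^{(n)}_\bA)^t$ in \eqref{estu.lower} and the upper bound on $|(L^{(n)}_\bA)^i|$ implied by \eqref{estu.small} (plus the smallness of the metric coefficients on $\overline{B(0,2R)}$), and that computation is what actually drives the argument — the same ingredients the paper uses. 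I would delete the spacelikeness sentence so as not to suggest it carries any weight.
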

\begin{proof}
By the transport equation \eqref{Fn} for $F_{\bA}^{(n+1)}$, it suffices to show that any integral curve $L_{\bA}^{(n)}$ which at $t=0$ is in $\{(x \in \mathbb R^2: |x| \leq R \}$ remains in  the set $\{(t,x)\in [0,T]\times \mathbb R^2: C_s(1+R^\ep) t - |x|\geq -R\} $ for all time.

To see this, let us fix such an integral curve $\gamma$. By \eqref{estu}, (and \eqref{est.constraints.0}, \eqref{estN.small} and Proposition~\ref{prpsmallness},)
$$\f{\de_{ij} (L^{(n)}_\bA)^i (L^{(n)}_\bA)^j }{(L^{(n)}_\bA)^t (L^{(n)}_\bA)^t} \leq e^{-2\gamma^{(n-1)}} (N^{(n-1)})^2 + \f{C A_0}{(1+|x|^2)^{\f{\de''+1}{2}}}\ls (1+|x|^2)^{\f \ep{10}} + \f{A_0}{(1+|x|^2)^{\f{\de''+1}{2}}}.$$
We parametrize $\gamma$ by its $t$-value, and denote by $r(t)$ the $|x|$ value of $\gamma(t)$. The above inequality implies that
$$r(t) \leq R +  C\int_0^t \left((1+(r(\tau))^2)^{\f \ep{10}} + \f{A_0}{(1+(r(\tau))^2)^{\f{\de''+1}{2}}}\right) \, d\tau.$$
A simple continuity argument shows that for $C_s$, $\ep$, $T$ appropriately chosen,
$$r(t) \leq R+ C_s(1+R^\ep) t,$$
which is to be shown.
\end{proof}

\begin{prp}[Estimates for $F^{(n+1)}_\bA$ and $\chi_\bA^{(n+1)}$]\label{prpF}
	For $n\geq 2$, the following estimates hold:
	\begin{equation*}
	\begin{split}
	\left\|F^{(n+1)}_{\bA}\right\|_{H^3} \ls C_i,\quad
	\left\|\partial_t F^{(n+1)}_{\bA}\right\|_{H^{2}} \ls & C(A_0) C_i,\\
	\left\|\chi^{(n+1)}_{\bA}\right\|_{C^0_{\de'+1}}  \leq 2 C_{\chi}, \quad	\left\|\chi^{(n+1)}_{\bA}\right\|_{H^3_{\de}}  \ls & C_i.
	\end{split}
	\end{equation*}
\end{prp}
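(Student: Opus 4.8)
The plan is to treat the two pairs of estimates separately, running a transport estimate for $F^{(n+1)}_\bA$ (using Lemmas~\ref{trans.est} and \ref{trans.est.higher} applied to \eqref{Fn}) and for $\chi^{(n+1)}_\bA$ (applied to \eqref{chin}), but being careful to extract the additional smallness needed for the $C^0_{\de'+1}$ bound on $\chi^{(n+1)}_\bA$, which is the only estimate that is \emph{not} allowed to degrade to $\sim C_i$. All four estimates follow the same template: write the equation in the form $L^{(n)}_\bA h = f$, verify that the data term is controlled by $C_i$ (for $F$ and for the $H^3_\de$ bound on $\chi$) or by $C_\chi$ (for the $C^0_{\de'+1}$ bound on $\chi$) via Corollary~\ref{cor.data} / Lemma~\ref{lem:data.chi}, then bound $f$ in the appropriate weighted space uniformly in the iterates, and finally absorb the Gr\"onwall factor by choosing $T$ small (depending on $C_i$, $A_0$, etc.).

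First I would handle $F^{(n+1)}_\bA$. By Lemma~\ref{lm:cpt.supp.2} this is supported in $B(0,2R)$, so all weights are harmless and we may work in unweighted $H^\ell$. Applying Lemma~\ref{trans.est.higher} with $\ell=3$ and $\sigma$ any admissible value to \eqref{Fn}, i.e. $h=F^{(n+1)}_\bA$ and $f=-\tfrac12 (\chi^{(n)})_\bA F^{(n)}_\bA$, we need $\|\chi^{(n)}_\bA F^{(n)}_\bA\|_{H^3}$ bounded by $\ls C_i$: this follows from the product estimate Proposition~\ref{produit} together with the induction hypotheses \eqref{estchi} ($\|\chi^{(n)}_\bA\|_{H^3_\de}\leq A_0 C_i$, $\|\chi^{(n)}_\bA\|_{C^0_{\de'+1}}\leq 2C_\chi$) and \eqref{estF} ($\|F^{(n)}_\bA\|_{H^3}\leq A_0 C_i$), again restricting to the common compact support and using that the data term is $\leq C_i$ by Corollary~\ref{cor.data}. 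Since the constant multiplying $T$ depends only on $A_0$ and $C_i$, choosing $T$ small gives $\|F^{(n+1)}_\bA\|_{H^3}\ls C_i$. For $\rd_t F^{(n+1)}_\bA$ I would rewrite \eqref{Fn} as $\rd_t F^{(n+1)}_\bA = -\tfrac{(L^{(n)}_\bA)^i}{(L^{(n)}_\bA)^t}\rd_i F^{(n+1)}_\bA - \tfrac{\chi^{(n)}_\bA F^{(n)}_\bA}{2(L^{(n)}_\bA)^t}$ and bound the RHS directly in $H^2$, using \eqref{estu.lower} for the lower bound on $(L^{(n)}_\bA)^t$, \eqref{estu} for $L^{(n)}_\bA$, the $H^3$ bound on $F^{(n+1)}_\bA$ just proven, \eqref{estchi} and \eqref{estF}; this produces the constant $C(A_0)C_i$.

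For $\chi^{(n+1)}_\bA$ I would apply the transport estimates to \eqref{chin}, where now $f = -\tfrac12(\chi^{(n)}_\bA)^2 - ((L^{(n)}_\bA)^\rho\rd_\rho\phi^{(n)})^2 - \tfrac12\sum_\bB F_\bB^2(g^{(n)}_{\mu\nu}(L^{(n)}_\bA)^\mu(L^{(n)}_\bB)^\nu)^2$. The $H^3_\de$ bound uses Lemma~\ref{trans.est.higher} with $\sigma=\de$: the quadratic term $(\chi^{(n)}_\bA)^2$ is controlled in $H^3_\de$ by $\|\chi^{(n)}_\bA\|_{C^0_{\de'+1}}\|\chi^{(n)}_\bA\|_{H^3_\de}\ls C_\chi\cdot A_0 C_i$ via Proposition~\ref{produit} (the weights $\de'+1$ and $\de$ are compatible since a product of two factors gives plenty of room); the $\phi$ and $F$ terms are compactly supported and controlled by \eqref{estphi}, \eqref{estF}, \eqref{estu}, Proposition~\ref{produit}; the data term is $\leq C_i$ by Lemma~\ref{lem:data.chi}/Corollary~\ref{cor.data}; Gr\"onwall with small $T$ then yields $\ls C_i$. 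The $C^0_{\de'+1}$ bound is the delicate one and I expect it to be the main obstacle: here we cannot afford a factor of $C_i$, so instead of using the $H^3_\de$ norms we argue pointwise along the integral curves of $L^{(n)}_\bA$. Along such a curve, \eqref{chin} gives $\tfrac{d}{ds}\chi^{(n+1)}_\bA = -\tfrac12(\chi^{(n)}_\bA)^2 + (\text{compactly supported, }O(\ep^2)\text{ terms})$ after using \eqref{estphi}--\eqref{estF} and Proposition~\ref{prpsmallness}; the forcing is bounded by $C\cdot (2C_\chi)^2$ plus $O(\ep^2)$, the weight $(1+|x|^2)^{(\de'+1)/2}$ changes at worst like $(1+|x|^2)^{\ep/20}$ along the curve by the support/growth estimates already used in Lemma~\ref{lm:cpt.supp.2}, and the initial value is $\leq C_\chi$ by \eqref{chi.init.bd}. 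Hence $\|\chi^{(n+1)}_\bA\|_{C^0_{\de'+1}} \leq C_\chi + C T (C_\chi^2 + \ep^2)$ up to the weight correction, and choosing $T$ sufficiently small (depending on $C_\chi$, hence on $C_{eik}$) and $\ep$ small gives $\leq 2C_\chi$. The one subtlety to watch is that the weight in the $C^0$ norm of the iterate at time $t$ must be compared with the weight at $t=0$ along the same curve, which is exactly the $\ls (1+|x|^2)^{\ep/10}$ slippage already accounted for by passing from $\de$-weights to $\de'$-weights; no new idea is required beyond keeping this bookkeeping straight.
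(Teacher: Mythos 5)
Your proof follows the paper's argument essentially verbatim for the $H^3$ bound on $F^{(n+1)}_\bA$, the $H^2$ bound on $\rd_t F^{(n+1)}_\bA$ (where you correctly have $F^{(n)}_\bA$ in the forcing; the paper's text seems to contain a typo here), and the $H^3_\de$ bound on $\chi^{(n+1)}_\bA$. The one place you genuinely diverge from the paper is the $C^0_{\de'+1}$ bound on $\chi^{(n+1)}_\bA$. The paper does \emph{not} argue pointwise along integral curves: it first estimates $\rd_t\chi^{(n+1)}_\bA$ in $H^2_{\de'}$ (by isolating it from \eqref{chin} exactly as for $\rd_t F^{(n+1)}_\bA$, and using the $H^3_\de$ bound just established), converts this to a $C^0_{\de'+1}$ bound by the Sobolev embedding in Proposition~\ref{holder}, and then integrates in time from \eqref{chi.init.bd}. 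Your curve-by-curve approach is a legitimate alternative and buys an estimate of the forcing in terms of $C_\chi$ and $\ep$ alone rather than $C(C_i)$, at the cost of the weight-drift bookkeeping you flag at the end; the paper's approach bypasses the drift issue entirely since it only ever evaluates fixed-time weighted norms. Either route closes, because $T$ is permitted to depend on $C_i$. One concrete quibble: the mixed estimate $\|(\chi^{(n)}_\bA)^2\|_{H^3_\de}\lesssim\|\chi^{(n)}_\bA\|_{C^0_{\de'+1}}\|\chi^{(n)}_\bA\|_{H^3_\de}$ that you cite ``via Proposition~\ref{produit}'' does not in fact follow from that proposition (which is an $L^p$--$L^p$ product estimate, not an $L^\infty$--$L^2$ Moser bound), and the intermediate-derivative terms such as $\nab\chi\,\nab^2\chi$ would need a separate interpolation to land in the mixed form; the paper sidesteps this by simply using $\|\chi^{(n)}_\bA\|_{H^3_\de}^2\lesssim C(A_0)C_i^2$, which Proposition~\ref{produit} does give directly and which is perfectly adequate since $T$ absorbs the constant.
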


\begin{proof}
We now apply Lemma~\ref{trans.est.higher} to control $F^{(n+1)}_{\bA}$ and $\chi^{(n+1)}_{\bA}$ satisfying the equations \eqref{Fn} and \eqref{chin}. By the compact support of $F_{\bA}$ that we established in Lemma~\ref{lm:cpt.supp.2}, we can put in any weights in the bounds for terms in which $F_{\bA}$ appears. 

\textbf{Estimate for $F_\bA^{(n+1)}$.} By Lemma~\ref{trans.est.higher}, Proposition~\ref{produit}, \eqref{estchi} and \eqref{estF}, we have
$$ \sup_{t\in [0,T]}\| F_{\bA}^{(n+1)} \|_{H^3}(t) \lesssim \| F_{\bA}^{(n+1)} \|_{H^3}(0) + C(A_0, C_i) T \| F_{\bA}^{(n)} \|_{H^3}(t) \| \chi_{\bA}^{(n)} \|_{H^3_\de}(t) \ls C_i,$$
after choosing $T$ sufficiently small.

\textbf{Estimate for $\rd_t F_\bA^{(n+1)}$.} We use \eqref{Fn} to write $\rd_t F_{\bA}^{(n+1)}$ in terms of $\f{(L^{(n)}_\bA)^i \rd_iF^{(n+1)}_\bA}{(L^{(n)}_\bA)^t}$ and $\f{1}{(L^{(n)}_\bA)^t} \chi^{(n)}_\bA F_{\bA}^{(n+1)}$. In other words,
$$\|\rd_t F_{\bA}^{(n+1)}\|_{H^2} \ls \left\|\f{(L^{(n)}_\bA)^i \rd_iF^{(n+1)}_\bA}{(L^{(n)}_\bA)^t}\right\|_{H^2} + \left\| \f{1}{(L^{(n)}_\bA)^t} \chi^{(n)}_\bA F_{\bA}^{(n+1)}\right\|_{H^2}.$$
The first term is easily seen to obey
$$\left\|\f{(L^{(n)}_\bA)^i \rd_iF^{(n+1)}_\bA}{(L^{(n)}_\bA)^t}\right\|_{H^2}\ls C(A_0) C_i$$ 
using \eqref{estu.lower}, \eqref{estu.small} and the estimate for $\| F_{\bA}^{(n+1)} \|_{H^3}$ that we have just established above, and recalling our convention that $C$ can depend on $C_{eik}$.

For the second term, we use the fact that $supp(F_{\bA}^{(n+1)})\subset B(0,2R)$ and use\footnote{We refer the reader to Footnote~\ref{cpt.supp.product} on p.\pageref{cpt.supp.product} regrading the use of Proposition~\ref{product} when one of the factors is compactly supported.} Proposition~\ref{product} together with \eqref{estu.lower}, \eqref{estu.small}, \eqref{estchi} and the estimate for $\| F_{\bA}^{(n+1)} \|_{H^3}$ that we have just established above to obtain
\begin{equation*}
\begin{split}
& \left\| \f{1}{(L^{(n)}_\bA)^t} \chi^{(n)}_\bA F_{\bA}^{(n+1)}\right\|_{H^2}\\
\ls & \left\|\f{1}{(L^{(n)}_\bA)^t} \chi^{(n)}_\bA \right\|_{C^0(0,3R)} \left\|F_{\bA}^{(n+1)}\right\|_{H^2} + \left\|\f{1}{(L^{(n)}_\bA)^t} \chi^{(n)}_\bA \right\|_{H^2(0,3R)} \left\|F_{\bA}^{(n+1)}\right\|_{C^0}\ls C(A_0)C_i.
\end{split}
\end{equation*}
Here, we again recall that $C$ can depend on $C_{eik}$, and hence also $C_\chi$.

Combining the above estimates, we obtain
$$\|\rd_t F_{\bA}^{(n+1)}\|_{H^2} \ls C(A_0)C_i.$$

\textbf{Estimate for $\chi_\bA^{(n+1)}$ in $H^3_{\de}$.} For the $\chi_{\bA}^{(n)}$ estimate, notice that the only inhomogeneous term in \eqref{chin} that is not compactly supported in the $(\chi^{(n)}_\bA)^2$ term, which can be controlled using Proposition~\ref{produit} by 
$$\| (\chi_{\bA}^{(n)})^2 \|_{H^3_{\de}} \ls \| \chi_{\bA}^{(n)} \|_{H^3_{\de}}^2 \ls C(A_0)C_i^2.$$
The remaining terms, which can compactly supported, as easier to handle and can be treated in a similar manner as in the proof of Proposition~\ref{prpN}, namely, we have
\begin{equation*}
\left\| 2((L^{(n)}_{\bA})^\rho \partial_\rho \phi^{(n)})^2
+\sum_{\bf B} F_{\bf B}^2(g^{(n)}_{\mu\nu} (L^{(n)}_{\bA})^\mu (L^{(n)}_{\bf B})^\nu)^2\right\|_{H^3_\de} \ls \ep C(A_0)C_i.
\end{equation*}
Therefore, by Lemma~\ref{trans.est.higher}, we have
\begin{equation}\label{chi.top.pf}
 \sup_{t\in [0,T]}\| \chi_{\bA}^{(n+1)} \|_{H^3_\de}(t) \lesssim \| \chi_{\bA}^{(n+1)} \|_{H^3_{\de}}(0) + T(C(A_0)C_i^2 + \ep C(A_0)C_i) \ls C_i 
\end{equation}
after choosing $T$ to be sufficiently small.

\textbf{Estimate for $\chi_\bA^{(n+1)}$ in $C^0_{\de+1}$.} 
We first estimate $\rd_t \chi_{\bA}^{(n+1)}$ in a similar manner as we bound $\rd_t F_{\bA}^{(n+1)}$ above, namely, we use the equation \eqref{chin} to write $\rd_t\chi_{\bA}^{(n+1)}$ in terms of $\f{(L^{(n)}_\bA)^i \rd_i\chi^{(n+1)}_\bA}{(L^{(n)}_\bA)^t}$, $\f{(\chi^{(n)})^2}{(L^{(n)}_\bA)^t}$ and $\f{\mbox{RHS of \eqref{chin}}}{(L^{(n)}_\bA)^t}$. Using the bounds for the terms in \eqref{chin} we proved above and the estimate \eqref{chi.top.pf} above, we have
$$\| \rd_t\chi_{\bA}^{(n+1)}\|_{H^2_{\de'}} \ls C(C_i).$$
By Proposition~\ref{holder}, this implies that $\| \rd_t\chi_{\bA}^{(n+1)}\|_{C^0_{\de'+1}}\ls C(C_i)$, which, together with \eqref{chi.init.bd}, imply
$$\|\chi_\bA^{(n+1)}\|_{C^0_{\de'+1}}\leq 2C_\chi.$$\qedhere
\end{proof}

We conclude this subsection by noting that the combination of the propositions proved in this subsection show that we can recover all the estimates in \eqref{estN.small}--\eqref{estF} (in fact with better constants for most of the estimates) when replacing $(n)$ by $(n+1)$. As a consequence, the estimates in \eqref{estN.small}--\eqref{estF} hold for all $n$.

\subsection{Convergence of the sequence and solution to the reduced system}\label{sec:iteration.convergence}

Next, we show that the sequence we constructed in fact converges to a limit (in a larger functional space).

Define the following distances:
\begin{align}
\label{d1} d^{(n)}_1 := & \|\wht \gamma^{(n+1)} - \wht \gamma^{(n)}\|_{H^1_{\de'}} + \|\rd_t (\wht \gamma^{(n+1)} - \wht \gamma^{(n)})\|_{L^2_{\de'}}+ \|H^{(n+1)} - H^{(n)}\|_{H^1_{\de+1}} + \|\tau^{(n+1)} - \tau^{(n)}\|_{L^2_{\de''+1}} \nonumber\\
& + \sum_{\bA} \| e^{2\gamma^{(n)}} (L^{(n+1)}_\bA)^i - e^{2\gamma^{(n-1)}} (L^{(n)}_\bA)^i\|_{H^1_{\de''}}  + \sum_{\bA} \| N^{(n)}e^{\gamma^{(n)}}(L^{(n+1)}_\bA)^t - N^{(n-1)}e^{\gamma^{(n-1)}}(L^{(n)}_\bA)^t\|_{H^1_{\de''}} \nonumber\\
& + \|\rd (\phi^{(n+1)} - \phi^{(n)}) \|_{H^1} + \sum_{\bA} \|F_{\bA}^{(n+1)} - F^{(n)}_\bA\|_{H^1} + \sum_{\bA} \|\chi_{\bA}^{(n+1)} - \chi_{\bA}^{(n)}\|_{H^1_\de}, \\
d^{(n)}_2 := & |N^{(n+1)}_{asymp} - N^{(n)}_{asymp}| + \|\wht N^{(n+1)} - \wht N^{(n)}\|_{H^2_{\de}} + \|\beta^{(n+1)} - \beta^{(n)}\|_{H^2_{\de'}}, \\
d^{(n)}_3 := & \sum_{\bA} \| \rd_t(e^{2\gamma^{(n)}} (L^{(n+1)}_\bA)^i - e^{2\gamma^{(n-1)}} (L^{(n)}_\bA)^i)\|_{L^2_{\de'''}}  \nonumber\\
& + \sum_{\bA} \|\rd_t( N^{(n)}e^{\gamma^{(n)}}(L^{(n+1)}_\bA)^t - N^{(n-1)}e^{\gamma^{(n-1)}}(L^{(n)}_\bA)^t)\|_{L^2_{\de'''}} \nonumber\\
&+ \|\rd_t (\f{e_0^{(n)} \phi^{(n+1)}}{N^{(n)}} - \f{e_0^{(n-1)} \phi^{(n)}}{N^{(n-1)}})\|_{L^2} + \sum_{\bA} \|\rd_t (F_{\bA}^{(n+1)} - F^{(n)}_\bA)\|_{L^2}, \\
d^{(n)}_4 := & \|e_0^{(n+1)} H^{(n+1)} - e_0^{(n)} H^{(n)}\|_{H^1_{\de+1}},\\
d^{(n)}_5 := & \|\rd_t (\f{e_0^{(n)} \wht \gamma^{(n+1)}}{N^{(n)}} - \f{e_0^{(n-1)} \wht \gamma^{(n)}}{N^{(n-1)}})\|_{L^2_{\de'}} + \|\rd_t(\tau^{(n+1)} - \tau^{(n)})\|_{L^2_{\de''+1}},\\
\label{d6} d^{(n)}_6 := & |\rd_t (N^{(n+1)}_{asymp} - N^{(n)}_{asymp})| + \|\rd_t(\wht N^{(n+1)} - \wht N^{(n)})\|_{H^2_{\de}} + \|e_0^{(n)}\beta^{(n+1)} - e_0^{(n-1)}\beta^{(n)} \|_{H^2_{\de'}}.
\end{align}

Since we have already obtained uniform bound on the iterates, from now on we need not keep track of the constants $A_0$, $A_1$, $A_2$. They will henceforth be simply absorbed into constants depending $C_{eik}$, $k$, $\de$.

The following proposition gives estimates for the distances $d_i^{(n)}$. The estimates are easier than those required for uniform boundedness in the previous subsection, since
\begin{itemize}
\item we have already closed the nonlinear bootstrap argument,
\item and in the estimates for $d_i^{(n)}$, we only need bounds for lower order of derivatives.
\end{itemize}
The estimate we prove nevertheless crucially relies on the structure of the equations so that the distances can be controlled in a step-by-step manner such that at each step the RHS either consists terms bounded in the previous step or has appropriate smallness constant. It is exactly the same kind of structure that allowed us to prove the uniform boundedness statement in the previous subsection. We will only briefly indicate how these estimates are proven, but will refer the reader to the corresponding propositions in the previous subsection, where the analogous estimates for the corresponding quantities (without taking difference) were proven.
\begin{proposition}\label{dn.est}
For $n\geq 3$, the following inequalities hold for some fixed $C_*>1$ depending on $C_{eik}$, $\de$, $R$:
\begin{align}
d^{(n)}_1 \leq & C_*(C_i) T (d^{(n-1)}_1 + d^{(n-2)}_1 + d^{(n-1)}_2 + d^{(n-2)}_2 + d^{(n-1)}_3 + d^{(n-1)}_4 + d^{(n-1)}_5 + d^{(n-1)}_6),\\
d^{(n)}_2 \leq & C_* (d^{(n-1)}_1 + d^{(n-2)}_1) + C_* \ep (d^{(n-1)}_2 + d^{(n-2)}_2),\\
d^{(n)}_3 \leq & C_* (d^{(n-1)}_1 + d^{(n-2)}_1 + d^{(n-1)}_2 + d^{(n-2)}_2),\\
d^{(n)}_4 \leq & C_* C_i (d^{(n-1)}_1 + d^{(n-1)}_2),\\
d^{(n)}_5 \leq & C_* d^{(n-1)}_6 + C_* C_i ({ d^{(n-1)}_1} + d^{(n-1)}_2 + d^{(n-2)}_2 + d^{(n-1)}_3 + d^{(n-1)}_4),\\
d^{(n)}_6 \leq & C_* C_i (d^{(n-1)}_1 + d^{(n-1)}_2 + d^{(n-2)}_2 + d^{(n-1)}_3 + d^{(n-1)}_4) + C_* \ep (d^{(n-1)}_5 + d^{(n-1)}_6).
\end{align}
\end{proposition}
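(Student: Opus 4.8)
The plan is to estimate each of the six differences $d_i^{(n)}$ by writing down the equation satisfied by the difference of consecutive iterates and treating the right-hand side term by term, exactly paralleling the structure of Propositions~\ref{prpN}--\ref{prpF} from the uniform boundedness argument, but now only at low orders of regularity. Since the iterates are already known to lie in the bounded set \eqref{estN.small}--\eqref{estF}, all products may be estimated by fixing all but one factor at the (bounded) iterate level and letting the remaining factor be a difference; this is where the factors of $C_i$ (but no longer $A_0, A_1, A_2$) enter. The key organizing principle — which is the whole point of the \emph{reductive structure} — is that when we subtract, say, the $(n{+}1)$-st and $n$-th instances of \eqref{lapsen}, the right-hand side involves only differences at levels $(n)$ and $(n{-}1)$ (i.e.\ the quantities controlled by $d_i^{(n-1)}$ and $d_i^{(n-2)}$), never at level $(n{+}1)$; thus each inequality in the proposition expresses $d_i^{(n)}$ in terms of \emph{strictly previous} differences, possibly with a small constant $C_*\ep$ or a time factor $C_*(C_i)T$ in front of the ``same-level'' contributions.

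More concretely, I would proceed in the order dictated by the hierarchy: first $d_2^{(n)}$ (the elliptic quantities $N$, $\beta$), using Theorem~\ref{laplacien} applied to the difference of \eqref{lapsen} and \eqref{shiftn}, noting that the genuinely nonlinear terms produce the $C_*\ep$ coefficient on $d_2^{(n-1)}+d_2^{(n-2)}$ while the matter/$H$ terms produce $C_*(d_1^{(n-1)}+d_1^{(n-2)})$; then $d_3^{(n)}$ ($\rd_t$ of the elliptic quantities and of $\frac{e_0\phi}{N}$, $F$), obtained by differentiating the same elliptic equations in $t$ and commuting; then $d_4^{(n)}$ ($e_0 H$) from the difference of \eqref{hijn}, which is controlled by $C_*C_i(d_1^{(n-1)}+d_2^{(n-1)})$ because every term on the RHS of \eqref{hijn} is at level $(n)$ and each carries at most one ``difference'' factor times bounded factors; then $d_5^{(n)}$ and $d_6^{(n)}$ (the $\rd_t$ of $\frac{e_0\wht\gamma}{N}$, $\tau$, and of $e_0\beta$), using the difference of the wave equation \eqref{gamman} together with Lemma~\ref{gamma.wave.EE} and of \eqref{taun}; and finally $d_1^{(n)}$ itself, which aggregates the wave estimate for $\wht\gamma^{(n+1)}-\wht\gamma^{(n)}$ (via Lemma~\ref{gamma.wave.EE}), the transport estimates for $H$, $\tau$, $L_{\bA}$, $F_{\bA}$, $\chi_{\bA}$ (via Lemmas~\ref{trans.est} and \ref{trans.est.higher}), and the energy estimate for $\phi$ (via the $Q$-tensor computation in Proposition~\ref{prpphi}) — all of which, because they are transport/wave estimates run over a short time interval $[0,T]$, come with an overall factor $C_*(C_i)T$, hence the stated inequality for $d_1^{(n)}$.

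Two technical points deserve care. First, for the vector-field differences one must work with the renormalized quantities $e^{2\gamma^{(n)}}(L^{(n+1)}_\bA)^i+\overrightarrow{c_\bA}^i$ and $N^{(n)}e^{\gamma^{(n)}}(L^{(n+1)}_\bA)^t-|\overrightarrow{c_\bA}|$ exactly as in Proposition~\ref{prpL}, so that the ``worst'' Christoffel component $e_0\log N$ is cancelled and the constant vectors drop out of the transport equation; the difference of two such renormalized equations then has a RHS of the schematic form (bounded)$\times$(difference) with differences only at levels $(n)$, $(n{-}1)$, and one reads off the dependence on $d_1^{(n-1)}$, $d_2^{(n-1)}$, $d_2^{(n-2)}$. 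Second, one must handle the mismatch of conformal factors $e^{2\gamma^{(n)}}$ versus $e^{2\gamma^{(n-1)}}$ and of shifts $\bt^{(n)}$ versus $\bt^{(n-1)}$ when comparing $L^{(n+1)}$ and $L^{(n)}$: these are absorbed by writing, e.g., $e^{2\gamma^{(n)}}(L^{(n+1)})^i - e^{2\gamma^{(n-1)}}(L^{(n)})^i = (e^{2\gamma^{(n)}}-e^{2\gamma^{(n-1)}})(L^{(n+1)})^i + e^{2\gamma^{(n-1)}}((L^{(n+1)})^i-(L^{(n)})^i)$ and using the mean value theorem on the exponential together with the bound on $\wht\gamma^{(n)}-\wht\gamma^{(n-1)}$ hidden in $d_1^{(n-1)}$; the slightly worse weights $\de'''$ are precisely what make room for the logarithmically growing factors $e^{-2\gamma}$, $N$.

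The main obstacle is bookkeeping rather than any single hard estimate: one must verify, for \emph{every} term in \emph{every} one of the six difference-equations, that the structure genuinely permits a step-by-step closure — i.e.\ that no term forces $d_i^{(n)}$ to depend on $d_j^{(n)}$ for the same $i$ or on a later index without either a $C_*\ep$ gain or a $C_*(C_i)T$ gain. The delicate cases are the top-order-in-the-hierarchy terms that in the boundedness argument produced the quadratic $C_i^2$ bounds (e.g.\ the term $IV$ in \eqref{gamma.est.lower.main} and the $e_0^{(n)}H^{(n)}$ terms in Proposition~\ref{prpbeta}); at the level of differences these become linear in $C_i$ times a difference, which is fine, but one must check the weights still close. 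Once the six inequalities are established, combining them with the hierarchy $C_* \ll$ (absorb) and choosing $\ep$ small, then $T$ small depending on $C_i$, yields a contraction in $\sum_i d_i^{(n)}$ and hence convergence of the iterates.
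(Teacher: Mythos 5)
Your overall strategy matches the paper's: estimate each $d_i^{(n)}$ by taking the difference of the corresponding iterate equations and tracking which terms carry a factor $\ep$, a factor $T$, or a factor $C_i$, exactly as in Propositions~\ref{prpN}--\ref{prpF} but at lower regularity. The reductive structure you identify (all RHS terms sit at levels $(n-1)$/$(n-2)$), the renormalization of the $L_\bA$ transport as in Proposition~\ref{prpL}, and the conformal-factor decomposition $e^{2\gamma^{(n)}}(L^{(n+1)})^i - e^{2\gamma^{(n-1)}}(L^{(n)})^i = (e^{2\gamma^{(n)}}-e^{2\gamma^{(n-1)}})(L^{(n+1)})^i + e^{2\gamma^{(n-1)}}\bigl((L^{(n+1)})^i-(L^{(n)})^i\bigr)$ are all correct and in line with the paper's sketch. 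Two details, however, are misstated.

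First, you have misread the definition of $d_3^{(n)}$. It contains no time derivatives of the elliptic quantities $N$ or $\beta$ --- those are precisely what $d_6^{(n)}$ measures. Rather, $d_3^{(n)}$ records $\rd_t$ of the renormalized vector-field components $e^{2\gamma^{(n)}}(L_\bA^{(n+1)})^i$ and $N^{(n)}e^{\gamma^{(n)}}(L_\bA^{(n+1)})^t$, of $\frac{e_0^{(n)}\phi^{(n+1)}}{N^{(n)}}$, and of $F_\bA^{(n+1)}$. Its estimate therefore does \emph{not} come from ``differentiating the same elliptic equations in $t$'' but from using the evolution equations \eqref{un}, \eqref{phin}, \eqref{Fn} to rewrite each $\rd_t$ derivative algebraically in terms of spatial derivatives and lower-order data, as in Propositions~\ref{prpphi}, \ref{prpL} and \ref{prpF}; the key point being that this trade costs no derivatives on the metric, hence the $C_*$ constant in the $d_3^{(n)}$ inequality is independent of $C_i$. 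Differentiating the elliptic equations \eqref{lapsen}, \eqref{shiftn} in $t$ is the argument for the $d_6^{(n)}$ inequality, not $d_3^{(n)}$.

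Second, $\tau^{(n+1)}$ is defined by the algebraic identity \eqref{taun}, not by a transport equation along $L_\bA^{(n)}$, so the piece $\|\tau^{(n+1)}-\tau^{(n)}\|_{L^2_{\de''+1}}$ of $d_1^{(n)}$ cannot be estimated by Lemmas~\ref{trans.est} or \ref{trans.est.higher} as you suggest; nor does the transport estimate for $H^{(n+1)}$ along $e_0^{(n+1)}$ literally come from Lemma~\ref{trans.est} (which transports along $L_\bA^{(n)}$), though an analogous argument works as in Proposition~\ref{prpH}. For $\tau$, the paper uses that $\tau^{(n+1)}-\tau^{(n)}$ vanishes at $t=0$ and integrates $\rd_t(\tau^{(n+1)}-\tau^{(n)})$ in time, the integrand being controlled by $d_1^{(n-1)}$, $d_2^{(n-1)}$, $d_2^{(n-2)}$, $d_5^{(n-1)}$ and $d_6^{(n-1)}$; this still produces the desired $T$ factor in the $d_1^{(n)}$ inequality, but by a different mechanism than a genuine transport estimate.
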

\begin{proof}
The basic strategy is to estimate these differences in a way similar to Section~\ref{sec:iteration.bdd}. {In particular, we use the structure of the equations in a similar manner.}

To control $d^{(n)}_1$ is the easiest. All of these terms are controlled using transport or wave type equations with $0$ initial data. Therefore, on the RHS we need to use all of $d^{(n)}_1, \dots d^{(n)}_6$ and $d^{(n-2)}_1$, $d^{(n-2)}_2$, the estimate comes with a small constant $T$ associated with a time integral (cf. estimates for the analogous quantities without taking differences in Propositions~\ref{prpgamma} {(for $\wht \gamma^{(n+1)} - \wht \gamma^{(n)}$ and $\rd_t (\wht \gamma^{(n+1)} - \wht \gamma^{(n)})$), \ref{prpH} ($H^{(n+1)} - H^{(n)}$), \ref{prpphi} (for $\rd (\phi^{(n+1)} - \phi^{(n)})$), \ref{prpL} (for $e^{2\gamma^{(n)}} (L^{(n+1)}_\bA)^i - e^{2\gamma^{(n-1)}} (L^{(n)}_\bA)^i$ and $N^{(n)}e^{\gamma^{(n)}}(L^{(n+1)}_\bA)^t - N^{(n-1)}e^{\gamma^{(n-1)}}(L^{(n)}_\bA)^t$), \ref{prpF} (for $F_{\bA}^{(n+1)} - F^{(n)}_\bA$ and $\chi_{\bA}^{(n+1)} - \chi_{\bA}^{(n)}$))}. {For} $\tau^{(n+1)}-\tau^{(n)}${, we estimate it directly by} integrating $\partial_t(\tau^{(n+1)}-\tau^{(n)})$, 
which can be controlled in terms of $d^{(n-1)}_1, d^{(n-1)}_2, d^{(n-2)}_2, d^{(n-1)}_5$ and $d^{(n-1)}_6$.

To control $d^{(n)}_2$, we consider the difference between the $(n+1)$-st iterates of \eqref{lapsen} and \eqref{shiftn} and their $n$-th iterates and perform elliptic estimates for the differences of $N^{(n+1)}- N^{(n)}$ and $\bt^{(n+1)}-\bt^{(n)}$. Arguing as in Propositions~\ref{prpN} and \ref{prpbeta}, for most terms there is a smallness constant $\ep$ in the coefficient. The only exception arise when controlling the difference $\bt^{(n+1)}-\bt^{(n)}$, on the right hand side there is a term $H^{(n+1)}-H^{(n)}$ with coefficients depending on $N^{(n)}e^{\gamma^{(n)}}$, $N^{(n-1)}e^{\gamma^{(n-1)}}$, $N^{(n-2)}e^{\gamma^{(n-2)}}$, which while not small, can be controlled by a constant independent of $C_i$.
To control $d^{(n)}_2$, we consider the difference between the $(n+1)$-st iterates of \eqref{lapsen} and \eqref{shiftn} and their $n$-th iterates and perform elliptic estimates for the differences of $N^{(n+1)}- N^{(n)}$ and $\bt^{(n+1)}-\bt^{(n)}$. Arguing as in Propositions~\ref{prpN} and \ref{prpbeta}, for most terms there is a smallness constant $\ep$ in the coefficient. The only exception arise when controlling the difference $\bt^{(n+1)}-\bt^{(n)}$, on the right hand side there is a term $H^{(n+1)}-H^{(n)}$ with coefficients depending on $N^{(n)}e^{\gamma^{(n)}}$, $N^{(n-1)}e^{\gamma^{(n-1)}}$, $N^{(n-2)}e^{\gamma^{(n-2)}}$, which while not small, can be controlled by a constant independent of $C_i$.

To control $d^{(n)}_3$, we control the differences for appropriates $n$'s of RHSs of \eqref{phin}, \eqref{un} and \eqref{Fn}. It is easy to check that to control this we only need to control the difference of terms appearing in $d^{(n)}_1$, $d^{(n-1)}_1$, $d^{(n-1)}_2$ and $d^{(n-2)}_2$, i.e., we do need to estimate the difference of the top $\rd_t$ derivative of any quantity. Moreover, since we do not need to take any derivatives of the RHS of these equations, we check using the estimates \eqref{estN.small}-\eqref{estF} and Proposition~\ref{prpsmallness} that the constant we have in the estimate can be chosen independent of $C_i$ (cf. Proposition~\ref{prpphi}, \ref{prpL}, \ref{prpF}).

To control $d^{(n)}_4$, we bound the difference $e_0^{(n+1)} H^{(n+1)} - e_0^{(n)} H^{(n)}$ and its spatial derivative by controlling the appropriate difference of of the RHS \eqref{hijn}. This is similar to the estimates for $d^{(n)}_3$, except that since we now need to take one spatial derivative, the constant may depend (linearly) on $C_i$ (cf. Proposition~\ref{prpH}).

To control $d^{(n)}_5$, we first estimate $\rd_t (\f{e_0^{(n)} \wht \gamma^{(n+1)}}{N^{(n)}} - \f{e_0^{(n-1)} \wht \gamma^{(n)}}{N^{(n-1)}})$by taking appropriate difference of \eqref{gamman}. The estimate for $\rd_t (\tau^{(n+1)} - \tau^{(n)})$ then follows easily using \eqref{taun}. There are two main observations. First, we note that the RHS does not depend on $d^{(n-1)}_5$, this follows easily from inspecting the RHS of \eqref{gamman}. Second, we note that when $d^{(n-1)}_6$ appears on the RHS, the constant is \emph{independent} of $C_i$. The relevant term here is $\frac{e^{2\gamma^{(n)}}}{2 N^{(n)}}e_0^{(n-1)} \left( \frac{e^{-2\gamma^{(n)}}}{N^{(n-1)}}div(\beta^{(n)})\right)$. The key point is that $e_0^{(n-1)}\bt^{(n)} - e_0^{(n-2)}\bt^{(n-1)} $ must be multiplied by at most two derivatives of $\gamma^{(n)}$, $\gamma^{(n-1)}$, $N^{(n)}$, $N^{(n-1)}$ and $N^{(n-2)}$. Hence, by \eqref{estN.small}, Proposition~\ref{prpsmallness}, these terms indeed can be bounded independent of $C_i$. The other terms can be controlled more roughly using the estimate in the previous section since we will allow the coefficient to depend on $C_i$.

To control $d^{(n)}_6$, we take $\rd_t$ derivatives of \eqref{lapsen} and \eqref{shiftn}, take the appropriate difference, and use elliptic estimates. The key is to observe that when $d^{(n-1)}_5$ and $d^{(n-1)}_6$ appear on the RHS, then there is a smallness constant $C_*\ep$. To see this, one can argue in a similar manner as in Propositions~\ref{prpN} and \ref{prpbeta}. The remaining terms can be controlled more roughly using the estimate in the previous section since we will allow the coefficient to depend on $C_i$.
\end{proof}

Proposition~\ref{dn.est}, together with a simple induction argument, imply the following estimates (we omit the easy proof):
\begin{cor}\label{cor.d}
Assume that for $n = 1,2$, we have the bound
$$d^{(n)}_1 \leq B,\quad d^{(n)}_2 \leq 8 C_* B,\quad d^{(n)}_3 \leq 2({3} C_* + 16 (C_*)^2) B,\quad d^{(n)}_4 \leq 2 C_i (C_* + 8(C_*)^2) B,$$
$$d^{(n)}_5 \leq 4 (1+C_*) C_i \left(2C_* + 16(C_*)^2 + 2C_* (2 C_* + 16 (C_*)^2) + 2 C_* C_i (C_* + 8(C_*)^2)\right) B,$$
$$d^{(n)}_6 \leq 4C_i \left(2C_* + 16(C_*)^2 + 2C_* (2 C_* + 16 (C_*)^2) + 2 C_* C_i (C_* + 8(C_*)^2)\right) B,$$
for some $B>0$, where $C_*$ is as in Proposition~\ref{dn.est} and $C_i$ is as in Corollary~\ref{cor.data}. (Note that this can always be achieved by taking $B$ larger if necessary).

Then, if $T$ and $\ep$ are sufficiently small (where $T$ may depend on $C_i$, $C_{eik}$, $\de$, $R$, but $\ep$ may only depend on $C_{eik}$, $\de$, $R$, but \underline{not} $C_i$), for every $n\geq 3$, the following bounds hold:
$$d^{(n)}_1 \leq 2^{-(n-3)} 2^{-1} B,\quad d^{(n)}_2 \leq 2^{-(n-3)} \cdot 4 C_* B,\quad d^{(n)}_3 \leq 2^{-(n-3)} \cdot ({3} C_* + 16 (C_*)^2) B,$$
$$d^{(n)}_4 \leq 2^{-(n-3)}C_i (C_* + 8(C_*)^2) B,$$ 
$$d^{(n)}_5 \leq 2^{-(n-3)} 2 (1+C_*) C_i \left(2C_* + 16(C_*)^2 + 2C_* ({3} C_* + 16 (C_*)^2) + 2 C_* C_i (C_* + 8(C_*)^2)\right) B,$$
$$d^{(n)}_6 \leq 2^{-(n-3)} 2C_i \left(2C_* + 16(C_*)^2 + 2C_* ({3} C_* + 16 (C_*)^2) + 2 C_* C_i (C_* + 8(C_*)^2)\right) B.$$
\end{cor}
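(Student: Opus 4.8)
Corollary~\ref{cor.d} is a purely mechanical consequence of the six recursive inequalities of Proposition~\ref{dn.est}, proved by strong induction on $n$. The plan is to take as inductive hypothesis, for a fixed level $n$, the bounds $d^{(m)}_i \leq 2^{-(m-3)}\mathcal{D}_i$ for all $3 \leq m \leq n-1$ and $1 \leq i \leq 6$, where $\mathcal{D}_i$ denotes the constant multiplying $2^{-(n-3)}$ in the $i$-th claimed bound, together with the hypothesized bounds at $m=1,2$, and to deduce the same bound at level $m=n$. Since every recursion in Proposition~\ref{dn.est} expresses $d^{(n)}_i$ solely in terms of the quantities $d^{(n-1)}_j$ and $d^{(n-2)}_j$ at the two preceding levels, all six estimates at level $n$ follow simultaneously from the inductive hypothesis; no ordering among $d_1,\dots,d_6$ within a level is needed (the reductive ordering of the system was already exploited in the proof of Proposition~\ref{dn.est} itself). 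For $n \geq 5$ both input levels lie in the inductive range, and one uses the elementary identities $2^{-((n-1)-3)} = 2\cdot 2^{-(n-3)}$ and $2^{-((n-2)-3)} = 4\cdot 2^{-(n-3)}$ to convert the inductive bounds into bounds of the form $2^{-(n-3)}\times(\text{fixed combination of }B,\,C_i,\,C_*)$; for $n=3,4$ one instead feeds in the undecayed hypothesized bounds at levels $1,2$ and $2,3$ respectively.

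\textbf{Why the induction closes.} For $d^{(n)}_1$ the decisive feature is the prefactor $C_*(C_i)T$ in the first recursion: once $T$ is chosen small --- which is permitted to depend on $C_i$, $C_{eik}$, $\de$, $R$ --- this prefactor beats the already-controlled sum of the eight previous-level quantities and yields $d^{(n)}_1 \leq 2^{-(n-3)}\frac{1}{2}B$ with room to spare. For $d^{(n)}_2$, $d^{(n)}_5$ and $d^{(n)}_6$ the self-referential terms carry a prefactor $C_*\ep$ with $C_*$ \emph{independent of} $C_i$; choosing $\ep$ small (depending only on $C_{eik}$, $\de$, $R$, as required by the overall scheme) lets these be absorbed, leaving only the non-$\ep$ contributions built from $d_1,\dots,d_4$ at the two preceding levels. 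The variables $d^{(n)}_3$ and $d^{(n)}_4$ are genuine slaves --- bounded outright by $d_1,d_2$ at levels $n-1,n-2$ with no feedback --- so they inherit the geometric decay automatically. The explicit constants listed in the statement are precisely the ones that make each of the six estimates close (one may enlarge them by harmless numerical factors if one does not wish to optimize), the factor $2^{-(n-3)}$ encoding the geometric decay and the passage from the hypothesized constants at $n=1,2$ to their halved counterparts at $n=3$.

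\textbf{Main point of care.} There is no analytic difficulty here; the argument is bookkeeping. The one thing to watch is that the two smallness budgets must be kept separate --- $T$ may shrink as $C_i$ grows, but $\ep$ must be fixed before and independently of $C_i$, exactly as in the statement of Theorem~\ref{lwp} and Corollary~\ref{cor.data} --- and one should verify carefully the two transitional steps $n=3$ and $n=4$, where the larger, undecayed hypothesized bounds enter the recursions and where the constants are calibrated with the least slack. Once Corollary~\ref{cor.d} is in hand, the bounds are summable, $\sum_n d^{(n)}_i < \infty$, so the sequence of iterates is Cauchy in the lower-order, appropriately weighted norms occurring in the $d^{(n)}_i$ and hence converges; identifying the limit as the unique solution of the reduced system then occupies the remainder of Section~\ref{sec:iteration.convergence}.
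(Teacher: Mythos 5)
Your proposal is correct and is essentially the ``simple induction argument'' the paper alludes to and omits: a strong induction on $n$ using the recursive inequalities of Proposition~\ref{dn.est}, absorbing the self-referential $d^{(n)}_1$-term by taking $T$ small (which may depend on $C_i$) and the self-referential $d^{(n)}_2$-, $d^{(n)}_5$-, $d^{(n)}_6$-terms by taking $\ep$ small (depending only on $C_{eik}$, $\de$, $R$), while $d^{(n)}_3$, $d^{(n)}_4$ inherit the decay for free; the identification of exactly which recursions carry a $T$- or $\ep$-prefactor, and the observation that all six bounds at level $n$ close simultaneously from the two preceding levels, is the whole content. One small caveat: your assertion that the stated constants are ``precisely the ones that make each of the six estimates close'' is a bit optimistic --- for instance, plugging the inductive bounds into the $d^{(4)}_3$ recursion gives a coefficient $12(C_*)^2$ where the target allows only $8(C_*)^2$ --- but, as the paper itself stresses immediately after the Corollary, the precise constants are unimportant, and enlarging them by harmless numerical factors (as you already allow for) makes the induction close without affecting the Cauchy conclusion.
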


The precise expression above is of course unimportant, but it shows that in the function spaces as in the definition of $d^{(n)}_1,\dots, d^{(n)}_6$, the sequence we constructed is Cauchy and therefore convergent. Using the regularity that we have obtained, it is easy to verify the limit indeed satisfies the system \eqref{tau}-\eqref{geodesic.reduced}, \eqref{F}-\eqref{chi}. Finally, define $u_{\bA}$ by \eqref{u}. It is easy to verify that we have indeed constructed a solution to \eqref{tau}--\eqref{chi}. Moreover, one easily checks that the solution is unique: indeed, if there are two solutions, we can control their difference using the distances \eqref{d1}-\eqref{d6}, then an argument as in Proposition~\ref{dn.est} and Corollary~\ref{cor.d} shows that these two solutions coincide. We summarize this discussion in the following theorem:
\begin{theorem}\label{thm:reduced.sys}
Given the initial conditions in Section~\ref{sec.data.constraints}, there exists a unique solution 
$$(N,\beta,\tau,H,\gamma, \phi, L_{\bA}, F_{\bA}, \chi_{\bA})$$
to the reduced system \eqref{tau}--\eqref{chi} such that
\begin{itemize}
\item $\gamma$ and $N$ admit the decompositions
$$\gamma=-\alpha \chi(|x|){\log}(|x|)+\wht \gamma,\quad N = {1+} N_{asymp} \chi(|x|){\log}(|x|) + \wht N,$$
where $\alp\geq 0$ is a constant, $N_{asymp}(t)\geq 0$ is a function of $t$ alone and 
$$\wht \gamma \in H^4_{\de'},\quad \f{e_0 \wht\gamma}{N}\in H^3_{\de'+1},\quad \rd_t\f{e_0 \wht\gamma}{N}\in H^2_{\de'+1},\quad \wht N\in H^5_\de,\quad \rd_t \wht N\in H^3_\de,$$
with estimates depending only on $C_i$, $C_{eik}$, $\de$ and $R$.
\item For all $\bA$, $\phi$, $\rd_t\phi$, $F_{\bA}$ are supported in\footnote{Here, $J^+$ denotes the causal future with respect to the metric $g$. Strictly speaking, we have only proved that $\phi$, $\rd_t\phi$, $F_{\bA}$ are supported in
$\{(t,x)\in [0,T]\times \mathbb R^2: C_s(1+R^\ep) t - |x|\geq -R\}$, but a posteriori, it is easy to check that the supports indeed lie in $J^+(\{t=0\}\cap B(0,R))$.}
$$J^+(\{t=0\}\cap B(0,R))$$
and satisfy
$$\nabla \phi, \, \partial_t \phi \in H^3,\quad \rd_t\f{e_0\phi}{N}\in H^2,\quad F_{\bA}\in H^3,\quad \rd_tF_{\bA}\in H^2,$$
with estimates depending only on $C_i$, $C_{eik}$, $\de$ and $R$.
\item $(\beta,\tau,H,L_{\bA}, \chi_{\bA})$ are in the following spaces (for all $\bA$):
	$$\beta,\, e_0\bt \in H^4_{\de'},\quad \tau \in H^3_{\de''+1},\quad \rd_t\tau \in H^2_{\de''+1},\quad  H,\, e_0 H \in H^{3}_{\delta +1},\quad \chi_\bA \in H^3_\de,$$
	$$e^{2\gamma} L_{\bA}^i+\overrightarrow{c_{\bA}}^i ,\,\, N e^{\gamma} L_{\bA}^t - |\overrightarrow{c_{\bA}}|\in H^3_{\delta''},\quad \partial_t (e^{2\gamma} L_{\bA}^i) ,\,\, \partial_t (N e^{\gamma} L_{\bA}^t)\in H^2_{\delta'''}, $$
	with estimates depending only on $C_i$, $C_{eik}$, $\de$ and $R$. 
\item $N e^\gamma L_\bA^t$ is bounded below by
				$$ \min_\bA \inf_x |N e^\gamma L_\bA^t|(x) \geq \f 12 C_{eik}^{-1}.$$
\item The smallness conditions in \eqref{estN.small}, \eqref{estbeta.small} and Proposition~\ref{prpsmallness} hold (without the $(n)$).
\end{itemize}
\end{theorem}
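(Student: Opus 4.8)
The plan is to prove Theorem~\ref{thm:reduced.sys} as a straightforward consequence of the machinery built up in Sections~\ref{sec.data.constraints} and~\ref{sec.solve.reduced}. First I would set up the iteration scheme precisely as in Section~\ref{sec.iterative}, taking the first two iterates to be time-independent with initial data provided by Corollary~\ref{cor.data}. The main input is then the entire Section~\ref{sec:iteration.bdd}: the collection of Propositions~\ref{prpsmallness}, \ref{prpeogamma}, \ref{prpN}, \ref{prpbeta}, \ref{prpgamma}, \ref{prptau}, \ref{prpH}, \ref{prpphi}, \ref{prpL}, \ref{prpF} (together with the support lemmas~\ref{lm:cpt.supp} and~\ref{lm:cpt.supp.2} and the transport lemmas~\ref{trans.est}, \ref{trans.est.higher}) shows inductively that the bounds \eqref{estN.small}--\eqref{estF} propagate from step $n$ to step $n+1$ with constants no worse than those assumed, so that the whole sequence lies uniformly in the function spaces named in the theorem. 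This fixes, in particular, the decompositions $\gamma^{(n)} = -\alp\chi\log|x| + \wht\gamma^{(n)}$ and $N^{(n)} = 1 + N_{asymp}^{(n)}\chi\log|x| + \wht N^{(n)}$ with $\alp\geq 0$ a fixed (iteration-independent) constant and $N_{asymp}^{(n)}(t)\geq 0$.

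Next I would invoke the contraction estimates of Section~\ref{sec:iteration.convergence}: Proposition~\ref{dn.est} bounds the differences $d_1^{(n)},\dots,d_6^{(n)}$ in terms of earlier differences, and Corollary~\ref{cor.d} upgrades this to geometric decay $d_i^{(n)}\lesssim 2^{-n}$ once $T$ and $\ep$ are chosen small (with $T$ allowed to depend on $C_i$ but $\ep$ not). Hence the sequence is Cauchy in the weaker norms defining $d_1^{(n)},\dots,d_6^{(n)}$ and converges to a limit $(N,\beta,\tau,H,\gamma,\phi,L_\bA,F_\bA,\chi_\bA)$; by the uniform higher-order bounds and interpolation (or weak-$*$ compactness), the limit actually lies in the spaces asserted in the theorem, with estimates depending only on $C_i$, $C_{eik}$, $\de$, $R$. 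Passing to the limit in each equation of \eqref{lapsen}--\eqref{chin}, which involves only products, quotients and derivatives that are continuous in the relevant topologies, shows the limit solves \eqref{tau}--\eqref{geodesic.reduced} and \eqref{F}--\eqref{chi}; defining $u_\bA$ by integrating the transport equation \eqref{u} along the integral curves of $L_\bA$ (with the prescribed initial data $u_\bA\restriction_{\Sigma_0}$) then gives a full solution of \eqref{tau}--\eqref{chi}. The lower bound $\min_\bA\inf_x|Ne^\gamma L_\bA^t|\geq \tfrac12 C_{eik}^{-1}$ passes to the limit from \eqref{estu.lower}, and the smallness in \eqref{estN.small}, \eqref{estbeta.small} and Proposition~\ref{prpsmallness} similarly survives the limit.

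For uniqueness I would take two solutions in the stated regularity class, form the same differences $d_1,\dots,d_6$ (now between the two solutions rather than consecutive iterates), and rerun the argument of Proposition~\ref{dn.est}: since both solutions satisfy the bounds \eqref{estN.small}--\eqref{estF}, the difference obeys the same closed hierarchy of inequalities with a factor of $T$ (or $C_*\ep$) in front, and shrinking $T$ forces all the differences to vanish. Finally, the support statement $\mathrm{supp}(\phi,\rd_t\phi,F_\bA)\subset J^+(\{t=0\}\cap B(0,R))$ follows a posteriori: the crude containment in $\{C_s(1+R^\ep)t-|x|\geq -R\}$ comes from Lemmas~\ref{lm:cpt.supp} and~\ref{lm:cpt.supp.2}, and once $g$ is known to be the actual solution metric, finite speed of propagation for $\Box_g$ and for the transport operator $L_\bA^\rho\rd_\rho$ (whose integral curves are $g$-causal by \eqref{geodesic.reduced} and \eqref{estu.lower}) refines this to the causal future of the data support.

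I expect the only genuinely substantive content here is entirely contained in the preceding sections — uniform boundedness (Section~\ref{sec:iteration.bdd}) and contraction (Section~\ref{sec:iteration.convergence}) — so the ``main obstacle'' for this final statement is really just bookkeeping: checking that the limit inherits the full list of weighted Sobolev memberships (which requires combining the uniform high-order bounds with the low-order convergence, e.g.\ by lower-semicontinuity of the norms under weak limits) and verifying that each nonlinear term in \eqref{tau}--\eqref{chi} is continuous under the available convergence so that the limit equations hold in the strong (not merely distributional) sense. None of this requires new estimates beyond those already established.
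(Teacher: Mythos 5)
Your proposal is correct and follows essentially the same route as the paper: uniform bounds from Section~\ref{sec:iteration.bdd}, contraction via Proposition~\ref{dn.est} and Corollary~\ref{cor.d}, passage to the limit with the high-order memberships recovered from the uniform bounds by weak compactness, definition of $u_{\bA}$ via~\eqref{u}, and uniqueness by rerunning the $d_1,\dots,d_6$ contraction argument on the difference of two solutions. The paper itself treats this theorem as a summary of the preceding two subsections (the short paragraph just before the theorem statement), and your identification of the remaining ``bookkeeping'' — lower semicontinuity of the weighted norms, continuity of the nonlinearities under the $d_i$-topologies, and the a posteriori refinement of the support to $J^+(\{t=0\}\cap B(0,R))$ via finite speed of propagation — matches what the paper leaves implicit, including its footnote on the support statement.
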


Before we end this subsection, it will be convenient to note that according to Remark~\ref{rmk:u}, for $u_{\bA}$ defined as above, we have
\begin{proposition}\label{u.eikonal}
Given a solution to \eqref{tau}--\eqref{chi}, $\forall \bA$, $u_{\bA}$ satisfies
$$L_{\bA}^\alp = -\gi^{\alp\bt}\rd_\bt u_{\bA}, \quad \gi^{\alp\bt}\rd_\alp u_{\bA} \rd_\bt u_{\bA} = 0.$$
\end{proposition}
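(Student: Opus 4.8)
The plan is to verify the two claimed identities by showing that the vector field $W_{\bA}^\alp := L_{\bA}^\alp + \gi^{\alp\bt}\rd_\bt u_{\bA}$ vanishes and that the scalar $Q_{\bA} := \gi^{\alp\bt}\rd_\alp u_{\bA} \rd_\bt u_{\bA}$ vanishes, both propagated from the initial data via transport equations along $L_{\bA}$. These are standard facts in Lorentzian geometry (as already noted in Remark~\ref{rmk:u}), so the proof is essentially a bookkeeping exercise in the elliptic gauge, using the equations \eqref{geodesic.reduced} and \eqref{u}.

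First I would recall that by the construction in Lemma~\ref{lem:data.L}, the initial data for $L_{\bA}$ on $\Sigma_0$ was prescribed precisely as $L_{\bA}^\mu\restriction_{\Sigma_0} = -\gi^{\mu\nu}\rd_\nu u_{\bA}\restriction_{\Sigma_0}$, so that $W_{\bA}\restriction_{\Sigma_0} = 0$ and (since $\gi^{\mu\nu}\rd_\mu u_{\bA}\rd_\nu u_{\bA}\restriction_{\Sigma_0} = 0$ by the admissibility of the free data, cf. \eqref{eikonal.frame}) also $Q_{\bA}\restriction_{\Sigma_0} = 0$. Next I would derive the transport equation for $Q_{\bA}$: differentiating $u_{\bA}$ along the geodesic vector field and using \eqref{geodesic.reduced} together with $\rd u_{\bA}$ being closed, a direct computation gives
$$
L_{\bA}^\rho \rd_\rho Q_{\bA} = 2 L_{\bA}^\rho \rd_\rho(\gi^{\alp\bt}) \rd_\alp u_{\bA} \rd_\bt u_{\bA} + 2 \gi^{\alp\bt} (L_{\bA}^\rho \rd_\rho \rd_\alp u_{\bA}) \rd_\bt u_{\bA},
$$
and upon expressing $L_{\bA}^\rho \rd_\rho \rd_\alp u_{\bA} = \rd_\alp(L_{\bA}^\rho \rd_\rho u_{\bA}) - (\rd_\alp L_{\bA}^\rho)\rd_\rho u_{\bA} = -(\rd_\alp L_{\bA}^\rho)\rd_\rho u_{\bA}$ via \eqref{u}, and using the formula for $\rd_\rho \gi^{\alp\bt}$ in terms of Christoffel symbols, one finds (as is classical) that $L_{\bA}^\rho \rd_\rho Q_{\bA}$ is a linear expression in $Q_{\bA}$ and in the components of $W_{\bA}$; in particular if one already knows $W_{\bA}\equiv 0$ then $L_{\bA}^\rho\rd_\rho Q_{\bA} = 2 (\rd_\alp L_{\bA}^\bt) L_{\bA}^\alp g_{\bt\sigma} \cdot (\text{something}) \cdot Q_{\bA}$—more precisely a homogeneous linear transport equation in $Q_{\bA}$—so that $Q_{\bA}\equiv 0$ by uniqueness for the transport ODE along the integral curves of $L_{\bA}$.

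It is therefore cleanest to first establish $W_{\bA}\equiv 0$. For this I would compute the transport equation satisfied by $W_{\bA}^\alp = L_{\bA}^\alp + \gi^{\alp\bt}\rd_\bt u_{\bA}$. Applying $L_{\bA}^\rho\rd_\rho$ and using \eqref{geodesic.reduced} for the first term and the computation above (for $L_{\bA}^\rho\rd_\rho(\gi^{\alp\bt}\rd_\bt u_{\bA})$, which produces $-\Gamma^\alp_{\mu\nu}L_{\bA}^\mu \gi^{\nu\bt}\rd_\bt u_{\bA}$ up to terms involving $W_{\bA}$) one obtains a homogeneous linear system
$$
L_{\bA}^\rho \rd_\rho W_{\bA}^\alp = M^\alp_{\ \beta}[L_{\bA}, g, \rd u_{\bA}]\, W_{\bA}^\beta,
$$
for a matrix $M$ with coefficients that are bounded (given the regularity recorded in Theorem~\ref{thm:reduced.sys}, the weighted Sobolev embeddings in Appendix~\ref{weightedsobolev}, and the lower bound on $(L_{\bA})^t$). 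Since $W_{\bA}\restriction_{\Sigma_0} = 0$, Gr\"onwall / uniqueness for this linear transport system along integral curves of $L_{\bA}$—which foliate $[0,T]\times\m R^2$ because $(L_{\bA})^t \gtrsim C_{eik}^{-1} > 0$—forces $W_{\bA}\equiv 0$ on $[0,T]\times\m R^2$. Plugging $W_{\bA}\equiv 0$ back into the transport equation for $Q_{\bA}$ then gives $Q_{\bA}\equiv 0$ by the same argument.

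The main obstacle is purely computational: correctly deriving the two transport equations and checking that all inhomogeneous terms are indeed proportional to $W_{\bA}$ (respectively $Q_{\bA}$ and $W_{\bA}$), with no stray source terms. The bookkeeping is delicate because $\gi^{\alp\bt}$ and the Christoffel symbols $\Gamma^\alp_{\mu\nu}$ have explicit (and logarithmically growing) expressions in the elliptic gauge, so one must track the weights to ensure the coefficient matrix $M$ lies in a space for which the transport estimate of Lemma~\ref{trans.est} type (or simply pointwise ODE uniqueness along each integral curve) applies. However, since we only need \emph{vanishing}, not quantitative bounds, pointwise uniqueness of the linear transport ODE along each integral curve of $L_{\bA}$ suffices and no weighted estimate is actually needed—one only needs the integral curves to be complete on $[0,T]$, which follows from $(L_{\bA})^t > 0$ being bounded below together with the bounds on $(L_{\bA})^i$. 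I would present the computation of the two transport identities and then invoke ODE uniqueness to conclude.
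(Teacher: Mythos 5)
The paper does not supply a proof of Proposition~\ref{u.eikonal}; it merely records the ``standard fact'' from Remark~\ref{rmk:u}. Your plan --- show that $W_{\bA}^\alp := L_{\bA}^\alp + \gi^{\alp\bt}\rd_\bt u_{\bA}$ vanishes by transport from the initial data, then that $Q_{\bA}:=\gi^{\alp\bt}\rd_\alp u_{\bA}\rd_\bt u_{\bA}$ vanishes --- is a reasonable way to supply it, and the observations about initial data (Lemma~\ref{lem:data.L}) and about the integral curves of $L_{\bA}$ covering $[0,T]\times\m R^2$ via $(L_{\bA})^t\gtrsim C_{eik}^{-1}>0$ are correct.

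However, the central claim ``one obtains a homogeneous linear system $L_{\bA}^\rho\rd_\rho W_{\bA}^\alp = M^\alp_{\ \beta}W_{\bA}^\beta$'' is not quite right, and this is a genuine gap. Writing $V_{\bA}^\alp := -\gi^{\alp\bt}\rd_\bt u_{\bA}$ so that $W_{\bA}=L_{\bA}-V_{\bA}$, and using the torsion-freeness of $D$, the equations \eqref{geodesic.reduced}, \eqref{u}, and the symmetry of $D^2 u_{\bA}$, the computation actually yields
$$
L_{\bA}^\rho D_\rho W_{\bA}^\alp
= -\gi^{\alp\bt}\left(D_\bt L_{\bA}^\rho\right)g_{\rho\sigma}W_{\bA}^\sigma
\;+\;\f 12\,\gi^{\alp\bt}\,D_\bt\!\left(g(L_{\bA},L_{\bA})\right).
$$
The last term is not of the form $M^\alp_{\ \beta}W_{\bA}^\beta$: although the scalar $g(L_{\bA},L_{\bA})=g(L_{\bA},W_{\bA})$ is pointwise linear in $W_{\bA}$, its derivative $D_\bt$ is \emph{transverse} to $L_{\bA}$, so the would-be transport system contains a first-order operator on the right and is not an ODE along the integral curves of $L_{\bA}$. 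So ODE uniqueness along a single integral curve, as you invoke, does not immediately close.

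The standard fix is a preliminary step that you omitted: the geodesic equation \eqref{geodesic.reduced} gives $L_{\bA}^\rho D_\rho\bigl(g(L_{\bA},L_{\bA})\bigr)=0$ directly, and $g(L_{\bA},L_{\bA})\restriction_{\Sigma_0}=0$ by Lemma~\ref{lem:data.L}; since the integral curves of $L_{\bA}$ foliate $[0,T]\times\m R^2$, this forces $g(L_{\bA},L_{\bA})\equiv 0$ everywhere, hence $D_\bt\bigl(g(L_{\bA},L_{\bA})\bigr)\equiv 0$. \emph{After} this step the transport equation for $W_{\bA}$ is genuinely homogeneous and your argument goes through. You should also note that the separate transport equation for $Q_{\bA}$ is then unnecessary: once $W_{\bA}\equiv 0$, one has $V_{\bA}=L_{\bA}$ and hence $Q_{\bA}=g(V_{\bA},V_{\bA})=g(L_{\bA},L_{\bA})=0$ with no further argument. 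I would restructure the proof in the order: (1) $g(L_{\bA},L_{\bA})\equiv 0$; (2) $W_{\bA}\equiv 0$; (3) $Q_{\bA}=g(L_{\bA},L_{\bA})=0$.
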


\section{Local well-posedness for the system \eqref{back}}\label{sec.final}

In the previous section, we have shown that a unique local solution to the reduced system \eqref{tau}--\eqref{chi} exists (cf. Theorem~\ref{thm:reduced.sys}). We now show that given initial data satisfying the constraint equations \eqref{mom2} and \eqref{ham2}, as well as the initial conditions in Section~\ref{sec.data.constraints}, this solution is indeed a solution to the system \eqref{back}. For this purpose, it will be notationally convenient to denote\footnote{Note that this is different from the expression for $^{(4)}T_{\mu\nu}$.} as in Section~\ref{sec.T} that $T_{\mu\nu} = 2\rd_\mu\phi \rd_\nu\phi-g_{\mu\nu}\gi^{\alp\bt}\rd_\alp\phi\rd_\bt\phi+\sum_{\bA} F_{\bA}^2 \rd_\mu u_{\bA} \rd_\nu u_{\bA}$.

Given a solution to \eqref{tau}--\eqref{chi}, we have the geometric quantities $\gamma$, $\bt$, $N$, $H$ and $\tau$. Let $K_{ij}$ be defined according to \eqref{K} with $\tau$, $H$ and $\gamma$ as given. At this point, we do not yet know that (1) $K_{ij}$ is the second fundamental form and that (2) $H_{ij}$ is the traceless part of $K_{ij}$. On the other hand, by \eqref{tau}, we know that $\tau$ is the mean curvature of the constant-$t$ hypersurfaces.

We compute using \eqref{tau} and \eqref{shift} to get that
\begin{equation}
\begin{split}
K_{ij} :=& \f 12 e^{2\gamma}\tau\delta_{ij}+H_{ij} = (\f 12 e^{2\gamma}\tau-\f{1}{2N}e^{2\gamma}(\rd^k\beta_k))\delta_{ij}+\f{1}{2N} e^{2\gamma}(\partial_i \beta_j+\partial_i \beta_j)\\
=& -\f{1}{2N}e_0(e^{2\gamma})\delta_{ij}+\f{1}{2N} e^{2\gamma}(\partial_i \beta_j+\partial_i \beta_j).
\end{split}
\end{equation}
Hence, by \eqref{2nd.fund.form}, $K_{ij}$ is indeed the second fundamental form of the constant-$t$ hypersurfaces. Moreover, since we already know that $\tau$ is the mean curvature, it follows from \eqref{K} that $H_{ij}$ is indeed the traceless part (with respect to $\bar{g}$) of $K_{ij}$.

Therefore, it remains to show that $\tau=0$, $\Box_g u_{\bA}=\chi_{\bA}$, and that the first equation in \eqref{back} is satisfied. Let us note that the first equation can be rephrased as $G_{\mu\nu} = T_{\mu\nu}$, where $G_{\mu\nu}:= R_{\mu\nu}-\f 12 g_{\mu\nu} R$ is the Einstein tensor.

We first need some preliminary calculations for $G_{00}$ and $G_{ij}$.
\begin{prp}\label{prop:some.G.comp}
Given a solution to the reduced system \eqref{tau}--\eqref{chi}, the Einstein tensor (in the basis $\{e_0,\rd_i\}$) is given by
\begin{equation}\label{reduced.G}
G_{00}=\f{N}2(e_0\tau)+T_{00},\quad G_{ij}=T_{ij}+\f 12 \f{e^{2\gamma}(e_0\tau)}{N}\delta_{ij}.
\end{equation}
Moreover, $T$ satisfies the following propagation equation:
\begin{equation}\label{T.prop.prop}
\begin{split}
D^\mu T_{\mu\nu}=&\sum_{\bA}F_{\bA}^2 (\Box_g u_{\bA}-\chi_{\bA}) (\rd_\nu u_{\bA}).
\end{split}
\end{equation}
\end{prp}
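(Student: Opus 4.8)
The plan is to derive both identities in \eqref{reduced.G} by computing the relevant components of the Einstein tensor directly from the reduced equations, and then to obtain \eqref{T.prop.prop} from the matter equations together with the twice-contracted Bianchi identity.

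First I would compute $G_{00}$ and $G_{ij}$. The strategy is to use the Ricci identities assembled in Sections~\ref{sec.Ricci} and \ref{sec.T} (which express $R_{00}$, $R_{ij}$, and $R$ in the elliptic gauge in terms of $N$, $\gamma$, $\beta$, $\tau$, $H$ and their derivatives), and then to substitute the reduced equations. Concretely: the equation \eqref{lapse} was derived precisely by setting $R_{00}=T_{00}-g_{00}\,\mathrm{tr}_g T$ \emph{under the assumption} $e_0\tau=0$; without that assumption the same Ricci computation yields $R_{00}=T_{00}-g_{00}\,\mathrm{tr}_g T+(\text{terms involving }e_0\tau)$, and tracking the $e_0\tau$ terms through the formula for $G_{00}=R_{00}-\tfrac12 g_{00}R$ should produce exactly $G_{00}=\tfrac N2(e_0\tau)+T_{00}$. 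Similarly, \eqref{hij} was obtained by setting the traceless (with respect to $\delta$) part of $R_{ij}$ equal to the traceless part of $T_{ij}-g_{ij}\mathrm{tr}_g T$, and \eqref{gamma} encodes the trace equation $\delta^{ij}R_{ij}=\delta^{ij}(T_{ij}-g_{ij}\mathrm{tr}_g T)$ again in the case $e_0\tau=0$. Reassembling the trace and traceless parts, keeping the $e_0\tau$ contributions that were dropped when the reduced system was set up, and forming $G_{ij}=R_{ij}-\tfrac12 g_{ij}R$, should give $G_{ij}=T_{ij}+\tfrac12\frac{e^{2\gamma}(e_0\tau)}{N}\delta_{ij}$. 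The bookkeeping here is the main obstacle: one must carefully identify, in the derivations of \eqref{gamma}--\eqref{hij} and \eqref{lapse}, exactly which terms proportional to $e_0\tau$ were set to zero, and check that they recombine into the stated simple expressions rather than into something messier; this is where appealing to the explicit computations of Appendix~\ref{app.B} and Sections~\ref{sec.Ricci}--\ref{sec.T} is essential.

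Next I would prove the propagation equation \eqref{T.prop.prop} for $T_{\mu\nu}=2\rd_\mu\phi\rd_\nu\phi-g_{\mu\nu}\gi^{\alp\bt}\rd_\alp\phi\rd_\bt\phi+\sum_{\bA}F_{\bA}^2\rd_\mu u_{\bA}\rd_\nu u_{\bA}$. Taking the divergence term by term: for the scalar-field part, a direct computation gives $D^\mu(2\rd_\mu\phi\rd_\nu\phi-g_{\mu\nu}\gi^{\alp\bt}\rd_\alp\phi\rd_\bt\phi)=2(\Box_g\phi)(\rd_\nu\phi)$, which vanishes by \eqref{phi}. For each dust family, $D^\mu(F_{\bA}^2\rd_\mu u_{\bA}\rd_\nu u_{\bA})=D^\mu(F_{\bA}^2\rd_\mu u_{\bA})\,\rd_\nu u_{\bA}+F_{\bA}^2(\rd^\mu u_{\bA})(D_\mu\rd_\nu u_{\bA})$. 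Here $L_{\bA}^\mu=-\gi^{\mu\nu}\rd_\nu u_{\bA}$ by Proposition~\ref{u.eikonal}, and since $\gi^{\mu\nu}\rd_\mu u_{\bA}\rd_\nu u_{\bA}=0$ (again Proposition~\ref{u.eikonal}), differentiating the eikonal relation shows $(\rd^\mu u_{\bA})(D_\mu\rd_\nu u_{\bA})=0$, so the second term drops. For the first term, $D^\mu(F_{\bA}^2\rd_\mu u_{\bA})=2F_{\bA}(L_{\bA}^\mu\rd_\mu F_{\bA})\cdot(-1)\cdot(-1)+F_{\bA}^2\Box_g u_{\bA}$; wait — more carefully, $D^\mu(F_{\bA}^2\rd_\mu u_{\bA})=2F_{\bA}(\rd^\mu F_{\bA})(\rd_\mu u_{\bA})+F_{\bA}^2\Box_g u_{\bA}=-2F_{\bA}(L_{\bA}^\mu\rd_\mu F_{\bA})+F_{\bA}^2\Box_g u_{\bA}$. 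By the reduced equation \eqref{F}, $2L_{\bA}^\rho\rd_\rho F_{\bA}=-\chi_{\bA}F_{\bA}$, so this equals $F_{\bA}^2\chi_{\bA}+F_{\bA}^2\Box_g u_{\bA}$ — here I need to double-check the sign convention, but in any case the combination produces $F_{\bA}^2(\Box_g u_{\bA}-\chi_{\bA})$ up to fixing signs, giving $D^\mu T_{\mu\nu}=\sum_{\bA}F_{\bA}^2(\Box_g u_{\bA}-\chi_{\bA})(\rd_\nu u_{\bA})$ as claimed. The sign bookkeeping between $\rd^\mu F_{\bA}$ and $L_{\bA}^\mu\rd_\mu F_{\bA}$ is the one place to be careful; one should verify it against the definition $L_{\bA}^\mu=-\gi^{\mu\nu}\rd_\nu u_{\bA}$, which flips one sign relative to the naive computation.

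Finally, although it is not part of this proposition's statement, I note that \eqref{reduced.G} and \eqref{T.prop.prop} are exactly the ingredients needed for the subsequent argument: applying the contracted Bianchi identity $D^\mu G_{\mu\nu}=0$ to \eqref{reduced.G} and comparing with \eqref{T.prop.prop} will yield a coupled system of transport/wave equations for $e_0\tau$ and $\Box_g u_{\bA}-\chi_{\bA}$ with trivial data, forcing $\tau\equiv0$ and $\Box_g u_{\bA}=\chi_{\bA}$, and then \eqref{reduced.G} collapses to the Einstein equations $G_{\mu\nu}=T_{\mu\nu}$. For the present proposition, however, the work is confined to the two direct computations above, with the Ricci-tensor bookkeeping for $G_{00}$ and $G_{ij}$ being the substantive part.
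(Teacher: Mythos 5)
Your approach is the same as the paper's: express $G_{00}$ and $G_{ij}$ via the Ricci computations of Appendix~\ref{app.B} (Sections~\ref{sec.Ricci}--\ref{sec.T}), substitute the reduced equations, and track the residual $e_0\tau$ terms; then compute $D^\mu T_{\mu\nu}$ directly from \eqref{phi}, \eqref{F}, the eikonal relation of Proposition~\ref{u.eikonal} and the symmetry of the Hessian of $u_\bA$. Two small corrections. First, you claim that \eqref{gamma} was derived ``again in the case $e_0\tau=0$'', but \eqref{gamma} in fact \emph{retains} the $e_0\tau$ term; combined with \eqref{spatial.R.tr} it therefore gives $\delta^{ij}R_{ij}=\delta^{ij}(T_{ij}-g_{ij}\mathrm{tr}_g T)$ with no $e_0\tau$ residual at all, and \eqref{hij}, being the $\delta$-traceless part of the spatial Ricci equation, never sees $e_0\tau$ (the $e_0\tau$ term in \eqref{Rij} is pure trace). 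Only \eqref{lapse} was obtained by setting $e_0\tau=0$, so the entire $e_0\tau$ correction in \eqref{reduced.G} enters through $R_{00}$ (cf. \eqref{R00.2}) and thence through the scalar curvature $R$ in $G_{\mu\nu}=R_{\mu\nu}-\tfrac12 g_{\mu\nu}R$ (cf. \eqref{R.rdtau}); it is not a recombination of dropped trace and $R_{00}$ contributions as your sketch suggests. Second, you correctly flag that the sign bookkeeping between $L_\bA^\mu\rd_\mu F_\bA$ and $\rd^\mu F_\bA\,\rd_\mu u_\bA$ needs care, but then merely assert the intended answer ``up to fixing signs''; this should be resolved rather than waved away, and the reliable check is that the final expression for $D^\mu T_{\mu\nu}$ must vanish precisely when $\chi_\bA=\Box_g u_\bA$, since that is what the consistency argument of Proposition~\ref{prop:final} requires.
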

\begin{proof}
First note that by Proposition~\ref{u.eikonal}, the computations in \eqref{sec.T} are applicable.

\textbf{Proof of first identity in \eqref{reduced.G}.} By \eqref{gamma}, \eqref{spatial.R.tr} and \eqref{T-trT.spatial.tr}, we have
\begin{equation}\label{spatial.R.tr.2}
\delta^{ij}R_{ij}=\delta^{ij}(T_{ij}-g_{ij}\mbox{tr}_g T).
\end{equation}
By \eqref{lapse}, \eqref{R00} and \eqref{T-trT.00}, we obtain
\begin{equation}\label{R00.2}
R_{00}=N(e_0\tau)+T_{00}- g_{00} \mbox{tr}_g T.
\end{equation}
\eqref{spatial.R.tr.2} and \eqref{R00.2} (and \eqref{g.form}) together give
\begin{equation}\label{R.rdtau}
R=-N^{-2}(Ne_0\tau+T_{00}- g_{00} \mbox{tr}_g T)+e^{-2\gamma}\delta^{ij}(T_{ij}-g_{ij}\mbox{tr}_g T)=-\f{1}{N}e_0\tau -2\mbox{tr}_g T,
\end{equation}
which then implies (using \eqref{R00.2} again)
\begin{equation}\label{G00.2}
G_{00}=\f{N}2e_0\tau+T_{00}.
\end{equation}

\textbf{Proof of second identity in \eqref{reduced.G}.} By \eqref{hij}, \eqref{Rij}, \eqref{spatial.R.tr}, \eqref{T-trT.ij}, \eqref{T-trT.spatial.tr}, \eqref{spatial.R.tr.2} and \eqref{R.rdtau}, we obtain
\begin{equation}\label{Gij.2}
R_{ij}=T_{ij}-g_{ij}\mbox{tr}_g T,\quad G_{ij}=T_{ij}+\f 12 \f{e^{2\gamma}e_0\tau}{N}\delta_{ij}.
\end{equation}

\textbf{Proof of \eqref{T.prop.prop}.} Finally, to derive \eqref{T.prop.prop}, we use \eqref{phi}, \eqref{u}, \eqref{F}, \eqref{chi} and Proposition~\ref{u.eikonal} to obtain
\begin{equation}\label{T.prop.2}
\begin{split}
D^\mu T_{\mu\nu} =& 2\Box_g\phi (\rd_\nu\phi)+\sum_{\bA} 2 F_{\bA} ( \rd_\alp F_{\bA}) \gi^{\alp\bt} (\rd_\bt u_{\bA}) (\rd_\nu u_{\bA}) + \sum_{\bA}F_{\bA}^2 (\Box_g u_{\bA}) (\rd_\nu u_{\bA}) \\
&+ \sum_{\bA}F_{\bA}^2 \gi^{\alp\bt} (\rd_\alp u_{\bA}) D_{\bt} (\rd_\nu u_{\bA})
= \sum_{\bA}F_{\bA}^2 (\Box_g u_{\bA}-\chi_{\bA}) (\rd_\nu u_{\bA}).
\end{split}
\end{equation}
\end{proof}

By Proposition~\ref{prop:some.G.comp}, in order to show that a solution to \eqref{tau}--\eqref{chi} is indeed a solution to \eqref{back}, it remains to show that
\begin{equation}\label{quantities.to.be.0}
\tau=0,\quad  \Box_g u_{\bA}=\chi_{\bA},\quad G_{0i}=T_{0i}.
\end{equation}
These will be shown simultaneously. We first derive the main equations used to prove \eqref{quantities.to.be.0}. 
\begin{prp}\label{prop.eqn.for.quantities.to.be.0}
The following coupled system hold for $\tau$, $(G_{0i}- T_{0i}-\f 12 N(\rd_i\tau))$ and $(\Box_g u_{\bA}-\chi_{\bA})$ (where we again use the basis $\{e_0,\rd_i\}$):
\begin{equation}\label{main.1.G}
\begin{split}
&-\f 1{N^2}(\rd_t-\bt^k\rd_k)(G_{0i}- T_{0i}) + \f 12 \f{\rd_i e_0\tau}{N} + \f 12 \f{(\rd_i N)(e_0\tau)}{N^2}\\
&+\f 1{N^2}\left(\f{(e_0 N)\delta^{ij}}{N}+\f 12 \left({-4}(e_0 \gamma)\delta_i^j + 2\de_i^k \rd_\ell \bt^\ell + 2(\rd_i\bt^j)\right)\right)(G_{0j}- T_{0j})\\
=&-\sum_{\bA}F_{\bA}^2(\Box_g u_{\bf A}-\chi_{\bf A})(\rd_i u_{\bf A}),
\end{split}
\end{equation}

\begin{equation}\label{main.2.G}
\begin{split}
& \f 12 (\rd_t-\bt^i\rd_i)\left(\frac{e_0\tau}{N}\right)-e^{-2\gamma}\delta^{ij}\rd_i(G_{j0}-T_{j0})\\
=&\f{{e^{-2\gamma}}\delta^{ij} (\rd_i N)}{N}(G_{0j}-T_{0j})-\left(2 e_0 \gamma - \rd_i\bt^i\right)\f {e_0\tau}{N}+\sum_{\bA}F_{\bA}^2(\Box_g u_{\bf A}-\chi_{\bf A})(e_0 u_{\bf A}),
\end{split}
\end{equation}
\begin{equation}\label{Boxu-chi}
\begin{split}
L_{\bA}^\rho\rd_\rho(\Box_g u_{\bA}-\chi_{\bA})=-(\Box_g u_{\bA}-\chi_{\bA})(\Box_g u_{\bA}+\chi_{\bA})-N (e_0\tau) (\wht {L_{\bA}})^0(\wht {L_{\bA}})^0 - 2(G_{0i}-T_{0i})(\wht {L_{\bA}})^0(\wht {L_{\bA}})^i,
\end{split}
\end{equation}
where $(\wht {L_\bA})^{\mu}$ denotes the components of $L_\bA$ with respect to the basis $\{e_0,\rd_i\}$.
\end{prp}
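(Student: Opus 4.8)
The plan is to derive each of the three equations \eqref{main.1.G}, \eqref{main.2.G}, \eqref{Boxu-chi} directly from identities already available: the twice-contracted Bianchi identity $D^\mu G_{\mu\nu}=0$, the stress-energy propagation identity \eqref{T.prop.prop}, the expressions \eqref{reduced.G} for $G_{00}$ and $G_{ij}$, and the geodesic equation \eqref{geodesic.reduced} for $L_\bA$.

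First I would set $E_\nu := G_{0\nu} - T_{0\nu}$ (using the basis $\{e_0,\rd_i\}$), and use $D^\mu(G_{\mu\nu}-T_{\mu\nu}) = -\sum_\bA F_\bA^2(\Box_g u_\bA-\chi_\bA)(\rd_\nu u_\bA)$, which follows by combining $D^\mu G_{\mu\nu}=0$ (Bianchi) with \eqref{T.prop.prop}. The point is to expand the left side $D^\mu(G_{\mu\nu}-T_{\mu\nu}) = \gi^{\mu\alpha}(e_\alpha(G_{\mu\nu}-T_{\mu\nu}) - (\wht\Gamma)^\rho_{\alpha\mu}(G_{\rho\nu}-T_{\rho\nu}) - (\wht\Gamma)^\rho_{\alpha\nu}(G_{\mu\rho}-T_{\mu\rho}))$ in the frame $\{e_0,\rd_i\}$, using $\gi$ in this frame (diagonal, with $\gi^{00} = -N^{-2}$, $\gi^{ij}=e^{-2\gamma}\delta^{ij}$) and the connection coefficients $(\wht\Gamma)^\mu_{\alpha\beta}$ computed in Appendix~\ref{app.B} (the reference \eqref{connections}). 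Now substitute \eqref{reduced.G}: the components $G_{0\nu}-T_{0\nu}$ for $\nu=0$ equals $\frac N2 e_0\tau$ and for $\nu=j$ equals $E_j$; and $G_{ij}-T_{ij} = \frac12 e^{-2\gamma}\cdot\!\!$— wait, $= \frac12 \frac{e^{2\gamma}(e_0\tau)}{N}\delta_{ij}$. Taking $\nu=i$ in the contracted identity, collecting the $e_0\tau$ terms (which come from the $G_{00}-T_{00}$ and $G_{jk}-T_{jk}$ pieces) against the $E_j$ terms, and carefully bookkeeping the Christoffel symbols yields \eqref{main.1.G}. Taking $\nu=0$ instead, and dividing by a suitable factor of $N$ or $e^{2\gamma}$, yields \eqref{main.2.G}; the $\delta^{ij}\rd_i E_j$ term there comes precisely from the spatial part $\gi^{ij}e_i E_j$ of the divergence.

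For \eqref{Boxu-chi}, I would start from $\chi_\bA$ satisfying \eqref{chi}, i.e. $L_\bA^\rho\rd_\rho\chi_\bA + \chi_\bA^2 = -2(L_\bA^\rho\rd_\rho\phi)^2 - \sum_\bB F_\bB^2(g_{\mu\nu}L_\bA^\mu L_\bB^\nu)^2$, and compute $L_\bA^\rho\rd_\rho(\Box_g u_\bA)$ using the Raychaudhuri-type computation: since $L_\bA^\alpha = -\gi^{\alpha\beta}\rd_\beta u_\bA$ is geodesic and null (Proposition~\ref{u.eikonal}), one has the standard identity $L_\bA^\rho\rd_\rho(\Box_g u_\bA) = -(\Box_g u_\bA)^2 - D^\alpha L_\bA^\beta D_\beta L_{\bA\,\alpha} - \mathrm{Ric}(L_\bA,L_\bA)$ (this is essentially \eqref{Raychaudhuri}). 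The shear-squared term $D^\alpha L_\bA^\beta D_\beta L_{\bA\,\alpha}$ gets re-expressed using $\Delta u_\bA$-type quantities and, crucially, $\mathrm{Ric}(L_\bA,L_\bA)$ is rewritten via $R_{\mu\nu} = G_{\mu\nu} - \frac12 g_{\mu\nu}G + g_{\mu\nu}\cdots$, and the ``on-shell'' piece $G_{\mu\nu}=T_{\mu\nu}$ would kill most terms — but we do not yet know this, so the discrepancies $G_{0\nu}-T_{0\nu}$ and $\frac N2 e_0\tau$ survive as the inhomogeneous terms. Subtracting the $\chi_\bA$ equation \eqref{chi}, the matter terms $-2(L_\bA^\rho\rd_\rho\phi)^2$ and $-\sum F_\bB^2(\cdots)^2$ cancel against the corresponding pieces of $\mathrm{Ric}(L_\bA,L_\bA)$ coming from $T_{\mu\nu}L_\bA^\mu L_\bA^\nu$, leaving precisely $-(\Box_gu_\bA-\chi_\bA)(\Box_gu_\bA+\chi_\bA) - N(e_0\tau)(\wht{L_\bA})^0(\wht{L_\bA})^0 - 2(G_{0i}-T_{0i})(\wht{L_\bA})^0(\wht{L_\bA})^i$, where I decomposed $L_\bA = (\wht{L_\bA})^0 e_0 + (\wht{L_\bA})^i\rd_i$ so that $G_{\mu\nu}L_\bA^\mu L_\bA^\nu = G_{00}((\wht{L_\bA})^0)^2 + 2G_{0i}(\wht{L_\bA})^0(\wht{L_\bA})^i + G_{ij}(\wht{L_\bA})^i(\wht{L_\bA})^j$ and used \eqref{reduced.G} to replace $G_{00}$ and $G_{ij}$ in favor of $T$'s plus $e_0\tau$ terms.

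I expect the main obstacle to be purely computational bookkeeping: tracking the connection coefficients $(\wht\Gamma)^\mu_{\alpha\beta}$ in the mixed frame $\{e_0,\rd_i\}$ (these are the somewhat unusual quantities from Appendix~\ref{app.B}, involving $e_0\gamma$, $\rd_i\bt^j$, $\rd_i N$, etc.), and making sure the $e_0\tau$-terms versus the $E_j$-terms are correctly separated so that the coefficient structure in \eqref{main.1.G}--\eqref{main.2.G} — in particular the precise combination $-4(e_0\gamma)\delta_i^j + 2\delta_i^k\rd_\ell\bt^\ell + 2\rd_i\bt^j$ — comes out exactly right. There is no conceptual difficulty beyond the Bianchi identity, \eqref{T.prop.prop}, the Raychaudhuri identity, and the algebraic substitutions from \eqref{reduced.G}; the whole proof is a careful but routine reorganization, so I would present the key cancellations and reference Appendices~\ref{app.B} and \ref{app.C} for the detailed Christoffel-symbol and eikonal computations rather than reproducing them in full.
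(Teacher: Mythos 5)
Your overall strategy is exactly the one the paper uses: combine the Bianchi identity $D^\mu G_{\mu\nu}=0$ with \eqref{T.prop.prop}, expand $D^\mu(G_{\mu\nu}-T_{\mu\nu})$ in the $\{e_0,\rd_i\}$ frame using \eqref{connections}, substitute \eqref{reduced.G} for $G_{00}-T_{00}$ and $G_{ij}-T_{ij}$, and (for the last equation) subtract \eqref{chi} from the Raychaudhuri identity after rewriting $R_{L_\bA L_\bA}$ via the Ricci components from Proposition~\ref{prop:some.G.comp}. So conceptually this is the paper's proof.

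However, the Raychaudhuri step as you wrote it is not quite right. You state the identity in the form
$L_{\bA}^\rho\rd_\rho(\Box_g u_{\bA}) = -(\Box_g u_{\bA})^2 - D^\alpha L_{\bA}^\beta D_\beta L_{\bA\,\alpha} - \mathrm{Ric}(L_{\bA},L_{\bA})$,
treating $D^\alpha L_{\bA}^\beta D_\beta L_{\bA\,\alpha}$ as a separate ``shear-squared'' contribution to be ``re-expressed.'' But the Bochner computation gives $L(\operatorname{div}L)=-D^\alpha L^\beta D_\beta L_\alpha - R_{LL}$ with \emph{only one} quadratic term; in $2+1$ dimensions the screen is one-dimensional, the shear vanishes identically, and the full $D^\alpha L^\beta D_\beta L_\alpha$ \emph{is} $\chi^2=(\Box_g u_{\bA})^2$ (this is exactly what the paper's computation \eqref{Raychaudhuri}--\eqref{boxu=chi} establishes, by expanding $DL$ in the null frame $\{L,\ba L,e_\theta\}$ and observing the only surviving component is $\langle D_{e_\theta}L,e_\theta\rangle_g=\chi$). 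Taken literally, your displayed identity double-counts the $(\Box_g u_{\bA})^2$ piece and would produce $-2(\Box_g u_{\bA}-\chi_{\bA})(\Box_g u_{\bA}+\chi_{\bA})$-type coefficients rather than the ones in \eqref{Boxu-chi}. If you simply use the paper's two-dimensional Raychaudhuri form $L_{\bA}^\rho\rd_\rho(\Box_g u_{\bA}) = -(\Box_g u_{\bA})^2 - R_{L_{\bA}L_{\bA}}$ and then rewrite $R_{L_{\bA}L_{\bA}}$ using $R_{00}$ from \eqref{R00.2}, $R_{ij}$ from \eqref{Gij.2}, and $R_{0i}=G_{0i}=(G_{0i}-T_{0i})+T_{0i}$, the matter terms cancel against the RHS of \eqref{chi} and you land precisely on \eqref{Boxu-chi}.
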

\begin{proof}

\textbf{Proof of \eqref{main.1.G}.} By \eqref{connections},
\begin{equation}\label{Bianchi.1.1}
\begin{split}
D_0 (G_{0i}- T_{0i})=&(\rd_t-\bt^k\rd_k)(G_{0i}- T_{0i})-\f{(e_0 N)}{N}(G_{0i}- T_{0i})-e^{-2\gamma}\delta^{jk}N\rd_kN (G_{ij}-T_{ij})\\
&-\f{\rd_i N}{N} (G_{00}-T_{00})-\f 12 (2(e_0 \gamma)\delta_i^j+(\rd_i\bt^j)-\delta_{ik}\delta^{j\ell}(\rd_\ell\bt^k))(G_{0j}-T_{0j})\\
=&(\rd_t-\bt^k\rd_k)(G_{0i}- T_{0i})-(\rd_i N)(e_0 \tau)\\
&-\left(\f{(e_0 N)\delta^{ij}}{N}+\f 12 \left(2(e_0 \gamma)\delta_i^j+(\rd_i\bt^j)-\delta_{ik}\delta^{j\ell}(\rd_\ell\bt^k)\right)\right)(G_{0j}- T_{0j}),
\end{split}
\end{equation}
where in the last line we have used \eqref{reduced.G}. On the other hand, by \eqref{connections}, we have
\begin{equation}\label{Bianchi.1.2}
\begin{split}
&(g^{-1})^{jk} D_j (G_{ki}-T_{ki})\\
=& e^{-2\gamma}\delta^{jk}\partial_j\left(\f 12 \frac{e^{2\gamma}(e_0 \tau)}{N}\delta_{ki} \right)-e^{-2\gamma}\delta^{jk}(\delta_j^{\ell}\partial_k \gamma + \delta_k^{\ell} \partial_j \gamma-\delta_{jk}\de^{\ell m} \partial_{m} \gamma )(G_{\ell i}-T_{\ell i})\\
&-\frac{\delta^{jk}}{2N^2}(2(e_0 \gamma)\delta_{jk}-\delta_{k\ell}\partial_j \beta^{\ell} -\delta_{j\ell}
\partial_k \beta^{\ell})(G_{0i}-T_{0i})\\
&-e^{-2\gamma}\delta^{jk}(\delta_i^{\ell} \partial_j \gamma + \delta_j^{\ell} \partial_i \gamma-\delta_{ij} \de^{\ell m} \partial_{m} \gamma)(G_{k\ell}-T_{k\ell})
-\frac{1}{2N^2}\delta^{jk}(2(e_0 \gamma)\delta_{ij}-\delta_{i\ell}\partial_j \beta^{\ell} -\delta_{j\ell}\partial_i \beta^{\ell})(G_{0k}-T_{0k})\\
=&\f 12 \f{\rd_i e_0\tau}{N}-\f 12 \f{(\rd_i N)(e_0\tau)}{N^2} + \f{(\rd_i\gamma)(e_0\tau)}{N} + (- \f{{3}(e_0\gamma)}{N^2}+\f{1}{N^2} \rd_\ell \bt^\ell)(G_{0i}-T_{0i})\\
&- \f{1}{2N^2} \left(-(\rd_j\bt^\ell)\delta^{jk}\delta_{i\ell}-(\rd_i\bt^{k})\right)(G_{0k}-T_{0k})- e^{-2\gamma}\rd_i\gamma \de^{k\ell}(G_{k\ell}-T_{k\ell})\\
=&\f 12 \f{\rd_i e_0\tau}{N}-\f 12 \f{(\rd_i N)(e_0\tau)}{N^2} + \f{1}{2N^2} \left(-{6}\de_i^k(e_0\gamma)+2\de_i^k \rd_\ell \bt^\ell + (\rd_j\bt^\ell)\delta^{jk}\delta_{i\ell} + (\rd_i\bt^{k})\right)(G_{0k}-T_{0k}) ,
\end{split}
\end{equation}
where in the last line we have used \eqref{reduced.G}. By \eqref{T.prop.prop} and the Bianchi equation $D^{\mu} G_{\mu i}=0$, we have 
$$\mbox{$(-\f 1{N^2}\times$\eqref{Bianchi.1.1}$)+$\eqref{Bianchi.1.2}$=-\sum_{\bA}F_{\bA}^2(\Box_g u_{\bf A}-\chi_{\bf A})(\rd_i u_{\bf A}) $}.$$
The LHS can be expanded, using the expressions in \eqref{Bianchi.1.1} and \eqref{Bianchi.1.2}, as
\begin{equation*}
\begin{split}
&-\f 1{N^2}(\rd_t-\bt^k\rd_k)(G_{0i}- T_{0i})+\f 1{N^2}(\rd_i N)(e_0 \tau)\\
&+\f 1{N^2}\left(\f{(e_0 N)\delta^{ij}}{N}+\f 12 \left(2(e_0 \gamma)\delta_i^j+(\rd_i\bt^j)-\delta_{ik}\delta^{j\ell}(\rd_\ell\bt^k)\right)\right)(G_{0j}- T_{0j})\\
&+\f 12 \f{\rd_i e_0\tau}{N}-\f 12 \f{(\rd_i N)(e_0\tau)}{N^2} + \f{1}{2N^2} \left(-{6}\de_i^k (e_0\gamma)+2\de_i^k \rd_\ell \bt^\ell + (\rd_j\bt^\ell)\delta^{jk}\delta_{i\ell} + (\rd_i\bt^{k})\right)(G_{0k}-T_{0k}) \\
=&-\f 1{N^2}(\rd_t-\bt^k\rd_k)(G_{0i}- T_{0i}) + \f 12 \f{\rd_i e_0\tau}{N} + \f 12 \f{(\rd_i N)(e_0\tau)}{N^2} \\
&+\f 1{N^2}\left(\f{(e_0 N)\delta^{ij}}{N}+\f 12 \left(-{4}(e_0 \gamma)\delta_i^j + {2}\de_i^k \rd_\ell \bt^\ell + 2(\rd_i\bt^j)\right)\right)(G_{0j}- T_{0j}),
\end{split}
\end{equation*}
which proves \eqref{main.1.G}.

\textbf{Proof of \eqref{main.2.G}.}
By \eqref{connections} and \eqref{reduced.G}, we have
\begin{equation}\label{Bianchi.2.1}
\begin{split}
&\gi^{00} D_0 (G_{00}-T_{00})\\
=&-\f 1{N^2}(\rd_t-\bt^i\rd_i)(G_{00}-T_{00})+\f 1{N^2}\left(\f{2(e_0 N)}{N}(G_{00}-T_{00}){+}2e^{-2\gamma}\delta^{ij}N(\rd_i N) (G_{0j}-T_{0j})\right)\\
=&-\f 1{2N^2}(\rd_t-\bt^i\rd_i)(N(e_0\tau))+\f{(e_0 N)(e_0\tau)}{N^2}{+}\f{2e^{-2\gamma}\delta^{ij} (\rd_i N)}{N}(G_{0j}-T_{0j}).
\end{split}
\end{equation}
On the other hand, using \eqref{connections} and \eqref{reduced.G}, we also have
\begin{equation}\label{Bianchi.2.2}
\begin{split}
&\gi^{ij}D_i (G_{j0}-T_{j0})\\
=&e^{-2\gamma}\de^{ij}\rd_i(G_{j0}-T_{j0})-\f 1{2N^2}\de^{ij}\left(2(e_0 \gamma)\delta_{ij}-(\rd_i\bt^k)\delta_{jk}-(\rd_j\bt^k)\delta_{ik}\right) (G_{00}-T_{00})\\
&-e^{-2\gamma}\de^{ij}\left(\delta_i^k\rd_j\gamma+\delta^k_j\rd_i\gamma-\delta_{ij}\delta^{k\ell}\rd_\ell\gamma\right)(G_{k0}-T_{k0})\\
&-e^{-2\gamma}\de^{ij}\f{\rd_i N}{N}(G_{j0}-T_{j0})-\f 12 e^{-{2}\gamma}\de^{ij}\left(2(e_0 \gamma)\delta^k_i-(\rd_i\bt^k)-\delta_{im}\delta^{k\ell}(\rd_\ell\bt^m)\right)(G_{jk}-T_{jk})\\
=&e^{-2\gamma}\de^{ij}\rd_i(G_{j0}-T_{j0})-\f 1{4 N}\left(4 (e_0 \gamma)-2 (\rd_i\bt^i)\right) (e_0\tau )\\
&{-} e^{-2\gamma}\de^{ij}\f{\rd_i N}{N}(G_{j0}-T_{j0})-\f 14 e^{-2\gamma}\left(2(e_0 \gamma)\delta^{jk}-\de^{ij}(\rd_i\bt^k)-\delta^{ik}(\rd_i\bt^j)\right)\f{e^{2\gamma}(e_0\tau)}{N}\de_{jk}\\
=&e^{-2\gamma}\delta^{ij}\rd_i(G_{j0}-T_{j0})-\left(2 e_0 \gamma - \rd_i\bt^i\right)\f {e_0\tau}{N}-\f{e^{-2\gamma}\delta^{ij}(\rd_i N)}{N}(G_{j0}-T_{j0}).
\end{split}
\end{equation}
By \eqref{T.prop.prop} and the Bianchi equation $D^{\mu} G_{\mu 0}=0$, we have 
$$\mbox{\eqref{Bianchi.2.1} $+$ \eqref{Bianchi.2.2}}= -\sum_{\bA}F_{\bA}^2(\Box_g u_{\bf A}-\chi_{\bf A})(e_0 u_{\bf A}) ,$$
which implies \eqref{main.2.G}.

\textbf{Proof of \eqref{Boxu-chi}.} By Proposition~\ref{u.eikonal}, $(L_{\bA})^\mu = -\gi^{\mu\nu}\rd_\nu u_{\bA}$. Computing as in Section \ref{sec.Ray}, we get
\begin{equation*}
\begin{split}
L_{\bA}^\rho\rd_\rho(\Box_g& u_{\bA}) = -(\Box_g u_{\bA})^2 - R_{L_{\bA}L_{\bA}}\\
=& -(\Box_g u_{\bA})^2-2(L_{\bA}^\rho \partial_\rho \phi)^2-\sum_{\bf B} F_{\bf B}^2(L_{\bA}^\rho (\rd_\rho u_{\bf B}))^2-N(e_0\tau)(\wht {L_{\bA}})^0(\wht {L_{\bA}})^0 - 2(G_{0i}-T_{0i})(\wht {L_{\bA}})^0(\wht {L_{\bA}})^i,
\end{split}
\end{equation*}
where in the last line we have used \eqref{R00.2} and \eqref{Gij.2}. 
Subtracting \eqref{chi} from this and using Proposition~\ref{u.eikonal}, we then obtain \eqref{Boxu-chi}.
\end{proof}

\begin{prp}\label{prop:final}
Suppose the solution to \eqref{tau}--\eqref{chi} as constructed in Section \ref{sec.iterative} arises from initial data with $\tau\restriction_{\Sigma_0}=0$, $(\Box_{g}u_\bA-\chi_\bA)\restriction_{\Sigma_0}=0$ and that the constraint equations are initially satisfied, then the solution satisfies
$$\tau=0,\quad  \Box_g u_{\bA}=\chi_{\bA},\quad G_{0i}-T_{0i}=0.$$
As a consequence, the solution to \eqref{tau}--\eqref{chi} is indeed a solution to \eqref{back}.
\end{prp}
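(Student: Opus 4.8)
The plan is to set up a coupled first-order system for the three quantities that must vanish, namely $\tau$, $v_i := G_{0i}-T_{0i}-\frac12 N(\rd_i\tau)$ (equivalently, as organized via \eqref{main.1.G}--\eqref{main.2.G}), and $w_{\bA}:=\Box_g u_{\bA}-\chi_{\bA}$, and then run a uniqueness/energy argument to conclude that they stay zero given they are zero initially. The equations \eqref{main.1.G}, \eqref{main.2.G} and \eqref{Boxu-chi} in Proposition~\ref{prop.eqn.for.quantities.to.be.0} already constitute essentially this system: \eqref{main.1.G} and \eqref{main.2.G} together form a symmetric-hyperbolic-type system in $(G_{0i}-T_{0i})$ and $e_0\tau$ (with the spatial derivatives of one feeding into the $e_0$-evolution of the other), with source terms that are \emph{linear} in $w_{\bA}$ (multiplied by $F_{\bA}^2$, which is bounded and compactly supported); while \eqref{Boxu-chi} is a transport equation for $w_{\bA}$ along $L_{\bA}$ whose right-hand side is linear in $w_{\bA}$, linear in $N(e_0\tau)$, and linear in $(G_{0i}-T_{0i})$, with coefficients controlled by the bounds in Theorem~\ref{thm:reduced.sys}.

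First I would recall that $\tau\restriction_{\Sigma_0}=0$ by the choice of initial data (Lemma~\ref{lem:e0gamma}) and that $w_{\bA}\restriction_{\Sigma_0}=0$ by the choice $\chi_{\bA}\restriction_{\Sigma_0}=\Box_g u_{\bA}\restriction_{\Sigma_0}$ (Lemma~\ref{lem:data.chi}); I also need $(e_0\tau)\restriction_{\Sigma_0}=0$ and $(G_{0i}-T_{0i})\restriction_{\Sigma_0}=0$, which come from the momentum and Hamiltonian constraints \eqref{mom2}, \eqref{ham2} being satisfied initially. Concretely, by Proposition~\ref{prop:some.G.comp}, $G_{00}=\frac N2 e_0\tau+T_{00}$; the Hamiltonian constraint says precisely that the Gauss equation $G_{00}=T_{00}$ holds on $\Sigma_0$ (this is what \eqref{ham2} encodes once one substitutes $\tau\restriction_{\Sigma_0}=0$), so $(e_0\tau)\restriction_{\Sigma_0}=0$; similarly the momentum constraint \eqref{mom2} is equivalent to the Codazzi equation $G_{0i}=T_{0i}$ on $\Sigma_0$. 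This pins down the initial data of the coupled system to be identically zero. I would make this bookkeeping explicit (it essentially repeats the standard propagation-of-constraints argument, adapted to the elliptic gauge), perhaps in a short preliminary lemma.

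Then I would close the argument by an energy/Grönwall estimate for the coupled system \eqref{main.1.G}--\eqref{main.2.G}--\eqref{Boxu-chi}. For the hyperbolic pair, multiply \eqref{main.1.G} by a suitable weight times $(G_{0i}-T_{0i})$ and \eqref{main.2.G} by a weight times $e_0\tau$, add, integrate over $\Sigma_{t'}$ in $x$ and over $[0,t]$ in $t'$; the spatial-derivative cross terms cancel up to lower-order terms (after integration by parts, controlled using the $\bt$, $N$, $\gamma$ bounds from Theorem~\ref{thm:reduced.sys} and Proposition~\ref{prpsmallness}), and the $F_{\bA}^2 w_{\bA}$ sources are absorbed by Cauchy--Schwarz against the transport energy for $w_{\bA}$. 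For the transport equation \eqref{Boxu-chi}, use the estimate of Lemma~\ref{trans.est} (or a direct multiplier argument along $L_{\bA}$) noting that $(L_{\bA})^t e^{\gamma}N\gtrsim C_{eik}^{-1}>0$, so that $w_{\bA}$ is controlled by time integrals of itself, $N(e_0\tau)$ and $(G_{0i}-T_{0i})$. Combining, one gets $\frac{d}{dt}E(t)\lesssim E(t)$ for the total energy $E$ of $(\tau, e_0\tau, G_{0i}-T_{0i}, w_{\bA})$, with $E(0)=0$, hence $E\equiv 0$ on $[0,T]$, i.e. $\tau=0$, $\Box_g u_{\bA}=\chi_{\bA}$, $G_{0i}=T_{0i}$; then \eqref{reduced.G} and \eqref{Gij.2} give $G_{\mu\nu}=T_{\mu\nu}$, so \eqref{back} holds. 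The main obstacle is organizing the coupled energy argument so that all error terms are genuinely lower-order and absorbable — in particular verifying that the dangerous first-order terms in \eqref{main.1.G}--\eqref{main.2.G} really do pair up antisymmetrically under integration by parts (this is the analogue of the symmetric-hyperbolic structure of the reduced Bianchi system in the standard wave-gauge proof), and that the weights needed for the transport equation are compatible with those for the elliptic/hyperbolic parts; the compact support of $F_{\bA}^2$ and $T_{0i}$ keeps the source terms harmless at spatial infinity.
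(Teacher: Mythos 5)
Your proposal follows essentially the same strategy as the paper's proof: treat \eqref{main.1.G}--\eqref{main.2.G}--\eqref{Boxu-chi} as a homogeneous linear hyperbolic/transport system in $(e_0\tau,\;G_{0i}-T_{0i},\;\Box_g u_\bA-\chi_\bA)$, observe that the first-order spatial derivative terms cancel antisymmetrically under integration by parts (the paper's \eqref{est.E.2}), and run a Gr\"onwall argument on an energy whose vanishing at $t=0$ follows from $\tau\restriction_{\Sigma_0}=0$, the initial constraints, and $\chi_\bA\restriction_{\Sigma_0}=\Box_g u_\bA\restriction_{\Sigma_0}$. Your added explication of why the Hamiltonian and momentum constraints give $(e_0\tau)\restriction_{\Sigma_0}=0$ and $(G_{0i}-T_{0i})\restriction_{\Sigma_0}=0$ (via Proposition~\ref{prop:some.G.comp}) is correct and in fact makes explicit a step the paper only invokes implicitly by saying $E(0)=0$ ``by assumptions.''
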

\begin{proof}
We will consider the equations \eqref{main.1.G}, \eqref{main.2.G} and \eqref{Boxu-chi} as a linear system for the unknowns $\tau$, $\Box_g u_\bA-\chi_\bA$ and $G_{0i}-T_{0i}$. We will use the Gr\"onwall's inequality to simultaneously show that they are zero. In order to carry this out, we need to put them in an appropriately weighted $L^2$ space. To see that all the weights are compatible, we will use the following two facts without further comments:
\begin{enumerate}
\item According to the estimates proven in Section \ref{sec.iterative}, not all derivatives of the metric components decay. The only subtlety here are the logarithmic terms as $|x|\to \infty$ in $\gamma$ and $N$. Nevertheless, one notes that all the \underline{spatial} derivatives of any metric components decay as $|x|\to \infty$, and $\rd_t\gamma$, $\rd_t \beta^i$ decay as $|x|\to \infty$. and $\rd_t\log N$ is bounded as $|x|\to \infty$. (The decay of $\rd_t\gamma$ follows from the fact that in the asymptotic term $\alpha \chi(|x|){\log} (|x|)$, $\alpha$ is a constant; while the boundedness of $\rd_t\log N$ follows from the fact that $N_{asymp}(t)$ is a function of $t$ only and the ${\log} (|x|)$ terms cancel.) Moreover, for $\ep_{low}$ sufficiently small, for any term that decays, the decay rate is faster than any powers of ${\log}(|x|)$ and is also faster than $e^{2\gamma}$.
\item $F_\bA$ is compactly supported.
\end{enumerate}

We begin the equations in Proposition \ref{prop.eqn.for.quantities.to.be.0}. Consider the energy
$$E(t):=\int_{\mathbb R^2}\left(\f 1{2N^2}(e_0\tau)^2+2 e^{-2\gamma}|G_{0i}-T_{0i}|^2+  |G_{0i}-T_{0i}-\f 12 N\rd_i \tau|^2+ (\Box_g u_\bA-\chi_\bA)^2 \right)(t,x)\, dx.$$
Contracting \eqref{main.1.G} with $N^2\delta^{ij}(G_{0j}-T_{0j}-\f 12 N\rd_j\tau)$ and integrating by parts, we obtain
\begin{equation}\label{est.E.1}
\f{d}{dt}\int_{\m R^2} |G_{0i}-T_{0i}-\f 12 N\rd_i \tau|^2(t,x)\, dx\leq CE(t)
\end{equation}
for some constant $C>0$.

Next, we consider the estimates for $\tau$, which is slightly more subtle. We compute using \eqref{main.1.G} and \eqref{main.2.G}. In the computation to follow, notice that whenever a derivative falls on the metric components $N$ and $\gamma$, we obtain a term $O(E(t))$. For this we have in particular used the decay and boundedness properties of the derivatives of the metric components that we mentioned above. Also, we can freely commute $e_0$ and $\rd_i$ and the error terms are again $O(E(t))$. More precisely, we have
\begin{equation}\label{est.E.2}
\begin{split}
&\f{d}{dt}\int_{\mathbb R^2}\left(\f 1{2 N^2}(e_0\tau)^2+2e^{-2\gamma}|G_{0i}-T_{0i}|^2\right)(t,x)\, dx\\
=& \int_{\mathbb R^2}\left(\f 1{N^2}(e_0\tau)(e_0 e_0\tau)+ 4e^{-2\gamma}(G_{i0}-T_{i0})(e_0(G_{i0}-T_{i0}))\right)(t,x)\, dx +O(E(t))\\
\stackrel{\eqref{main.2.G}}{=}& \int_{\mathbb R^2}\left(\f {2e^{-2\gamma}}{N}(e_0\tau)(\rd_i(G_{i0}-T_{i0}))+ 4e^{-2\gamma}(G_{i0}-T_{i0})(e_0(G_{i0}-T_{i0}))\right)(t,x)\, dx +O(E(t))\\
=& \int_{\mathbb R^2}\left(-\f {2e^{-2\gamma}}{N}(e_0\rd_i\tau)(G_{i0}-T_{i0})+4e^{-2\gamma}(G_{i0}-T_{i0})(e_0(G_{i0}-T_{i0}))\right)(t,x)\, dx +O(E(t))\\
\stackrel{\eqref{main.1.G}}{=}& \int_{\mathbb R^2}\left(-4e^{-2\gamma}(G_{i0}-T_{i0})(e_0(G_{i0}-T_{i0})) +4e^{-2\gamma}(G_{i0}-T_{i0})(e_0(G_{i0}-T_{i0}))\right)(t,x)\, dx +O(E(t))\\
=&O(E(t)).
\end{split}
\end{equation}
By \eqref{Boxu-chi}, we get that for some $C>0$
\begin{equation}\label{est.E.3}
\f{d}{dt}\int_{\m R^2} (\Box_g u_\bA-\chi_\bA)^2 (t,x)\, dx\leq C E(t).
\end{equation}
Combining \eqref{est.E.1}, \eqref{est.E.2} and \eqref{est.E.3}, we thus obtain
$$\f{d}{dt}E(t)\leq C E(t).$$
Finally, by assumptions, we have, initially, $E(0)=0$. Therefore, by Gr\"onwall's inequality, we have for every $t\in [0,T]$
$$E(t)=0$$
The conclusion follows immediately.
\end{proof}

\section{Improved regularity}\label{sec:improved.regularity}

Finally, to conclude the proof of Theorem~\ref{lwp}, it remains to prove the bounds stated in Theorem~\ref{lwp}. Notice that some of the estimates are already obtained in Theorem~\ref{thm:reduced.sys}. The remaining task is therefore to improve some of the estimates in Theorem~\ref{thm:reduced.sys}, using now the fact that we know the solution also solves the original \eqref{back} and we can therefore use the elliptic equations for the metric components. More precisely, we have the following proposition:
\begin{proposition}
In the $k = 3$ case, taking $\ep_{low}$ smaller if necessary, all the estimates stated in Theorem~\ref{lwp} hold for the solution to \eqref{back} constructed in Theorem~\ref{thm:reduced.sys} and Proposition~\ref{prop:final}.
\end{proposition}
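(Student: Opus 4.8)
The plan is to bootstrap from the estimates already in hand in Theorem~\ref{thm:reduced.sys}, using that the reduced solution is now known to solve \eqref{back} (by Proposition~\ref{prop:final}), and hence that $\gamma$ and $H$ satisfy the genuine elliptic constraint equations \eqref{mom2}, \eqref{ham2} rather than merely the wave/transport equations of the reduced system. Concretely, since $\tau = 0$ identically, \eqref{tau} gives $2e_0\gamma = \partial_i\beta^i$, so that $e_0\gamma$ inherits the regularity of $\partial\beta$; and the Hamiltonian constraint \eqref{ham2} reads $\Delta\gamma = -e^{-2\gamma}(\dot\phi^2 + \tfrac12|H|^2) - |\nabla\phi|^2 - \sum_\bA e^{-\gamma}\breve F_\bA^2|\nabla u_\bA|^2$, which, combined with the already-established bounds on $\phi$, $F_\bA$, $u_\bA$, $H$ and $N$, lets us run the elliptic estimates of Theorem~\ref{laplacien}/Corollary~\ref{coro} to upgrade $\wht\gamma$ from $H^{4}_{\delta'}$ to $H^{k+2}_\delta$ (here $k=3$, so $H^{5}_\delta$) and to pin down $\alpha$ with the stated sign. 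First I would re-derive, from \eqref{tau}--\eqref{chi} together with $\tau\equiv 0$, the elliptic equations \eqref{lapse}, \eqref{shift} for $N$ and $\beta$ with $\tau=0$ substituted, recovering the claimed weighted bounds on $N_{asymp}$, $\wht N$, $\beta$ and their $\partial_t$, $\partial_t^2$ derivatives; the $\partial_t^2$ bounds require differentiating the elliptic equations twice in $t$ and using the first-order-in-$t$ bounds already available, exactly as in the iteration-scheme propositions but now without the $(n)$ labels.

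Next I would carry out the improvement for the matter fields and eikonal quantities: the $H^k$ bounds on $\nabla\phi$, $\partial_t\phi$ and $F_\bA$ are already in Theorem~\ref{thm:reduced.sys}; the second time derivatives $\partial_t^2\phi\in H^{k-1}$, $\partial_t^2 F_\bA\in H^{k-2}$ are obtained by differentiating the wave equation \eqref{phi} (in the form \eqref{waven}) and the transport equation \eqref{F} once more in $t$ and feeding in the first-derivative bounds together with the now-improved metric bounds. For $u_\bA$, the weighted bounds on $\nabla u_\bA - \overrightarrow{c_\bA}$ and $e^\gamma N^{-1} e_0 u_\bA - |\overrightarrow{c_\bA}|$ in $H^k_{\delta''}$, and their $\partial_t$, $\partial_t^2$ versions in $H^{k-1}_{\delta'''}$, $H^{k-2}_{\delta''''}$, follow from the bounds on $L_\bA$ in Theorem~\ref{thm:reduced.sys} via the algebraic relations $L_\bA^\alpha = -(g^{-1})^{\alpha\beta}\partial_\beta u_\bA$ (Proposition~\ref{u.eikonal}) and the eikonal identity \eqref{eikonal.frame}, inverting to express $\nabla u_\bA$ and $e_0 u_\bA$ in terms of $L_\bA$ and the metric, then using the product and elliptic estimates of Appendix~\ref{weightedsobolev}; the extra $t$-derivatives are handled by differentiating \eqref{geodesic.reduced} and tracking the loss of one weight per derivative (hence the chain $\delta'''\to\delta''''$). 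The lower bound on $\inf|\nabla u_\bA|$ is propagated from the initial lower bound using the $\partial_t$ estimate and a short-time argument, as in the iteration.

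The support statement $\mathrm{supp}(\phi, F_\bA)\subset J^+(\{t=0\}\cap B(0,R))$ follows by combining Lemmas~\ref{lm:cpt.supp} and \ref{lm:cpt.supp.2} with the a posteriori identification of the approximate light cones as genuine causal curves for $g$; the conservation laws \eqref{cons1}, \eqref{cons2} follow from integrating the momentum constraint \eqref{mom} (whose right side is a perfect divergence by \eqref{main.data.cond} and the transport structure) and the Hamiltonian constraint \eqref{ham} over $\Sigma_t$, using that $\alpha$ is $t$-independent. I expect the main obstacle to be the simultaneous bookkeeping of weights: one must check that every product and every inverse-Laplacian step is consistent with the hierarchy $\delta > \delta' > \delta'' > \delta''' > \delta''''$, in particular that the logarithmically growing metric factors $e^{2\gamma}$, $N$ (which grow like $|x|^{C\epsilon^2}$ and $\epsilon\log|x|$) can always be absorbed by the $\epsilon$-sized loss in the weight exponent — this is where $\epsilon_{low}$ may need to be taken smaller, and where the argument that $e^{-2\gamma}$, $N e^\gamma$, etc. lie in $C^0_{-\epsilon/10}$-type spaces (as in Propositions~\ref{prpN}--\ref{prpL}) must be invoked repeatedly but now for the actual solution. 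Once these weight compatibilities are verified, each individual estimate is a routine application of Theorem~\ref{laplacien}, Corollary~\ref{coro}, the energy estimates of Lemmas~\ref{gamma.wave.EE}, \ref{trans.est}, \ref{trans.est.higher}, and the product estimates of Appendix~\ref{weightedsobolev}, exactly paralleling the boundedness propositions of Section~\ref{sec:iteration.bdd} but with all iterate superscripts removed and with the genuine elliptic equations replacing their reduced counterparts for $\gamma$ and $H$.
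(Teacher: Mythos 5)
Your plan is essentially the paper's: upgrade the metric regularity via the genuine elliptic equations that hold a posteriori, transfer the $L_\bA$ estimates to $u_\bA$ through the relations $L_\bA^\alpha = -(g^{-1})^{\alpha\beta}\partial_\beta u_\bA$ and the eikonal identity $e_0 u_\bA = Ne^{-\gamma}|\nabla u_\bA|$, and obtain the remaining second $t$-derivatives by differentiating the evolution/transport equations and invoking the compact support. The one real variation is your choice of elliptic system: you propose to use the constraints \eqref{mom2}, \eqref{ham2} together with \eqref{lapse}, \eqref{shift}, all of which keep $H$ as an explicit unknown. The paper instead uses \eqref{elliptic.1}--\eqref{elliptic.3}, which are exactly these equations with $\tau = 0$ substituted and with $H$ eliminated in favor of $L\beta$ via the algebraic relation \eqref{beta}; this yields a closed elliptic system in $(N,\gamma,\beta)$ alone, so one never has to upgrade $H$ separately (and indeed Theorem~\ref{lwp} contains no estimate for $H$), and the $\partial_t$, $\partial_t^2$ bootstraps can be iterated cleanly with Corollary~\ref{coro}. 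Your route is equivalent after substituting $2Ne^{-2\gamma}H_{ij} = (L\beta)_{ij}$, so this is not a gap, only an extra bookkeeping layer. Two smaller points: the conservation laws \eqref{cons1}--\eqref{cons2} appear in a Remark but are not among the estimates Theorem~\ref{lwp} asks you to prove, so that part of your plan is unnecessary; and $\partial_t^2\phi\in H^{k-1}$ is already extractable algebraically from the bound on $\partial_t\big(\frac{e_0\phi}{N}\big)$ in Theorem~\ref{thm:reduced.sys} together with the metric bounds, without differentiating the wave equation again (though that also works for the needed range of derivatives). Your attention to the weight hierarchy and the logarithmic growth of $N$, $e^{2\gamma}$ is exactly the right thing to flag, and is indeed where the paper spends its effort in Propositions~\ref{prpN}--\ref{prpL}.
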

\begin{proof}
\textbf{Estimates for $\phi$ and $F_\bA$.} We first note that by Theorem~\ref{thm:reduced.sys}, we already have all the desired estimates for $\nab\phi$, $\rd_t\phi$, $\rd_t^2\phi$, $F_\bA$ and $\rd_t F_\bA$. It thus remains to show the estimate for $\rd_t^2 F_\bA$. It can be easily seen that $\rd_t^2 F_\bA \in H^1$ by differentiating the equation \eqref{F} by $\rd_t$, and then using the estimates in Theorem~\ref{thm:reduced.sys} and the fact that $F_\bA$ is compactly supported in $B(0,2R)$.

\textbf{Estimates for the metric components.} Now that we know that we have a solution to \eqref{back}, it follows that the metric components satisfy the following elliptic equations:
\begin{align}
&\label{elliptic.1}
\Delta \beta^j={\delta^{ik}}\delta^{j\ell}\rd_k\left(\log(Ne^{-2\gamma})\right)(L\beta)_{i\ell}-4 \delta^{ij}(e_0 \phi)( \partial_i \phi) - 2\de^{ij} \sum_{\bA} F_\bA^2 (e_0 u_\bA)(\rd_i u_\bA)\\
&\Delta \gamma = -|\nabla \phi|^2 -\frac{1}{2}\sum F_\bA^2|\nabla u_\bA|^2-\frac{e^{2\gamma}}{N^2}\left( (e_0 \phi)^2+\frac{1}{2}\sum_{\bA}F_{\bA}^2
( e_0 u_{\bA})^2\right) -\frac{e^{2\gamma}}{8N^2}|L\beta|^2,\label{elliptic.2}\\
&\Delta N = \f{e^{2\gamma}}{4N}|L\beta|^2 + \frac{e^{2\gamma}}{N}\left(2(e_0 \phi)^2+\sum_{\bA}F_{\bA}^2
(e_0 u_{\bA})^2\right).\label{elliptic.3}
\end{align}
All the computations are given in Appendix~\ref{sec.Ricci}. More precisely, \eqref{elliptic.1} follows from \eqref{R0j} and \eqref{beta}; \eqref{elliptic.2} follows from \eqref{G00}; \eqref{elliptic.3} follows from $R_{00} = T_{00} - g_{00} \mbox{tr}_{g} T$, which can be computed using \eqref{R00} and \eqref{T-trT.00}.

For the metric components (without $\rd_t$ derivatives), we need $\wht N, \wht \gamma\in H^5_\de$ and $\bt \in H^5_{\de'}$. The estimate for $\wht N$ is already proven in Theorem~\ref{thm:reduced.sys}. To improve the estimates for $\wht \gamma$ and $\bt$, we use \eqref{elliptic.1}, \eqref{elliptic.3}, the estimates in Theorem~\ref{thm:reduced.sys} and Corollary~\ref{coro} to obtain that $\wht \gamma\in H^5_\de$, $\bt \in H^5_{\de'}$. (Notice that $\wht N$ and $\wht\gamma$ can be put in a better weighted space compared to $\bt$ since on the RHS of \eqref{elliptic.2} and \eqref{elliptic.3}, the terms that are not compactly supported is quadratic in $L\bt$ and decay sufficiently fast. This is in contrast to, say, the term ${\delta^{ik}}\delta^{j\ell}\rd_k\left(\log(Ne^{-2\gamma})\right)(L\beta)_{i\ell}$ on the RHS of \eqref{elliptic.1}.)

\textbf{Estimates for $u_\bA$ and its $\rd_t$ derivatives.} First, by Proposition~\ref{u.eikonal} and \eqref{L.coord}, we have
$$L_\bA^t = \frac{1}{N^2}(e_0 u_\bA), \quad L_\bA^i=-\de^{ij} e^{-2\gamma}(\partial_j u_\bA) - \bt^i \frac{1}{N^2}(e_0 u_\bA).
	$$ 
Hence, derivatives of $u_\bA$ can be written in terms of components of $L_\bA$ and their derivatives. Therefore, by Theorem~\ref{thm:reduced.sys}, we have the desired estimates for $u_\bA$ when there are at most two $\rd_t$ derivatives on $u_\bA$. Moreover, the upper bound for $\left(\min_{\bA} \inf_{x\in \mathbb R^2} |\nab u_{\bA}|(x)\right)^{-1}$ holds due to the lower bound for $N e^\gamma L_\bA^t$ in Theorem~\ref{thm:reduced.sys}.

Finally, to bound the third $\rd_t$ derivatives for $u_\bA$, i.e., the term $\partial_t^2\left(e^{\gamma}N^{-1} e_0 u_\bA \right)$, we simply note that since $\gi^{\alp\bt}\rd_\mu u_{\bA} \rd_\nu u_{\bA} =0$ (cf.~Proposition~\ref{u.eikonal}), 
$$e_0 u_\bA  = N e^{-\gamma} |\nab u_\bA|.$$
Hence, we can write $e^{\gamma}N^{-1} e_0 u_\bA$ in terms of $\nab u_\bA$ and apply\footnote{Notice here that by lower bound on  $|\nab u_{\bA}|$, $|\nab u_\bA|$ is bounded away from $0$ and therefore has the same regularity properties as $\nab u_\bA$.} the estimates above.

\textbf{Estimates for the $\rd_t$ derivatives of the metric components.} To estimate the $\rd_t$ derivatives of the metric components, we again use \eqref{elliptic.1}, \eqref{elliptic.2} and \eqref{elliptic.3}. Differentiating \eqref{elliptic.1}, \eqref{elliptic.2}, \eqref{elliptic.3} by $\rd_t$ and using the estimates in Theorem~\ref{thm:reduced.sys} as well as Corollary~\ref{coro}, we have that for some $C_h'$ depending on $C_{eik}, C_{high}, k,\de,R$, the following estimates hold:
$$|\rd_t N_{asymp}|+ \|\rd_t\wht N\|_{H^2_{\de}} + \|\rd_t\bt\|_{H^2_{\de'}} + \|\rd_t \wht \gamma\|_{H^2_{\de}} \ls \ep^2 \left(|\rd_t N_{asymp}|+ \|\rd_t\wht N\|_{H^1_{\de}} + \|\rd_t\bt\|_{H^1_{\de'}} + \|\rd_t \wht \gamma\|_{H^1_{\de}}\right)+ C'_h ,$$
Hence, we have $\rd_t\wht N \in H^2_\de$, and $\rd_t\bt, \, \rd_t\wht\gamma\in H^2_{\de'}$. Now, by using again the equations \eqref{elliptic.1}, \eqref{elliptic.2}, \eqref{elliptic.3} differentiated by $\rd_t$, we can apply Corollary~\ref{coro} to iteratively improve the regularity until we obtain
$$\rd_t\wht N,\,\rd_t\wht\gamma \in H^4_\de,\quad \rd_t\bt\in H^4_{\de'}.$$

Similarly, to estimate the $\rd_t^2$ derivatives of the metric components, we differentiate \eqref{elliptic.1}, \eqref{elliptic.2}, \eqref{elliptic.3} by $\rd_t^2$. Using the estimates in Theorem~\ref{thm:reduced.sys} , the estimates for the $\rd_t$ derivatives of the metric components (that we just derived above), and also Corollary~\ref{coro}, we have that for some $C_h'$ depending on $C_{eik}, C_{high}, k,\de,R$, the following estimates hold:
$$|\rd_t^2 N_{asymp}|+ \|\rd_t^2\wht N\|_{H^2_{\de}} + \|\rd_t^2\bt\|_{H^2_{\de'}} + \|\rd_t^2 \wht \gamma\|_{H^2_{\de}} \ls \ep^2 \left(|\rd_t^2 N_{asymp}|+ \|\rd_t^2\wht N\|_{H^1_{\de}} + \|\rd_t^2\bt\|_{H^1_{\de'}} + \|\rd_t^2 \wht \gamma\|_{H^1_{\de}}\right)+ C''_h ,$$
This implies that $\rd_t^2\wht N, \,\rd_t^2\wht\gamma \in H^2_\de$, and $\rd_t^2\bt\in H^2_{\de'}$. 
As before, we now use the equations \eqref{elliptic.1}, \eqref{elliptic.2}, \eqref{elliptic.3} differentiated by $\rd_t^2$ and apply Corollary~\ref{coro} to iteratively improve the regularity until we obtain
$$\rd_t^2\wht N,\,\rd_t\wht\gamma \in H^3_\de,\quad \rd_t^2\bt\in H^3_{\de'}.$$
\end{proof}

This concludes the proof of Theorem~\ref{lwp} when $k=3$. As we mentioned earlier, in the case of larger $k$, one can easily show that higher regularity is propagated, and we will omit the proof.

\appendix

\section{Weighted Sobolev spaces}\label{weightedsobolev}
For sake of completeness, we collect some results about weighted Sobolev spaces. For this we recall the definitions in Definition \ref{def.spaces}. Unless otherwise stated, we will only be interested in weighted Sobolev spaces on $\mathbb R^2$. Most of the results can be found in \cite[Appendix I]{livrecb} (although we use slightly different notations).

The following is immediate from the definition.

\begin{lm}\label{der} Let $m \geq 1$, $p\in [1, \infty)$ and $\delta \in \mathbb{R}$. Then there exists $C>0$ such that for $j=1,2$, 
$$\| \partial_j u\|_{W^{m-1}_{\delta+1,p}} \leq C \|u \|_{W^m_{\delta,p}}.$$
Similarly, for $m \geq 1$, $\delta \in \mathbb{R}$, $j=1,2$, 
$$\| \partial_j u\|_{C^{m-1}_{\delta+1}} \leq C \|u \|_{C^m_{\delta}}.$$
\end{lm}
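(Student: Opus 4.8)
The statement to prove is Lemma~\ref{der}, which asserts the continuity of the differentiation map $\partial_j : W^m_{\delta,p} \to W^{m-1}_{\delta+1,p}$ and the analogous statement for weighted H\"older spaces.

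My plan is to prove both inequalities directly from the definition of the weighted norms in Definition~\ref{def.spaces}, with essentially no analysis beyond the chain rule and the elementary observation that the weight exponent bookkeeping works out. For the first inequality, I would start by unwinding the norm on the left-hand side:
\[
\|\partial_j u\|_{W^{m-1}_{\delta+1,p}} = \sum_{|\beta|\leq m-1}\|(1+|x|^2)^{\frac{(\delta+1)+|\beta|}{2}}\nabla^\beta \partial_j u\|_{L^p}.
\]
For each multi-index $\beta$ with $|\beta|\leq m-1$, the term $\nabla^\beta \partial_j u$ is of the form $\nabla^{\beta'} u$ for some multi-index $\beta'$ with $|\beta'| = |\beta|+1 \leq m$, and the weight exponent $\frac{(\delta+1)+|\beta|}{2} = \frac{\delta + |\beta'|}{2}$ matches exactly the weight appearing in the $|\beta'|$-th term of $\|u\|_{W^m_{\delta,p}}$. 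Hence each term on the left is bounded by the corresponding term on the right (with constant $1$), and summing over $|\beta|\leq m-1$ gives the bound with $C$ equal to the number of such multi-indices (or simply $C=1$ if one notes each term of the left sum is dominated by a distinct term of the right sum). The argument for the H\"older version is identical, replacing $\|\cdot\|_{L^p}$ by $\|\cdot\|_{L^\infty}$ throughout.

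There is no real obstacle here: the lemma is purely a matter of matching indices and weights, and the constant $C$ can in fact be taken to be $1$ (or at most the cardinality of $\{\beta : |\beta|\leq m-1\}$, depending on how one prefers to organize the sum). The only mild subtlety worth stating explicitly is that the definition uses the completion of $C^\infty_0$, so one should first verify the inequality for $u \in C^\infty_0$ (where all derivatives are classical and the manipulations above are unambiguous) and then extend to general $u \in W^m_{\delta,p}$ by density, noting that $\partial_j$ is continuous from $W^m_{\delta,p}$ into distributions so the inequality passes to the limit. This density remark is the one line I would not skip; everything else is the index bookkeeping described above.
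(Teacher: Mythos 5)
Your argument is correct and is exactly the content behind the paper's remark that the lemma is ``immediate from the definition'': unwinding the norm, matching $\nabla^\beta\partial_j u = \nabla^{\beta+e_j}u$ with the weight exponent $\frac{(\delta+1)+|\beta|}{2}=\frac{\delta+|\beta+e_j|}{2}$, and extending by density. The paper gives no written proof, so there is nothing to compare beyond noting that you have simply made the obvious calculation explicit.
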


We have an easy embedding result, which is a straightforward application of the H\"older's inequality:
\begin{lm}\label{weight:emb}
If $1 \leq p_1 \leq p_2 \leq \infty$ and $\delta_2-\delta_1 > 2\left(\f{1}{p_1}-\f{1}{p_2}\right)$, 
then we have the continuous embedding
$$W^{0}_{\delta_2,p_2}\subset W^{0}_{\delta_1,p_1}.$$
\end{lm}

Next, we have Sobolev embedding theorems for weighted Sobolev spaces:
\begin{prp}\label{holder} 
Let $s,m \in \m N \cup \{0\}$, $1<p<\infty$. The following holds:
\begin{itemize}
\item Suppose $s >\frac{2}{p}$ and $\beta \leq \delta +\frac{2}{p}$. Then, the following continuous embedding holds
	$$W^{s+m}_{\delta,p}\subset C^{m}_{\beta}.$$
\item Suppose $s < \f 2p$. Then, the following continuous embedding holds
  $$W^{s+m}_{\delta,p}\subset W^{m}_{\delta+s, \f{np}{n-sp}}.$$
\end{itemize}
\end{prp}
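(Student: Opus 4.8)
The plan is to reduce both statements to the classical, unweighted Sobolev embeddings on a fixed annular domain, via a dyadic decomposition of $\mathbb R^2$ (here $n=2$) and a scaling argument; this is the standard route, essentially as in \cite[Appendix I]{livrecb}. First I would fix the dyadic annuli $A_0=B(0,2)$ and $A_j=\{2^{j-1}\le|x|<2^{j+1}\}$ for $j\ge1$, which cover $\mathbb R^2$ with bounded overlap, together with a subordinate smooth partition of unity $\{\psi_j\}$ satisfying $|\nabla^k\psi_j|\lesssim 2^{-jk}$ for every $k$. On $A_j$ one has $(1+|x|^2)^{1/2}\approx 2^j$, and after the rescaling $y=2^{-j}x$ the function $u_j(y):=(\psi_ju)(2^jy)$ is supported in one fixed reference annulus $\mathcal A\subset\mathbb R^2$, independent of $j$. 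Since $u=\sum_j\psi_ju$ and the cover has bounded overlap, it suffices to estimate each $\psi_ju$ and sum.

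The key is a scaling dictionary. Under $x=2^jy$ one has $\nabla_y^\gamma u_j=2^{j|\gamma|}(\nabla^\gamma(\psi_ju))(2^j\,\cdot)$ and $dx=2^{2j}\,dy$, so for any $\sigma\in\mathbb R$ and $1\le r\le\infty$,
$$\|\nabla_y^\gamma u_j\|_{L^r(dy)}\;\approx\;2^{-j(\sigma+2/r)}\,\big\|(1+|x|^2)^{(\sigma+|\gamma|)/2}\nabla^\gamma(\psi_ju)\big\|_{L^r(dx)},$$
and, by the Leibniz rule and the bound on $\nabla^k\psi_j$, the right-hand side is controlled (uniformly in $j$) by $\sum_{|\gamma'|\le|\gamma|}\|(1+|x|^2)^{(\sigma+|\gamma'|)/2}\nabla^{\gamma'}u\|_{L^r(A_j)}$. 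The point of the rescaling is that the classical Sobolev inequalities on $\mathcal A$ hold with a constant depending only on $\mathcal A$, hence independent of $j$; the innermost patch $A_0$ needs no rescaling since there the weight is $\approx1$ and the bare classical embedding applies directly.

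With this in hand, for the first embedding ($s>2/p$, $s\in\mathbb N$, so $sp>2$) I would apply $W^{s+m,p}(\mathcal A)\hookrightarrow C^m(\overline{\mathcal A})$ to $u_j$ and translate both sides through the dictionary: $r=\infty$, $\sigma=\beta$ on the left and $r=p$, $\sigma=\delta$ on the right. The powers of $2^j$ combine to $2^{j(\beta-\delta-2/p)}$, which stays bounded precisely because $\beta\le\delta+2/p$; taking the supremum over $j$ — here the $L^\infty$ nature of $C^m_\beta$ is what permits the borderline case $\beta=\delta+2/p$ — yields $\|u\|_{C^m_\beta}\lesssim\|u\|_{W^{s+m}_{\delta,p}}$. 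For the second embedding ($s<2/p$, i.e. $sp<2$), set $q=\tfrac{np}{n-sp}$, i.e. $\tfrac1q=\tfrac1p-\tfrac s2$; applying $W^{s+m,p}(\mathcal A)\hookrightarrow W^{m,q}(\mathcal A)$ and translating with $r=q$, $\sigma=\delta+s$ on the left, $r=p$, $\sigma=\delta$ on the right, the relation between $p,q,s$ makes all powers of $2^j$ cancel, leaving a patchwise estimate of $\|(1+|x|^2)^{(\delta+s+|\alpha|)/2}\nabla^\alpha(\psi_ju)\|_{L^q}$ by $\sum_{|\gamma|\le s+m}\|(1+|x|^2)^{(\delta+|\gamma|)/2}\nabla^\gamma u\|_{L^p(A_j)}$. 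Raising to the $q$-th power, summing over $j$, and using $\|(a_j)\|_{\ell^q}\le\|(a_j)\|_{\ell^p}$ (valid since $q\ge p$) together with the bounded overlap of the $A_j$ gives the global estimate $\|u\|_{W^m_{\delta+s,q}}\lesssim\|u\|_{W^{s+m}_{\delta,p}}$.

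I do not expect any genuine analytic difficulty: the content is entirely the scale-invariant classical Sobolev inequalities on a fixed domain. The only delicate points are bookkeeping ones — keeping the three scalings consistent (the weight scales like $2^j$, each derivative like $2^{-j}$, the Lebesgue measure like $2^{2j}$), which is exactly what forces the hypotheses $\beta\le\delta+2/p$ and $\tfrac1q=\tfrac1p-\tfrac s2$ — and the passage from patchwise $\ell^q$ control to a global weighted $L^q$ norm, for which one needs $q\ge p$ and the finite overlap of the dyadic cover. (The cases $s=0$ are trivial or vacuous, and for $s\ge2$ the classical input is obtained by iterating the first-order embeddings.)
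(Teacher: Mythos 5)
Your proof is correct, and the paper itself states this proposition without proof, simply citing \cite[Appendix I]{livrecb}; the dyadic-annulus-plus-rescaling argument you give is exactly the standard route taken in that reference (and in the related work on weighted Sobolev spaces on $\mathbb R^n$). The bookkeeping — weight $\approx 2^j$, each derivative $\approx 2^{-j}$, Lebesgue measure $\approx 2^{2j}$, and the $\ell^p\hookrightarrow\ell^q$ step for $q\ge p$ — is all in order.
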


We will also need a product estimate.
\begin{prp}\label{produit}
	Let $s,s_1,s_2 \in \m N \cup \{0\}$, $p \in [1,\infty]$, $\de, \de_1, \de_2 \in \m R$. Assume that $s\leq \min(s_1,s_2)$ and $s<s_1+s_2-\f 2 p$. Let $\delta < \delta_1 + \delta_2 + \f 2p$. Then $\forall (u,v) \in W^{s_1}_{\delta_1, p}\times W^{s_2}_{\delta_2, p}$,
	$$\|uv\|_{W^s_{\delta,p}} \lesssim \|u\|_{W^{s_1}_{\delta_1,p}} \|v\| _{W^{s_2}_{\delta_2, p}}.$$
\end{prp}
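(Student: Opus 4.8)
The final statement is the weighted Sobolev product estimate (Proposition~\ref{produit}). The plan is to reduce it, by a dyadic decomposition of $\m R^2$ combined with scaling, to the classical \emph{unweighted} Sobolev multiplication inequality on a fixed annulus. Recall that under the hypotheses $s\le\min(s_1,s_2)$ and $s<s_1+s_2-\f 2p$, the classical estimate $\|fg\|_{W^s_p(\Omega)}\lesssim\|f\|_{W^{s_1}_p(\Omega)}\|g\|_{W^{s_2}_p(\Omega)}$ holds on any bounded Lipschitz domain $\Omega\subset\m R^2$, with constant depending on $\Omega,s,s_1,s_2,p$ (this is classical, via the Leibniz rule, Gagliardo--Nirenberg interpolation and Sobolev embedding; see e.g.\ \cite[Appendix~I]{livrecb}). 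I would apply this on the reference annulus $A:=\{\f 12<|y|<4\}$ and its subset $A':=\{1\le|y|<2\}$.

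First I would decompose $\m R^2=B(0,1)\cup\bigcup_{j\ge0}A_j$ with $A_j:=\{2^j\le|x|<2^{j+1}\}$, and split $\|uv\|_{W^s_{\delta,p}(\m R^2)}$ into the contribution on $B(0,1)$ plus the contributions on the $A_j$ (these pieces are disjoint). On $B(0,1)$ every weight $(1+|x|^2)^\sigma$ is comparable to $1$, so that piece is bounded directly by the classical estimate on a neighbourhood of $B(0,1)$. For the annular pieces I would rescale: with $x=2^jy$, $u_j(y):=u(2^jy)$, $v_j(y):=v(2^jy)$, a change of variables in dimension $n=2$ (Jacobian $2^{2j}$, and $\nab^\beta_x=2^{-j|\beta|}\nab^\beta_y$) gives, for every integer $m\ge0$ and $\sigma\in\m R$, the uniform-in-$j$ comparisons
$$\|fg\|_{W^m_{\sigma,p}(A_j)}\sim 2^{j(\sigma+\frac2p)}\|(fg)(2^j\cdot)\|_{W^m_p(A')},\qquad \|f\|_{W^m_{\sigma,p}(A_j')}\sim 2^{j(\sigma+\frac2p)}\|f(2^j\cdot)\|_{W^m_p(A)},$$
where $A_j':=\{2^{j-1}<|x|<2^{j+2}\}$ is the $2^j$-dilate of $A$.

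Next I would apply the classical multiplication inequality on $A'\subset A$ to $u_jv_j$ and scale back using the two displayed identities, obtaining
$$\|uv\|_{W^s_{\delta,p}(A_j)}\lesssim 2^{j(\delta-\delta_1-\delta_2-\frac2p)}\,\|u\|_{W^{s_1}_{\delta_1,p}(A_j')}\,\|v\|_{W^{s_2}_{\delta_2,p}(A_j')}.$$
Here the hypothesis $\delta<\delta_1+\delta_2+\f2p$ enters decisively and only here: it makes the exponent $\delta-\delta_1-\delta_2-\f2p$ strictly negative, so the prefactors are $\le1$. It remains to sum over $j$. For $p<\infty$, raising to the $p$-th power, bounding one of the two factors by its global norm, and using that the dilated annuli $\{A_j'\}_j$ have bounded overlap, yields $\sum_j\|uv\|_{W^s_{\delta,p}(A_j)}^p\lesssim\|v\|_{W^{s_2}_{\delta_2,p}}^p\sum_j\|u\|_{W^{s_1}_{\delta_1,p}(A_j')}^p\lesssim\|u\|_{W^{s_1}_{\delta_1,p}}^p\|v\|_{W^{s_2}_{\delta_2,p}}^p$; the case $p=\infty$ is identical with suprema in place of sums. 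Adding the $B(0,1)$ contribution completes the proof.

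The steps are all routine; the points requiring a little care are (i) quoting the classical unweighted multiplication theorem in the right form and observing that the restriction issue is dealt with simply by using the slightly larger overlapping annuli $A_j'$ (so no extension operator is needed), and (ii) checking that the scaling exponents combine so that convergence of the sum over $j$ is governed by precisely $\delta<\delta_1+\delta_2+\f2p$, with no constraint on $\delta_1,\delta_2$ separately. The only genuinely nontrivial analytic input — the unweighted Sobolev product estimate — I would cite rather than reprove.
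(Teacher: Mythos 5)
The paper does not prove this proposition; together with the other weighted-Sobolev facts in Appendix~\ref{weightedsobolev}, it is quoted from Appendix~I of Choquet--Bruhat's book \cite{livrecb}. Your proof is the standard argument for such estimates: reduce to the unweighted multiplication theorem on a fixed reference annulus via a dyadic decomposition plus rescaling, then sum the pieces. The computations are correct — the rescaling exponents are right in two dimensions, the hypothesis $\delta<\delta_1+\delta_2+\frac{2}{p}$ makes the geometric prefactors at most $1$ (in fact your argument only uses $\delta\le\delta_1+\delta_2+\frac{2}{p}$), and the bounded overlap of $\{A_j'\}$ together with bounding one factor by its global norm closes the sum; the $p=\infty$ case with suprema in place of $\ell^p$ sums is identical. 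One small quibble: the parenthetical claim that ``no extension operator is needed'' is slightly misleading. Enlarging to $A_j'$ organizes the right-hand side conveniently, but you still need the unweighted product estimate to hold on the bounded reference annulus $A$, and establishing \emph{that} (rather than its $\m R^2$ version) typically invokes an extension operator or a cutoff. Since you quote the annular estimate as classical anyway, this is only a matter of phrasing and does not affect correctness.
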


We also state another product estimate, which concerns unweighted $H^s$ spaces, and is especially convenient when handling compactly supported functions. See \cite[Appendix~A]{Tao} for a proof.
\begin{prp}\label{product}
 Let $s \in \m N$. Then $\forall (u,v)\in (H^s\cap L^\infty) \times (H^s\cap L^\infty)$,
   $$\| uv \|_{H^s} \ls \|u\|_{H^s} \|v\|_{L^\infty} + \|u\|_{L^\infty} \|v\|_{H^s}.$$
\end{prp}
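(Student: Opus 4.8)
The plan is to deduce the estimate from the Leibniz rule together with the endpoint Gagliardo--Nirenberg interpolation inequality. By a standard approximation argument (mollify, then cut off with a smooth compactly supported factor, keeping the $L^\infty$ norms under control at every stage) it suffices to prove the bound for $u,v$ smooth and rapidly decaying. Expanding $\partial^\alpha(uv)=\sum_{\beta\le\alpha}\binom{\alpha}{\beta}(\partial^\beta u)(\partial^{\alpha-\beta}v)$ for each multi-index $\alpha$ with $|\alpha|\le s$ and summing the finitely many contributions, the claim reduces to the estimate, for each pair $\beta\le\alpha$ with $|\alpha|\le s$,
$$\|(\partial^\beta u)(\partial^{\alpha-\beta}v)\|_{L^2}\ls \|u\|_{H^s}\|v\|_{L^\infty}+\|u\|_{L^\infty}\|v\|_{H^s}.$$
Writing $j=|\beta|$, $k=|\alpha-\beta|$ and $m=j+k\le s$, the two extreme cases are immediate from H\"older: if $j=0$ then $\|u\,\partial^\alpha v\|_{L^2}\le\|u\|_{L^\infty}\|\partial^\alpha v\|_{L^2}\le\|u\|_{L^\infty}\|v\|_{H^s}$, and symmetrically for $k=0$.

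In the remaining range $1\le j,k$ I would apply H\"older with the conjugate exponents $p_1=2m/j$ and $p_2=2m/k$ (so that $p_1^{-1}+p_2^{-1}=1/2$), reducing matters to estimating $\|\partial^\beta u\|_{L^{2m/j}}$ and $\|\partial^{\alpha-\beta}v\|_{L^{2m/k}}$. The essential ingredient is the homogeneous Gagliardo--Nirenberg inequality on $\mathbb R^2$ at the $L^\infty$ endpoint,
$$\|\nabla^\ell f\|_{L^{2m/\ell}}\ls\|\nabla^m f\|_{L^2}^{\ell/m}\,\|f\|_{L^\infty}^{1-\ell/m},\qquad 1\le\ell\le m,$$
which I would first isolate as a lemma; it can be proved by iterated integration by parts and H\"older (the exponent $2m/\ell$ being exactly the one that balances the interpolation) or, alternatively, by a Littlewood--Paley square-function computation. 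Applying this to both factors, using $\|\nabla^m u\|_{L^2}\le\|u\|_{H^s}$ since $m\le s$, and using $1-j/m=k/m$, $1-k/m=j/m$, one gets
$$\|(\partial^\beta u)(\partial^{\alpha-\beta}v)\|_{L^2}\ls\bigl(\|u\|_{H^s}\|v\|_{L^\infty}\bigr)^{j/m}\bigl(\|u\|_{L^\infty}\|v\|_{H^s}\bigr)^{k/m}.$$
A final application of Young's inequality with exponents $m/j$ and $m/k$ (legitimate since $j/m+k/m=1$) bounds the right-hand side by $\|u\|_{H^s}\|v\|_{L^\infty}+\|u\|_{L^\infty}\|v\|_{H^s}$, which is the desired pointwise-in-$(\alpha,\beta)$ bound; summing over $\alpha,\beta$ concludes the proof.

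The only step that is not pure bookkeeping is the Gagliardo--Nirenberg lemma at the $L^\infty$ endpoint; everything else follows by H\"older and Young. An alternative route that avoids invoking Gagliardo--Nirenberg as a black box is a Bony paraproduct decomposition $uv=T_uv+T_vu+R(u,v)$, estimating the low--high and high--low pieces using $\|S_q u\|_{L^\infty}\ls\|u\|_{L^\infty}$ and the high--high remainder via Bernstein's inequality. I would nonetheless present the Gagliardo--Nirenberg argument as the primary one, since for integer $s$ it is the most self-contained and keeps the exposition short.
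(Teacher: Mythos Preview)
Your proof is correct. The paper itself does not give a proof of this proposition at all; it simply refers the reader to \cite[Appendix~A]{Tao}. Your argument---Leibniz rule, H\"older with exponents $2m/j$ and $2m/k$, then the endpoint Gagliardo--Nirenberg inequality $\|\nabla^\ell f\|_{L^{2m/\ell}}\ls\|\nabla^m f\|_{L^2}^{\ell/m}\|f\|_{L^\infty}^{1-\ell/m}$, followed by Young---is one of the standard self-contained proofs for integer $s$, and each step is sound (in particular, $\theta=\ell/m$ is the admissible lower endpoint in Gagliardo--Nirenberg, not an exceptional case). The paraproduct alternative you mention is closer in spirit to what one finds in Tao's book and has the advantage of extending to non-integer $s$, but for the integer case stated here your primary argument is cleaner and more elementary.
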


The following simple lemma will be useful as well.

\begin{lm}\label{produit2}
	Let $\alpha \in \mathbb{R}$ and $g \in L^\infty_{loc}$ be such that
	$$|g(x)| \lesssim (1+|x|^2)^\alpha.$$
	Then the multiplication by $g$ maps $H^0_{\delta}$ to $H^0_{\delta -2\alpha}$ with operator norm bounded by $\sup_{x\in \m R^2} \f{|g(x)|}{(1+|x|^2)^{\alp}}$.
\end{lm}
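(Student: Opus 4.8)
The statement is Lemma~\ref{produit2}, and the plan is to prove it by a direct pointwise estimate followed by integration, since $H^0_\delta = W^0_{\delta,2}$ is just the weighted $L^2$ space with norm $\|u\|_{H^0_\delta} = \|(1+|x|^2)^{\delta/2}u\|_{L^2}$. First I would set $M := \sup_{x\in\m R^2}\frac{|g(x)|}{(1+|x|^2)^\alpha}$, which is finite by the hypothesis $|g(x)|\lesssim (1+|x|^2)^\alpha$, and record the pointwise bound $|g(x)| \le M\,(1+|x|^2)^\alpha$ valid for a.e.\ $x$.

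The key step is then the following chain of (in)equalities for $u\in H^0_\delta$:
\begin{equation*}
\|gu\|_{H^0_{\delta-2\alpha}}^2 = \int_{\m R^2} (1+|x|^2)^{\delta-2\alpha}\,|g(x)|^2\,|u(x)|^2\,dx = \int_{\m R^2} (1+|x|^2)^{\delta}\left(\frac{|g(x)|}{(1+|x|^2)^{\alpha}}\right)^2 |u(x)|^2\,dx \le M^2 \int_{\m R^2}(1+|x|^2)^{\delta}\,|u(x)|^2\,dx = M^2\,\|u\|_{H^0_\delta}^2 .
\end{equation*}
Taking square roots gives $\|gu\|_{H^0_{\delta-2\alpha}} \le M\,\|u\|_{H^0_\delta}$, which is exactly the claimed mapping property together with the stated bound on the operator norm. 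Since $g\in L^\infty_{loc}$, the product $gu$ is measurable, so the only thing being asserted beyond the inequality is that $gu$ lies in $H^0_{\delta-2\alpha}$, which is immediate once the right-hand side is finite.

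There is essentially no obstacle here: the lemma is a one-line consequence of pulling the supremum out of the integral, and no density or approximation argument is needed because the bound is proven directly on the norm. The only minor point to be careful about is bookkeeping the weight exponents — that $(1+|x|^2)^{\delta-2\alpha}|g|^2 = (1+|x|^2)^\delta \big(|g|/(1+|x|^2)^\alpha\big)^2$ — but this is purely algebraic. I would keep the write-up to the three displayed lines above.
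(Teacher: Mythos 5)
Your proof is correct and is the standard one-line argument: the weight manipulation $(1+|x|^2)^{\delta-2\alpha}|g|^2 = (1+|x|^2)^\delta\bigl(|g|/(1+|x|^2)^\alpha\bigr)^2$ together with pulling out the supremum is exactly what is needed, and the paper (which merely records this lemma in its appendix without proof) would prove it the same way.
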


{The next result, which is due to McOwen, concerns} the invertibility of the Laplacian on weighted Sobolev spaces. 
\begin{thm}\label{laplacien}(Theorem 0 in \cite{laplacien})
	Let $m\in \mathbb{Z}$, $m\geq 0$, $1<p<\infty$ and $-\frac{2}{p}+m<\delta<m+1-\frac{2}{p}$. The Laplace operator $\Delta :W^{2+m}_{\delta,p} \rightarrow W^{m}_{\delta+2,p}$ is an injection with closed range 
	$$\left \{f \in W^{m}_{\delta+2,p}\; | \;\int fv =0 \quad \forall v \in \cup_{i=0}^m \mathcal{H}_i \right \},$$
	where $\mathcal{H}_i$ is the set of harmonic polynomials of degree $i$.
	Moreover, $u$ obeys the estimate
	$$\|u\|_{W^{2+m}_{\delta,p}} \leq C(\delta,m,p)\|\Delta u\|_{W^{m}_{\delta+2,p}},$$
	where $C(\delta,m,p) > 0$ is a constant depending on $\delta$, $m$ and $p$.
\end{thm}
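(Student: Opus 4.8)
The statement to be proved is McOwen's theorem, so the plan is to reproduce his argument, organized into four steps: an a priori (semi-Fredholm) estimate, injectivity, closed range, and identification of the cokernel by duality.

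\emph{Step 1: the a priori estimate.} I would first establish $\|u\|_{W^{2+m}_{\delta,p}}\le C\|\Delta u\|_{W^{m}_{\delta+2,p}}$ for all $u\in W^{2+m}_{\delta,p}$ in the stated range of $\delta$, via dyadic rescaling. Cover $\m R^2$ by the overlapping annuli $A_j=\{2^{j-1}\le|x|\le 2^{j+1}\}$, $j\in\m Z$, together with a fixed ball at the origin; on $A_j$ substitute $x=2^jy$, so that $A_j$ becomes a fixed annulus $A$ on which the interior Calder\'on--Zygmund estimate for $\Delta$ applies. Undoing the rescaling and using that $(1+|x|^2)^{(\delta+|\beta|)/2}\sim 2^{j(\delta+|\beta|)}$ on $A_j$, one sums the rescaled estimates in $\ell^p$ over $j$. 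The remaining point is to absorb the resulting lower-order term: in the logarithmic variable $r=e^t$, the operator $\Delta$ (after conjugation by a suitable power of $r$) becomes a model operator on the cylinder $\m R\times S^1$, which is invertible on the relevant exponentially weighted space precisely when $-\delta-2/p$ is not an indicial root, i.e.\ not an integer. The hypothesis $m-2/p<\delta<m+1-2/p$ is exactly the requirement $-\delta-2/p\in(-m-1,-m)\subset\m R\setminus\m Z$; the excluded endpoints $\delta\in\{k-2/p:k\in\m Z\}$ correspond to the radial harmonics $r^{k}$, $r^{-k}$, $\log r$. This indicial-root bookkeeping, together with the borderline behaviour of the weights near consecutive dyadic shells, is the main obstacle.

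\emph{Steps 2 and 3: injectivity and closed range.} If $u\in W^{2+m}_{\delta,p}$ and $\Delta u=0$, then by Weyl's lemma $u$ is smooth and harmonic on all of $\m R^2$; by the Sobolev embedding in Proposition~\ref{holder}, $u\in C^0_{\beta}$ for some $\beta$ with $0<\beta<\delta+2/p$, where we used $\delta>m-2/p\ge-2/p$. Hence $|u(x)|\to 0$ as $|x|\to\infty$, and a harmonic function on $\m R^2$ that vanishes at infinity is identically $0$ by Liouville's theorem; thus $\ker\Delta=\{0\}$. Closed range is then formal from Step 1: if $\Delta u_k\to f$ in $W^{m}_{\delta+2,p}$, the estimate forces $(u_k)$ to be Cauchy in $W^{2+m}_{\delta,p}$, so $f=\Delta u$ with $u=\lim u_k$.

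\emph{Step 4: the range.} Since $\Delta$ has closed range and is formally self-adjoint, the closed-range theorem gives $\mathrm{Range}(\Delta)=\{f\in W^{m}_{\delta+2,p}:\int fv=0\ \ \forall\,v\in\ker\Delta^{\ast}\}$, where $\Delta^{\ast}$ is the transpose of $\Delta$ acting between the dual spaces (Sobolev spaces of order $-m$ with conjugate weights; note that the hypothesis $\delta>m-2/p$ is exactly what makes $\int fv$ absolutely convergent when $v$ is a polynomial of degree $\le m$). It remains to identify $\ker\Delta^{\ast}$ with $\bigcup_{i=0}^{m}\mathcal H_i$. For $\bigcup_{i=0}^{m}\mathcal H_i\subset\ker\Delta^{\ast}$: a harmonic polynomial $P$ of degree $i\le m$ lies in the relevant dual space, and for $u\in W^{2+m}_{\delta,p}$ one computes $\int(\Delta u)P=\int u\,\Delta P=0$, the integration by parts being justified because $u$ decays faster than $|x|^{-m}$ (choosing $\beta>m$ above, permissible since $\delta+2/p>m$) while $P$ grows at most like $|x|^{m}$, so the boundary terms over $\{|x|=R\}$ are $O(R^{m-\beta})\to 0$. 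For the reverse inclusion, any $v\in\ker\Delta^{\ast}$ is a harmonic distribution of at most polynomial growth, hence a polynomial, and the same indicial analysis as in Step 1 applied to the transposed equation forces its degree to be $\le m$. This gives the asserted description of the range; combined with Steps 2--3 it shows $\Delta$ is an isomorphism onto that subspace, with the displayed estimate being precisely the bound of Step 1.
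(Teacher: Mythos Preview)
The paper does not prove this theorem at all: it is simply quoted from McOwen \cite{laplacien} as Theorem~0 there, and used as a black box throughout (notably in Lemmas~\ref{lm.constraint}, \ref{lem:data.N.beta} and Propositions~\ref{prpN}, \ref{prpbeta}). So there is no ``paper's own proof'' to compare your proposal to.

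That said, your outline is a reasonable sketch of one standard route to the result, closer in spirit to the Lockhart--McOwen/Melrose framework (dyadic scaling plus indicial-root analysis on the cylinder) than to McOwen's original argument, which proceeds via the explicit fundamental solution and weighted mapping properties of the associated Riesz-type potential. A small point on your ordering: the clean estimate of Step~1, with no lower-order remainder, is usually obtained \emph{after} injectivity (a contradiction argument using compactness of $W^{2+m}_{\delta,p}\hookrightarrow L^p_{\delta'}$ for $\delta'<\delta$), or else requires you to carry out the cylinder-model inversion in detail; as written, Step~1 gestures at both without committing. In Step~4, the identification of $\ker\Delta^{\ast}$ with $\bigcup_{i\le m}\mathcal H_i$ hinges on the duality $(W^{m}_{\delta+2,p})^{\ast}\cong W^{-m}_{-\delta-2,p'}$ and on checking that a harmonic polynomial of degree $m+1$ fails to lie in this dual precisely because $\delta+2/p<m+1$; you state this but the verification deserves a line. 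None of these are genuine obstructions --- the strategy is sound --- but since the paper treats the theorem as an imported fact, your write-up would in effect be supplying a proof the authors chose to cite rather than reproduce.
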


{The following is a corollary of Theorem~\ref{laplacien}}:
\begin{cor} \label{coro} Let $-1<\delta<0$ and $f\in H^0_{\delta+2}$. Then there exists a solution $u$ of 
	$$\Delta u =f$$
	which can be written 
	$$u=\frac{1}{2\pi}\left(\int f\right)\chi(|x|){\log}(|x|) +v,$$
	where $\chi$ is as in Definition~\ref{def.cutoff}, and $v \in H^2_{\delta}$ is such that
	$\|v\|_{H^2_\delta} \leq C(\delta)\|f\|_ {H^0_{ \delta+2}}$.
\end{cor}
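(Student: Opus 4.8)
The plan is to reduce the statement to Theorem~\ref{laplacien} with $m=0$ and $p=2$, after peeling off the logarithmically growing part by hand. First I would record the elementary fact about the cutoff fundamental solution: since $\chi(|x|){\log}(|x|)$ equals ${\log}(|x|)$ for $|x|\geq 2$ and vanishes for $|x|\leq 1$, the function $\Delta(\chi(|x|){\log}(|x|))$ is smooth and compactly supported in the annulus $\{1\leq |x|\leq 2\}$; in particular it lies in every weighted Sobolev space, with norm bounded by an absolute constant. Moreover, by the divergence theorem applied on a ball $B_\rho$ with $\rho\geq 2$,
\[
\int_{\m R^2}\Delta\bigl(\chi(|x|){\log}(|x|)\bigr)\,dx=\int_{\partial B_\rho}\partial_r({\log} r)\,dS=\frac{1}{\rho}\cdot 2\pi\rho=2\pi.
\]

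Next I would check that $f\in L^1$ with $\|f\|_{L^1}\ls \|f\|_{H^0_{\delta+2}}$. This follows from Lemma~\ref{weight:emb} with $p_1=1$, $p_2=2$, $\delta_1=0$, $\delta_2=\delta+2$, whose hypothesis $\delta_2-\delta_1>2(\tfrac1{p_1}-\tfrac1{p_2})$ reads $\delta+2>1$, i.e. $\delta>-1$, which is exactly our assumption. Hence $c:=\frac{1}{2\pi}\int_{\m R^2}f\,dx$ is well defined and $|c|\ls \|f\|_{H^0_{\delta+2}}$. Set $\tilde f:=f-c\,\Delta(\chi(|x|){\log}(|x|))$. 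By the previous paragraph $\tilde f\in H^0_{\delta+2}$ with $\|\tilde f\|_{H^0_{\delta+2}}\ls \|f\|_{H^0_{\delta+2}}$, and, using the integral computed above, $\int_{\m R^2}\tilde f\,dx=\int_{\m R^2} f\,dx-2\pi c=0$.

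Now apply Theorem~\ref{laplacien} with $m=0$, $p=2$. The admissibility condition $-\frac2p+m<\delta<m+1-\frac2p$ becomes $-1<\delta<0$, precisely our range; and the orthogonality condition characterizing the range of $\Delta$ is, for $m=0$, orthogonality to $\mathcal H_0$ (the constant functions), i.e. $\int_{\m R^2}\tilde f\,dx=0$, which we have just arranged. Therefore there exists $v\in H^2_\delta=W^2_{\delta,2}$ with $\Delta v=\tilde f$ and $\|v\|_{H^2_\delta}\leq C(\delta)\|\tilde f\|_{H^0_{\delta+2}}\leq C(\delta)\|f\|_{H^0_{\delta+2}}$. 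Finally I would set $u:=c\,\chi(|x|){\log}(|x|)+v=\frac{1}{2\pi}\bigl(\int_{\m R^2} f\bigr)\chi(|x|){\log}(|x|)+v$, and note $\Delta u=c\,\Delta(\chi(|x|){\log}(|x|))+\tilde f=f$, which is exactly the claimed decomposition and estimate.

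There is essentially no serious obstacle here; the only points requiring minor care are the computation of the normalization constant $2\pi$ (so that $\tilde f$ has vanishing mean) and the verification via Lemma~\ref{weight:emb} that $f$ is integrable, both of which use the hypothesis $\delta>-1$, while the upper bound $\delta<0$ is needed (and exactly matched) for the applicability of Theorem~\ref{laplacien}.
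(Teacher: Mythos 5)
Your proof is correct. The paper states this corollary without proof, so there is nothing to compare against; your argument — subtract off $c\,\Delta(\chi(|x|)\log|x|)$ with $c=\frac{1}{2\pi}\int f$ (the normalization coming from the divergence-theorem computation $\int_{\mathbb R^2}\Delta(\chi\log)=2\pi$), verify integrability of $f$ via Lemma~\ref{weight:emb} using $\delta>-1$, and then apply Theorem~\ref{laplacien} with $m=0,\ p=2$ to the mean-zero remainder — is the standard and essentially unique route, and all the hypotheses ($-1<\delta<0$ for the admissible range of Theorem~\ref{laplacien}, and the orthogonality to $\mathcal H_0$) are checked exactly where they are needed.
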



\section{Computations in the elliptic gauge}\label{app.B}

In this section, we collect some computations for the spacetime metric in the elliptic gauge defined in Section~\ref{sec.elliptic.gauge}. We will frequently use conventions defined in Section~\ref{sec.elliptic.gauge} without further comment.

\subsection{Connection coefficients}\label{sec.connection}
We compute the connection coefficients for the metric \eqref{g.form} with respect to the frame $\{e_0, e_1, e_2\}$, where $e_0=\rd_t-\bt^i\rd_i$ and $e_i=\rd_i$. Notice that
$$[e_0,e_i]=[\rd_t-\bt^j\rd_j, \rd_i]=(\rd_i\bt^j)\rd_j,\quad [e_i,e_j]=0,\quad g(e_i,e_0)=0.$$
Using this, we compute
$$g(D_0 e_0, e_0)=\f12 e_0(g(e_0,e_0))=-\f 12 (e_0 N^2)=-N(e_0 N), $$
$$g(D_0 e_0, e_i)=-g(e_0,D_0 e_i)=-g(e_0,D_i e_0)-(\rd_i\bt^j)g(e_0, e_j)=-\f 12 \rd_i (g(e_0,e_0))=N\rd_i N,$$
$$g(D_i e_0, e_0)=\f 12 \rd_i(g(e_0,e_0))=-\f 12\rd_i N^2=-N\rd_i N, $$
\begin{equation}\label{II}
g(D_i e_0, e_j)=\f {e^{2\gamma}}{2}\left((2e_0 \gamma)\delta_{ij}-(\rd_i\bt^k)\delta_{jk}-(\rd_j\bt^k)\delta_{ik}\right),
\end{equation}
$$g(D_0 e_i, e_0)=-g(D_0 e_0, e_i)=-N\rd_i N,$$
$$g(D_0 e_i, e_j)=g(D_i e_0, e_j)+e^{2\gamma}(\rd_i \bt^k)\delta_{jk}=\f {e^{2\gamma}}2\left((2e_0 \gamma)\delta_{ij}+(\rd_i\bt^k)\delta_{jk}-(\rd_j\bt^k)\delta_{ik}\right),$$
$$g(D_i e_j, e_0)=-g(D_i e_0, e_j)=-\f {e^{2\gamma}}{2}\left((2 e_0 \gamma)\delta_{ij}-(\rd_i\bt^k)\delta_{jk}-(\rd_j\bt^k)\delta_{ik}\right),$$
$$g(D_i e_j, e_k)=e^{2\gamma}\left(\delta_{ik}\partial_j \gamma +\delta_{jk}\partial_i \gamma-\delta_{ij}\delta_k^{\ell}\partial_{\ell} \gamma\right).$$
Most of these are straightforward, let us just mention that \eqref{II} is derived using the symmetry $g(D_{e_i} e_0, e_j)=g(D_{e_j} e_0, e_i)$ so that
\begin{equation*}
\begin{split}
(e_0 e^{2\gamma})\delta_{ij}=e_0(g(e_i,e_j))=g(D_0 e_i, e_j)+g(D_0 e_j, e_i)=2g(D_i e_0, e_j)+e^{2\gamma}(\rd_i\bt^k)\delta_{jk}+e^{2\gamma}(\rd_j\bt^k)\delta_{ik}.
\end{split}
\end{equation*}
Note in particular that the computation for $g(D_i e_j, e_k)$ implies that the Christoffel symbols $\bar{\Gamma}_{ij}^k$ associated to the spatial metric $\bar{g}$ (cf. \eqref{g.form.0}) are given by
\begin{equation}\label{sp.Chr}
\bar{\Gamma}_{ij}^k=\delta_{i}{ }^k\partial_j \gamma + \delta_{j}{ }^k\partial_i \gamma-\delta_{ij}\de^{k\ell}\partial_{\ell} \gamma.
\end{equation}
From the above calculations, we then obtain
\begin{equation}\label{connections}
\begin{split}
D_0 e_0=&\f{e_0 N}{N} e_0+e^{-2\gamma}\delta^{ij}N \rd_i N e_j,\quad D_0 e_i=\f{\rd_i N}{N} e_0+\f 12 ((2 e_0 \gamma)\delta_i^j+(\rd_i\bt^j)-\delta_{ik}\delta^{j\ell}(\rd_\ell\bt^k))e_j,\\
D_i e_0=&\f{\rd_i N}{N} e_0+\f 12  ((2 e_0 \gamma)\delta_i^j-(\rd_i\bt^j)-\delta_{ik}\delta^{j\ell}(\rd_\ell\bt^k))e_j,\\
D_i e_j=&\f{e^{2\gamma}}{2N^2}\left((2 e_0 \gamma)\delta_{ij}-(\rd_i\bt^k)\delta_{jk}-(\rd_j\bt^k)\delta_{ik}\right)e_0+\left(\delta_{i}^k\partial_j \gamma +\delta_j^k\partial_i \gamma-\delta_{ij}\delta^{k\ell}\partial_{\ell} \gamma\right)e_k.
\end{split}
\end{equation}

\subsection{Decomposition of the Ricci tensor} \label{sec.Ricci}
\begin{proposition}
Given $g$ of the form \eqref{g.form}, the second fundamental form $K_{ij}$ (cf. \eqref{K}) is given by
\begin{equation}\label{2nd.fund.form}
K_{ij}=-\f 1{2N} e_0(e^{2\gamma})\delta_{ij} +\f{1}{2N} e^{2\gamma}(\partial_i \beta_j+\partial_i \beta_j),
\end{equation}
Moreover, its traceless part $H$ satisfies
\begin{equation}
\label{beta} 2Ne^{-2\gamma}H_{ij}=(L\beta)_{ij},
\end{equation}
where 
\begin{equation}\label{L.def}
(L\beta)_{ij}:= \de_{j\ell} \rd_i \bt^\ell + \de_{i\ell} \rd_j \bt^\ell -\de_{ij} \rd_k \bt^k
\end{equation}
as in Section~\ref{sec.notations}.
\end{proposition}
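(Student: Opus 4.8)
The plan is to obtain both identities directly from the definitions \eqref{K} and \eqref{K.tr.trfree}, using only the Lie-derivative formula and careful index bookkeeping; no curvature computation is needed.

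\emph{Step 1 (the formula \eqref{2nd.fund.form} for $K_{ij}$).} Since $e_0 = \rd_t - \bt^k\rd_k$ and, for each fixed $t$, $\bar g$ is a Riemannian metric on $\m R^2$, I would split $\q L_{e_0}\bar g = \q L_{\rd_t}\bar g - \q L_{\bt^k\rd_k}\bar g$ and expand
\begin{equation*}
(\q L_{e_0}\bar g)_{ij} = \rd_t\bar g_{ij} - \bt^k\rd_k\bar g_{ij} - \bar g_{kj}\rd_i\bt^k - \bar g_{ik}\rd_j\bt^k = e_0(\bar g_{ij}) - \bar g_{kj}\rd_i\bt^k - \bar g_{ik}\rd_j\bt^k .
\end{equation*}
Substituting $\bar g_{ij} = e^{2\gamma}\de_{ij}$ — so that $e_0(\bar g_{ij}) = e_0(e^{2\gamma})\de_{ij}$ and $\bar g_{kj}\rd_i\bt^k = e^{2\gamma}\rd_i\bt_j$, the index on $\bt$ being lowered with $\de$ — yields $(\q L_{e_0}\bar g)_{ij} = e_0(e^{2\gamma})\de_{ij} - e^{2\gamma}(\rd_i\bt_j + \rd_j\bt_i)$, and then \eqref{2nd.fund.form} is immediate from \eqref{K}.

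\emph{Step 2 (the traceless part, giving \eqref{beta}).} I would first take the $\bar g$-trace of \eqref{2nd.fund.form}: using $\bar g^{ij} = e^{-2\gamma}\de^{ij}$ together with $\de^{ij}(\rd_i\bt_j + \rd_j\bt_i) = 2\rd_k\bt^k$, one finds $\tau = \mbox{tr}_{\bar g}K = -\f{2 e_0\gamma}{N} + \f1N\rd_k\bt^k$. The key structural observation is that the trace-carrying term in \eqref{2nd.fund.form}, namely $-\f{1}{2N}e_0(e^{2\gamma})\de_{ij} = -\f{e_0\gamma}{N}\bar g_{ij}$, is already proportional to $\bar g_{ij}$, hence is pure trace and cancels exactly against $-\tfrac12\bar g_{ij}\tau$ when one forms $H_{ij} = K_{ij} - \tfrac12\bar g_{ij}\tau$ as in \eqref{K.tr.trfree}. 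What survives is
\begin{equation*}
H_{ij} = \f{e^{2\gamma}}{2N}\left(\rd_i\bt_j + \rd_j\bt_i - \de_{ij}\rd_k\bt^k\right) = \f{e^{2\gamma}}{2N}(L\bt)_{ij},
\end{equation*}
which is precisely \eqref{beta} after multiplying by $2Ne^{-2\gamma}$.

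Both displayed identities are one-line calculations, so I do not expect a genuine obstacle. The only points that demand care are the sign in the Lie derivative coming from the fact that $e_0$ contains $-\bt^k\rd_k$ (rather than $+\bt^k\rd_k$), and consistently keeping track of whether indices are raised, lowered, and traced with respect to $\de$ or with respect to $\bar g = e^{2\gamma}\de$.
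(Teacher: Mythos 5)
Your computation is correct and is exactly the elementary verification the paper leaves to the reader: the paper's proof of this proposition is the one-line remark that \eqref{2nd.fund.form} follows from the definition \eqref{K} and that \eqref{beta} follows from \eqref{2nd.fund.form}, and your Steps 1 and 2 simply spell out that expansion of $\mathcal L_{e_0}\bar g$ and the subsequent trace removal. In particular, your observation that the $-\frac{1}{2N}e_0(e^{2\gamma})\delta_{ij}$ term is proportional to $\bar g_{ij}$, hence drops out entirely when forming $H_{ij}=K_{ij}-\frac12\bar g_{ij}\tau$ in two spatial dimensions, is the point that makes the step from \eqref{2nd.fund.form} to \eqref{beta} immediate, as the paper implicitly claims.
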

\begin{proof}
\eqref{2nd.fund.form} follows from \eqref{K}; and \eqref{beta} follows from \eqref{2nd.fund.form}.
\end{proof}

\begin{proposition}
Given $g$ of the form \eqref{g.form}, the components of the Ricci curvature in the basis $\{e_0=\rd_t-\bt^k\rd_k, \rd_i\}$ are given by
\begin{align}
\label{Rij} R_{ij} = &\delta_{ij}\left(-\Delta \gamma+\frac{\tau^2}{2}e^{2\gamma}-\frac{1}{2N}e^{2\gamma}e_0 \tau-\frac{1}{2N}\Delta N\right)
-\frac{1}{N}(\partial_t-\beta^k\partial_k)H_{ij}-2e^{-2\gamma} H_i{ }^{\ell} H_{j\ell} \\
\notag &+\frac{1}{N}\left(\partial_j \beta^k H_{ki}+\partial_i \beta^k H_{kj}\right)-\frac{1}{N}\left( \partial_i \partial_j N -\frac{1}{2}\delta_{ij}\Delta N -\left(\delta_i^k\partial_j \gamma +\delta_j^k\partial_i \gamma-\delta_{ij} \de^{\ell k}\partial_{\ell} \gamma\right)\partial_k N \right),\\
\label{R0j} R_{0j} =& N\left(\frac 12\partial_j \tau -e^{-2\gamma}\de^{ik}\partial_k H_{ij}\right),\\
\label{R00} R_{00} =& N\left(e_0 \tau -Ne^{-4\gamma}\left(|H|^2+ \frac{1}{2}e^{4\gamma}\tau^2\right)+ e^{-2\gamma}\Delta N\right).
\end{align}
Moreover, 
\begin{equation}\label{spatial.R.tr}
\delta^{ij}R_{ij}=2\left(-\Delta \gamma+\frac{\tau^2}{2}e^{2\gamma}-\frac{1}{2N}e^{2\gamma}e_0 \tau-\frac{1}{2N}\Delta N\right).
\end{equation}
\end{proposition}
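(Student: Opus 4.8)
The proof of this proposition is a direct computation of the Ricci curvature components of the metric $g$ written in the form \eqref{g.form}, using the connection coefficients collected in \eqref{connections}. The plan is to use the standard frame $\{e_0 = \partial_t - \beta^k\partial_k, e_i = \partial_i\}$ and the formula for the Ricci tensor in terms of an arbitrary (non-coordinate) frame, namely $R_{\mu\nu} = \langle dx^\rho, R(e_\rho, e_\mu)e_\nu\rangle$ expressed through the connection coefficients and the structure constants $[e_0,e_i] = (\partial_i\beta^j)e_j$, $[e_i,e_j]=0$. Concretely, I would write $R_{\mu\nu} = e_\rho(\Gamma^\rho_{\mu\nu}) - e_\mu(\Gamma^\rho_{\rho\nu}) + \Gamma^\rho_{\rho\sigma}\Gamma^\sigma_{\mu\nu} - \Gamma^\rho_{\mu\sigma}\Gamma^\sigma_{\rho\nu} - \Gamma^\rho_{\sigma\nu}C^\sigma_{\rho\mu}$ (with appropriate structure-constant corrections), and substitute the coefficients from \eqref{connections}.

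First I would record the ingredients: the second fundamental form decomposition $K_{ij} = H_{ij} + \tfrac12\bar g_{ij}\tau$ with $\bar g_{ij} = e^{2\gamma}\delta_{ij}$, the relation \eqref{beta} between $H$ and $L\beta$, the spatial Christoffel symbols \eqref{sp.Chr}, and the fact that $K_{ij} = -\tfrac{1}{2N}\mathcal L_{e_0}\bar g_{ij}$ so that $e_0$-derivatives of $\bar g$ and of $\gamma$ can be traded for $K$, $H$, $\tau$. Then I would compute in three groups: (i) the purely spatial components $R_{ij}$ — here the Gauss equation relating the spacetime Ricci to the intrinsic Ricci of $\bar g$ (which in $2$D is $\bar R_{ij} = -\Delta\gamma\,\delta_{ij}$ up to the conformal factor), the second fundamental form terms $K_i{}^\ell K_{j\ell}$, $\tau K_{ij}$, and the lapse Hessian term $-\frac1N \bar\nabla_i\bar\nabla_j N$ all appear; decomposing $K$ into $H$ and the trace part, and using \eqref{beta} to convert $\frac1N\mathcal L_{e_0}K_{ij}$ into $\frac1N(\partial_t - \beta^k\partial_k)H_{ij}$ plus the commutator terms $\frac1N(\partial_j\beta^k H_{ki} + \partial_i\beta^k H_{kj})$, yields \eqref{Rij}; (ii) the mixed components $R_{0j}$ — these come from the Codazzi equation $R_{0j} = N(\bar\nabla_j \tau - \bar\nabla^i K_{ij})$, and writing $K = H + \tfrac12\bar g\tau$ and using that $H$ is traceless with respect to $\bar g$ together with \eqref{sp.Chr} to pass from $\bar g$-divergence to $\delta$-divergence gives $R_{0j} = N(\tfrac12\partial_j\tau - e^{-2\gamma}\delta^{ik}\partial_k H_{ij})$, which is \eqref{R0j}; (iii) the $R_{00}$ component — the Raychaudhuri-type identity $R_{00} = N(e_0\tau - N|K|_{\bar g}^2 + \Delta_{\bar g}N)$, after expressing $|K|_{\bar g}^2 = |H|_{\bar g}^2 + \tfrac12\tau^2 = e^{-4\gamma}|H|^2 + \tfrac12\tau^2$ and noting $\Delta_{\bar g}N = e^{-2\gamma}\Delta N$ in $2$D conformal coordinates, gives \eqref{R00}.

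Finally, \eqref{spatial.R.tr} follows by taking $\delta^{ij}R_{ij}$ in \eqref{Rij}: the terms $-\frac1N(\partial_t-\beta^k\partial_k)H_{ij}$, $-2e^{-2\gamma}H_i{}^\ell H_{j\ell}$, $\frac1N(\partial_j\beta^k H_{ki} + \partial_i\beta^k H_{kj})$ all vanish upon tracing because $H$ is traceless with respect to $\delta$ (the $e_0$-derivative of $\delta^{ij}H_{ij}=0$ is zero, $\delta^{ij}H_i{}^\ell H_{j\ell} = |H|^2$ but this is cancelled — actually one must check that the trace of these tensor terms vanishes; in fact $\delta^{ij}H_i{}^\ell H_{j\ell}\neq 0$, so the cancellation must be tracked carefully: tracing the $\delta_{ij}(\cdots)$ block gives $2(-\Delta\gamma + \tfrac{\tau^2}{2}e^{2\gamma} - \tfrac1{2N}e^{2\gamma}e_0\tau - \tfrac1{2N}\Delta N)$, while the remaining terms $-\frac1N\delta^{ij}(\partial_t-\beta^k\partial_k)H_{ij} - 2e^{-2\gamma}|H|^2 + \frac2N\partial_j\beta^k H_k{}^j - \frac1N(\delta^{ij}\partial_i\partial_j N - \Delta N - 0)$ — here $\delta^{ij}\partial_i\partial_j N = \Delta N$ so the Hessian block cancels; $\delta^{ij}(\partial_t-\beta^k\partial_k)H_{ij} = (\partial_t - \beta^k\partial_k)(\delta^{ij}H_{ij}) = 0$; and the terms $-2e^{-2\gamma}|H|^2 + \frac2N H^{jk}\partial_j\beta_k$ cancel by \eqref{beta}, since $2Ne^{-2\gamma}H_{jk} = (L\beta)_{jk}$ and $H^{jk}(L\beta)_{jk} = 2H^{jk}\partial_j\beta_k$ using tracelessness). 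This bookkeeping — making sure every non-$\delta_{ij}$ term in $\delta^{ij}R_{ij}$ cancels — is the main place to be careful, though it is entirely elementary. The main obstacle overall is purely organizational: keeping the frame-vs-coordinate distinction straight and correctly inserting the structure-constant corrections $[e_0,e_i]=(\partial_i\beta^j)e_j$ into the Ricci formula, which is exactly what produces the $\partial_j\beta^k H_{ki}$ terms in \eqref{Rij}; none of the steps is conceptually hard, but sign and index errors are easy to make, so I would cross-check \eqref{R0j} and \eqref{R00} against the standard Gauss–Codazzi–Raychaudhuri equations for a maximal-type foliation as an independent verification.
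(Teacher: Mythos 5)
Your proposal is correct and follows essentially the same route as the paper: both reduce the computation to the standard $3+1$ Gauss–Codazzi–Raychaudhuri identities for $R_{ij}$, $R_{0j}$, $R_{00}$ in terms of $\bar R_{ij}$, $K$, $\tau$, $N$ (which the paper quotes from \cite{livrecb} rather than rederiving), substitute the conformal ansatz $\bar g_{ij}=e^{2\gamma}\delta_{ij}$ and the decomposition $K=H+\tfrac12\bar g\,\tau$, and then verify the trace identity by showing that $\delta^{ij}\bigl(-2e^{-2\gamma}H_i{}^\ell H_{j\ell}+\tfrac1N(\partial_j\beta^k H_{ki}+\partial_i\beta^k H_{kj})\bigr)=0$ via \eqref{beta}. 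Your opening suggestion of a direct frame computation of $R_{\mu\nu}$ via connection coefficients and structure constants is not what you actually carry out — you fall back on the same $3+1$ split — so the two arguments coincide in substance.
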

\begin{proof}
From \cite[Chapter~6]{livrecb}, we have
\begin{align}
\label{Rij.CB} R_{ij} = & \bar{R}_{ij} + K_{ij}(\mbox{tr}_{\bar{g}} K) -2 (\bar{g}^{-1})^{ml} K_{im}K_{jl} - N^{-1}(  \q L_{e_0} K_{ij} + \bar{D}_i \partial_j N), \\
\label{R0j.CB} R_{0j} = & N(\partial_j (\mbox{tr}_{\bar{g}} K) -(\bar{g}^{-1})^{hk}\bar{D}_h {K}_{jk}),\\
\label{R00.CB} R_{00} = & N( e_0(\mbox{tr}_{\bar{g}} K) -N(\bar{g}^{-1})^{ii'}(\bar{g}^{-1})^{jj'}K_{ij}K_{i'j'} + \Delta_{\bar{g}} N),
\end{align}
where $\bar{D}$, $\bar{R}_{ij}$ and $\Delta_{\bar{g}}$ are defined with respect to $\bar{g}$.

\textbf{Proof of \eqref{Rij}.} First, by \eqref{K.tr.trfree} and \eqref{sp.Chr}, we have
\begin{equation}\label{Rij.0}
\begin{split}
R_{ij} = & -\delta_{ij}\Delta \gamma+\tau \left(H_{ij}+\frac{1}{2}e^{2\gamma}\delta_{ij}\tau\right)
 -2e^{-2\gamma}\left(H_{i}{ }^{\ell}+\frac{1}{2}e^{2\gamma}\delta_i^{\ell} \tau\right) \left(H_{j\ell}+\frac{1}{2}e^{2\gamma}\delta_{j\ell} \tau\right)\\
&-\frac{1}{N}\left( \q L_{e_0} K_{ij} +\partial_i \partial_j N - \left(\delta_i^k\partial_j \gamma +\delta_j^k\partial_i \gamma-\delta_{ij}\de^{k\ell} \partial_\ell \gamma\right)\partial_k N \right).
\end{split}
\end{equation}
To proceed, we compute $\q L_{e_0} K_{ij}$ by considering $H_{ij}$ and $\tau$. We calculate
\begin{align*}
 \q L_{e_0} H_{ij} 
=& (\partial_t-\beta^k\partial_k)H_{ij}-\partial_j \beta^k H_{ki}-\partial_i \beta^k H_{kj},\\
\q L_{e_0}(\tau \bar{g}_{ij}) =& e^{2\gamma}\delta_{ij}e_0 \tau -2N\tau K_{ij}.
\end{align*}
Therefore, using \eqref{K.tr.trfree} and plugging $\q L_{e_0} K_{ij}$ into \eqref{Rij.0}, we obtain \eqref{Rij}.

\textbf{Proof of \eqref{R0j}.} This follows from \eqref{R0j.CB} and the fact that (using \eqref{sp.Chr},) for any covariant symmetric $2$-tensor $A_{ij}$, 
$$(\bar{g}^{-1})^{ik}\bar{D}_k A_{ij}=e^{-2\gamma}\rd^i A_{ij}-(\partial_j\gamma)(\mbox{tr}_{\bar{g}}A).$$

\textbf{Proof of \eqref{R00}.} This is immediate from \eqref{R00.CB} and the conformal invariance of the Laplacian.

\textbf{Proof of \eqref{spatial.R.tr}.} Finally, to prove \eqref{spatial.R.tr}, we first note that 
$$\delta^{ij}\left(\partial_j \beta^k H_{ki}+\partial_i \beta^k H_{kj}\right)
= H_{ij}(L\beta)^{ij},$$
where $L$ is as in \eqref{L.def}. Combining this with \eqref{beta}, we obtain
$$\delta^{ij}\left(-2e^{-2\gamma}H_i{ }^{\ell} H_{j\ell} +\frac{1}{N}\left(\partial_j \beta^k H_{ki}+\partial_i \beta^k H_{kj}\right)\right)=0.$$
Taking the trace of \eqref{Rij} and using this identity yield \eqref{spatial.R.tr}. \qedhere
\end{proof}

\begin{proposition}
Given $g$ of the form \eqref{g.form}, the scalar curvature $R$ and the $G_{00} = G(e_0, e_0)$ component of the Einstein tensor are given by
\begin{align}
\label{scalarR} R =& -\frac{2}{N}e_0 \tau +\frac{3}{2}\tau^2+e^{-4\gamma}|H|^2-\frac{2e^{-2\gamma}}{N}\Delta N-2e^{-2\gamma}\Delta \gamma,\\
\label{G00} G_{00} =& N^2e^{-2\gamma}\left(-\Delta \gamma -e^{-2\gamma}\frac{1}{2}|H|^2+e^{2\gamma}\frac{1}{4}\tau^2\right).
\end{align}
\end{proposition}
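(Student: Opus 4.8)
\textbf{Proof plan for the formulas \eqref{scalarR} and \eqref{G00}.}

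The plan is to obtain \eqref{scalarR} directly by tracing the Ricci components already computed in \eqref{Rij}, \eqref{R00} (equivalently \eqref{spatial.R.tr}), and then to derive \eqref{G00} purely algebraically from \eqref{R00} and \eqref{scalarR}. First I would record that, since $g$ has the form \eqref{g.form}, we have $\gi^{00}=-N^{-2}$, $\gi^{0i}=N^{-2}\beta^i$ and $\gi^{ij}=N^2 e^{-2\gamma}\delta^{ij}-N^{-2}\beta^i\beta^j$ (from \eqref{g.inverse}). In the basis $\{e_0,\partial_i\}$ this is simpler: the dual metric acts as $-N^{-2}$ on the $e_0$-slot and as $e^{-2\gamma}\delta^{ij}$ on the spatial slots, with no cross terms (this is exactly the orthogonality $g(e_0,e_i)=0$ used throughout Appendix~\ref{app.B}). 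Hence
$$R=\gi^{\mu\nu}R_{\mu\nu}=-\frac{1}{N^2}R_{00}+e^{-2\gamma}\delta^{ij}R_{ij}.$$
Now substitute \eqref{R00} and \eqref{spatial.R.tr}. The term $-\frac{1}{N^2}R_{00}$ contributes $-\frac{1}{N}e_0\tau+e^{-4\gamma}(|H|^2+\tfrac12 e^{4\gamma}\tau^2)-e^{-2\gamma}\Delta N$, while $e^{-2\gamma}\delta^{ij}R_{ij}$ contributes $2e^{-2\gamma}\bigl(-\Delta\gamma+\tfrac{\tau^2}{2}e^{2\gamma}-\tfrac{1}{2N}e^{2\gamma}e_0\tau-\tfrac{1}{2N}\Delta N\bigr)=-2e^{-2\gamma}\Delta\gamma+\tau^2-\frac{1}{N}e_0\tau-\frac{e^{-2\gamma}}{N}\Delta N$. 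Adding these and collecting the $e_0\tau$ terms ($-\frac{1}{N}-\frac{1}{N}=-\frac{2}{N}$), the $\tau^2$ terms ($\frac12+1=\frac32$), the $\Delta N$ terms ($-e^{-2\gamma}-\frac{e^{-2\gamma}}{N}\Delta N$... careful: $-e^{-2\gamma}\Delta N$ has no $1/N$, so the total is $-e^{-2\gamma}\Delta N-\frac{e^{-2\gamma}}{N}\Delta N$), I would double-check that this matches the stated $-\frac{2e^{-2\gamma}}{N}\Delta N$; this forces re-examining the sign/normalization in \eqref{R00}, since \eqref{R00} as written gives $-\frac{1}{N^2}R_{00}\ni +e^{-2\gamma}\Delta N$ only if one reads the $e^{-2\gamma}\Delta N$ term inside $R_{00}$ with the factor $N$ out front, i.e. $R_{00}=N(\cdots+e^{-2\gamma}\Delta N)$ so $-N^{-2}R_{00}\ni -\frac{1}{N}e^{-2\gamma}\Delta N$. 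With that reading the $\Delta N$ terms total $-\frac{e^{-2\gamma}}{N}\Delta N-\frac{e^{-2\gamma}}{N}\Delta N=-\frac{2e^{-2\gamma}}{N}\Delta N$, matching \eqref{scalarR}; the $|H|^2$ term is $e^{-4\gamma}|H|^2$ as stated, and the $\Delta\gamma$ term is $-2e^{-2\gamma}\Delta\gamma$. This gives \eqref{scalarR}.

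For \eqref{G00}, I would use $G_{00}=R_{00}-\tfrac12 g_{00}R$ with $g_{00}=g(e_0,e_0)=-N^2$. Thus $G_{00}=R_{00}+\tfrac12 N^2 R$. Plug in \eqref{R00} (rewritten as $R_{00}=N e_0\tau-N^2e^{-4\gamma}|H|^2-\tfrac12 N^2\tau^2+Ne^{-2\gamma}\Delta N$, using $N^2e^{-4\gamma}\cdot\tfrac12 e^{4\gamma}\tau^2=\tfrac12 N^2\tau^2$) and $\tfrac12 N^2\times$\eqref{scalarR} $=-N e_0\tau+\tfrac34 N^2\tau^2+\tfrac12 N^2e^{-4\gamma}|H|^2-N e^{-2\gamma}\Delta N-N^2e^{-2\gamma}\Delta\gamma$. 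Adding: the $e_0\tau$ terms cancel ($N-N=0$), the $\Delta N$ terms cancel ($N e^{-2\gamma}\Delta N - N e^{-2\gamma}\Delta N=0$), the $\tau^2$ terms give $-\tfrac12 N^2\tau^2+\tfrac34 N^2\tau^2=\tfrac14 N^2\tau^2$, the $|H|^2$ terms give $-N^2e^{-4\gamma}|H|^2+\tfrac12 N^2e^{-4\gamma}|H|^2=-\tfrac12 N^2e^{-4\gamma}|H|^2$, and there remains $-N^2e^{-2\gamma}\Delta\gamma$. Factoring out $N^2e^{-2\gamma}$ yields $G_{00}=N^2e^{-2\gamma}\bigl(-\Delta\gamma-\tfrac12 e^{-2\gamma}|H|^2+\tfrac14 e^{2\gamma}\tau^2\bigr)$, which is \eqref{G00}.

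The computation is entirely bookkeeping, so the main obstacle is purely keeping the factors of $N$ and the signs straight — in particular being consistent about whether the overall factor $N$ (resp. $N^2$) in \eqref{R0j}, \eqref{R00} has already been distributed when one divides by $g_{00}=-N^2$ and $\gi^{00}=-N^{-2}$, and making sure the trace is taken in the $\{e_0,\partial_i\}$ frame (where the dual-metric components are $-N^{-2}$ and $e^{-2\gamma}\delta^{ij}$) rather than the coordinate frame. I would also sanity-check the final formula against the Hamiltonian constraint: $G_{00}=T_{00}$ with $T_{00}$ from the matter should reproduce \eqref{ham} after using $\tau=0$, which it does, providing an independent verification of the coefficients $-\tfrac12$ and $+\tfrac14$.
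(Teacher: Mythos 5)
Your proposal is correct and takes essentially the same route as the paper: you trace the Ricci tensor in the $\{e_0,\partial_i\}$ frame using $(g^{-1})^{00}=-N^{-2}$ and $(g^{-1})^{ij}=e^{-2\gamma}\delta^{ij}$ (no cross terms since $g(e_0,e_i)=0$), plug in \eqref{R00} and \eqref{spatial.R.tr} to get \eqref{scalarR}, and then compute $G_{00}=R_{00}+\tfrac12 N^2 R$ by direct substitution. The brief self-correction about the prefactor $N$ in \eqref{R00} resolves the only bookkeeping subtlety, and the final coefficients match the paper's.
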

\begin{proof}
By \eqref{g.form}, \eqref{R00} and \eqref{spatial.R.tr},
\begin{align*}
R=& -\frac{1}{N}\left(e_0 \tau -Ne^{-4\gamma}\left(|H|^2+ \frac{1}{2}e^{4\gamma}\tau^2\right)+ e^{-2\gamma}\Delta N\right)
+e^{-2\gamma}2\left(-\Delta \gamma+\frac{\tau^2}{2}e^{2\gamma}-\frac{1}{2N}e^{2\gamma}e_0 \tau-\frac{1}{2N}\Delta N\right).
\end{align*}
Simplifying this yields \eqref{scalarR}. By \eqref{g.form}, \eqref{R00} and \eqref{scalarR},
\begin{align*}
G_{00}=&\frac{1}{2}N\left(e_0 \tau -Ne^{-4\gamma}\left(|H|^2+ \frac{1}{2}e^{4\gamma}\tau^2\right)+ e^{-2\gamma}\Delta N\right)
+e^{-2\gamma}N^2\left(-\Delta \gamma+\frac{\tau^2}{2}e^{2\gamma}-\frac{1}{2N}e^{2\gamma}e_0 \tau-\frac{1}{2N}\Delta N\right).
\end{align*}
Simplifying this yields \eqref{G00}.\qedhere
\end{proof}

\subsection{Computations for the stress-energy-momentum tensor}\label{sec.T}

Definet $T_{\mu\nu}$ by
$$T_{\mu\nu}:=2\rd_\mu\phi\rd_\nu\phi - g_{\mu\nu}(g^{-1})^{\alp\bt}\rd_\alp\phi\rd_\bt\phi+\sum_{\bA} F^2_{\bA} \rd_\mu u_{\bA}\rd_\nu u_{\bA}.$$
If $\gi^{\mu\nu}\rd_\mu u_{\bA}\rd_\nu u_{\bA}=0$, then
$$\mbox{tr}_g T=- \gi^{\alp\bt}\rd_\alp\phi\rd_\bt\phi+\sum_{\bA} F^2_{\bA} \gi^{\mu\nu}\rd_\mu u_{\bA}\rd_\nu u_{\bA}=- \gi^{\alp\bt}\rd_\alp\phi\rd_\bt\phi.$$
This implies, with respect to the $\{e_0,\rd_i\}$ basis,
\begin{align}
\label{T-trT.00} T_{00} - g_{00}\mbox{tr}_g T =& 2(e_0\phi)^2+\sum_{\bA} F_{\bA}^2 (e_0 u_{\bA})^2,\\
\label{T-trT.ij} T_{ij}-g_{ij}\mbox{tr}_g T =& 2\rd_i\phi\rd_j\phi+\sum_{\bA} F_{\bA}^2(\rd_i u_{\bA})(\rd_j u_{\bA}),\\
\label{T-trT.spatial.tr} \delta^{ij}(T_{ij}-g_{ij}\mbox{tr}_g T) =& 2\delta^{ij}\rd_i\phi\rd_j\phi+\sum_{\bA} \f{e^{2\gamma}}{N^2} F_{\bA}^2(e_0 u_{\bA})^2.
\end{align}

\section{Computations regarding eikonal functions}\label{app.C}

\subsection{Geodesic equation in coordinates}
	Suppose $u$ satisfies the eikonal equation\footnote{In applications, $u$ will be $u_{\bA}$ as in \eqref{back}.} 
	$$\gi^{\mu\nu}\rd_\mu u \rd_\nu u=0.$$ 
	Observe that a consequence of the eikonal equation is the geodesic equation $D_{(du_\bA)^\sharp}(du_\bA)^\sharp=0$. As is well-known, they are in fact equivalent: one can solve for $D_L L=0$, $Lu=0$ for $L$ being an appropriately defined future-directed null vector field initially orthogonal to the hypersurface of $u$ so that $u$ satisfies the eikonal equation and $L=-(du)^\sharp$.
	
	In our setting, it is convenient to solve the eikonal equation using yet another (equivalent) set of equations. In this subsection, we derive these equations by performing some manipulations in coordinates, with $g$ given by \eqref{g.form}.

	Suppose we are given a solution $u$ to the eikonal equation, with $L=-(du)^\sharp$ future-directed. In terms of the basis $\{e_0, \rd_1,\rd_2\}$, $L$ is given by
	\begin{equation}\label{L.coord}
	L^t = \frac{1}{N^2}(e_0 u), \quad L^i=-\de^{ij} e^{-2\gamma}(\partial_j u) - \bt^i \frac{1}{N^2}(e_0 u).
	\end{equation}
By convention, we take $e_0 u>0$ so that $L$ is future-directed. In terms of $L$, the fact that $u$ satisfies the eikonal equation can be expressed as follows:		
  \begin{equation}\label{L.eikonal}
	N^2(L^t)^2= e^{2\gamma} \de_{ij} (L^i+\bt^i L^t) (L^j+\bt^j L^t).
	\end{equation}

\subsection{Raychaudhuri equation}\label{sec.Ray}

Let $u$ be a solution to the eikonal equation
$$\gi^{\alpha\beta}\partial_\alpha u \partial_\beta u=0.$$
Consider the vector field
$$L^\beta := -\gi^{\alpha \beta}\partial_\alpha u.$$
This vector field is null and geodesic. The second fact can be proven as follows:
\begin{align*}
D_L L^\beta &= L^\alpha D_\alpha L^\beta \\
&= \gi^{\alpha \rho}\partial_\rho u D_\alpha( \gi^{\beta \mu}\partial_\mu u)\\
&=\gi^{\alpha \rho}\gi^{\beta \mu}\partial_\rho u D_\alpha \partial_\mu u \\
&=\gi^{\alpha \rho}\gi^{\beta \mu}\partial_\rho u D_\mu\partial_\alpha u \\
&=\frac{1}{2}\gi^{\beta \mu}D_\mu (\gi^{\alpha \rho}\partial_\alpha u \partial_\rho u)\\
&=0.
\end{align*}
Let $\ell_{t_*,u_*}:=\{(t,x^1,x^2):t=t_*, u(t,x^1,x^2)=u_*\}$ and let $e_\theta$ be the unique unit (spacelike) vector field tangent to $\ell_{t,u}$. Let $\ba L$ be the unique null vector field which satisfies both $g(\ba L, e_\theta)=0$ and $g(L,\ba L)= -2$. Notice that $e_\theta$ verifies
$g(e_\theta,L)=g(e_\theta, \ba L)=0$ and $g(e_\theta,e_\theta)=1$.
Let
$$\chi:= \langle D_{e_\theta}L,e_\theta \rangle_g.$$
We write
$$D_{e_\theta }L=\chi e_\theta -\zeta  L,\quad D_L e_{\theta} =\ba \eta  L,$$
where $\zeta:=\f 12 \langle D_{e_\theta} L, \ba L\rangle_g$ and $\ba \eta:=-\f 12 \langle D_L e_{\theta}, \ba L \rangle_g$.
We calculate
\begin{equation}\label{Raychaudhuri}
\begin{split}
D_L \chi &= D_L  \langle D_{e_\theta}L,e_\theta \rangle_g\\
&=\langle D_L D_{e_\theta}L,e_\theta\rangle_g+\langle D_{e_\theta}L,D_L e_\theta \rangle_g\\
&=R_{L \theta L}{} ^{\theta} + \langle D_{e_\theta}D_L L,e_\theta\rangle_g + \langle D_{[L,e_\theta]}L,e_\theta\rangle_g + \langle D_{e_\theta}L,D_L e_\theta\rangle_g\\
&=R_{L \theta L}{} ^{\theta}+(\ba \eta+\zeta)\langle D_{L}L,e_{\theta}\rangle_g - \chi\langle D_{e_\theta }L,e_\theta\rangle_g.
\end{split}
\end{equation}
Consequently, using that $L$ is geodesic,
 $$L(\chi)+\chi^2= R_{L \theta L}{}^{\theta}=-R_{LL}.$$
Moreover, we calculate
\begin{equation}\label{boxu=chi}
\Box_g u = div(L)=\langle D_{e_\theta}L, e_{\theta}\rangle_g - \f 12 \langle D_{L} L,\ba L\rangle_g - \f 12 \langle D_{\ba L}L,L\rangle_g = \langle D_{e_\theta}L, e_{\theta}\rangle_g=\chi.
\end{equation}

\end{document}